\documentclass[11pt]{article} %

\usepackage{url}
\usepackage{amsmath}
\usepackage{amsthm}
\usepackage{amssymb}
\usepackage{mathtools}
\usepackage{thmtools}
\usepackage{dsfont}
\usepackage{graphicx}
\usepackage{subcaption}
\usepackage{caption}
\usepackage[square,semicolon,numbers]{natbib}

\usepackage{booktabs}    %
\usepackage{multirow}    %
\usepackage{array}       %

\usepackage[subfigure]{tocloft}

\newcommand{\tocfooterrule}{\vspace{0.5\baselineskip}\noindent\hrulefill}

\usepackage{etoolbox} %

\usepackage[margin=0.8in]{geometry}
\usepackage{charter}

\usepackage{amsfonts,amssymb}
\usepackage{bbm}
\usepackage{mathrsfs}

\usepackage{amsthm,thmtools,thm-restate}
\usepackage{mathtools} %
\usepackage{amsmath} %
\usepackage[ruled,vlined]{algorithm2e}
\LinesNumbered
\DontPrintSemicolon
\SetKwComment{Comment}{$\triangleright$\ }{}
\usepackage{tabularx}
\usepackage{tabularx}

\usepackage[dvipsnames]{xcolor} %
\definecolor{ptblue}{RGB}{15,76,129} %
\definecolor{ptemerald}{HTML}{009473} %
\definecolor{bluegray}{rgb}{0.4, 0.6, 0.8}
\definecolor{ptilluminating}{HTML}{F5DF4D} %
\definecolor{ptgray}{HTML}{939597}
\usepackage{cprotect}
\definecolor{cobalt}{rgb}{0.0, 0.28, 0.67}
\usepackage{hyperref}
\hypersetup{
linktocpage,
colorlinks=true,
citecolor=cobalt, %
urlcolor=ptblue, %
linkcolor=cobalt, %
}

\usepackage{cleveref}

\usepackage{thm-restate}
\usepackage{enumitem}
\usepackage{verbatim}

\newcommand{\lognsw}{\log \, \mathrm{NSW}}
\newcommand{\col}{\mathrm{atb}}

\newcommand{\NSW}{\mathrm{NSW}}

\DeclareMathOperator*{\argmax}{arg\,max}
\DeclareMathOperator*{\argmin}{arg\,min}
\newcommand{\Alg}{\textsc{Alg}}
\newcommand{\Opt}{\textsc{Opt}}
\newcommand{\ersp}{\texttt{ERSP}}

\theoremstyle{plain}
\newtheorem{theorem}{Theorem}

\newtheorem{lemma}[theorem]{Lemma}

\theoremstyle{definition}
\newtheorem{definition}{Definition}

\theoremstyle{remark}
\newtheorem{remark}{Remark}

\newcommand{\amazon}{\texttt{Amazon}}
\newcommand{\deepC}{\texttt{Deep1b-\!(Clus)}}
\newcommand{\deepP}{\texttt{Deep1b-\!(Prob)}}

\newcommand{\siftC}{\texttt{Sift1m-\!(Clus)}}

\newcommand{\siftP}{\texttt{Sift1m-\!(Prob)}}

\newcommand{\arxiv}{\texttt{ArXiv}}
\newcommand{\pMeanANN}{\texttt{$p$-mean-ANN}}

\newcommand{\diskann}{\texttt{ANN}}
\newcommand{\divann}{\texttt{Div-ANN}}
\newcommand{\multidivann}{\texttt{Multi Div-ANN}}
\newcommand{\multinash}{\texttt{Multi Nash-ANN}}
\newcommand{\multipmean}{\texttt{Multi p-mean-ANN}}

\newcommand{\kANN}{\texttt{ANN}}

\newcommand{\ouralgo}{\texttt{Nash-ANN}}

 \title{Welfarist Formulations for Diverse Similarity Search}

\author{
Siddharth Barman\thanks{Indian Institute of Science. {\tt barman@iisc.ac.in}}
\and
Nirjhar Das\thanks{Indian Institute of Science. {\tt nirjhardas@iisc.ac.in} }
\and
Shivam Gupta\thanks{Indian Institute of Science. {\tt shivamgupta2@iisc.ac.in} }
\and
Kirankumar Shiragur\thanks{Microsoft Research India. {\tt kshiragur@microsoft.com} }
}

\date{}

\begin{document}

\maketitle

\begin{abstract}
Nearest Neighbor Search (NNS) is a fundamental problem in data structures with wide-ranging applications, such as web search, recommendation systems, and, more recently, retrieval-augmented generations (RAG). In such recent applications, in addition to the relevance (similarity) of the returned neighbors, diversity among the neighbors is a central requirement. In this paper, we develop principled welfare-based formulations in NNS for realizing diversity across attributes. Our formulations are based on welfare functions---from mathematical economics---that satisfy central diversity (fairness) and relevance (economic efficiency) axioms. With a particular focus on Nash social welfare, we note that our welfare-based formulations provide objective functions that adaptively balance relevance and diversity in a query-dependent manner. Notably, such a balance was not present in the prior constraint-based approach, which forced a fixed level of diversity and optimized for relevance.  In addition, our formulation provides a parametric way to control the trade-off between relevance and diversity, providing practitioners with flexibility to tailor search results to task-specific requirements. We develop efficient nearest neighbor algorithms with provable guarantees for  the welfare-based objectives. Notably, our algorithm can be applied on top of any standard ANN method (i.e., use standard ANN method as a subroutine) to efficiently find neighbors that approximately maximize our welfare-based objectives. Experimental results demonstrate that our approach is practical and substantially improves diversity while maintaining high relevance of the retrieved neighbors.
\end{abstract}

\newpage
\tableofcontents
\tocfooterrule

\newpage

\section{Introduction}

\label{section:introduction}
 
Nearest Neighbor Search (NNS) is a fundamental problem in computer science with wide-ranging applications in diverse domains, including computer vision~\citep{Wang12}, data mining~\citep{isax2}, information retrieval~\citep{irbook}, classification~\citep{Fix89}, and recommendation systems~\citep{DeepXML21}. The relevance of NNS has grown further in recent years with the advent of retrieval-augmented generation (RAG); see, e.g., \citep{manohar2024parlayann}, \citep{wu2024retrieval}, and references therein. 
Formally, given a set of vectors $P \subset \mathbb{R}^d$, in ambient dimension $d$, and a query vector $q \in \mathbb{R}^d$, the objective in NNS is to find a subset $S$ of $k$ (input) vectors from $P$ that are most similar to $q$ under a similarity function $\sigma: \mathbb{R}^d \times \mathbb{R}^d \to \mathbb{R}_+$. That is, NNS corresponds to the  optimization problem $\argmax_{S \subseteq P: \lvert S \rvert = k} \ \sum_{v \in S} \sigma(q,v)$. Note that, while most prior works in neighbor search express the problem in terms of minimizing distances, we work with the symmetric version of maximizing similarity.\footnote{This enables us to directly apply welfare functions.} 

In practice, the input vectors are high dimensional; in many of the above-mentioned applications the ambient dimension $d$ is close to a thousand. Furthermore, most applications involve a large number of input vectors. This scale makes exact NNS computationally expensive, since applications require, for real-time queries $q$, NNS solutions in time (sub)linear in the number of input vectors $|P|$. To address this challenge, the widely studied framework of Approximate Nearest Neighbor (ANN) search relaxes the requirement of exactness and instead seeks neighbors whose similarities are approximately close to the optimal ones. %

ANN search has received substantial attention over the past three decades. Early techniques relied on space-partitioning methods, including Locality-Sensitive Hashing (LSH) \citep{indyk1998approximate,andoni2008near}, k-d trees \citep{arya1998optimal}, and cover trees \citep{Beygelzimer06}. More recent industry-scale systems adopt clustering-based \citep{Faiss17, BBM18} and graph-based ~\citep{HNSW16,NSG17,NGT2,DiskANN19} approaches, along with other practically-efficient methods~\citep{soar_2023, simhadri2024results}.

While relevance---measured in terms of a similarity function $\sigma(\cdot, \cdot)$---is a primary objective in NNS, prior work has shown that {\it diversity} in the retrieved set of vectors is equally important for user experience, fairness, and reducing redundancy~\citep{carbonell1998use}. For instance, in 2019 Google announced a policy update to limit the number of results from a single domain, thereby reducing redundancy~\citep{google-div}. Similarly, Microsoft recently introduced diversity constraints in ad recommendation systems to ensure that advertisements from a single seller do not dominate the results ~\citep{anand2025graphbased}. Such an adjustment was crucial for improving user experience and promoting fairness for advertisers. These examples highlight how diversity, in addition to enhancing fairness and reducing redundancy, directly contributes to improved search quality for end users.

A natural way to formalize diversity in these settings is to associate each input vector with one or more \emph{attributes}. Diversity can then be measured with respect to these attributes, complementing the similarity-based relevance. Building on this idea, the current work develops a principled framework for diversity in neighbor search by drawing on the theory of collective welfare from mathematical economics~\citep{moulin2004fair}. This perspective enables the design of performance metrics (i.e., optimization criteria) that balance similarity-based relevance and attribute-based diversity in a theoretically grounded manner.

This formulation is based on the perspective that {\it algorithms can be viewed as economic policies.} Indeed, analogous to economic policies, numerous deployed algorithms induce utility (monetary or otherwise) among the participating agents. For instance, an ANN algorithm---deployed to select display advertisements for search queries---impacts the exposure and, hence, the sales of the participating advertisers. Notably, there are numerous other application domains wherein the outputs of the underlying algorithms impact the utilities of individuals; see \cite{angwin2022machine} and \cite{kearns2019ethical} for multiple examples. Hence, in contexts where fairness (diversity) and welfare are important considerations, it is pertinent to evaluate algorithms analogous to how one evaluates economic policies that induce welfare.

In mathematical economics, welfare functions, $f: \mathbb{R}^c \mapsto \mathbb{R}$, provide a principled approach to aggregate the utilities of $c \in \mathbb{Z}_+$ agents into a single measure. Specifically, if an algorithm  induces utilities $u_1, u_2, \ldots, u_c$ among a population of $c$ agents, then the collective welfare is $f(u_1, u_2,\ldots,u_c)$. A utilitarian way of aggregation is by considering the arithmetic mean (average) of the utilities $u_\ell$s. However, note that the arithmetic mean is not an ideal criterion if we are required to be fair among the $c$ agents: the utilitarian welfare (arithmetic mean) can be high even if the utility of only one agent, say $u_1$, is large and all the remaining utilities, $u_2,\ldots, u_c$, are zero. The theory of collective welfare develops meaningful alternatives to the arithmetic mean by identifying welfare functions, $f$s, that satisfy fairness and efficiency axioms.  

Among such alternatives, Nash social welfare (NSW) is an exemplar that upholds multiple fairness axioms, including symmetry, independence of unconcerned agents, scale invariance, and the Pigou-Dalton transfer principle \citep{moulin2004fair}. Nash social welfare is obtained by setting the function $f$ as the geometric mean, $\NSW(u_1, \ldots, u_c) \coloneqq \big( \prod_{\ell=1}^c u_\ell \big)^{1/c}$. The fact that NSW strikes a balance between fairness and economic efficiency is supported by the observation that it sits between egalitarian and utilitarian welfare: the geometric mean is at least as large as the minimum value, $\min_{1 \leq \ell \leq c} \ u_\ell$, and it is also at most the arithmetic mean $\frac{1}{c} \sum_{\ell=1}^c u_\ell$ (the AM-GM inequality). 

The overarching goal of this work is to realize diversity (fairness) across attributes in nearest neighbor search while maintaining relevance of the returned $k$ vectors. Our modeling insight here is to equate attributes with agents and apply Nash social welfare. 

In particular, consider a setting where we have $c \in \mathbb{Z}_+$ different attributes (across the input vectors), and let $S$ be any subset of $k$ vectors (neighbors) among the input set $P$. In our model, each included vector $v \in S$, with attribute $\ell \in [c]$, contributes to the utility $u_\ell$ (see Section \ref{section:our-results}), and the Nash social welfare (NSW) induced by $S$ is the geometric mean of these utilities, $u_1, u_2, \ldots, u_c$. Our objective is to find a size-$k$ subset, $S^* \subseteq P$, of input vectors with as large NSW as possible.  

The following two instantiations highlight the applicability of our model in NNS settings: In a display-advertising context with $c$ sellers, each selected advertisement $v \in S$ of a seller $\ell \in [c]$ contributes to $\ell$'s exposure (utility) $u_\ell$. Similarly, in an apparel-search setup with $c$ colors in total, each displayed product $v \in S$ with color $\ell \in [c]$ contributes to the utility $u_\ell$.  

\begin{figure*}[t!]
\centering
\includegraphics[width=0.24\textwidth]{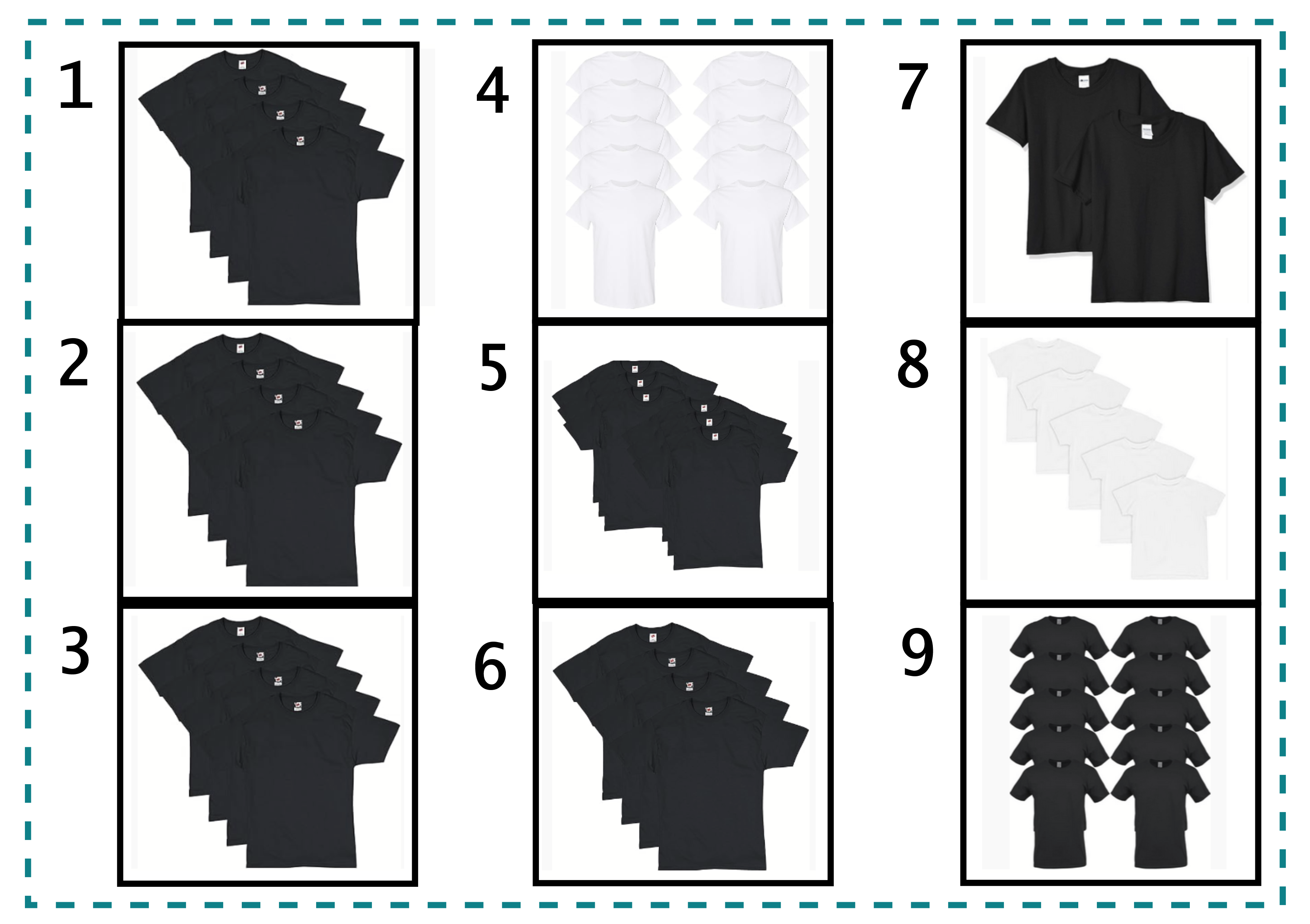} \hfill
\includegraphics[width=0.24\textwidth]{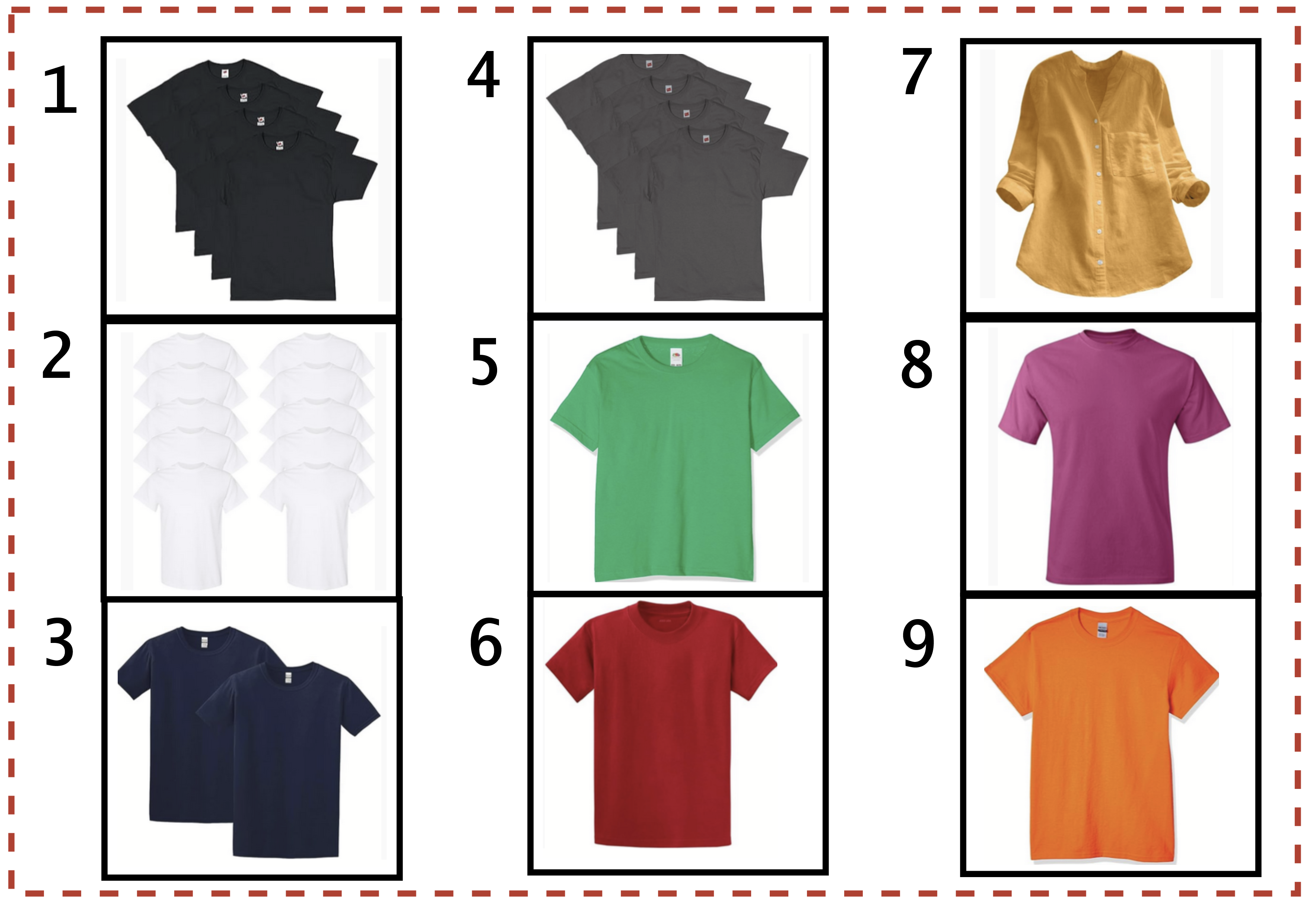} \hfill
\includegraphics[width=0.23\textwidth]{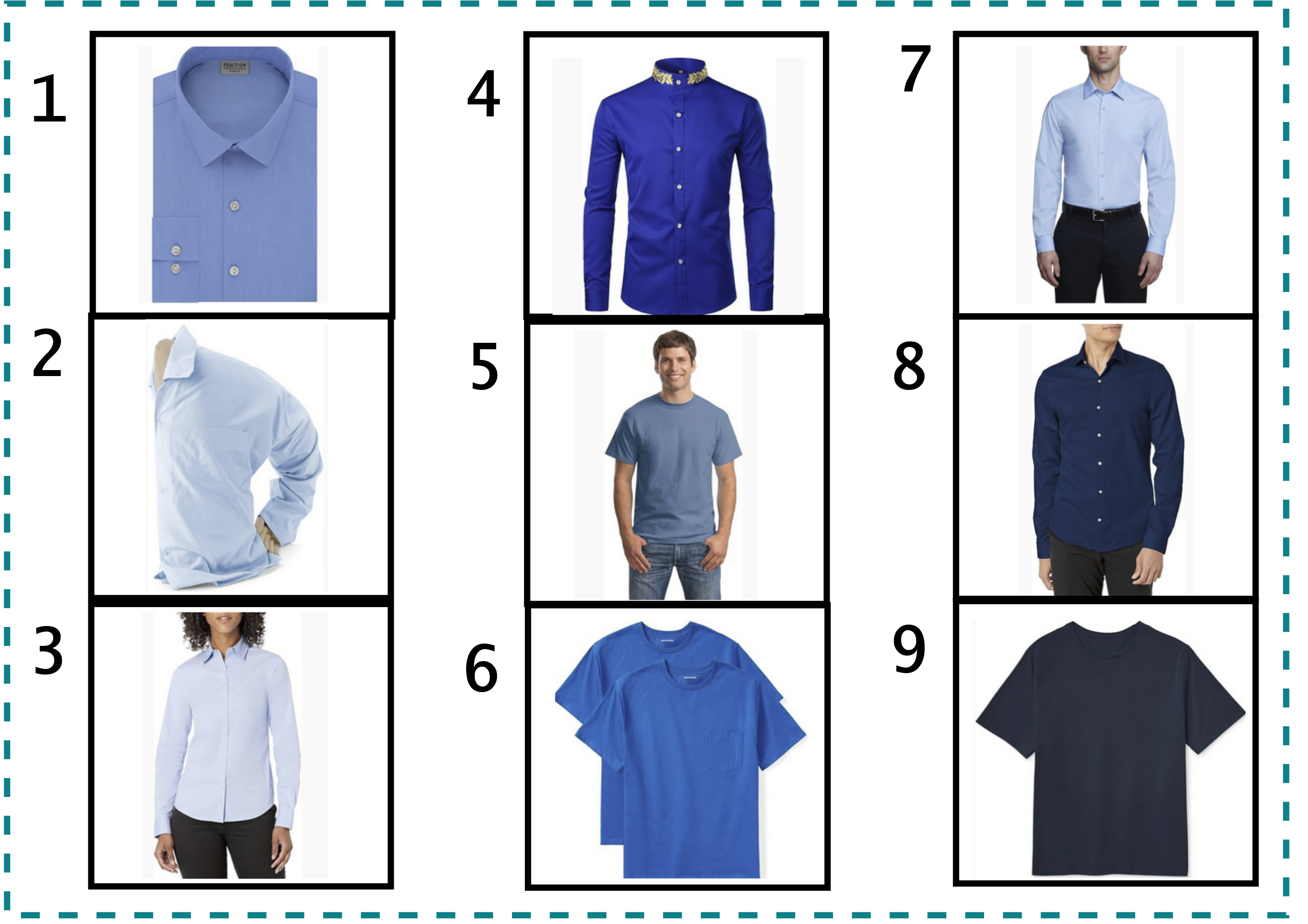} \hfill
\includegraphics[width=0.23\textwidth]{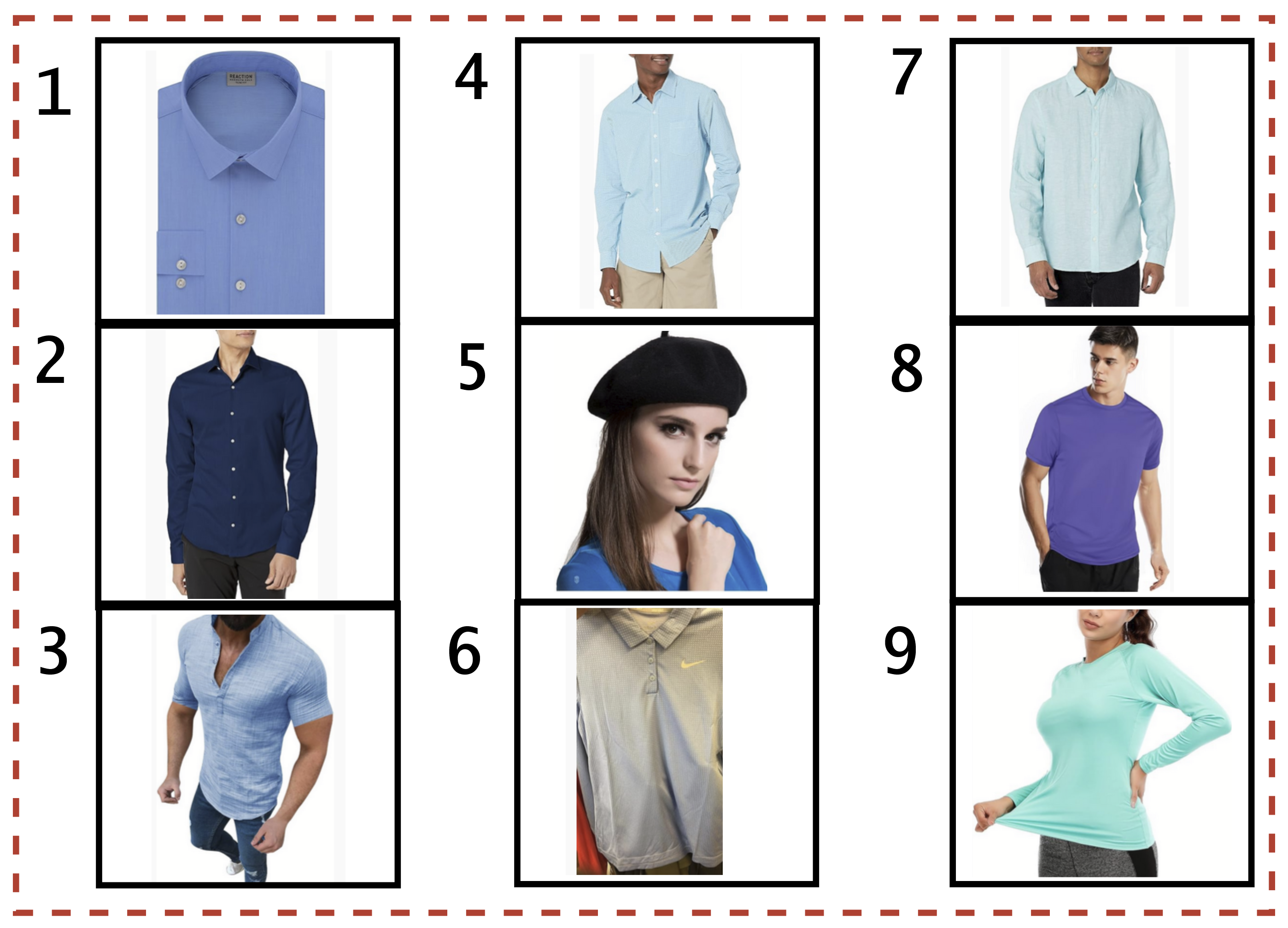}\\
\caption{Neighbor search results ($k=9$) on the Amazon dataset. From left: \textbf{First} and \textbf{Second} images - ANN and Nash-based results for query ``shirts'', respectively.
\textbf{Third} and \textbf{Fourth} images - ANN and Nash-based results for query ``blue shirt'', respectively.
Note that the Nash-based method selects diverse colors for the query ``shirts'' but conforms to the blue color for the query ``blue shirt''.}
\label{fig:QueryResponses}
\end{figure*}

Prior work \citep{anand2025graphbased} imposed constraints for achieving diversity in NNS. These constraints enforced that, for each $\ell \in [c]$ and among the $k$ returned vectors, at most $k'$ many can have attribute $\ell$. Such hard constraints rely on a fixed ad hoc quota parameter $k'$ and may fail to adapt to the intent expressed in the query. In contrast, our NSW-based approach balances relevance and diversity in a query-dependent manner. For example, in the apparel-search setup, if the search query is ``blue shirt,'' then a constraint on the color attribute `blue' (i.e., when $\ell$ stands for `blue') would limit the relevance by excluding valid vectors. NSW, however, for the ``blue shirt'' query, is free to select all the $k$ vectors with attribute `blue' upholding relevance; see Figure \ref{fig:QueryResponses} for supporting empirical results. On the other hand, if the apparel-search query is just ``shirts,'' then NSW criterion is inclined to select vectors with different color attributes. These features of NSW are substantiated by the stylized instances given in Examples \ref{example-one} and \ref{example-two} (Section \ref{section:our-results}).

We reiterate that our formulation does not require a quota parameter $k'$ to force diversity. For NSW, diversity (fairness) across attributes is obtained via normative properties of Nash social welfare. Hence, with axiomatic support, NSW stands as a meaningful criterion in  neighbor search, as it is in the context of economic and allocation policies.

Our welfarist formulation extends further to control the trade-off between relevance and diversity. Specifically, we also consider $p$-mean welfare. Formally, for exponent parameter $p\! \in\!\! (-\infty, 1]$, the $p$th  mean $M_p(\cdot)$, of $c$ utilities $u_1,u_2,\ldots, u_c \in \mathbb{R}_+$, is defined as  $M_p(u_1, \ldots, u_c) \coloneqq \left( \frac{1}{c} \sum_{\ell=1}^c u_\ell^p \right)^{1/p}$. The $p$-mean welfare, $M_p(\cdot)$, captures a range of objectives with different values of $p$: it corresponds to the utilitarian welfare (arithmetic mean) when $p\! =\! 1$, the NSW (geometric mean) with $p\! \to\! 0$, and the egalitarian welfare when $p\! \to\! - \infty$. Notably, setting $p\!=\!1$, we get back the standard nearest neighbor objective, i.e., maximizing $M_1(\cdot)$ corresponds to finding the $k$ nearest neighbors and this objective is not concerned with diversity across attributes. At the other extreme, $p\! \to\! -\infty$ aims to find as attribute-diverse a set of $k$ vectors as possible (while paying scarce attention to relevance). 

We study, both theoretically and experimentally, two diversity settings: (i) single-attribute setting and (ii) multi-attribute setting. In the single-attribute setting, each input vector $v \in P$ is associated with exactly one attribute $\ell \in [c]$ -- this captures, for instance, the display-advertisement setup, wherein each advertisement $v$ belongs to exactly one seller $\ell$. In the more general multi-attribute setting, each input vector $v \in P$ can have more than one attribute; in apparel-search, for instance, the products can be associated with multiple attributes, such as color, brand, and price. 

We note that the constraint-based formulation for diversity considered in \cite{anand2025graphbased} primarily addresses single-attribute setting. In fact, generalizing such constraints to the multi-attribute context leads to a formulation wherein it is {\rm NP}-hard even to determine whether there exist $k$ vectors that satisfy the constraints, i.e., it would be computationally hard to find any size-$k$ constraint-feasible subset $S$, let alone an optimal one.\footnote{This hardness result follows via a reduction from the Maximum Independent Set problem.}

By contrast, our NSW formulation does not run into such a feasibility barrier. Here, for any candidate subset $S$ of $k$ vectors, each included vector $v \in S$ contributes to the utility $u_\ell$ of every attribute $\ell$ associated with $v$. As before, the NSW induced by $S$ is the geometric mean of the induced utilities, $u_1, u_2, \ldots, u_c$, and the objective is to find a subset of $k$ vectors with as large NSW as possible. 

\subsection*{Our Contributions}
\begin{itemize}
\item We view the NSW formulation for diversity, in both single-attribute and multi-attribute settings, as a key contribution of the current paper.  Another relevant contribution of this work is the generalization to $p$-mean welfare, which provides a systematic way to trade off relevance and diversity.

\item We also develop efficient algorithms, with provable guarantees, for the NSW and $p$-mean welfare formulations. For the single-attribute setting, we develop an efficient greedy algorithm for finding $k$ vectors that optimize the Nash social welfare among the $c$ attributes (Theorem \ref{theorem:single-attribute}). In addition, this algorithm can be provably combined with any sublinear ANN method (as a subroutine) to find near-optimal solutions for the Nash objective in sublinear time (Corollary~\ref{theorem:single-attribute-approx-oracle}).

\item For the multi-attribute setting, we first show that finding the set of $k$ vectors that maximize the Nash social welfare is {\rm NP}-hard (Theorem \ref{theorem:multi-attribute-hardness}). We complement this hardness result, by developing a polynomial-time approximation algorithm that achieves an approximation ratio of $\left(1 - 1/e\right) \approx 0.63$ for maximizing the logarithm of the Nash social welfare (Theorem \ref{theorem:multi-attribute-greedy-guarantee}).

\item We complement our theoretical results with experiments on both real-world and semi-synthetic datasets. These experiments demonstrate that the NSW objective effectively captures the trade-off between diversity and relevance in a query-dependent manner. We further analyze the behavior of the $p$-mean welfare objective across different values of $p \in (-\infty, 1]$, observing that it interpolates smoothly between prioritizing for diversity, when $p$ is small, and focusing on relevance, when $p$ is large. Finally, we benchmark the solution quality and running times of various algorithms for solving the NSW and $p$-mean formulations proposed in this work.

\end{itemize}

\section{Problem Formulation and Main Results}
\label{section:model_and_results}
 
We are interested in neighbor search algorithms that not only achieve a high relevance, but also find a diverse set of vectors for each query. To quantify diversity we work with a model wherein each input vector $v \in P$ is assigned one or more attributes from the set $[c] = \{1,2,\ldots, c\}$. In particular, write $\col(v) \subseteq [c]$ to denote the attributes assigned to vector $v \in P$. Also, let $D_\ell \subseteq P$ denote the subset of vectors that are assigned attribute $\ell \in [c]$, i.e., $D_\ell \coloneqq \{ v \in P \mid \ell \in \col(v) \}$.  

\subsection{Our Results}
\label{section:our-results}
An insight of this work is to equate these $c$ attributes with $c$ distinct agents. Here, the output of a neighbor search algorithm---i.e., the selected subset $S \subseteq P$---induces utility among these agents. With this perspective, we define the Nash Nearest Neighbor Search problem (NaNNS) below. This novel formulation for diversity is a  key contribution of this work. For any query $q \in \mathbb{R}^d$ and subset $S \subseteq P$, we define utility $u_\ell(S) \coloneqq \sum_{v \in S \cap D_\ell} \sigma(q, v)$, for each $\ell \in [c]$. That is, $u_\ell(S)$ is equal to the cumulative similarity between $q$ and the vectors in $S$ that belong to $D_\ell$. Equivalently, $u_\ell(S)$ is the cumulative similarity of the vectors in $S$ that have attribute $\ell$.\footnote{Note that in the above-mentioned display-advertising example, $u_\ell(\cdot)$ is the cumulative similarity between the (search) query and the selected advertisements that are from seller $\ell$.}

We employ Nash social welfare to identify size-$k$ subsets $S$ that are both relevant (with respect to similarity) and support diversity among the $c$ attribute classes. The Nash social welfare among $c$ agents is defined as the geometric mean of the agents' utilities. Specifically, in the above-mentioned utility model and with a smoothening parameter $\eta >0$, the Nash social welfare (NSW) induced by any subset $S \subseteq P$ among the $c$ attributes is defined as 
\begin{align}
   \NSW(S) \coloneqq \ ( \prod_{\ell=1}^c \ ( u_\ell(S) + \eta ) )^{1/c}. 
\end{align}
Throughout, $\eta>0$ will be a fixed smoothing constant that ensures that NSW remains nonzero. 

\begin{definition}[NaNNS]
\label{definition:NaNNS}
Nash nearest neighbor search (NaNNS) corresponds to the following the optimization problem  
$\argmax_{S \subseteq P: |S| = k} \ \NSW(S)$, or, equivalently, 
\begin{align}
\label{optimization-problem:log-nsw}
    \argmax_{S \subseteq P: \lvert S \rvert = k} \lognsw (S) 
\end{align}
\end{definition}
Here, we have $\lognsw (S)  = \frac{1}{c}\sum_{\ell \in [c]} \log(u_\ell(S) + \eta)$.

To further appreciate the welfarist approach, note that one recovers the standard nearest neighbor problem, NNS, in the single-attribute setting, if---instead of the geometric mean---we maximize the arithmetic mean. That is, maximizing the utilitarian social welfare gives us  
$\max_{S \subseteq P: |S| = k} \ \sum_{\ell =1}^c u_\ell(S) = \max_{S \subseteq P: |S| = k} \ \sum_{v \in S} \sigma(q, v)$. 

As stated in the introduction, depending on the query and the problem instance, solutions obtained via NaNNS can adjust between the ones obtained through standard NNS and those obtained via hard constraints. This feature is illustrated in the following stylized examples; see Appendix~\ref{section:example-proofs} for the associated proofs. 

The first example shows that if all vectors have same similarity, then an optimal solution, $S^*$, for NaNNS is completely diverse, i.e., all the vectors in $S^*$ have different attributes. 
\begin{restatable}[Complete Diversity via NaNNS]{example}{ExampleOne}
\label{example-one}
    Consider an instance in which, for a given query $q \in \mathbb{R}^d$, all vectors in $P$ are equally similar with the query: $\sigma(q, v) = 1$ for all $v \in P$. Also, let $\lvert \col(v) \rvert = 1$ for all $v \in P$, and write $S^* \in \argmax_{S \subseteq P: \ \lvert S \rvert = k} \NSW(S)$. If $c \geq k$, then here it holds that $|S^* \cap D_\ell| \leq 1$ for all $\ell \in [c]$.  
\end{restatable}
The second example shows that if the vectors of only one attribute have high similarity with the given query, then a Nash optimal solution $S^*$ contains only vectors with that attribute.

\begin{restatable}[Complete Relevance via NaNNS]{example}{ExampleTwo}
\label{example-two}
    Consider an instance in which, for a given query $q \in \mathbb{R}^d$ and a particular $\ell^* \in [c]$, only vectors $v \in D_{\ell^*}$ have similarity $\sigma(q, v) = 1$ and all other vectors $v' \in P \setminus D_{\ell^*}$ have similarity $\sigma(q, v') = 0$. Also, suppose that $\lvert \col(v) \rvert = 1$ for each $v \in P$, along with $\lvert D_{\ell^*} \rvert \geq k$. Then, for a Nash optimal solution $S^* \in \argmax_{S \subseteq P, \lvert S \rvert = k} \NSW(S)$, it holds that $\lvert S^* \cap D_{\ell^*} \rvert = k$. That is, for all other $\ell \in [c] \setminus \{\ell^*\}$ we have $\lvert S^* \cap D_\ell \rvert = 0$. 
\end{restatable}

With the above-mentioned utility model for the $c$ attributes, we also identify an extended formulation based on generalized $p$-means. Specifically, for exponent parameter $p \in (-\infty, 1]$, the $p$th  mean $M_p(\cdot)$, of $c$ nonnegative numbers $w_1,w_2,\ldots, w_c \in \mathbb{R}_+$, is defined as 
\begin{align}
    M_p(w_1, \ldots, w_c) \coloneqq \left( \frac{1}{c} \sum_{\ell=1}^c w_\ell^p \right)^{1/p}
\end{align}
Note that $M_1(w_1, \ldots, w_c)$ is the arithmetic mean $\frac{1}{c} \sum_{\ell=1}^c w_\ell$. Here, when $p \to 0$, we obtain the geometric mean (Nash social welfare): $M_0(w_1,\ldots,w_c) = \left( \prod_{\ell=1}^c w_\ell \right)^{1/c}$. Further, $p \to -\infty$ gives us egalitarian welfare, $M_{-\infty}(w_1, \ldots, w_\ell) = \min_{1 \leq \ell \leq c} \ w_\ell$.

Analogous to NaNNS, we consider a fixed smoothing constant $\eta>0$, and, for each subset $S \subseteq P$, define $M_p(S) \coloneqq M_p\left(u_1(S) + \eta, \ldots, u_c(S) + \eta \right)$. 

With these constructs in hand and generalizing both NNS and NaNNS, we have the $p$-mean nearest neighbor search ($p$-NNS) problem defined as follows.

\begin{definition}[$p$-NNS]
\label{definition:p-NNS}
For any exponent parameters $p \in (-\infty, 1]$, the $p$-mean nearest neighbor search ($p$-NNS) corresponds to the following the optimization problem  
\begin{align}
    \max_{S \subseteq P: \ \lvert S \rvert = k} \  M_p \left( S \right) = \max_{S \subseteq P: \ \lvert S \rvert = k} 
    \left( \frac{1}{c} \sum_{\ell=1}^c \left( u_\ell(S) +\eta \right)^p \right)^{1/p}
\end{align}
\end{definition}

\paragraph{Diversity in Single- and Multi-Attribute Settings.} The current work addresses two diversity settings: the single-attribute setup and, the more general, the multi-attribute one. The single-attribute setting refers to the case wherein $\lvert \col(v) \rvert = 1$ for each input vector $v \in P$ and, hence, the attribute classes $D_\ell$s are pairwise disjoint. In the more general multi-attribute setting, we have $\lvert \col(v) \rvert \geq 1$; here, the sets $D_\ell$-s intersect.\footnote{
For a motivating instantiation for multi-attributes, note that, in the apparel-search context, it is possible for a product (input vector) $v$ to have multiple attributes based on $v$'s seller and its color(s).} Notably, the NaNNS seamlessly applies to both these settings.

\paragraph{Algorithmic Results for Single-Attribute NaNNS and $p$-NNS.} 
In addition to introducing the NaNNS and $p$-NNS formulations for capturing diversity, we develop algorithmic results for these problems, thereby demonstrating the practicality of our approach in neighbor search. In particular, in the single-attribute setting, we show that both NaNNS and $p$-NNS admit efficient algorithms.

\begin{restatable}{theorem}{SingleAttributeResult}
\label{theorem:single-attribute} 
In the single-attribute setting, given any query $q \in \mathbb{R}^d$ and an (exact) oracle \texttt{ENN} for $k$ most similar vectors from any set, Algorithm~\ref{algo:greedy-nash-ann} (\texttt{Nash-ANN}) returns an optimal solution for NaNNS, i.e., it returns a size-$k$ subset $\Alg \subseteq P$ that satisfies 
       $\Alg \in \argmax_{S \subseteq P: \lvert S \rvert = k} \NSW (S)$.
    Furthermore, the algorithm runs in time $O(kc) + \sum_{\ell=1}^c \texttt{ENN}(D_\ell, q)$, where \texttt{ENN}$(D_\ell, q)$ is the time required by the exact oracle to find $k$ most similar vectors to $q$ in $D_\ell$.
\end{restatable}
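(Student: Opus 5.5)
We first need to pin down what Algorithm~\ref{algo:greedy-nash-ann} does, since the statement refers to it. Our reading is as follows. For each attribute $\ell \in [c]$, it calls \texttt{ENN} on $D_\ell$ to get the top-$k$ vectors of $D_\ell$, sorted by similarity $\sigma(q,\cdot)$. It then greedily builds $\Alg$ over $k$ rounds. In each round it adds the next unused vector $v$ from some list $\ell$, choosing the list that maximizes the marginal gain $\log(u_\ell(S)+\sigma(q,v)+\eta)-\log(u_\ell(S)+\eta)$.

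The proof of optimality will go through the following steps.

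\textbf{Step 1 (decomposition).} In the single-attribute setting the classes $D_\ell$ are disjoint. Hence any $S$ with $|S|=k$ splits as $S=\bigsqcup_\ell S_\ell$ with $S_\ell \subseteq D_\ell$, and
\[
\lognsw(S)=\tfrac1c\sum_\ell \log\bigl(u_\ell(S_\ell)+\eta\bigr).
\]

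\textbf{Step 2 (exchange within a class).} For a fixed count $k_\ell=|S_\ell|$, the term $u_\ell$ is maximized by the $k_\ell$ most similar vectors of $D_\ell$. Since $\log$ is increasing, we may assume without loss of generality that an optimal $S^*$ takes a prefix of each sorted list. Because $k_\ell\le k$, these prefixes all lie inside the \texttt{ENN} outputs. So the problem reduces to choosing counts $(k_1,\dots,k_c)$ with $\sum_\ell k_\ell=k$ to maximize $F(k_1,\dots,k_c)=\sum_\ell g_\ell(k_\ell)$, where $g_\ell(j)=\log\bigl(\eta+\sum_{i\le j}\sigma(q,v^\ell_i)\bigr)$ and $v^\ell_1,v^\ell_2,\dots$ is the sorted list for $\ell$.

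\textbf{Step 3 (concavity).} We show that each $g_\ell$ is discretely concave, i.e.\ its increments are nonincreasing. The increment is
\[
\Delta_\ell(j)=\log\Bigl(1+\frac{\sigma_{j+1}}{\eta+s_j}\Bigr),
\]
where $s_j=\sum_{i\le j}\sigma(q,v^\ell_i)$ is the prefix sum and $\sigma_{j+1}=\sigma(q,v^\ell_{j+1})$. The numerator $\sigma_{j+1}$ is nonincreasing in $j$ because the list is sorted. The denominator $\eta+s_j$ is nondecreasing because similarities are nonnegative. Hence $\Delta_\ell(j)$ is nonincreasing. This is the key step, and it is where sorting and $\sigma\ge 0$ are used.

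\textbf{Step 4 (greedy is optimal).} Maximizing a separable concave function over $\{\sum_\ell k_\ell=k,\ k_\ell\ge 0 \text{ integer}\}$ is solved by greedily picking the largest available increment. The standard argument runs as follows. By concavity, $F(k_1,\dots,k_c)$ equals the sum of the $k$ chosen increments, taking a prefix of increments from each list. The greedy algorithm picks the $k$ largest increments overall, and concavity guarantees that this set is prefix-closed in every list. Therefore no feasible count vector can exceed it. Equivalently, one can use an exchange argument: if the optimum has $k^*_a>k^{\mathrm{g}}_a$ and $k^*_b<k^{\mathrm{g}}_b$ for the greedy counts $k^{\mathrm{g}}$, moving one unit from $a$ to $b$ does not decrease $F$, because $\Delta_b(k^*_b)\ge\Delta_b(k^{\mathrm{g}}_b-1)\ge\Delta_a(k^{\mathrm{g}}_a)\ge\Delta_a(k^*_a-1)$. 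The middle inequality comes from the greedy choice. Ties are handled the same way, since they do not change the value.

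\textbf{Running time.} The \texttt{ENN} calls cost $\sum_\ell \texttt{ENN}(D_\ell,q)$. The $k$ greedy rounds each scan $c$ candidate increments in $O(c)$ time, for $O(kc)$ in total.

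The main obstacle is Step 4's careful handling of the exchange with ties and boundary cases, such as lists shorter than $k$ (where we set $\Delta=-\infty$) and zero similarities. The concavity in Step 3 is what makes the whole argument work.
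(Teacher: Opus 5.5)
Your proposal is correct and follows essentially the paper's own route. The paper likewise proves the decreasing-marginals property of $F_\ell(i)=\log\bigl(\sum_{j\le i}\sigma(q,v^\ell_{(j)})+\eta\bigr)$ and reduces to top-$k^*_\ell$ prefixes within each class (Lemma~\ref{lemma:size-match}); it then applies your exchange chain $\Delta_b(k^*_b)\ge\Delta_b(k^{\mathrm{g}}_b-1)\ge\Delta_a(k^{\mathrm{g}}_a)\ge\Delta_a(k^*_a-1)$ to an optimum chosen to minimize $\sum_\ell|k^*_\ell-k_\ell|$, with the same $O(kc)+\sum_\ell\texttt{ENN}(D_\ell,q)$ running-time accounting.
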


Further, to establish the practicality of our formulations, we present an approximate algorithm for NaNNS that leverages any standard ANN algorithm as an oracle (subroutine), i.e., works with any $\alpha$-approximate ANN oracle ($\alpha \in (0,1)$) which returns a subset $S$ containing $k$ vectors satisfying $\sigma(q, v_{(i)}) \geq \alpha \  \sigma(q, v^*_{(i)})$, for all $i \in [k]$, where $v_{(i)}$ and $v^*_{(i)}$ are the $i$-th most similar vectors to $q$ in $S$ and $P$, respectively. 
Formally, 

\begin{restatable}{corollary}{SingleAttributeCorollaryApproximateOracle}
\label{theorem:single-attribute-approx-oracle}
    In the single-attribute setting, given any query $q \in \mathbb{R}^d$ and an $\alpha$-approximate oracle \texttt{ANN} for $k$ most similar vectors from any set, Algorithm~\ref{algo:greedy-nash-ann} (\texttt{Nash-ANN}) returns an $\alpha$-approximate solution for NaNNS, i.e., it returns a size-$k$ subset $\Alg \subseteq P$ with  %
        $\NSW(\Alg) \geq \alpha \max_{S \subseteq P: \  \lvert S \rvert = k} \NSW (S)$.
    The algorithm runs in time $O(kc) + \sum_{\ell=1}^c \texttt{ANN}(D_\ell, q)$, where \texttt{ANN}$(D_\ell, q)$ is the time required by the oracle to find $k$ similar vectors to $q$ in $D_\ell$.
\end{restatable}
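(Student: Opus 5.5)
The algorithm itself is not shown, so I take \texttt{Nash-ANN} to be the natural greedy procedure. For each attribute $\ell \in [c]$, call the oracle on $D_\ell$ to obtain $k$ candidates, sorted so that $\sigma(q,a^\ell_{(1)}) \ge \dots \ge \sigma(q,a^\ell_{(k)})$. Then build $\Alg$ in $k$ steps. At each step, choose the attribute $\ell$ and its next unused candidate $a^\ell_{(j_\ell+1)}$ that maximize the marginal gain $\log(u_\ell + \eta + \sigma(q,a^\ell_{(j_\ell+1)})) - \log(u_\ell+\eta)$. I will prove the corollary by comparing $\Alg$ against an intermediate benchmark: the Nash-optimal solution restricted to the oracle's candidate lists.

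\textbf{Step 1 (reduction to candidates).} Let $S^* \in \argmax_{|S|=k}\NSW(S)$ and set $k_\ell = |S^*\cap D_\ell|$, so $\sum_\ell k_\ell = k$. Since the $D_\ell$ are disjoint, $u_\ell(S^*) \le \sum_{i\le k_\ell}\sigma(q,v^{*\ell}_{(i)})$, where $v^{*\ell}_{(i)}$ is the $i$-th most similar vector in $D_\ell$. The oracle guarantee gives $\sigma(q,a^\ell_{(i)}) \ge \alpha\,\sigma(q,v^{*\ell}_{(i)})$ for every $i \le k_\ell \le k$. Let $T$ take the top $k_\ell$ candidates from each list. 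Then
\[
u_\ell(T)+\eta \;\ge\; \alpha\, u_\ell(S^*) + \eta \;\ge\; \alpha\,(u_\ell(S^*)+\eta),
\]
using $\alpha<1$. Taking the geometric mean over $\ell$ yields $\NSW(T) \ge \alpha\,\NSW(S^*)$.

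\textbf{Step 2 (greedy is optimal over the candidates).} It remains to show that greedy returns a set $\Alg$ maximizing $\lognsw$ among all size-$k$ sets formed from the candidate lists. Any optimal set over the candidates may be assumed to use a prefix of each sorted list, so the problem becomes choosing counts $(k_\ell)$ with $\sum_\ell k_\ell = k$ to maximize $\sum_\ell F_\ell(k_\ell)$, where $F_\ell(j) = \log\bigl(\eta + \sum_{i\le j}\sigma(q,a^\ell_{(i)})\bigr)$.

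The key claim is that each $F_\ell$ is concave in $j$. Its increment is $\log(x+s_{j+1}) - \log x$, where $x$ is nondecreasing in $j$ and $s_{j+1}$ is nonincreasing, so the increments are nonincreasing. Maximizing a separable sum of concave functions under a cardinality constraint is solved exactly by greedily taking the largest marginal increments; this is the standard exchange argument.

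\textbf{Step 3 (conclusion and running time).} Combining the steps gives $\NSW(\Alg) \ge \NSW(T) \ge \alpha\,\NSW(S^*)$. Setting $\alpha=1$ recovers Theorem~\ref{theorem:single-attribute}, so the corollary follows from the same argument. For the running time, the oracle calls cost $\sum_\ell \texttt{ANN}(D_\ell,q)$. The greedy phase takes $k$ rounds, each scanning $c$ candidate gains, for $O(kc)$.

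The main obstacle is that the oracle guarantee is stated for the ranked positions of the returned list. I need the $i$-th returned candidate to dominate $\alpha$ times the $i$-th true neighbor \emph{within} $D_\ell$, which is exactly the oracle's guarantee applied with input set $D_\ell$. I also need sorting to align these positions; sorting the returned list is harmless here because the guarantee is stated for sorted order. The concavity and exchange argument in Step 2 is routine.
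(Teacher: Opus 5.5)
Your proposal is correct and follows essentially the same route as the paper. The paper likewise compares $\Alg$ to the Nash-optimal set over the candidate lists ($\widehat{\Opt}$), which greedy attains by the decreasing-marginals exchange argument of Theorem~\ref{theorem:single-attribute}. It then lower-bounds that by the top-$k^*_\ell$ prefix set using the per-rank $\alpha$ guarantee and $\alpha u + \eta \geq \alpha(u+\eta)$ for $\alpha \leq 1$.
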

Furthermore, both~\Cref{theorem:single-attribute} and~\Cref{theorem:single-attribute-approx-oracle} generalize to $p$-NNS problem with a slight modification in \Cref{algo:greedy-nash-ann}. Specifically, there exists exact, efficient algorithm (\Cref{algo:p-mean-ann}) for the $p$-NNS problem (\Cref{theorem:single-attribute-p-mean} and~\Cref{theorem:single-attribute-p-mean-approx-oracle}). Appendix~\ref{appendix:p-mean} details this algorithm and the results for $p$-NNS in the single-attribute setting.

\paragraph{Algorithmic Results for Multi-Attribute NaNNS.} Next, we address the multi-attribute setting. While the optimization problem~\eqref{optimization-problem:log-nsw} in the single attribute setting can be solved efficiently, the problem is {\rm NP}-hard in the multi-attribute setup (see~\Cref{subsec:proof-of-np-hardness-of-multi-attribute} for the proof).

\begin{restatable}{theorem}{TheoremNPHardness}
\label{theorem:multi-attribute-hardness}
In the multi-attribute setting, with parameter $\eta = 1$, NaNNS is {\rm NP}-hard. 
\end{restatable}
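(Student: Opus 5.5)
We reduce from Exact Cover by 3-Sets (X3C). An X3C instance consists of a ground set $U=\{1,\ldots,3m\}$ and a family $\mathcal{F}=\{F_1,\ldots,F_n\}$ of $3$-element subsets of $U$. The question is whether some $m$ sets from $\mathcal{F}$ partition $U$. We build a multi-attribute NaNNS instance as follows.
\begin{itemize}
\item There are $c=3m$ attributes, one per element of $U$.
\item There is one input vector $v_j$ per set $F_j$, with $\col(v_j)=F_j$ and $\sigma(q,v_j)=1$.
\item We set $k=m$ and $\eta=1$.
\end{itemize}
Any realizable similarity values will do. For example, take $v_j$ to be unit vectors, $q$ a unit vector, and use a similarity such as $\sigma(q,v)=1$ for all $v$ in the set. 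Only the values $\sigma(q,v_j)$ enter the objective, so this detail is harmless. The construction is clearly polynomial.

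Fix any $S$ with $|S|=m$. Then $u_\ell(S)$ is the number of chosen sets covering element $\ell$. Hence the total satisfies $\sum_{\ell}u_\ell(S)=3m$, since every chosen set contributes exactly $3$.

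The key claim is that $\max_{|S|=m}\lognsw(S)\ge \frac{1}{3m}\cdot 3m\log 2=\log 2$, with equality if and only if $\mathcal{F}$ has an exact cover. We prove it by concavity. The function $\log(x+1)$ is strictly concave, so with the sum of the $u_\ell$ fixed at $3m$ over $3m$ coordinates, Jensen gives $\frac{1}{3m}\sum_\ell\log(u_\ell+1)\le\log 2$. Equality holds if and only if $u_\ell=1$ for every $\ell$, i.e., the $m$ chosen sets cover each element exactly once. So $\NSW(S)=2$ is attainable exactly when an exact cover exists; otherwise every feasible $S$ has $\NSW(S)<2$. An exact NaNNS solver therefore decides X3C: output its optimal value and compare it with $2$. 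Equivalently, check whether the returned set is an exact cover, which is a polynomial-time check.

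A minor technicality arises if the problem requires $|P|\ge k$ or has repeated vectors. We handle it by assuming $n\ge m$; otherwise the X3C instance is trivially a no-instance. If distinct attribute sets force the vectors to be distinct, that causes no issue. The main thing to verify carefully is the Jensen equality condition, together with the fact that integrality plays no role. Strict concavity already forces all $u_\ell$ to be equal, and hence equal to $1$. So the argument is short. If the paper's model requires similarities to come from an actual geometric similarity function, one would instead place all $v_j$ at the same point (or copies of $q$), so that $\sigma$ is constant; scaling the constant changes nothing essential beyond replacing $2$ by $\sigma+1$.
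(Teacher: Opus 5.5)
Your reduction is correct and uses the same gadget as the paper. Each set becomes one vector with unit similarity, each ground element becomes one attribute, $\eta=1$, and concavity of $\log$ shows the threshold is reached only when no element is covered twice.

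The difference is the source problem. The paper reduces from general Exact Regular Set Packing, where typically $\tau k<c$. In that regime, Jensen over all $c$ attributes only gives $\lognsw(S)\le\log(1+\tau k/c)$, which strictly exceeds the threshold $\tau k\log 2/c$. So the paper averages only over the set $\mathcal{A}$ of attributes with positive utility. It then needs the auxiliary inequality $x\log(1+a/x)\le a\log 2$ (equality iff $x=a$) to force $|\mathcal{A}|=\tau k$.

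X3C is exactly the special case $\tau=3$, $k=c/3$, where the average utility is exactly $1$. There, plain Jensen over all coordinates is already tight and the auxiliary lemma disappears, so your argument is genuinely shorter. In exchange, the paper's version covers instances with $k$ small relative to $c$ and leaves $k$ unchanged; it notes \ersp\ is W[1]-hard in the solution size. Yours yields only plain NP-hardness, which is all the theorem asserts.

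There are two slips to fix.
\begin{itemize}
\item Your ``key claim'' is written with $\ge$. It should read $\max_{|S|=m}\lognsw(S)\le\log 2$, with equality iff an exact cover exists; this is what your Jensen step actually proves.
\item Your fallback of placing all $v_j$ at one point fails, because $P$ is a set and $\col(\cdot)$ is a function of the vector. Instead, discard duplicate sets from $\mathcal{F}$, which is harmless for X3C. Then use the paper's realization $v_j=\frac13\mathbf{1}_{F_j}$, $q=\mathbf{1}$ with the dot product. This gives $\sigma(q,v_j)=1$ with all $v_j$ distinct.
\end{itemize}
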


Complementing this hardness result, we show that, considering the logarithm of the objective, NaNNS in the multi-attribute setting admits a polynomial-time $\left(1 - \frac{1}{e}\right)$-approximation algorithm. This result is established in \Cref{subsec:submodular-greedy-approx-guarantee}. 

\begin{restatable}{theorem}{TheoremMultiAttributeApprox}
\label{theorem:multi-attribute-greedy-guarantee}
    In the multi-attribute setting with parameter $\eta = 1$, there exists a polynomial-time algorithm  (\Cref{algo:greedy-submodular}) that, given any query $q \in \mathbb{R}^d$, finds a size-$k$ subset $\Alg \subseteq P$ with $\lognsw(\Alg) \geq \left(1 - \frac{1}{e}\right) \lognsw(\Opt)$; here, $\Opt$ denotes an optimal solution for the optimization problem (\ref{optimization-problem:log-nsw}).
\end{restatable}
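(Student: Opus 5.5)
The plan is to show that, with $\eta = 1$, the set function $f(S) \coloneqq \lognsw(S) = \frac{1}{c}\sum_{\ell \in [c]} \log\left(1 + u_\ell(S)\right)$ is normalized, monotone, and submodular over $2^P$. Then the standard greedy algorithm under a cardinality constraint, which is \Cref{algo:greedy-submodular}, gives the $(1-1/e)$ guarantee by the classical result of Nemhauser, Wolsey and Fisher. The algorithm starts from $S_0 = \emptyset$. For $k$ iterations it adds the vector $v \in P \setminus S_{i-1}$ that maximizes the marginal gain $f(S_{i-1} \cup \{v\}) - f(S_{i-1})$. Each step takes $O(|P|\,c)$ time if the current utilities $u_\ell$ are maintained, so the whole algorithm runs in polynomial time.

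\textbf{Properties of $f$.} Normalization is immediate: $u_\ell(\emptyset) = 0$, so $f(\emptyset) = \frac{1}{c}\sum_\ell \log 1 = 0$. This is exactly why $\eta = 1$ matters; for general $\eta$ one would have to shift by $\log \eta$, and the multiplicative guarantee would then no longer transfer. Each $u_\ell(S) = \sum_{v \in S \cap D_\ell} \sigma(q,v)$ is a nonnegative modular function, since $\sigma \ge 0$, and it is monotone. The map $x \mapsto \log(1+x)$ is nondecreasing and concave on $\mathbb{R}_+$.

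\textbf{Submodularity.} Take $A \subseteq B$ and $v \notin B$. For each $\ell$, the increment $u_\ell(\cdot \cup \{v\}) - u_\ell(\cdot)$ equals the same $\delta_\ell = \sigma(q,v)\,\mathbb{1}[\ell \in \col(v)] \ge 0$ at both $A$ and $B$. Also $u_\ell(A) \le u_\ell(B)$. Concavity then gives
\[
\log\left(1+u_\ell(A)+\delta_\ell\right) - \log\left(1+u_\ell(A)\right) \;\ge\; \log\left(1+u_\ell(B)+\delta_\ell\right) - \log\left(1+u_\ell(B)\right).
\]
Summing over $\ell$ and dividing by $c$ shows that $f$ has diminishing returns. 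Monotonicity follows because each increment is nonnegative. So $f$ is a nonnegative combination of concave functions of modular functions, which makes it monotone submodular.

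\textbf{Greedy bound.} Let $\Opt$ have size $k$. By monotonicity and submodularity, $f(\Opt) - f(S_{i-1}) \le f(S_{i-1} \cup \Opt) - f(S_{i-1}) \le \sum_{o \in \Opt} \left(f(S_{i-1}\cup\{o\}) - f(S_{i-1})\right) \le k\left(f(S_i) - f(S_{i-1})\right)$. Unrolling this recurrence gives $f(\Opt) - f(S_k) \le (1-1/k)^k f(\Opt) \le e^{-1} f(\Opt)$, which is the claim with $\Alg = S_k$.

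One degenerate case needs care. If $|P| \ge k$ but the greedy step would pick elements already in $S$, it simply picks any remaining element, and the recurrence still holds. The statement implicitly requires $|P| \ge k$.

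The main obstacle is conceptual rather than technical. The guarantee has to be stated for $\lognsw$, not for $\NSW$, because the multiplicative greedy bound needs $f(\emptyset) = 0$ and $f \ge 0$. Both hold precisely because $\eta = 1$ makes $\log(u_\ell + \eta) \ge 0$. I would make this explicit in the proof.
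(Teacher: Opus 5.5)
Your proposal is correct and follows the paper's proof: show that $\lognsw$ with $\eta = 1$ is normalized, nonnegative, monotone and submodular, then invoke the Nemhauser--Wolsey--Fisher greedy bound. The paper checks diminishing returns by directly computing the log-ratio of marginals, which is the same concavity fact you use, and you additionally unroll the greedy recurrence explicitly. One small aside: your remark that the multiplicative guarantee would not transfer for other $\eta$ is overstated. For $\eta \ge 1$, running greedy on $\lognsw - \log\eta$ (same marginals, so the same algorithm) still yields $\lognsw(\Alg) \ge (1 - 1/e)\lognsw(\Opt)$, but this does not affect your proof for $\eta = 1$.
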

\paragraph{Experimental Validation of our Formulation and Algorithms.} We complement our theoretical results with several experiments on real-world datasets. Our findings highlight that the Nash-based formulation strikes a balance between diversity and relevance. 
Specifically, we find that, across datasets and in both single- and multi-attribute settings, the Nash formulation maintains relevance and consistently achieves high diversity. By contrast, the hard-constrained formulation from \citep{anand2025graphbased} is highly sensitive to the choice of the constraint parameter $k'$, and in some cases incurs a substantial drop in relevance. Section \ref{section:experiments} details our experimental evaluations.

\section{NaNNS in the Single-Attribute Setting}
\label{section:algorithm}

\begin{algorithm}[h]
\caption{\texttt{Nash-ANN}: Algorithm for NaNNS in the single-attribute setting}
\label{algo:greedy-nash-ann}

\KwIn{Query $q \in \mathbb{R}^d$ and, for each attribute $\ell \in [c]$,
the set of input vectors $D_\ell \subset \mathbb{R}^d$}

For each $\ell \in [c]$, fetch $\widehat{D}_\ell$, the $k$ (exact or approximate)
nearest neighbors of $q \in \mathbb{R}^d$ from $D_\ell$.
\label{line:hat-D}

For every $\ell \in [c]$ and each index $i \in [k]$, let
$v^\ell_{(i)}$ denote the $i$th most similar vector to $q$
in $\widehat{D}_\ell$.
\label{line:definition-of-ith-most-similar-point}

Initialize subset $\Alg \gets \emptyset$, along with count
$k_\ell \gets 0$ and utility $w_\ell \gets 0$, for each $\ell \in [c]$.

\While{$|\Alg| < k$}{
    \label{line:while-loop-start}

    $a \gets \argmax\limits_{\ell \in [c]}
    \Big(\log\big(w_\ell + \eta + \sigma(q, v^\ell_{(k_\ell + 1)})\big)
    - \log(w_\ell + \eta)\Big)$.
    \Comment{Ties broken arbitrarily}
    \label{line:marginal}

    $\Alg \gets \Alg \cup \{ v^a_{(k_a + 1)} \}$.
    $w_a \gets w_a + \sigma(q, v^a_{(k_a + 1)})$.
    $k_a \gets k_a + 1$.
    \label{line:update}
}
\label{line:while-loop-ends}

\Return{$\Alg$}

\end{algorithm}

In this section, we first provide our exact, efficient algorithm (\Cref{algo:greedy-nash-ann}) for NaNNS in the single-attribute setting and then present the proof of optimality of the algorithm (i.e., proof of~\Cref{theorem:single-attribute} and~\Cref{theorem:single-attribute-approx-oracle}). %

The algorithm has two parts: a prefetching step and a greedy, iterative selection. In the prefetching step, for each attribute $\ell \in [c]$, we populate $k$ vectors from within $D_\ell$\footnote{Recall that in the single-attribute setting, the input vectors $P$ are partitioned into subsets $D_1,\ldots, D_c$, where $D_\ell$ denotes the subset of input vectors with attribute $\ell \in [c]$.} that are most similar to the given query $q \in \mathbb{R}^d$. Such a size-$k$ subset, for each $\ell \in [c]$, can be obtained by executing any nearest neighbor search algorithm within $D_\ell$ and with respect to query $q$. Alternatively, we can execute any standard ANN algorithm as a subroutine and find sufficiently good approximations for the $k$ nearest neighbors (of $q$) within each $D_\ell$.    

Write $\widehat{D}_\ell \subseteq D_\ell$ to denote the $k$---exact or approximate---nearest neighbors of $q \in \mathbb{R}^d$ in $D_\ell$. We note that our algorithm is robust to the choice of the search algorithm (subroutine) used for finding $\widehat{D}_\ell$s: If $\widehat{D}_\ell$s are exact nearest neighbors, then \Cref{algo:greedy-nash-ann} optimally solves NaNNS in the single-attribute setting (\Cref{theorem:single-attribute}). Otherwise, if $\widehat{D}_\ell$s are obtained via an ANN algorithm with approximation guarantee $\alpha \in (0,1)$, then \Cref{algo:greedy-nash-ann} achieves an approximation ratio of $\alpha$ (\Cref{theorem:single-attribute-approx-oracle}). 

The algorithm then considers the vectors with each $\widehat{D}_\ell$ in decreasing order of their similarity with $q$. Confining to this order, the algorithm populates the $k$ desired vectors iteratively. In each iteration, the algorithm greedily selects a new vector based on maximizing the marginal increase in $\lognsw(\cdot)$; see Lines \ref{line:marginal} and \ref{line:update} in \Cref{algo:greedy-nash-ann}. 
Theorem \ref{theorem:single-attribute} and Corollary \ref{theorem:single-attribute-approx-oracle} (stated previously) provide our main results for \Cref{algo:greedy-nash-ann}. Proofs of~\Cref{theorem:single-attribute} and~\Cref{theorem:single-attribute-approx-oracle} are presented below.

\subsection{Proofs of Theorem \ref{theorem:single-attribute} and Corollary \ref{theorem:single-attribute-approx-oracle}}
Now we state the proof of optimality of~\Cref{algo:greedy-nash-ann}. To obtain the proof, we make use of two lemmas stated below. We defer their proofs till after the proof of~\Cref{theorem:single-attribute} and~\Cref{theorem:single-attribute-approx-oracle}.

As in \Cref{algo:greedy-nash-ann}, write $\widehat{D}_\ell$ to denote the $k$ nearest neighbors of the given query $q$ in the set $D_\ell$. Recall that in the single-attribute setting the  sets $D_\ell$s are disjoint across $\ell \in [c]$. Also, $v^\ell_{(j)} \in \widehat{D}_\ell$ denotes the $j^{\text{th}}$ most similar vector to $q$ in $\widehat{D}_\ell$, for each index $j \in [k]$. For a given attribute $\ell \in [c]$, we define the logarithm of cumulative similarity upto the $i^\text{th}$ most similar vector as
\begin{align}
\label{eq:defn-of-f_ell}
F_\ell(i) \coloneqq \log \left( \sum_{j=1}^i \sigma(q, v^\ell_{(j)}) + \eta \right)
\end{align}
Note that $F_\ell(i)$ is also equal to the logarithm of the cumulative similarity of the $i$ most similar (to $q$) vectors in $D_\ell$ when the neighbor search oracle is exact. 
The lemma below shows that $F_\ell(\cdot)$ satisfies a useful decreasing marginals property. 

\begin{lemma}[Decreasing Marginals]
\label{lemma:decreasing-marginal-single-attribute}
    For all attributes $\ell \in [c]$ and indices $i', i \in [k]$, with $i'< i$, it holds that
    \begin{align*}
    F_\ell(i') - F_\ell(i'-1) \geq F_\ell(i) - F_\ell(i-1)~.
    \end{align*}
\end{lemma}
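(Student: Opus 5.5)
The plan is to write each marginal as the logarithm of a ratio and compare the two ratios directly. Fix $\ell \in [c]$ and abbreviate $s_j \coloneqq \sigma(q, v^\ell_{(j)}) \ge 0$ and $W(i) \coloneqq \sum_{j=1}^{i} s_j + \eta$, with $W(0) = \eta$; this matches $F_\ell(0) = \log \eta$, the convention needed when $i' = 1$. Then $F_\ell(i) - F_\ell(i-1) = \log\bigl(1 + s_i / W(i-1)\bigr)$. Because $\log$ is increasing, the claim reduces to showing
\begin{align*}
\frac{s_{i'}}{W(i'-1)} \geq \frac{s_i}{W(i-1)} \qquad \text{whenever } i' < i .
\end{align*}

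Two monotonicity facts give this inequality. First, the vectors $v^\ell_{(j)}$ are indexed in decreasing order of similarity to $q$ within $\widehat{D}_\ell$, as in Line~\ref{line:definition-of-ith-most-similar-point} of \Cref{algo:greedy-nash-ann}. Hence $s_{i'} \geq s_i$ for $i' < i$, so the numerator on the left is at least the numerator on the right. Second, similarities are nonnegative, so $W(\cdot)$ is nondecreasing: $W(i-1) = W(i'-1) + \sum_{j=i'}^{i-1} s_j \geq W(i'-1) > 0$, where positivity comes from $\eta > 0$. The left denominator is therefore at most the right one, and both are strictly positive, so all quantities are well defined. Combining the two facts gives $s_{i'}/W(i'-1) \geq s_i/W(i-1)$, and applying the increasing map $x \mapsto \log(1+x)$ yields the lemma.

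No step here is a real obstacle. The only points to watch are the boundary case $i' = 1$, which is handled by the convention $F_\ell(0) = \log \eta$, and checking that the ordering of the $v^\ell_{(j)}$ holds whether $\widehat{D}_\ell$ comes from an exact or an approximate oracle. It does hold in both cases, because the ordering is defined within $\widehat{D}_\ell$ itself. As an alternative route, one could observe that $F_\ell$ is a concave function $\log(\cdot + \eta)$ evaluated at increasing partial sums with decreasing increments, but the direct ratio comparison above is shorter.
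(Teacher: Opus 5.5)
Your proposal is correct and follows essentially the same argument as the paper: write each marginal as $\log\bigl(1 + s_i/W(i-1)\bigr)$, note that the numerators are nonincreasing in the index and the positive denominators are nondecreasing, and conclude by monotonicity. The paper phrases the same comparison through $\exp(F_\ell(i)-F_\ell(i-1))$. Your explicit treatment of the boundary convention $F_\ell(0)=\log\eta$ and of the ordering within $\widehat{D}_\ell$ are details the paper leaves implicit.
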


The following lemma asserts the Nash optimality of the subset returned by \Cref{algo:greedy-nash-ann}, $\Alg$, within a relevant class of solutions. 

\begin{lemma}
    \label{lemma:size-match}
In the single-attribute setting, let $\Alg$ be the subset of vectors returned by \Cref{algo:greedy-nash-ann} and $S$ be any subset of input vectors with the property that $|S \cap D_\ell| = |\Alg \cap D_\ell|$, for each $\ell \in [c]$. Then, $\NSW(\Alg) \geq \NSW(S)$. 
\end{lemma}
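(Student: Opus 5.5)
Since $\NSW$ is a monotone function of $\lognsw(S) = \frac{1}{c}\sum_{\ell} \log(u_\ell(S)+\eta)$, we compare the two sums term by term. The single-attribute structure makes this possible: the $D_\ell$ are pairwise disjoint, so $u_\ell(S) = \sum_{v \in S \cap D_\ell} \sigma(q,v)$ depends only on the vectors $S$ picks from $D_\ell$. Fix $\ell$ and set $k_\ell \coloneqq |\Alg \cap D_\ell| = |S \cap D_\ell|$. It then suffices to show $u_\ell(\Alg) \geq u_\ell(S)$ for every $\ell$, because $\log(\cdot + \eta)$ is increasing and summing over $\ell$ gives $\lognsw(\Alg) \geq \lognsw(S)$.

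First we identify what $\Alg$ picks inside $D_\ell$. By Lines~\ref{line:marginal}--\ref{line:update} of \Cref{algo:greedy-nash-ann}, whenever attribute $\ell$ is chosen the algorithm adds $v^\ell_{(k_\ell+1)}$ and increments the counter. Hence after termination $\Alg \cap D_\ell = \{v^\ell_{(1)}, \ldots, v^\ell_{(k_\ell)}\}$, i.e., the top $k_\ell$ vectors of $\widehat{D}_\ell$ in decreasing similarity. Note $k_\ell \le k$ since $|\Alg| = k$, so these indices are well defined. This gives $u_\ell(\Alg) = \sum_{j=1}^{k_\ell} \sigma(q, v^\ell_{(j)})$, which equals $\exp(F_\ell(k_\ell)) - \eta$ in the notation of \eqref{eq:defn-of-f_ell}.

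Next we bound $u_\ell(S)$. With an exact oracle, $\widehat{D}_\ell$ is the set of $k$ most similar vectors to $q$ in $D_\ell$. Since $k_\ell \le k$, the vectors $v^\ell_{(1)},\ldots,v^\ell_{(k_\ell)}$ are the $k_\ell$ most similar vectors in all of $D_\ell$. Any $k_\ell$-subset of $D_\ell$, in particular $S \cap D_\ell$, therefore has total similarity at most $\sum_{j \le k_\ell}\sigma(q,v^\ell_{(j)})$. To make this rigorous, sort $S\cap D_\ell$ in decreasing similarity; its $j$th element has similarity at most $\sigma(q,v^\ell_{(j)})$, and we sum over $j$.

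The main point needing care is that this second step uses exactness of $\widehat{D}_\ell$ within $D_\ell$. That is the setting of \Cref{theorem:single-attribute}, so I would state the lemma under the exact oracle. For the approximate case one would instead compare against $\alpha$ times these similarities, which is what the corollary needs. Notably, \Cref{lemma:decreasing-marginal-single-attribute} is not needed here. The greedy choice only matters for showing that the counts $k_\ell$ chosen by $\Alg$ are optimal, which the main theorem handles separately using decreasing marginals.
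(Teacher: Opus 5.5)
Your proof is correct and rests on the same key fact as the paper's: $\Alg \cap D_\ell$ consists of the $k_\ell$ most similar vectors in $D_\ell$, so $u_\ell(\Alg) \ge u_\ell(S)$ for every $\ell$, and monotonicity of $\log(\cdot+\eta)$ finishes the argument. The paper phrases this as a contradiction (a strict NSW gain would force some $u_a(S) > u_a(\Alg)$), and your remark that the lemma needs the exact oracle is right and matches how the paper uses it: in the corollary it is effectively applied with the search space restricted to $\bigcup_\ell \widehat{D}_\ell$.
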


Now we are ready to state the proof of~\Cref{theorem:single-attribute}.
\SingleAttributeResult*
\begin{proof}
    The runtime of \Cref{algo:greedy-nash-ann} can be established by noting that Line \ref{line:hat-D} requires $\sum_{\ell=1}^c \texttt{ENN}(D_\ell, q)$ time to populate the subsets $\widehat{D}_\ell$s, and the while-loop (Lines~\ref{line:while-loop-start}-\ref{line:while-loop-ends}) iterates $k$ times and each iteration (specifically, Line \ref{line:marginal}) runs in $O(c)$ time. Hence, as stated, the time complexity of the algorithm is $O(kc) + \sum_{\ell=1}^c \texttt{ENN}(D_\ell, q)$.   

    Next, we prove the optimality of the returned set $\Alg$. Let $\Opt \in \argmax_{S \subseteq P: \lvert S \rvert = k} \NSW (S)$ be an optimal solution with attribute counts $|\Opt \cap D_\ell|$ as close to $|\Alg \cap D_\ell|$ as possible. That is, among the optimal solutions, it is one that minimizes $\sum_{\ell=1}^c | k^*_\ell - k_\ell|$, where $k^*_\ell = |\Opt \cap D_\ell|$ and $k_\ell = |\Alg \cap D_\ell|$, for each $\ell \in [c]$. 
    We will prove that $\Opt$ satisfies $k^*_\ell = k_\ell$ for each $\ell \in [c]$. This guarantee, along with Lemma \ref{lemma:size-match}, implies that, as desired, $\Alg$ is a Nash optimal solution. 

    Assume, towards a contradiction, that $k^*_\ell \neq k_\ell$ for some $\ell \in [c]$. Since $|\Opt| = |\Alg| = k$, there exist attributes $x, y \in [c]$ with the property that  
    $    k^*_x  < k_x$ and $k^*_y  > k_y 
    $.
    For a given attribute $\ell \in [c]$, write the logarithm of cumulative similarity upto the $i^\text{th}$ most similar vector as
$F_\ell(i) \coloneqq \log \big( \sum_{j=1}^i \sigma(q, v^\ell_{(j)}) + \eta \big)$, where $v^\ell_{(j)}$ is defined in Line~\ref{line:definition-of-ith-most-similar-point} of~\Cref{algo:greedy-nash-ann}. 

    Next, note that for any attribute $\ell \in [c]$, if \Cref{algo:greedy-nash-ann}, at any point during its execution, has included $k'_\ell$ vectors of attribute $\ell$ in $\Alg$, then at that point the maintained utility $w_\ell = \sum_{j=1}^{k'_\ell} \sigma(q, v^\ell_{(j)})$. Hence, at the beginning of any iteration of the algorithm, if the $k'_\ell$ denotes the number of selected vectors of each attribute $\ell \in [c]$, then the marginals considered in Line \ref{line:marginal} are $F_\ell \left(k'_\ell+1 \right) - F_\ell \left(k'_\ell \right)$. These observations and the selection criterion in Line \ref{line:marginal} of the algorithm give us the following inequality for the counts $k_x=|\Alg \cap D_x|$ and $k_y=|\Alg \cap D_y|$ of the returned solution $\Alg$:
    \begin{align}
        F_x(k_x) - F_x(k_x - 1) \geq F_y(k_y+1) - F_y(k_y)  \label{ineq:greey-choice-main}
    \end{align}
        Specifically, equation (\ref{ineq:greey-choice-main}) follows by considering the iteration in which $k_x^{\text{th}}$ (last) vector of attribute $x$ was selected by the algorithm. Before that iteration the algorithm had selected $(k_x-1)$ vectors of attribute $x$, and let $k'_y$ denote the number of vectors with attribute $y$ that have been selected till that point. Note that $k'_y \leq k_y$. The fact that the $k_x^{\text{th}}$ vector was (greedily) selected in Line \ref{line:marginal}, instead of including an additional vector of attribute $y$, gives $F_x(k_x) - F_x(k_x - 1) \geq F_y(k'_y+1) - F_y(k'_y) \geq F_y(k_y+1) - F_y(k_y)$; here, the last inequality follows from~\Cref{lemma:decreasing-marginal-single-attribute}. 
        Therefore we have, 
        {\small
        \begin{align}
            F_x(k^*_x + 1) - F_x(k^*_x) \overset{\text{(i)}}{\geq} F_x(k_x) - F_x(k_x - 1) \overset{\text{(ii)}}{\geq} F_y(k_y+1) - F_y(k_y) \overset{\text{(iii)}}{\geq} F_y(k^*_y) - F_y(k^*_y-1)  \label{ineq:swap-main}
        \end{align}
        }
Here, inequality (i) follows from $k^*_x < k_x$ and~\Cref{lemma:decreasing-marginal-single-attribute}, inequality (ii) is due to equation (\ref{ineq:greey-choice-main}), and inequality (iii) is via $k^*_y > k_y$ and~\Cref{lemma:decreasing-marginal-single-attribute}. 
         
    Next, observe that the definition of $\widehat{D}_\ell$ ensures that $v^\ell_{(i)}$ is, in fact, the $i^{\text{th}}$ most similar (to $q$) vector among the ones that have attribute $\ell$, i.e., $i^{\text{th}}$ most similar in all of $D_\ell$. Since $\Opt$ is an optimal solution, the $k^*_\ell = |\Opt \cap D_\ell|$ vectors of attribute $\ell$ in $\Opt$ are the most similar $k^*_\ell$ vectors from $D_\ell$. That is, $\Opt \cap D_\ell = \{v^\ell_{(1)}, \ldots, v^\ell_{(k^*_\ell)} \}$, for each $\ell \in [c]$. This observation and the definition of $F_\ell(\cdot)$ imply that the logarithm of $\Opt$'s NSW satisfies 
$    \lognsw(\Opt) = \frac{1}{c} \sum_{\ell=1}^c F_\ell(k^*_\ell)
$. 
Now, consider a subset of vectors $S$ obtained from $\Opt$ by including vector $v^x_{(k^*_x + 1)}$ and removing $v^y_{(k^*_y)}$, i.e., $S = \left( \Opt \cup \left\{ v^x_{(k^*_x + 1)} \right\} \right) \setminus \left\{v^y_{(k^*_y)} \right\}$. Note that
{
\begin{align*}
    \lognsw(S) - \lognsw(\Opt) & = \frac{1}{c}\Big( F_x(k^*_x + 1) - F_x(k^*_x) \Big)  + \frac{1}{c}\Big(  F_y(k^*_y-1) - F_y(k^*_y) \Big) \geq 0, 
\end{align*}
}
where the last inequality is via equation~(\ref{ineq:swap-main}). Hence, $\NSW(S) \geq \NSW(\Opt)$. Given that $\Opt$ is a Nash optimal solution, the last inequality must hold with an equality, $\NSW(S) = \NSW(\Opt)$, i.e., $S$ is an optimal solution as well. This, however, contradicts the choice of $\Opt$ as an optimal solution that minimizes $\sum_{\ell=1}^c | k^*_\ell - k_\ell|$; note that $\sum_{\ell=1}^c \left|\widehat{k}_\ell - k_\ell \right| < \sum_{\ell=1}^c \left| k^*_\ell - k_\ell \right|$, where $\widehat{k}_\ell \coloneqq |S \cap D_\ell|$. 

Therefore, by way of contradiction, we obtain that $|\Opt \cap D_\ell| = |\Alg \cap D_\ell|$ for each $\ell \in [c]$. As mentioned previously, this guarantee along with Lemma \ref{lemma:size-match} imply that $\Alg$ is a Nash optimal solution. This completes the proof of the theorem.     
\end{proof}

\SingleAttributeCorollaryApproximateOracle*

\begin{proof}
    The running time of the algorithm follows via an argument similar to the one used in the proof of~\Cref{theorem:single-attribute}. Therefore, we only argue correctness here. 
    
    For every $\ell \in [c]$, let the $\alpha$-approximate oracle return $\widehat{D}_\ell$. Recall that $v^\ell_{(i)}$, $i \in [k]$, denotes the $i^{\text{th}}$ most similar point to $q$ in the set $\widehat{D}_\ell$. Further, for every $\ell \in [c]$, let $D^*_{\ell}$ be the set of $k$ most similar points to $q$ within $D_\ell$ and define $v^{*\ell}_{(i)}$, $i \in [k]$, to be the $i^{\text{th}}$ most similar point to $q$ in $D^*_{\ell}$. Recall that by the guarantee of the $\alpha$-approximate NNS oracle, we have $\sigma(q, v^\ell_{(i)}) \geq \alpha \cdot \sigma(q, v^{*\ell}_{(i)})$ for all $i \in [k]$. Let $\Opt$ be an optimal solution to the NaNNS problem containing $k^*_\ell$ most similar points of attribute $\ell$ for every $\ell \in [c]$.
    
    Finally, let $\widehat{\Opt}$ be the optimal solution to the NaNNS problem when the set of vectors to search over is $P = \cup_{\ell \in [c]} \widehat{D}_{\ell}$. 

    By an argument similar to the proof of~\Cref{theorem:single-attribute}, we have $\NSW(\Alg) = \NSW(\widehat{\Opt})$. Therefore
    {\allowdisplaybreaks
    \begin{align*}
        \NSW(\Alg) &= \NSW(\widehat{\Opt}) \\
        &\geq \left(\prod_{\ell \in [c]} \left(\sum_{i=1}^{k^*_\ell} \sigma(q, v^\ell_{(i)} ) + \eta \right)\right)^{\frac{1}{c}} \tag{$\bigcup_{\ell \in [c]: k^*_\ell \geq 1} \{v^\ell_{(1)}, \ldots, v^\ell_{(k^*_\ell)}\}$ is a feasible solution} \\
        &\geq \left(\prod_{\ell \in [c]} \left(\sum_{i=1}^{k^*_\ell} \alpha \sigma(q, v^{*\ell}_{(i)} ) + \eta \right)\right)^{\frac{1}{c}} \tag{by $\alpha$-approximate guarantee of the oracle; $k^*_{\ell} \leq k$} \\
        &\geq \left(\prod_{\ell \in [c]} \alpha \left(\sum_{i=1}^{k^*_\ell}  \sigma(q, v^{*\ell}_{(i)} ) + \eta \right)\right)^{\frac{1}{c}} \tag{$\alpha \in (0, 1)$} \\
        &= \alpha\ \NSW(\Opt) \tag{definition of $\Opt$}
    \end{align*}
    }
    Hence, the corollary stands proved.
\end{proof}

We complete this section by stating the proofs of Lemmas~\ref{lemma:decreasing-marginal-single-attribute} and~\ref{lemma:size-match}.
\begin{proof}[Proof of~\Cref{lemma:decreasing-marginal-single-attribute}]
    Note that $\exp(F_\ell(i)) = \sum_{j=1}^i \sigma(q, v^\ell_{(j)}) + \eta = \exp(F_\ell(i-1)) + \sigma(q, v^\ell_{(i)})$. Therefore, we have
    \begin{align}
    \exp(F_\ell(i) - F_\ell(i-1)) = \frac{\exp(F_\ell(i))}{\exp(F_\ell(i-1))} = \frac{\exp(F_\ell(i-1)) + \sigma(q, v^\ell_{(i)})}{\exp(F_\ell(i-1))} = 1 + \frac{\sigma(q, v^\ell_{(i)})}{\exp(F_\ell(i-1))} \label{ineq:exp-exp}
    \end{align}
    Similarly, we have $\exp(F_\ell(i') - F_\ell(i'-1)) = 1 + \frac{\sigma(q, v^\ell_{(i')})}{\exp(F_\ell(i'-1))}$. 
    
    In addition, the indexing of the vectors $v^\ell_{(j)}$ ensures that $\sigma(q, v^\ell_{(i')}) \geq \sigma(q, v^\ell_{(i)})$ for $i'<i$. Moreover, $\exp(F_\ell(i))$ is non-decreasing since it is the cumulative sum of non-negative similarities upto $i^{\text{th}}$ vector $v^\ell_{(i)}$. Hence, $\exp(F_\ell(i)) \geq \exp(F_\ell(i'))$ for $i' < i$. Combining these inequalities, we obtain 
    \begin{align*}
        \frac{\sigma(q, v^\ell_{(i')})}{\exp(F_\ell(i'-1))} \geq \frac{\sigma(q, v       ^\ell_{(i)})}{\exp(F_\ell(i-1))}.
    \end{align*}
    That is, $1 + \frac{\sigma(q, v^\ell_{(i')})}{\exp(F_\ell(i'-1))} \geq 1 + \frac{\sigma(q, v^\ell_{(i)})}{\exp(F_\ell(i-1))}$. Hence, equation (\ref{ineq:exp-exp}) gives us 
    \begin{align}
         \exp(F_\ell(i') - F_\ell(i'-1)) \geq \exp(F_\ell(i) - F_\ell(i - 1)) \label{ineq:exp-ineq}
    \end{align}
Since $\exp(\cdot)$ is an increasing function, inequality (\ref{ineq:exp-ineq}) implies
    \begin{align*}
    F_\ell(i') - F_\ell(i'-1) \geq F_\ell(i) - F_\ell(i-1)~.
    \end{align*}
    The lemma stands proved. 
\end{proof}

\begin{proof}[Proof of~\Cref{lemma:size-match}]
Assume, towards a contradiction, that there exists a subset of input vectors $S$ that satisfies $|S \cap D_\ell| = |\Alg \cap D_\ell|$, for each $\ell \in [c]$, and still induces NSW strictly greater than that of $\Alg$. This strict inequality implies that there exists an attribute $a \in [c]$ with the property that the utility $u_a(S) > u_a(\Alg)$.%
\begin{align}
\sum_{t \in S \cap D_a} \sigma(q, t) > \sum_{v \in \Alg \cap D_a} \sigma(q, v) \label{ineq:strict-util}    
\end{align}
On the other hand, note that the construction of \Cref{algo:greedy-nash-ann} and the definition of $\widehat{D}_a$ ensure that the vectors in $\Alg \cap D_a$ are, in fact, the most similar to $q$ among all the vectors in $D_a$. This observation and the fact that $|S\cap D_a| = |\Alg \cap D_a|$ gives us $\sum_{v \in \Alg \cap D_a} \sigma(q, v) \geq \sum_{t \in S \cap D_a} \sigma(q, t)$. This equation, however, contradicts the strict inequality (\ref{ineq:strict-util}). 

Therefore, by way of contradiction, we obtain that there does not exist a subset $S$ such that $|S \cap D_\ell| = |\Alg \cap D_\ell|$, for each $\ell \in [c]$, and $\NSW(\Alg) < \NSW(S)$. The lemma stands proved. 
\end{proof}

\section{NaNNS in the Multi-Attribute Setting}
\label{sec:multi-attribute-details}

In this section, we first establish the NP-Hardness of the NaNNS problem in the multi-attribute setting by stating the proof of~\Cref{theorem:multi-attribute-hardness}. Thereafter, we describe an efficient algorithm that obtains a constant approximation in terms of the logarithm of the Nash Social Welfare objective, and prove the approximation ratio (\Cref{theorem:multi-attribute-greedy-guarantee}).

\subsection{Proof of \Cref{theorem:multi-attribute-hardness}}
\label{subsec:proof-of-np-hardness-of-multi-attribute}

Recall that in the multi-attribute setting, input vectors $v \in P$ are associated with one or more attributes, $\lvert \col(v) \rvert \geq 1$. 

\TheoremNPHardness*
\begin{proof}
    Consider the decision version of the optimization problem: given a threshold $W \in \mathbb{Q}$, decide whether there exists a size-$k$ subset $S \subseteq P$ such that $\lognsw(S) \geq W$. We will refer to this problem as \texttt{NaNNS}. Note that the input in an \texttt{NaNNS} instance consists of: a set of $n$ vectors $P \subset \mathbb{R}^d$, a similarity function $\sigma: \mathbb{R}^d \times \mathbb{R}^d \to \mathbb{R}_+$, an integer $k \in \mathbb{N}$, the sets $D_\ell = \{p \in P: \ell \in \col(p)\}$ for every attribute $\ell \in [c]$, a query point $q \in \mathbb{R}^d$, and threshold $W \in \mathbb{Q}$. We will  show that \texttt{NaNNS} is NP-complete by reducing \textsc{Exact Regular Set Packing} (\ersp) to it. \ersp\ is known to be NP-complete \cite{gareyandjohnson} and is also W[1]-hard with respect to solution size~\cite{AUSIELLO1980136}.
    
    In \ersp, we are given a universe of $n$ elements, $\mathcal{U} = \{1, 2, \ldots, n\}$, an integer $k \in \mathbb{N}$, and a collection of $m$ subsets $\mathcal{S} = \{S_1, \ldots, S_m \}$, with each subset $S_i \subseteq \mathcal{U}$ of cardinality $\tau$ (i.e., $\lvert S_i \rvert = \tau$ for each $i \in [m]$). The objective here is to decide whether there exists a size-$k$ sub-collection $I \subseteq \mathcal{S}$ such that for all distinct $S, S' \in I$ we have $S \cap S' = \emptyset$.

    For the reduction, we start with the given instance of \ersp~and construct an instance of \texttt{NaNNS}: Consider $\mathcal{U}$ as the set of attributes, i.e., set $c= n$. In addition, we set the input vectors $P = \left\{\frac{1}{\tau} \ \mathbf{1}_S\ \vert\ S \in \mathcal{S} \right\}$; here, $\mathbf{1}_S \in \mathbb{R}^n$ is the characteristic vector (in $\mathbb{R}^n$) of the subset $S$, i.e., for each $i \in [n]$, the $i$-th coordinate of $\mathbf{1}_S$ is $\mathds{1}\{i \in S\}$. Note that the set of vectors $P$ is of cardinality $m$.
    
    Furthermore, we set the query vector $q = \mathbf{1}$ as the all-ones vector in $\mathbb{R}^n$. In this \texttt{NaNNS} instance, each input vector $\frac{1}{\tau} \ \mathbf{1}_S$ is assigned attribute $\ell \in [n]$ iff element $\ell \in S$. That is, $D_\ell = \{\frac{1}{\tau} \ \mathbf{1}_S \ \vert\ S \in \mathcal{S} \text{ and } \ell \in S\}$. The number of neighbors to be found in the constructed \texttt{NaNNS} is equal to $k$, which is the count in the given the \ersp\ instance. Also, the similarity function $\sigma: \mathbb{R}^n \times \mathbb{R}^n \to \mathbb{R}$ is taken to be the standard dot-product. Finally, we set the threshold $W = \frac{\tau k \log 2}{c}$. 
    
    Note that the reduction takes time polynomial in $n$ and $m$. In addition, for each input vector $v \in P$ it holds that $v = \frac{1}{\tau} \cdot \mathbf{1}_S$ for some $S \in \mathcal{S}$ and, hence, $\sigma(q, v) = \langle \frac{1}{\tau} \ \mathbf{1}_S, \mathbf{1} \rangle = 1$.

    Now we establish the correctness of the reduction.

    \noindent
    Forward direction ``$\Rightarrow$": Suppose the given \ersp\ instance admits a (size-$k$) solution $I^* \subset \mathcal{S}$. Consider the subset of vectors $N^* \coloneqq \{\frac{1}{\tau} \ \mathbf{1}_S \vert S \in I^* \}$. Indeed, $N^* \subseteq P$ and $\lvert N^* \rvert = k$, hence $N^*$ is a feasible set of the \texttt{NaNNS} problem. Now, since $I^*$ is a solution to the \ersp\ instance, for distinct $S$, $S' \in I^*$ we have $S \cap S' = \emptyset$. In particular, if for an element $\ell \in [c]$, we have $\ell \in S$ for some $S \in I^*$, then $\ell \notin S'$ for all $S' \in I^* \setminus \{S\}$. Therefore, $\lvert N^* \cap D_\ell \rvert \leq 1$ for all $\ell \in [c]$, which in turn implies that $u_\ell(N^*)$ is either $1$ or $0$ for every $\ell \in [c]$. Finally, note that each vector $v \in P$ belongs to exactly $\tau$ attributes, i.e., $\lvert \col(v) \rvert = \tau$. Hence,
    {\allowdisplaybreaks
    \begin{align*}
        \lognsw (N^*) = \frac{1}{c} \sum_{\ell=1}^c \log(1 + u_\ell(N^*)) = \frac{1}{c} \sum_{v \in N^*} \sum_{\ell \in \col(v)} \log(1 + 1) = \frac{\tau k \log 2}{c}~.
    \end{align*}
    }
    Therefore, if the given \ersp\ instance admits a solution (exact packing), then the constructed \texttt{NaNNS} instance has $k$ neighbors with sufficiently high $\lognsw$.

\noindent
    Reverse direction ``$\Leftarrow$": Suppose $N^* \subseteq P$ is a solution in the constructed \texttt{NaNNS} instance with $\lvert N^* \rvert = k$ and $\lognsw(N^*) \geq W$. Define $I^* \coloneqq \{S \ \vert\ \frac{1}{\tau} \cdot \mathbf{1}_S \in N^*\}$ and note that $\lvert I^* \rvert = k$. We will show that $I^*$ is a solution for the given \ersp\ instance, i.e., it consists of disjoint subsets. 
    
    Towards this, first note that $N^*$ induces social welfare: 
    {\allowdisplaybreaks
    \begin{align}
    \sum_{\ell \in [c]} u_\ell(N^*) = \sum_{\ell \in [c]} \sum_{v \in N^* \cap D_\ell} \sigma(q, v) = \sum_{v \in N^*} \sum_{\ell \in \col(v)} \sigma(q,v) = \tau k \label{eqn:sw}
    \end{align}
    }

    Furthermore, any attribute $\ell \in [c]$ has a non-zero utility under $N^*$ iff $\ell \in S$ for some subset $S \in I^*$. Hence, $\mathcal{A} \coloneqq \cup_{S \in I^*} S$ corresponds to the set of attributes with non-zero utility under $N^*$. We have $1 \leq \lvert \mathcal{A} \rvert \leq \tau k$. Next, using the fact that $\lognsw(N^*) \geq W =\frac{\tau k \log 2}{c}$ we obtain 
    {\allowdisplaybreaks
    \begin{align*}
        \frac{\tau k \log 2}{c} \leq \lognsw(N^*) &= \frac{1}{c} \sum_{\ell \in [c]} \log(1 + u_\ell(N^*)) \\
        &= \frac{1}{c} \sum_{\ell \in \mathcal{A}} \log(1 + u_\ell(N^*)) \\
        &= \frac{\lvert \mathcal{A} \rvert}{c} \  \frac{1}{\lvert \mathcal{A} \rvert} \sum_{\ell \in \mathcal{A}} \log(1 + u_\ell(N^*)) \\
        &\leq \frac{\lvert \mathcal{A} \rvert}{c} \  \log\left(\frac{1}{\lvert \mathcal{A} \rvert}\sum_{\ell \in \mathcal{A}} \left(1 + u_\ell(N^*) \right) \right) \tag{concavity of $\log$} \\
        &= \frac{\lvert \mathcal{A} \rvert}{c} \  \log\left(1 + \frac{\sum_{\ell \in \mathcal{A}} u_\ell(N^*)}{\lvert \mathcal{A} \rvert} \right) \\
        &= \frac{\lvert \mathcal{A} \rvert}{c} \  \log\left(1 + \frac{\tau k}{\lvert \mathcal{A} \rvert} \right) \tag{via (\ref{eqn:sw})} \\
        &\leq \frac{\tau k \log 2}{c}. %
    \end{align*}}
    Here, the last inequality follows from \Cref{lemma:log-inequality-technical} (stated and proved below). 
    Hence, all the inequalities in the derivation above must hold with equality. In particular, we must have $\lvert \mathcal{A} \rvert = \tau k$ by the quality condition of~\Cref{lemma:log-inequality-technical}. Therefore, for distinct sets $S, S' \in I^*$ it holds that $S \cap S' = \emptyset$. Hence, as desired, $I^*$ is a solution of the \ersp\ instance.    

    This completes the correctness of the reduction, and the theorem stands proved. 
\end{proof}

\begin{restatable}{lemma}{LogInequality}
\label{lemma:log-inequality-technical}
    For any $a > 0$ and for all $x \in (0, a]$ it holds that $x \log(1 + \frac{a}{x}) \leq a \log 2$. Furthermore, the equality holds iff $x = a$.
\end{restatable}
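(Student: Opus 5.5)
The plan is to study $g(x) \coloneqq x \log(1 + a/x)$ on $(0, a]$ and show that it is strictly increasing there. Then $g(x) \leq g(a) = a \log 2$, with equality only at the right endpoint $x = a$. Monotonicity gives both the inequality and the equality characterization at once.

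To see the monotonicity, substitute $t = a/x \in [1, \infty)$. This gives $g(x) = a \cdot \frac{\log(1+t)}{t}$, so it suffices to show that $h(t) \coloneqq \log(1+t)/t$ is strictly decreasing on $[1,\infty)$; in fact it is decreasing on all of $(0,\infty)$. Differentiating,
\begin{align*}
h'(t) = \frac{\frac{t}{1+t} - \log(1+t)}{t^2}.
\end{align*}
The numerator $\phi(t) \coloneqq \frac{t}{1+t} - \log(1+t)$ satisfies $\phi(0) = 0$ and $\phi'(t) = \frac{1}{(1+t)^2} - \frac{1}{1+t} = -\frac{t}{(1+t)^2} < 0$ for $t > 0$. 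Hence $\phi(t) < 0$ for $t > 0$, so $h'(t) < 0$ and $h$ is strictly decreasing.

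It follows that $h(t) \leq h(1) = \log 2$ for $t \geq 1$, with equality iff $t = 1$, i.e., iff $x = a$. Multiplying by $a > 0$ yields $x \log(1 + a/x) \leq a \log 2$ with equality exactly when $x = a$.

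No step here is a serious obstacle. The only care needed is getting the sign of the numerator $\phi$ right, which is the standard inequality $\log(1+t) > t/(1+t)$ for $t > 0$. Alternatively, one could argue directly that $g'(x) = \log(1+a/x) - \frac{a}{x+a} > 0$, which is the same inequality evaluated at $t = a/x$.
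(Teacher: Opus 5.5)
Your proof is correct and follows essentially the same route as the paper: both show that $x \log(1 + a/x)$ is strictly increasing on $(0,a]$ by proving its derivative $\log(1+a/x) - \frac{a}{a+x}$ is positive, and this is exactly your inequality $\log(1+t) > \frac{t}{1+t}$ evaluated at $t = a/x$. The only difference is that the paper checks the sign with a case split ($x \le a/2$ versus $x > a/2$) and numerical bounds ($\log 3 > 1$, $\log 2 > 2/3$), whereas your argument via $\phi(0)=0$ and $\phi' < 0$ avoids constants and holds for all $t > 0$.
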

\begin{proof}
    Write $f(x) \coloneqq x \log(1 + \frac{a}{x})$. At the end points of the domain $(0, a]$, the function $f(\cdot)$ satisfies: 
    $f(a) = a \log(2)$ and $$\lim_{x \to 0^+} f(x) = \lim_{x \to 0^+} x \log(a+x) - x \log x = \lim_{x \to 0^+} x \log(a + x) - \lim_{x \to 0^+} x \log(x) = 0 - 0 = 0.$$
    Note that $f'(x) = \log(1 + \frac{a}{x}) - \frac{a}{a+x}$. We will show that $f'(x) > 0$ for all $x \in (0, a]$ which will conclude the proof.

\noindent
    \textbf{Case 1}: $x \in (0, \frac{a}{2}]$. We have $\log(1 + \frac{a}{x}) \geq \log(1 + \frac{a}{a/2}) = \log(3) > 1$. On the other hand, $\frac{a}{a + x} \leq 1$.

\noindent
    \textbf{Case 2}: $x \in (\frac{a}{2}, a]$. In this case, $\log(1 + \frac{a}{x}) \geq \log(1 + \frac{a}{a}) = \log(2) > 0.693$. However, $\frac{a}{a+x} < \frac{a}{a + \frac{a}{2}} = \frac{2}{3} \leq 0.667$.

    Therefore, $f'(x) = \log(1 + \frac{a}{x}) - \frac{a}{a + x} > 0$ for all $x \in (0,a]$, which completes the proof.
\end{proof}

\subsection{Algorithm for the Multi-Attribute Setting}
This section details \Cref{algo:greedy-submodular}, based on which we obtain \Cref{theorem:multi-attribute-greedy-guarantee}. The algorithm greedily selects the $k$ neighbors for the given query $q$. Specifically, the algorithm iterates $k$ times, and in each iteration, it selects a new vector (from the given set $P$) whose inclusion in the current solution $\Alg$ yields that maximum increase in $\lognsw$ (Line~\ref{line:find-maximal-log-nsw-vector}). After $k$ iterations, the algorithm returns the $k$ selected vectors $\Alg$.

\begin{algorithm}[h]
\caption{\texttt{MultiNashANN}: Approximation algorithm in the multi-attribute setting}
\label{algo:greedy-submodular}

\KwIn{Query $q \in \mathbb{R}^d$, and the set of input vectors $P \subset\mathbb{R}^d$}

Initialize $\Alg = \emptyset$.

\For{$i = 1$ to $k$}{
    Set $\widehat{v} = 
    \argmax_{v \in P \setminus \Alg} \big( 
    \lognsw(\Alg \cup \{v\}) \ -  \ \lognsw (\Alg) \big)$.
    \label{line:find-maximal-log-nsw-vector}

    Update $\Alg \gets \Alg \cup \{\widehat{v}\}$.
    \label{line:add-maximal-log-nsw-vector}
}

\Return{$\Alg$}

\end{algorithm}

\subsection{Proof of \Cref{theorem:multi-attribute-greedy-guarantee}}
\label{subsec:submodular-greedy-approx-guarantee}
We now establish~\Cref{theorem:multi-attribute-greedy-guarantee}, which provides the approximation ratio achieved by ~\Cref{algo:greedy-submodular}.

\TheoremMultiAttributeApprox*
\begin{proof}
For each subset $S \subseteq P$, write function $f(S) \coloneqq \lognsw(S)$. Since parameter $\eta =1$, we have $f(\emptyset) = 0$ and the function is nonnegative. Moreover, we will show that this set function $f: 2^P \to \mathbb{R}_+$ is monotone and submodular. Given that \Cref{algo:greedy-submodular} follows the marginal-gain greedy criterion in each iteration, it achieves a $(1 - \frac{1}{e})$-approximation for the submodular maximization problem (\ref{optimization-problem:log-nsw}).

To establish the monotonicity of $f$, consider any pair of subsets $S \subseteq T \subseteq P$. Here, for each $\ell \in [c]$, we have $D_\ell \cap S \subseteq D_\ell \cap T$. Hence, $u_\ell(S) \leq u_\ell(T)$. Further, since $\log$ is an increasing function, it holds that $\log(u_\ell(S) + 1) \leq \log(u_\ell(T) + 1)$, for each $\ell \in [c]$. Hence, $f(S) \leq f(T)$, and we obtain that $f$ is monotone.

For submodularity, let $S \subseteq T \subseteq P$ be any two subsets and let $w \in P \setminus T$. Write $S + w$ and $T + w$ to denote the sets $S \cup \{w\}$ and $T \cup \{w\}$, respectively. Here, we have
    {\allowdisplaybreaks
    \begin{align*}
        &f(S + w) - f(S) - f(T + w) + f(T) \\
        &= \frac{1}{c} \sum_{\ell \in [c]} \log\left(\frac{1 + \sum_{v \in D_\ell \cap (S + w)} \sigma(q, v)}{1 + \sum_{v \in D_\ell \cap S} \sigma(q, v)} \cdot \frac{1 + \sum_{v \in D_\ell \cap T} \sigma(q, v)}{1 + \sum_{v \in D_\ell \cap (T + w)} \sigma(q, v)}\right) \\
        &= \frac{1}{c} \sum_{\ell \in \col(w)} \log\left(\left(1 + \frac{\sigma(q, w)}{1 + \sum_{v \in D_\ell \cap S} \sigma(q, v)} \right) \cdot \left(1 + \frac{\sigma(q, w)}{1 + \sum_{v \in D_\ell \cap T} \sigma(q, v)}\right)^{-1}\right) \\
        &= \frac{1}{c} \sum_{\ell \in \col(w)} \log\left(\left(1 + \frac{\sigma(q, w)}{1 + u_\ell(S)} \right) \cdot \left(1 + \frac{\sigma(q, w)}{1 + u_\ell(T)}\right)^{-1}\right) \\
        &\geq 0 \tag{$u_\ell(S) \leq u_\ell(T)$ for $S \subseteq T$}
    \end{align*}
    }
    Therefore, upon rearranging, we obtain $f(S + w) - f(S) \geq f(T + w) - f(T)$, i.e., $f$ is submodular. Hence,~\Cref{algo:greedy-submodular}, which follows marginals-gain greedy selection, achieves a $(1 - \frac{1}{e})$-approximation \citep{analysis1978nemhauser} for the optimization problem (\ref{optimization-problem:log-nsw}).
\end{proof}

\section{Experimental Evaluations}
\label{section:experiments}
In this section, we validate the welfare-based formulations and the performance of our proposed algorithms against existing methods on a variety of real and semi-synthetic datasets. We perform three sets of experiments:

\begin{itemize}
    \item In the first set of experiments (\Cref{fig:mainPaperResults-NaNNS}), we compare \ouralgo ~(\Cref{algo:greedy-nash-ann}) with prior work on hard-constraint based diversity \citep{anand2025graphbased}. Here, we show that \ouralgo\ strikes a balance between relevance and diversity both in the single- and multi-attribute settings.
    \item In the second set of experiments (\Cref{fig:mainPaperResults-p-NNS}), we study the effect of varying the exponent parameter $p$ in the $p$-NNS objective on relevance and diversity, in both single- and multi-attribute settings.
    \item In the final set of experiments (Tables~\ref{tab:sift20-k50} and~\ref{tab:QPSandLatency}), we compare our algorithm, \ouralgo~(with provable guarantees), and a heuristic we propose to improve the runtime of \ouralgo. The heuristic directly utilizes a standard \diskann ~algorithm to first fetch a sufficiently large candidate set of vectors (irrespective of their attributes). Then, it applies the greedy procedure for Nash social (or $p$-mean) welfare maximization (similar to Lines \ref{line:while-loop-start}-\ref{line:while-loop-ends} in~\Cref{algo:greedy-nash-ann}) only within this set.
\end{itemize}
 In what follows, we provide the details of the experimental set-ups, the baseline algorithms and the results of the experiments.
 
Additional plots for the experiments appear in Appendices \ref{appendix:exp-single-attribute} and \ref{appendix:exp-multi-attribute}. %

\subsection{Metrics for Measuring Relevance and Diversity}
\label{subsec:description-various-metrics--of-diversity-and-relevance}

 \noindent \textbf{Relevance Metrics}: In our experiments, we capture the relevance of a solution to the query through two metrics detailed below.

\begin{enumerate}
    \item \textbf{Approximation Ratio}: For a given query $q$, let $S$ be the set of $k$ vectors returned by an NNS algorithm that we wish to study, and let $O$ be the $k$ most similar vectors to $q$ in $P$. Then the approximation ratio of the algorithm is defined as the ratio
    $
        \frac{\sum_{v \in S} \sigma(q,v)}{\sum_{v \in O} \sigma(q,v)}
    $. Therefore, a higher approximation ratio indicates a more relevant solution. Note that the highest possible value of this metric is $1$.
    
    \item \textbf{Recall}: For a given query $q$, let $S$ be the set of $k$ vectors returned by an NNS algorithm that we wish to study and let $O$ be the $k$ most similar vectors to $q$ in $P$. The recall of the algorithm is defined as the quantity $\frac{\lvert S \cap O \rvert}{\lvert O \rvert}$. Therefore, higher the recall, more relevant the solution, and the maximum possible value of recall is $1$.
\end{enumerate}

\begin{remark}
\label{remark:approx-ratio-better-metric-than-recall}
Although recall is a popular metric in the context of the standard NNS problem, it is important to note that it is a fragile metric when the objective is to retrieve a relevant and diverse set of vectors for a given query. This can be illustrated with the following stylized example in the single-attribute setting. Suppose for a given query $q$, all the vectors in the similarity-wise optimal set $O$ have similarity $1$ and share the same attribute $\ell^* \in [c]$, i.e., for each $u \in O$ we have $\sigma(q, u) = 1$ and $\col(u) = \ell^*$. That is, the set $O$ of the $k$ most similar vectors to $q$ are not at all diverse. However, it is possible to have another set $S$ of $k$ vectors each with a distinct attribute and $\sigma(q, v) = 0.99$ for each $v \in S$. Such a set provides a highly relevant set of vectors that are also completely diverse. However, for the set $S$, the recall is actually $0$ (since $S \cap O = \emptyset$), but the approximation ratio is $0.99$. Hence, in the context of diverse nearest neighbor search problem, approximation ratio may be a more meaningful relevance metric than recall. \\
\end{remark}

\noindent \textbf{Diversity Metrics}: To measure the diversity of the solutions obtained by various algorithms, we consider the following metrics.

\begin{itemize}
    \item \textbf{Entropy}: Let $S \subseteq P$ be a size-$k$ subset  computed by an algorithm. Then the entropy of the set $S$ in the single-attribute setting is given by the quantity $ \sum_{\ell \in [c]: p_\ell > 0} -p_\ell \log(p_\ell)$ where $p_\ell = \frac{\lvert S \cap D_\ell \rvert}{\lvert S \rvert}$. Note that a higher entropy value indicates greater diversity. Moreover, it is not hard to see that the highest possible value of entropy is $\log(k)$ (achieved when $S$ contains at most $1$ vector from each attribute).

    \item \textbf{Inverse Simpson Index}: For a given set $S \subseteq P$  in the single-attribute setting, the inverse Simpson index is defined as $\frac{1}{\sum_{\ell = 1}^c p_\ell^2}$ where $p_\ell$ is the same as in the definition of entropy above. A higher value of this metric indicates greater diversity.

    \item \textbf{Distinct Attribute Count}: In the single-attribute setting, the distinct attribute count of a set $S \subseteq P$ is the number of different attributes that have at least one vector in $S$, i.e., the count is equal to $\lvert \{ \ell \in [c] \mid \lvert S \cap D_\ell \rvert > 0 \}\rvert$.
\end{itemize}

Note that the diversity metrics defined above are for the single-attribute setting. In the multi-attribute setting, in our experiments, we focus on settings where the attribute set $[c]$ is partitioned into $m$ sets $\{C_i \}_{i=1}^m$ (i.e., $[c] = \sqcup_{i=1}^m C_i$) and every input vector $v \in P$ is associated with exactly one attribute from each $C_i$. In particular, $\lvert \col(v) \rvert = m$ and $\lvert \col(v) \cap C_i \rvert = 1$ for each $1 \leq i \leq m$. We call each $C_i$ an attribute class. To measure diversity in the multi-attribute setting, we consider the aforementioned diversity metrics like entropy and inverse Simpson index restricted to an attribute class $C_i$. For instance, the entropy a set $S \subseteq P$ restricted to a particular $C_i$ is given by $ \sum_{\ell \in C_i: p_\ell > 0} -p_\ell \log(p_\ell)$, where $p_\ell = \frac{\lvert S \cap D_\ell \rvert}{\lvert S \rvert}$. Similarly, the inverse Simpson index of a set $S \subseteq P$ restricted to $C_i$ is given by $ \frac{1}{ \sum_{\ell \in C_i} p_\ell^2}$.

\subsection{Experimental Setup and Datasets}
\label{subsec:experiment-set-up-and-dataset}
\noindent \textbf{Hardware Details.} All the experiments were performed in memory on an Intel(R) Xeon(R) Silver $4314$ CPU ($64$ cores,  $2.40$GHz) with $128$ GB RAM. We set the number of threads to $32$. 

\noindent \textbf{Datasets.} We report results on both semi-synthetic and real-world datasets consistent with prior works \citep{anand2025graphbased}. These are summarized in \Cref{tab:datasets} and detailed below.

\renewcommand{\arraystretch}{1.1}
\begin{table*}[t!]
\centering
\caption{Summary of considered datasets. For synthetic attributes, we use two strategies: clustering-based (suffixed by \texttt{Clus}) and distribution-based (suffixed by \texttt{Prob}), see \Cref{subsec:experiment-set-up-and-dataset} for details.}
\label{tab:datasets}
\scalebox{0.9}{
\begin{tabular}{|l|c|c|c|c|c|}
\hline
\textbf{Dataset} & \textbf{\# Input Vectors} & \textbf{\# Query Vectors} & \textbf{Dimension} & \textbf{Attributes} \\
\hline
\amazon & $92,092$ & $8,956$ & $768$ & product color  \\
\hline
\arxiv  & $200,000$  & $50,000$ & $1536$ & year, paper category   \\
\hline
\texttt{Sift1m} & $1,000,000$ & $10,000$ & $128$ & synthetic   \\
\hline
\texttt{Deep1b} & $9,990,000$ & $10,000$ & $96$ & synthetic   \\
\hline
\end{tabular}
}
\end{table*}

\begin{enumerate}

\item  \textbf{Amazon Products Dataset} (\amazon):  The dataset, also known as the Shopping Queries Image Dataset (SQID)~\citep{amazonEmbedding}, 
includes vector embeddings of about $190,000$ product images and about $9,000$ text queries by users. The  embeddings of both product images and query texts are obtained via OpenAI's CLIP model~\citep{radford2021learning}, which maps both images and texts to a shared vector space. Given this dataset, our task is to retrieve relevant and diverse product images for a given text query. SQID also contains metadata for every product image, such as product image url, product id, product description, product title, and product color. The dataset is publicly available on Hugging Face platform.\footnote{\href{https://huggingface.co/datasets/crossingminds/shopping-queries-image-dataset}{https://huggingface.co/datasets/crossingminds/shopping-queries-image-dataset}}

For this dataset, we choose the set of all possible product colors in the dataset as our set of attributes $[c]$. We noted that for a lot of products, the color of the product in its image did not match the product color in the associated metadata. Hence, to associate a clean attribute (color) to each vector (product) in the dataset, we use the associated metadata as follows: we assign to the vector the majority color among the colors listed in the product color, product description, and title of the product. In case of a tie, we assign a separate attribute (color) called `color\_mix’. Further, we remove from consideration product images whose metadata does not contain any valid color names. The processed dataset contains $92,092$ vector embeddings of product images and constitutes our set $P$. Note that the dataset exhibits a skewed color distribution, shown in Figure~\ref{fig:amazonColorDistri}, with dominant colors such as black and white. No processing is applied to the query set which contains $8,956$ vectors. The vector embeddings of both images and queries are $d = 768$ dimensional. We use $\sigma(u,v) = 1 + \frac{u^\top v}{\lVert u \rVert \cdot \lVert v \rVert}$ as the similarity function between two vectors $u$ and $v$. Since the CLIP model was trained using the cosine similarity metric in the loss function (see~\citep{amazonEmbedding}, Section 4.2), this similarity function is a natural choice for the \amazon~dataset. 

\begin{figure}
    \centering
    \includegraphics[width=0.27\linewidth]{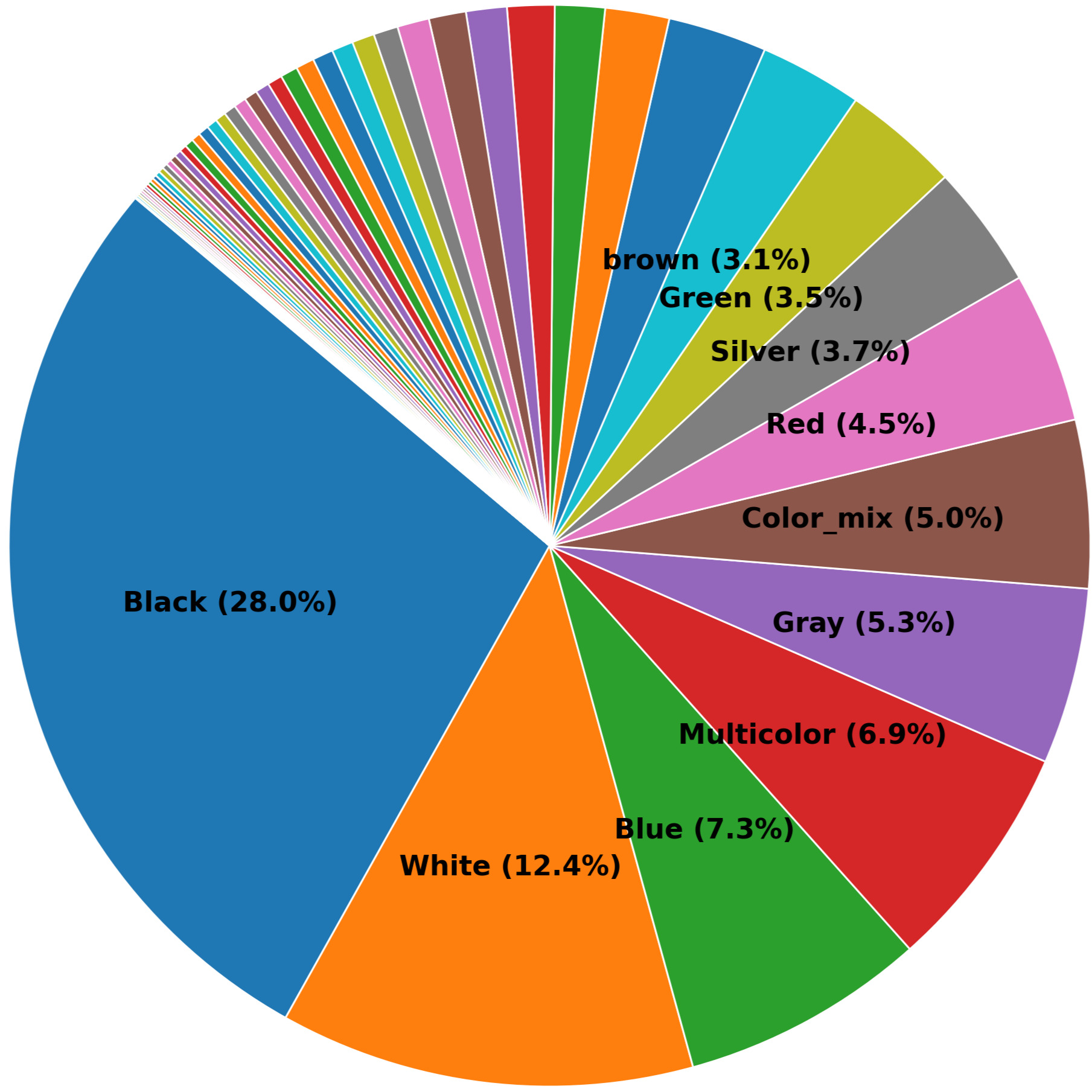}
    \caption{Distribution of product colors in the processed (cleaned) \amazon\ dataset. }
    \label{fig:amazonColorDistri}
\end{figure}

\item  \textbf{ArXiv OpenAI Embedding} (\arxiv): This dataset published by Cornell University consists of vector embeddings and metadata for approximately $250,000$ machine learning papers from arXiv.org~\citep{arxivEMbd}. The vector embeddings of the papers were generated via OpenAI’s \texttt{text-embedding-ada-002} model, executed on each paper’s title, authors, year, and abstract. The dataset is publicly available on Kaggle~\citep{arxivEMbd}.\footnote{\href{https://www.kaggle.com/datasets/awester/arxiv-embeddings}{https://www.kaggle.com/datasets/awester/arxiv-embeddings}} Note that there are no user queries prespecified in this dataset.

We use this dataset to study both the single- and multi-attribute setting. For the single-attribute setting, we only consider the year in which a paper was last updated as its attribute. For the multi-attribute setting, the paper's update year and its arXiv category are the two associated attributes. Therefore, we have $m=2$ attribute classes: update years and arXiv categories. Specifically, for the experiments we only consider papers with update-year between $2012$ and $2025$ and belonging to one or more of the following arXiv categories: \texttt{cs.ai}, \texttt{math.oc}, \texttt{cs.lg}, \texttt{cs.cv}, \texttt{stat.ml}, \texttt{cs.ro}, \texttt{cs.cl}, \texttt{cs.ne}, \texttt{cs.ir}, \texttt{cs.sy}, \texttt{cs.hc}, \texttt{cs.cr}, \texttt{cs.cy}, \texttt{cs.sd}, \texttt{eess.as}, and \texttt{eess.iv}.

\begin{figure}[h!]
    \centering
    \begin{subfigure}[b]{0.48\textwidth}
        \centering
        \includegraphics[width=0.6\linewidth]{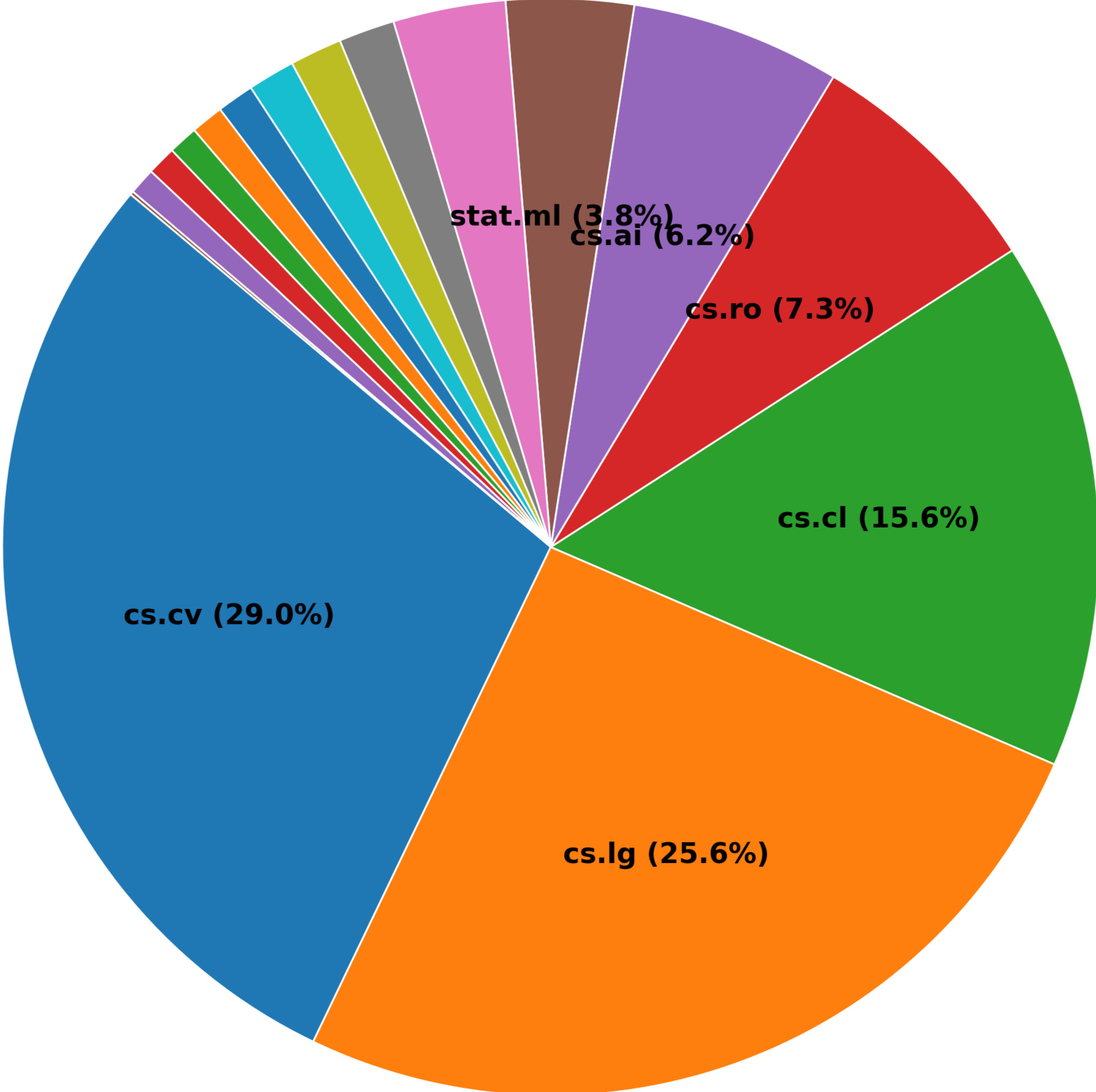}
        \caption{}
        \label{fig:sub1}
    \end{subfigure}%
    \hfill
    \begin{subfigure}[b]{0.48\textwidth}
        \centering
        \includegraphics[width=0.6\linewidth]{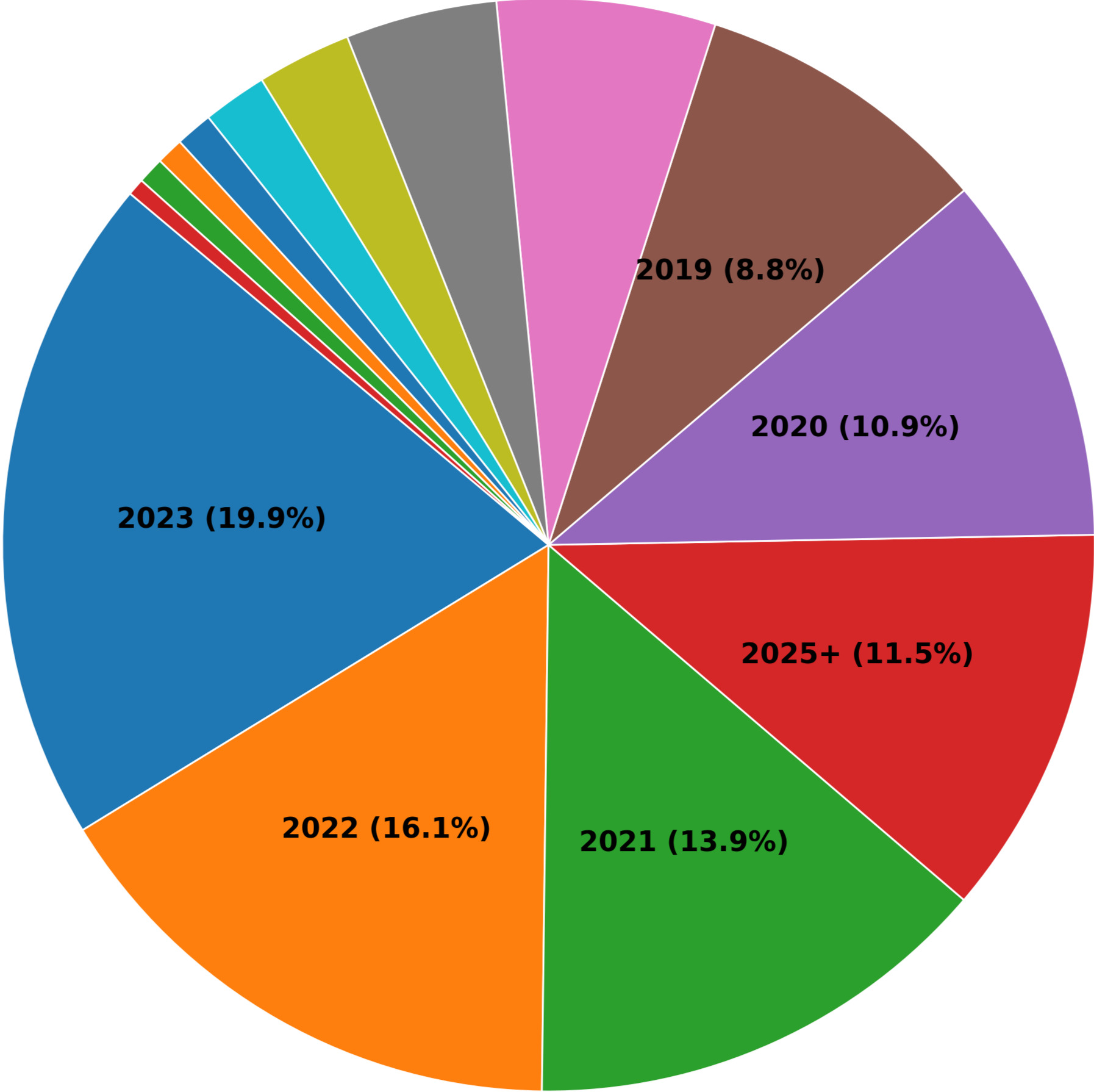}
        \caption{}
        \label{fig:sub2}
    \end{subfigure}
    \caption{Distribution of (a) arXiv categories (b) last update year in the \arxiv\ dataset.}
    \label{fig:arxivDistri}
\end{figure}

Since this dataset does not contain  predefined queries, we randomly split the dataset in $4:1$ ratio, and use the larger part as the set $P$ and the smaller part as the queries. Here, by using the vector embeddings of papers themselves as queries, we aim to simulate the task of finding papers similar to a given query paper. The similarity function used for this dataset is the reciprocal of the Euclidean distance, i.e., $\sigma(u,v) = \frac{1}{\lVert u -v \rVert + \delta}$, for any two vectors $u$ and $v$; here, $\delta>0$ is a small constant to avoid division by zero in case $\lVert u - v \rVert = 0$. Typically, we set $\delta = \eta$, where $\eta$ is the smoothening parameter used in the definition of $\NSW(\cdot)$. The distribution of the attributes across the vectors is shown in Figure~\ref{fig:arxivDistri}.

\item  \textbf{SIFT Embeddings}: This is a standard benchmarking dataset for approximate nearest neighbor search in the Euclidean distance metric~\citep{tensorflowdatasets}. The dataset consists of SIFT vector embeddings in $d=128$ dimensional space. In particular, here input set $P$ contains $1,000,000$ vectors and we have $10,000$ vectors as queries. The embeddings are available at \citep{tensorflowdatasets}.\footnote{\href{https://www.tensorflow.org/datasets/catalog/sift1m}{https://www.tensorflow.org/datasets/catalog/sift1m}} Note that this dataset does not contain any metadata that can be naturally used to assign attributes to the given vectors. Therefore, we utilize the following two methods for synthetically assigning attributes to the input vectors. 

\begin{itemize}
    \item \textbf{Clustering-based} (\siftC\hspace{-2pt}): Since attributes such as colors often occupy distinct regions in the embedding space, we apply k-means clustering~\cite{lloyd1982least,mcqueen1967some} to identify $20$ clusters. Each cluster is then assigned a unique index, and vectors in the same cluster are associated with the index of the cluster, which serves as our synthetic attribute. This simulates a single-attribute setting with $c = 20$.

    To simulate the multi-attribute setting, we extend the above method of attribute generation. Given an input vector $v$ of $128$ dimensions, we split it into $4$ vectors $\{v^i\}_{i=1}^4$, each of $32$ dimensions. In particular, the vector $v^1$ consists of the first $32$ components of $v$, the vector $v^2$ is obtained from the next $32$ components of $v$, so on.     
    Next, for each $i \in [4]$, we apply k-means clustering on $P^i \coloneqq \{v^i: v \in P\}$ to identify $20$ clusters. We write $C_i = \{c^i_1, \ldots, c^i_{20}\}$ to denote these $20$ clusters for $P^i$ and if vector $v^i$ belongs to a cluster $c^i_j$, then $v^i$ is assigned the attribute $c^i_j$, i.e., we set $\col(v^i) = c^i_j$. Finally, for a vector $v \in P$, we assign $\col(v) = \{\col(v^1), \ldots, \col(v^4)\}$. Note that this method of simulating the multi-attribute setting yields $m = 4$ attribute classes.

    \item \textbf{Probability distribution-based} (\siftP\hspace{-2pt}): As in prior work \citep{anand2025graphbased}, we also consider a setting wherein the input vectors have randomly assigned attributes. Specifically, we consider the single-attribute setting with $c=20$. Here, we assign each vector $v \in P$ an attribute as follows: with probability $0.9$, select an attribute from $\{1,2,3\}$ uniformly at random, otherwise, with the remaining $0.1$ probability, select an attribute from $\{4, \ldots, 20\}$ uniformly at random. This results in a skewed distribution of vectors among attributes that mimics real-world settings (e.g., market dominance by a few sellers). %
\end{itemize}

\item \textbf{Deep Descriptor Embeddings}:  This is another standard dataset for nearest neighbor search \citep{tensorflowdatasets}.\footnote{\href{https://www.tensorflow.org/datasets/catalog/deep1b}{https://www.tensorflow.org/datasets/catalog/deep1b}} The version of the dataset used in the current work contains  $9,990,000$ input vectors and $10,000$ separate query vectors. Here, the vectors are $96$ dimensional and the distances between them are evaluated using the cosine distance. 

As in the case of SIFT dataset, input vectors in \texttt{Deep1b} do not have predefined attributes. Hence, we use the above-mentioned methods (based on clustering and randomization) to synthetically assign attributes to the vectors. This gives us the clustering-based \deepC\ and probability distribution-based \deepP\ versions of the dataset.

\end{enumerate}

\paragraph{\textbf{Choice of Parameter $\eta$}:} For our methods, we tune and set the smoothing parameter, $\eta$, to $0.01$ for the \arxiv, \siftC\ and \siftP\ datasets, and set it to $0.0001$ to analyze $p$-NNS. For other datasets, namely \amazon, \deepC\ and \deepP, we set $\eta$ to $50$.

\subsection{Algorithms}
\label{subsec:algorithm-details-for-experiments}
Next, we describe the algorithms executed in the experiments.

\begin{enumerate}
\item \textbf{\diskann}: This is the standard ANN algorithm that aims to maximize the similarity of the retrieved vectors to the given query without any diversity considerations. In our experiments, we use the graph based DiskANN method of \cite{DiskANN19} as the standard ANN algorithm. We instantiate DiskANN with candidate list size $L=2000$ and the maximum graph degree as $128$.\footnote{Both these choices are sufficiently larger than the standard values, $L=200$ and maximum graph degree $64$.} Here, we also set the pruning factor at $1.3$, which is consistent with the existing recommendation in \cite{anand2025graphbased}.

\item \textbf{\divann}: This refers to the algorithm of \cite{anand2025graphbased} that solves the hard- constraint-based formulation for diversity in the single-attribute setting. Recall that \cite{anand2025graphbased} aims to maximize the similarity of the retrieved vectors to the given query subject to the constraint that no more than $k'$ vectors in the retrieved set should have the same attribute. Note that the smaller the value of $k'$, the more diverse the retrieved set of vectors. 
Moreover, $k'$ has to be provided as an input to this algorithm. In our experiments, we set different values $k'$, such as $k' \in \{1, 2, 5\}$ when $k=10$, and $k'\in \{1,2,5,10\}$ when $k=50$.

\item \textbf{\texttt{Nash-ANN} and \texttt{$p$-mean-ANN}}: \texttt{Nash-ANN} refers to \Cref{algo:greedy-nash-ann} and \texttt{$p$-mean-ANN} refers to \Cref{algo:p-mean-ann} (stated in Appendix~\ref{appendix:p-mean}). Recall that \Cref{algo:greedy-nash-ann} and \Cref{algo:p-mean-ann} optimally solve the NaNNS and the $p$-NNS problems, respectively, in the single-attribute setting given access to an exact nearest neighbor search oracle (Theorems \ref{theorem:single-attribute} and~\ref{theorem:single-attribute-p-mean}). Further note that the $p$-mean welfare function, $M_p(\cdot)$, reduces to the Nash social welfare (geometric mean) when the exponent parameter $p \to 0^{+}$.
For readability and at required places, we will write $p=0$ to denote \ouralgo. We conduct experiments with varying values of $p \in \{-10, -1, -0.5, 0, 0.5, 1\}$.

\item \textbf{\texttt{Multi Nash-ANN} and \texttt{Multi Div-ANN}}:  
In the multi-attribute setting, there are no prior methods to address diversity. Hence, for comparisons, we first fetch $L = 10000$  candidate vectors from $P$ for each query $q$, using the standard \diskann~method, and then apply the following algorithms on the candidate vectors: (i) our algorithm for the NaNNS problem in the multi-attribute setting (\Cref{algo:greedy-submodular}), which we term as \multinash, and (ii) an adaptation of the algorithm of \cite{anand2025graphbased}, referred hereon as \multidivann, which greedily selects the most similar vectors to the query subject to the constraint that there are no more than $k'$ vectors from each attribute.\footnote{Note that one vector can have multiple attributes, hence contributing to the constraint of multiple attributes. Therefore, the issue of identifying an appropriate $k'$ is exacerbated on moving to the multi-attribute setting.}
We compare \multinash~($p=0$) against \multidivann~ under different  choices of $k'$.

\item \multipmean: In the multi-attribute setting, we also implement an analogue of \multinash\ that (in lieu of $\NSW$) focuses on the $p$-mean welfare, $M_p$. The objective in these experiments is to understand the impact of varying the parameter $p$ and the resulting tradeoff between relevance and diversity.
Here, for each given query, we first fetch a set of $L=10000$ candidate vectors using \diskann. Then, we populate a set of $k$ vectors by executing a marginal-gains greedy method over the $L$ candidate vectors. In particular, we iterate $k$ times, and in each iteration, select a new candidate vector that: (i) for $p \in (0,1]$, yields the maximum increase in $M_p(\cdot)^p$, or (ii) for $p < 0$, leads to the maximum decrease in $M_p(\cdot)^p$. 
\end{enumerate}

\subsection{Results: Balancing Relevance and Diversity}

\paragraph{Single-attribute setting.} 
We first compare, in the single-attribute setting, the performance of our algorithm, \ouralgo, with \diskann~and \divann\ (with different values of $k'$). The results for the \amazon\ and \deepC\ datasets with $k=50$ are shown in Figure~\ref{fig:mainPaperResults-NaNNS} (columns one and two). Here, \diskann\ finds the most relevant set of neighbors (approximation ratio close to $1$), albeit with the lowest entropy (diversity). Moreover, as can be seen in the plots, the most diverse (highest entropy) solution is obtained when we set, in \divann, $k' = 1$; this restricts each $\ell \in [c]$ to contribute at most one vector in the output of \divann. Also, note that one can increase the approximation ratio (i.e., increase relevance) of \divann~while incurring a loss in entropy (diversity), by increasing the value of the constraint parameter $k'$. However, selecting a `right' value for $k'$ is non-obvious, since this choice needs to be tailored to the dataset and, even within it, to queries (recall the ``blue shirt'' query in Figure \ref{fig:QueryResponses}). 

By contrast, \ouralgo\ does not require such ad hoc adjustments and, by design, finds a balance between relevance and diversity. Indeed, as can be seen in Figure~\ref{fig:mainPaperResults-NaNNS} (columns 1 and 2), \ouralgo\ maintains an approximation ratio close to $1$ while achieving diversity similar to \divann\ with $k'=1$. Moreover, \ouralgo~  Pareto dominates \divann\  with $k' = 2$ for \amazon ~dataset and $k' = 5$ for \deepC\ dataset on the fronts of approximation ratio and entropy.  
The results for other datasets and metrics follow similar trends and are given in Appendix \ref{appendix:exp-single-attribute}.

\begin{figure}[t!]
\centering
\includegraphics[width=0.24\textwidth]{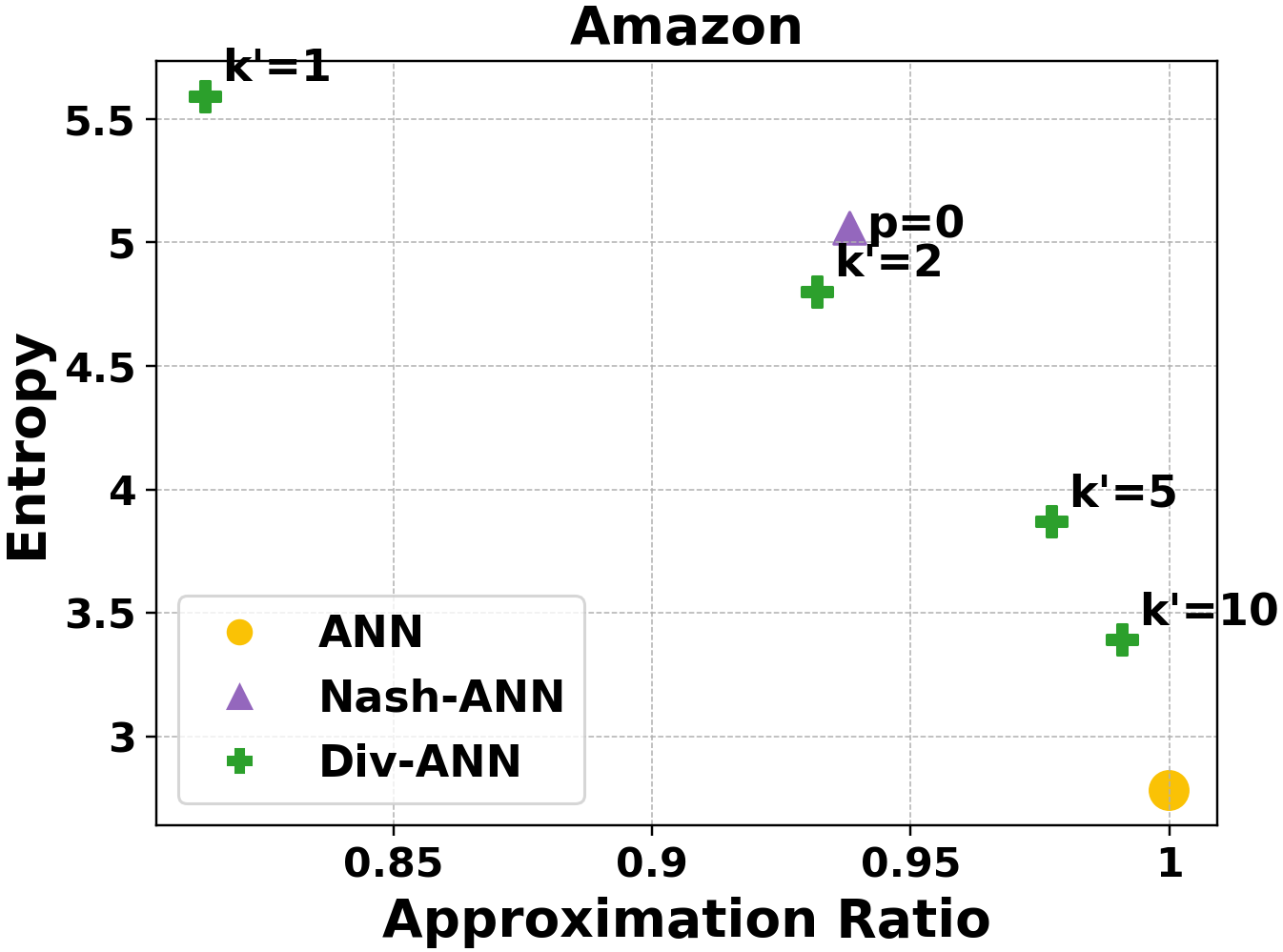} \hfill
\includegraphics[width=0.24\textwidth]{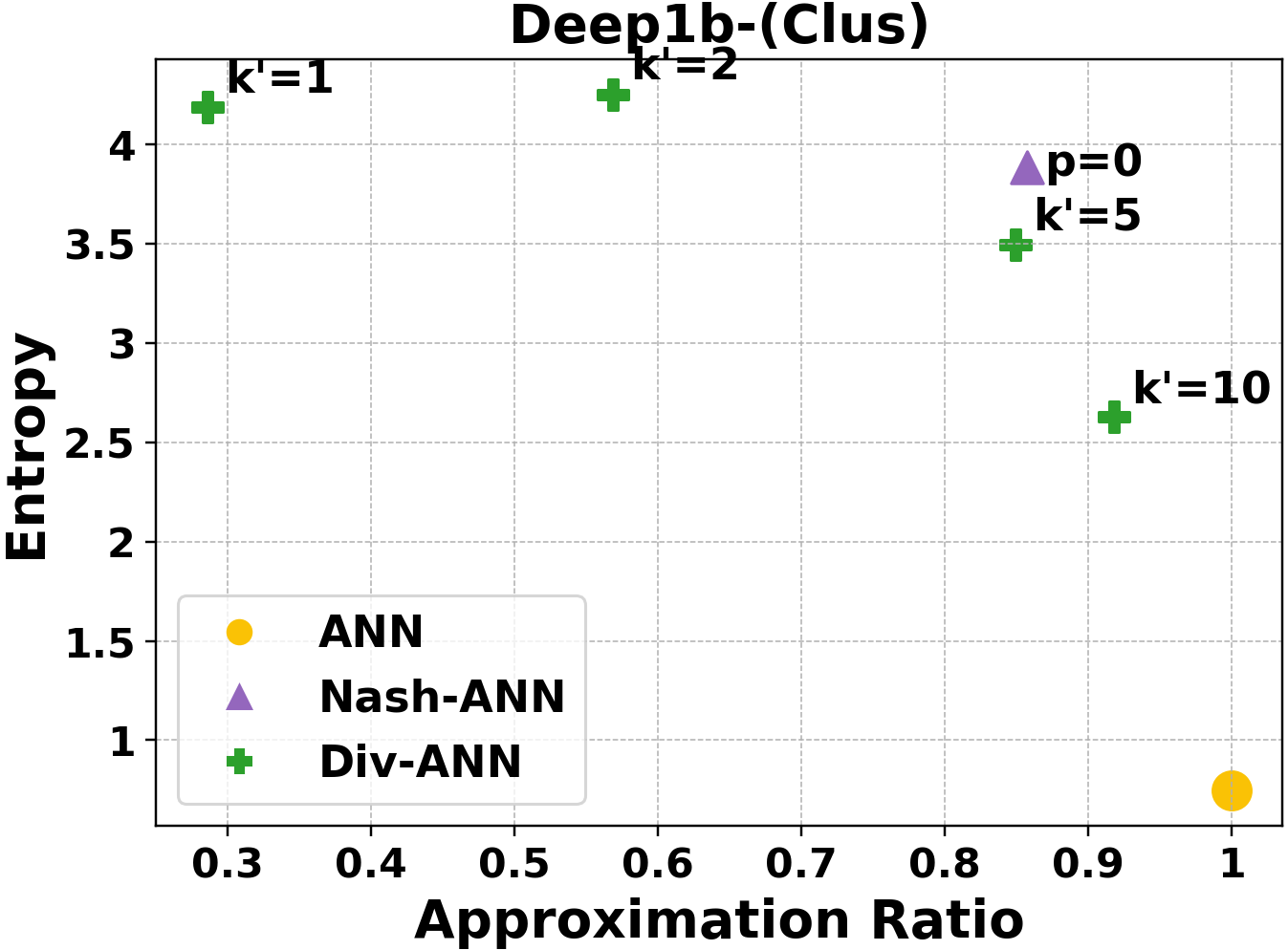} \hfill
\includegraphics[width=0.24\textwidth]{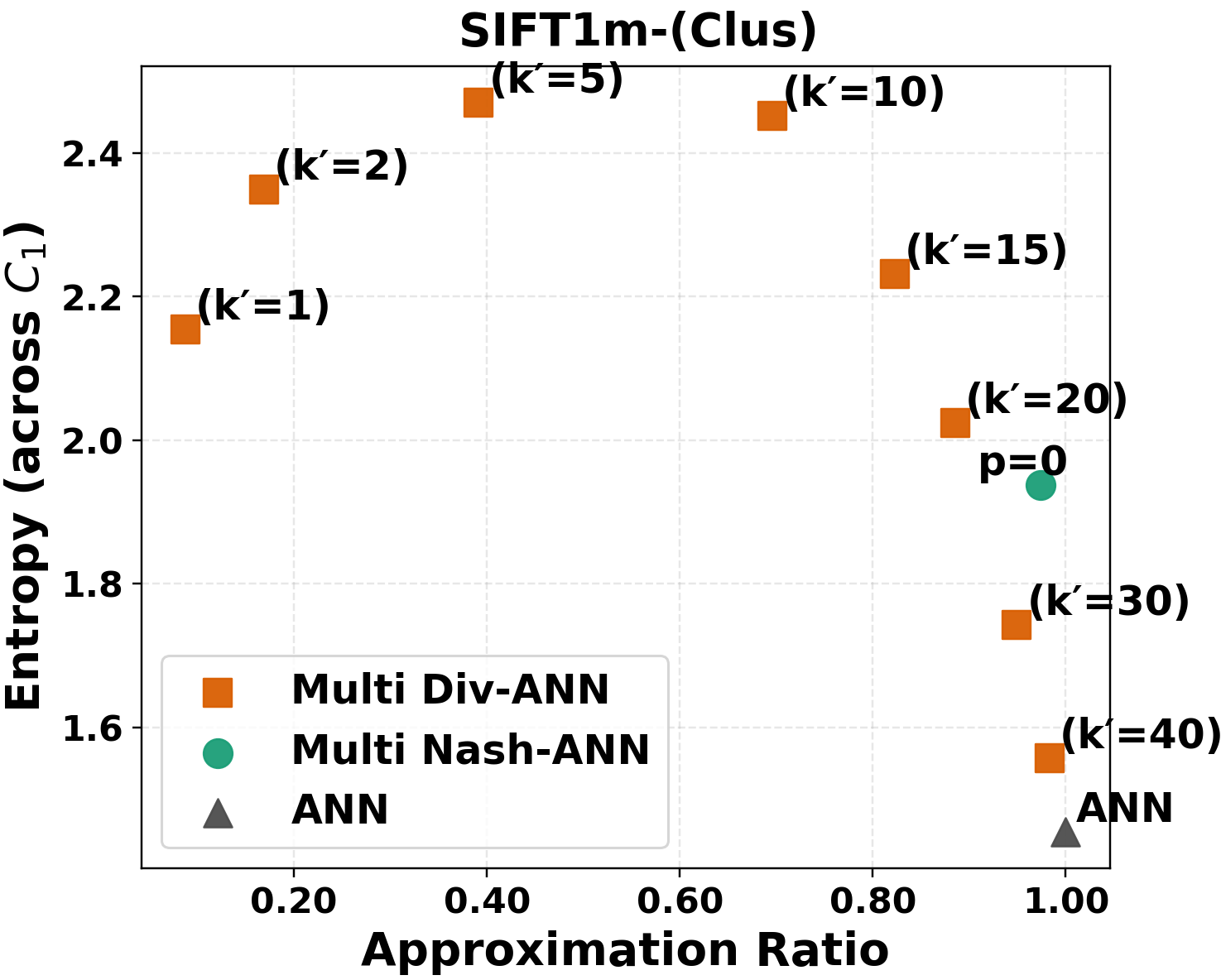} \hfill
\includegraphics[width=0.24\textwidth]{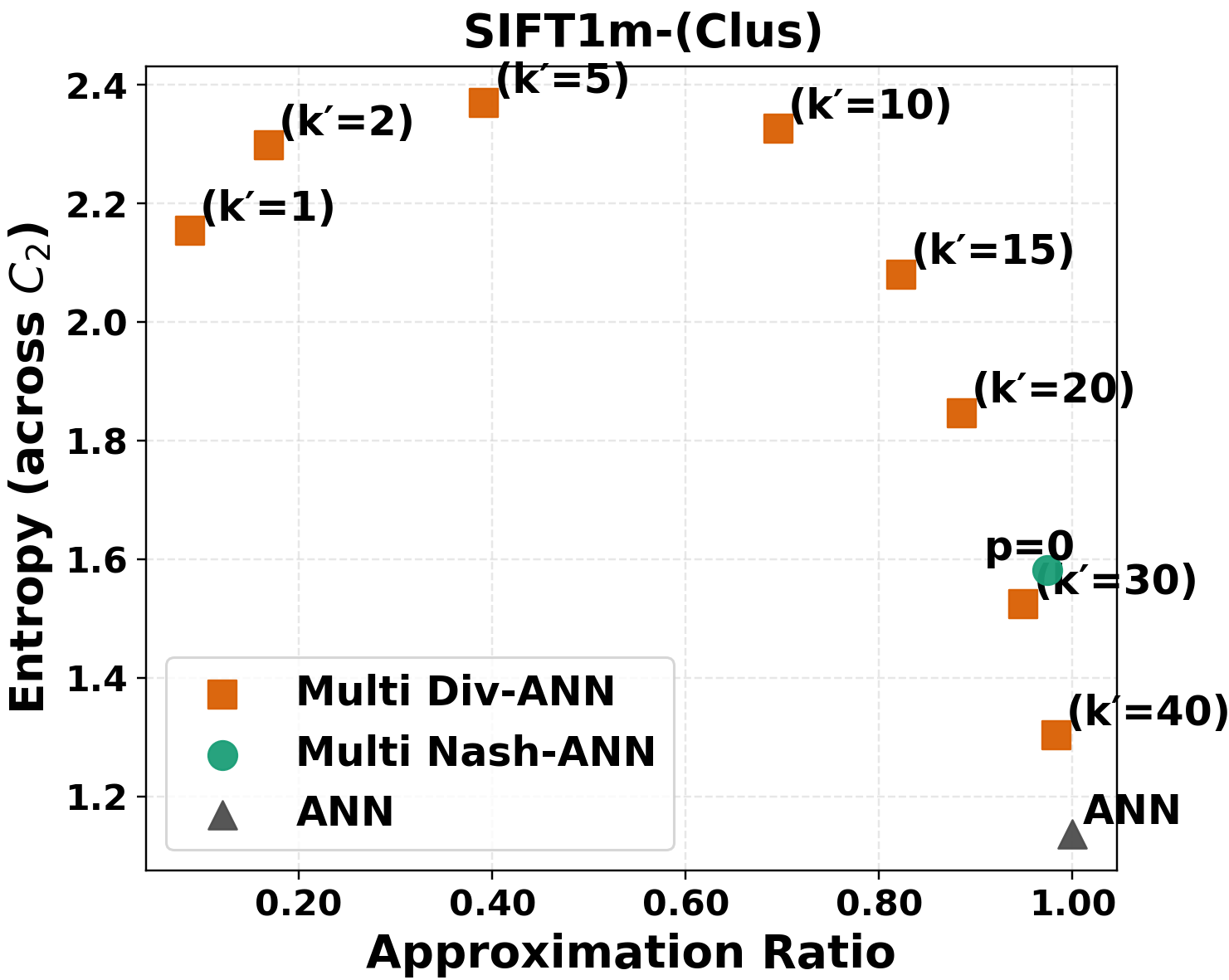} \hfill

\caption{{\small {\bf Columns 1 and 2} - Comparison of approximation ratio versus entropy trade-offs between \ouralgo, and \divann\ with varying $k'$, for $k$  = $50$ on \amazon\ and \deepC\ datasets in the single-attribute setting. {\bf Columns 3 and 4} - Comparison of approximation ratio versus entropy trade-offs (across attribute classes $C_1$ and $C_2$) between \multinash, and \multidivann\ with varying $k'$ on \siftC\ dataset with $k=50$ in the multi-attribute setting. }}
\label{fig:mainPaperResults-NaNNS}
\end{figure}

\paragraph{Multi-attribute setting.}
In the multi-attribute setting, we report results for \multinash\ and \multidivann\ on the \siftC\ dataset (Figure \ref{fig:mainPaperResults-NaNNS}, columns 3 and 4) for $k=50$ and $c=80$. These eighty attributes are partitioned into four sets, $\{C_i\}_{i=1}^4$, with each set of size $|C_i| = 20$, i.e., $[c] = \cup_{i=1}^4 C_i$. Further, each input vector $v$ is associated with four attributes ($|\col(v)|=4$), one from each $C_i$; see \Cref{subsec:experiment-set-up-and-dataset} for further details. %
Here, to quantify diversity, we separately consider for each $i \in [4]$, the entropy across attributes within a $C_i$. In \Cref{fig:mainPaperResults-NaNNS} (columns 3 and 4), we compare the approximation ratio versus entropy trade-offs of \multinash\ against \multidivann\ with varying $k'$. Here we show the results for attribute classes $C_1$ (column 1) and $C_2$ (column 2) whereas the results for $C_3$ and $C_4$ are given in \Cref{fig:mainPaperResultsMultiAttriB}. We observe that \multinash\ maintains a high approximation ratio (relevance) while simultaneously achieving significantly higher entropy (higher diversity) than \diskann. By contrast, in the constraint-based method \multidivann, low values of $k'$ lead to a notable drop in the approximation ratio, whereas increasing $k'$ reduces entropy. For example, for $k'$ below $15$, one obtains approximation ratio less than $0.8$, and to reach an approximation ratio comparable to \multinash, one needs $k'$ as high as $30$. Additional results for the \arxiv\ dataset in the multi-attribute setting are provided in Appendix~\ref{appendix:exp-multi-attribute}, and they exhibit trends similar to the ones in \Cref{fig:mainPaperResults-NaNNS}. These findings demonstrate that \multinash~achieves a balance between relevance and diversity. In summary
{
\begin{quote}

    \emph{Across datasets, and in both single- and multi-attribute settings, the Nash formulation consistently improves entropy (diversity) over \kANN, while maintaining an approximation ratio (relevance) of roughly above $0.9$. By contrast, the hard-constrained formulation is highly sensitive to the choice of the constraint parameter $k'$, and in some cases, incurs a substantial drop in approximation ratio (even lower than $0.2$).}

\end{quote}
}

\begin{figure}[t!]
\centering
\includegraphics[width=0.24\textwidth]{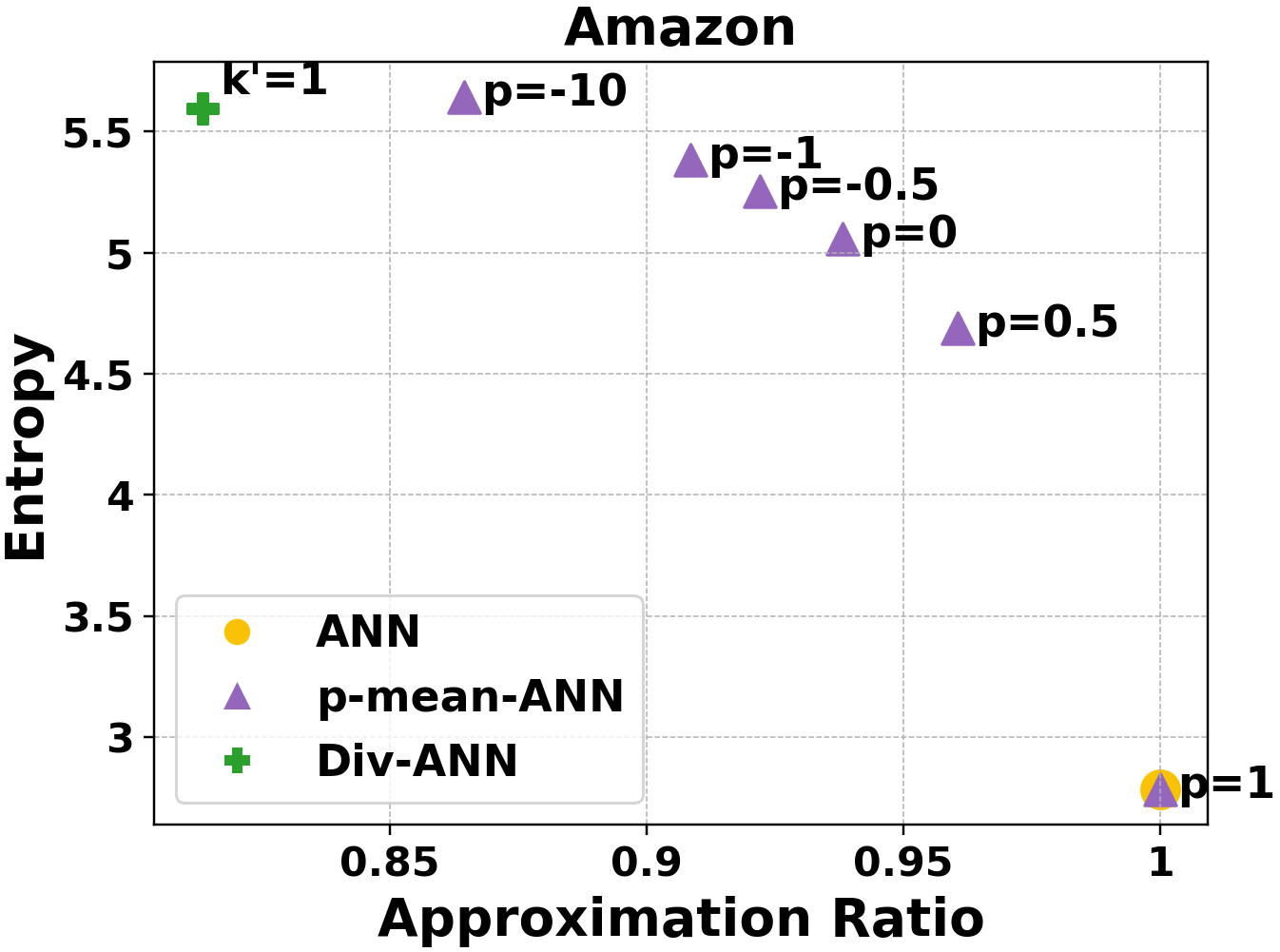} \hfill
\includegraphics[width=0.24\textwidth]{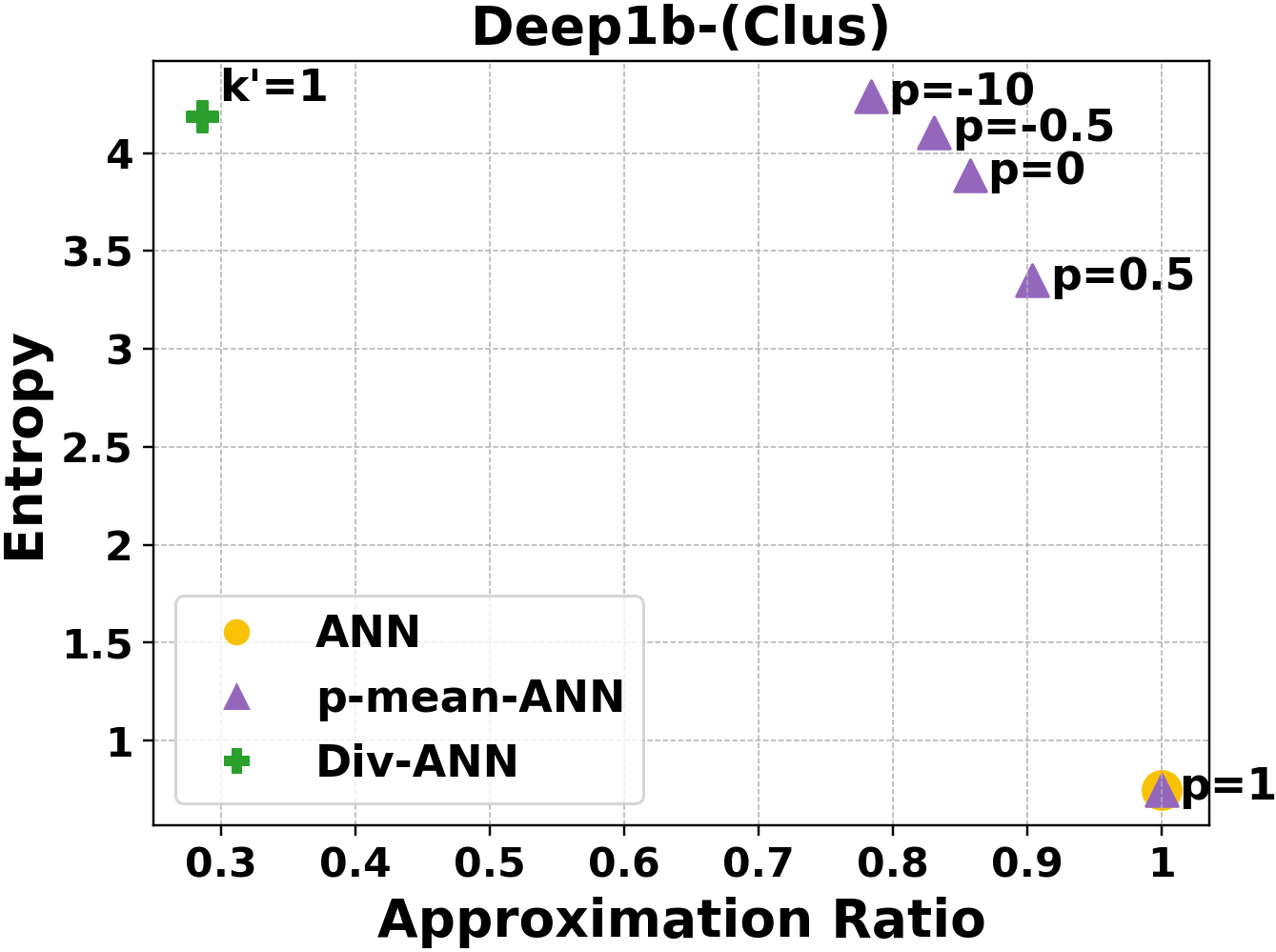}
\includegraphics[width=0.24\textwidth]{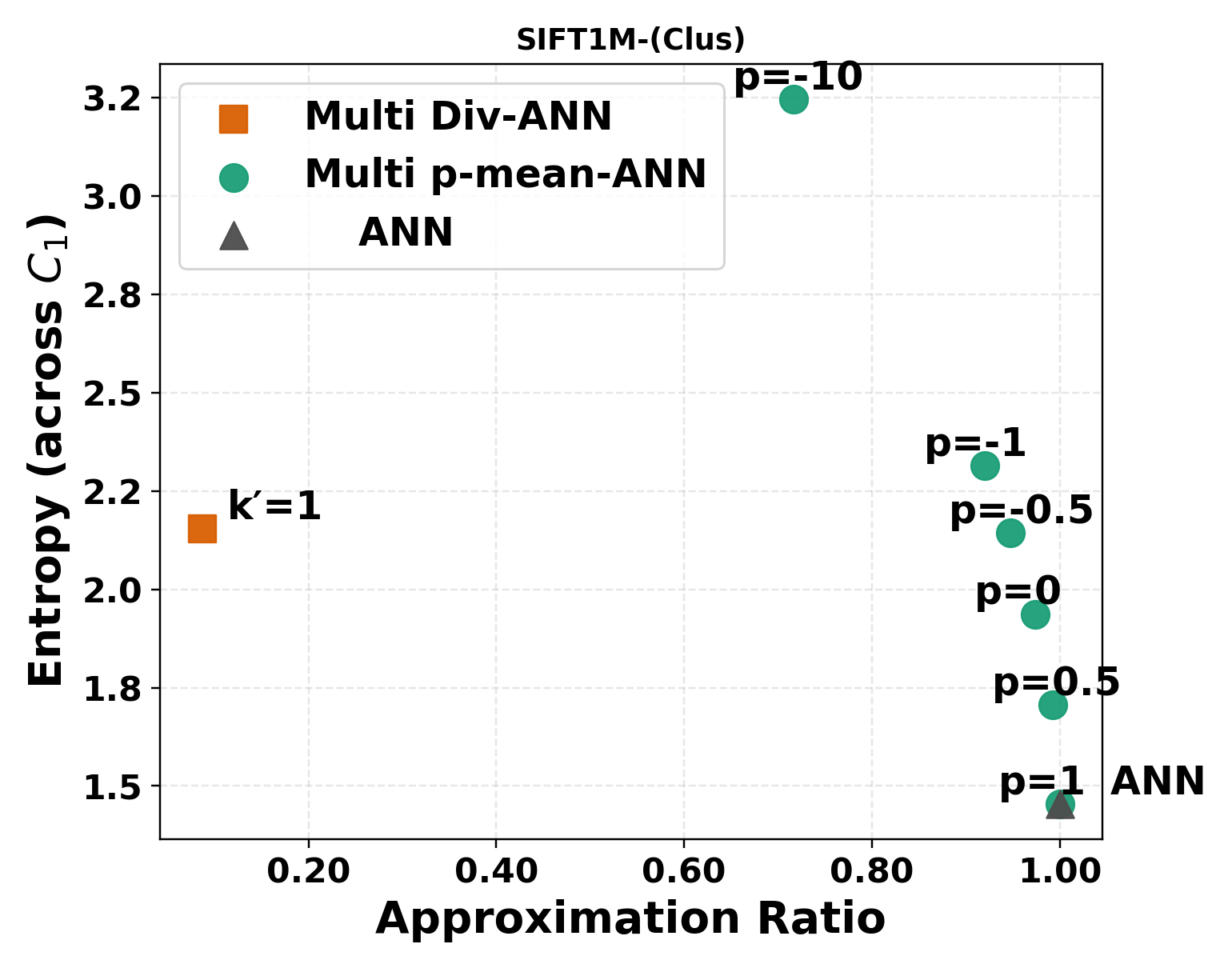} \hfill
\includegraphics[width=0.24\textwidth]{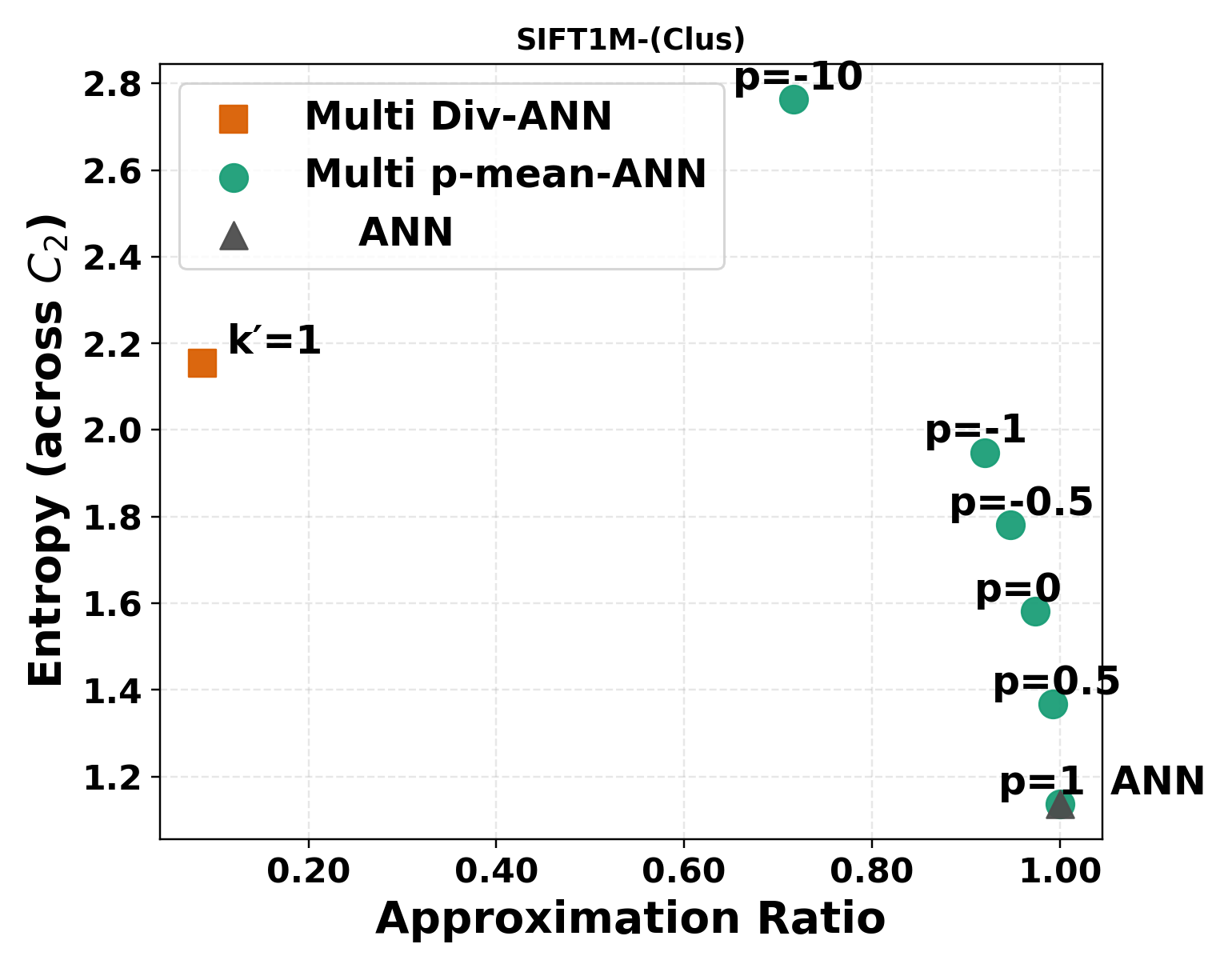}

\caption{{\small {\bf Columns 1 and 2} - Approximation ratio versus entropy trade-offs for {\tt $p$-mean-ANN} at various $p$ values, for $k$ = $50$ on \amazon\ and \deepC\ datasets in the single-attribute setting. {\bf Columns 3 and 4} - Approximation ratio versus entropy trade-offs (across attribute classes $C_1$ and $C_2$) for \multipmean\ with varying $p$ on \siftC\ dataset with $k$=$50$ in the multi-attribute setting.}}
\label{fig:mainPaperResults-p-NNS}
\end{figure}

\paragraph{Results for $p$-NNS.} Recall that, selecting the exponent parameter $p \in (-\infty, 1]$ enables us to interpolate $p$-NNS between the standard NNS problem ($p=1$), NaNNS ($p=0$), and optimizing solely for diversity ($p \to -\infty$). We execute \pMeanANN~ for $p \in \{-10, -1, -0.5, 0, 0.5, 1\}$ in both single- and multi-attribute settings and show that a trade-off between relevance (approximation ratio) and diversity (entropy) can be achieved by varying $p$.

For the single-attribute setting, Figure \ref{fig:mainPaperResults-p-NNS} (columns 1 and 2), and for the multi-attribute setting, Figure \ref{fig:mainPaperResults-p-NNS} (columns 3 and 4) capture this feature on \siftC\ dataset with $k=50$: For lower values of $p$, we have higher entropy but lower approximation ratio, while $p = 1$ matches \diskann. For the multi-attribute setting, we show results for attribute classes $C_1$ (column 3) and $C_2$ (column 4) in \Cref{fig:mainPaperResults-p-NNS}, whereas the results for $C_3$ and $C_4$ are shown in \Cref{fig:mainPaperResultsMultiAttriB}. Note that in the multi-attribute setting, \texttt{Multi $p$-mean-ANN} with $p=-10$ Pareto dominates \multidivann\ with $k'=1$ in terms of approximation ratio and entropy. Moreover, analogous results are obtained for other datasets and metrics; see Appendix~\ref{appendix:exp-single-attribute} and \ref{appendix:exp-multi-attribute}. 

\subsection{A Faster Heuristic for the Single Attribute Setting: \texttt{$p$-FetchUnion-ANN}}
\label{subsec:FetchNashUnionNash}

Further, we empirically study a faster heuristic algorithm for our NSW and $p$-mean welfare formulations. Specifically, the heuristic---called \texttt{$p$-FetchUnion-ANN}---first fetches a sufficiently large candidate set of vectors (irrespective of their attributes) using the \diskann~algorithm. Then, it applies the Nash (or $p$-mean) selection (similar to Line~\ref{line:marginal} in~\Cref{algo:greedy-nash-ann} or Lines~\ref{line:marginal-positive-p}-\ref{line:marginal-negative-p} in~\Cref{algo:p-mean-ann}) within this set. That is, instead of starting out with $k$ neighbors for each $\ell \in [c]$ (as in Line \ref{line:hat-D} of \Cref{algo:greedy-nash-ann}), the alternative here is to work with sufficiently many neighbors from the set $\cup_{\ell=1}^c D_\ell$. 

Table~\ref{tab:sift20-k50} shows that this heuristic consistently achieves performance comparable to \texttt{$p$-mean-ANN} in terms of approximation ratio and diversity in the \siftC\ dataset. Since \texttt{$p$-FetchUnion-ANN} retrieves a larger pool of vectors with high similarity, it achieves improved  approximation ratio over \texttt{$p$-Mean-ANN}. However, it comes at the cost of reduced entropy, which can be explained by the fact that in restricting its search to an initially fetched large pool of vectors, \texttt{$p$-FetchUnion-ANN} may miss a more diverse solution that exists over the entire dataset. Another important aspect of \texttt{$p$-FetchUnion-ANN} is that, because it retrieves all neighbors from the union at once, the heuristic delivers substantially higher throughput (measured as queries answered per second, QPS) and therefore lower latency, which can be seen in Table~\ref{tab:QPSandLatency} for the \siftC\ dataset. In particular, \texttt{$p$-FetchUnion-ANN} serves almost \textbf{10$\times$} more queries on \siftC\ dataset than \texttt{$p$-mean-ANN}. The latency values exhibit a similar trend with reductions of similar magnitude. In summary, these observations position the heuristic as a notably fast method for NaNNS and $p$-NNS, particularly when $c$ is large. We provide comparison of \texttt{$p$-FetchUnion-ANN} with \texttt{$p$-mean-ANN} on other datasets in Tables~\ref{tab:amazon-k50} to~\ref{tab:QPSandLatencyAmazon} in Appendix~\ref{appendix:FetchNashUnionNash}.

\begin{table}[tb]
\centering
\caption{Comparison of performance across $p$ values for \siftC\ at $k=50$. }
\label{tab:sift20-k50}
\scalebox{0.75}{
\setlength{\tabcolsep}{6pt}
\renewcommand{\arraystretch}{1.2}
\begin{tabular}{llcccccc}
\toprule
\textbf{Metric} & \textbf{Algorithm} & $p=-10$ & $p=-1$ & $p=-0.5$ & $p=0$ & $p=0.5$ & $p=1$ \\
\midrule
\multirow{4}{*}{Approx.~Ratio}
  & \texttt{$p$-Mean-ANN} & 0.749$\pm$0.051 & 0.810$\pm$0.045 & 0.812$\pm$0.043 & 0.846$\pm$0.036 & 0.932$\pm$0.028 & 1.000$\pm$0.000 \\
  & \texttt{$p$-FetchUnion-ANN} & 0.979$\pm$0.014 & 0.980$\pm$0.013 & 0.980$\pm$0.013 & 0.981$\pm$0.012 & 0.983$\pm$0.011 & 1.000$\pm$0.000 \\
  & \diskann\  & \multicolumn{6}{c}{1.000$\pm$0.000} \\
  & \divann\  ($k'\!=\!1$) & \multicolumn{6}{c}{0.315$\pm$0.021} \\
\midrule
\multirow{4}{*}{Entropy}
  & \texttt{$p$-Mean-ANN} & 4.285$\pm$0.012 & 4.293$\pm$0.002 & 4.293$\pm$0.001 & 4.197$\pm$0.045 & 3.506$\pm$0.275 & 0.892$\pm$0.663 \\
  & \texttt{$p$-FetchUnion-ANN} & 2.235$\pm$0.802 & 2.238$\pm$0.802 & 2.239$\pm$0.802 & 2.239$\pm$0.802 & 2.231$\pm$0.800 & 0.892$\pm$0.663 \\
  & \diskann\  & \multicolumn{6}{c}{0.892$\pm$0.663} \\
  & \divann\  ($k'\!=\!1$) & \multicolumn{6}{c}{4.289$\pm$0.053} \\
\bottomrule
\end{tabular}
}%

\end{table}

\begin{table}[tb]
\centering
\caption{Comparison of performance on QPS and Latency across $p$ on \siftC\ dataset for $k=50$. }
\label{tab:QPSandLatency}
\scalebox{0.75}{
\setlength{\tabcolsep}{6pt}
\renewcommand{\arraystretch}{1.2}
\begin{tabular}{llcccccc}
\toprule
\textbf{Metric} & \textbf{Algorithm} & $p=-10$ & $p=-1$ & $p=-0.5$ & $p=0$ & $p=0.5$ & $p=1$ \\
\midrule
\multirow{2}{*}{Query per Second}
  & \texttt{$p$-Mean-ANN} & 120.86  & 115.78  & 107.01 & 135.98 & 122.59 & 122.59 \\
  & \texttt{$p$-FetchUnion-ANN} & 1324.53 & 1324.62 & 1337.28  & 1442.03 & 1443.38 & 1327.03  \\
\midrule
\multirow{2}{*}{Latency ($\mu s$)}
  & \texttt{$p$-Mean-ANN} & 264566.00 & 276129.00 & 298804.00  & 230318.00  & 235144.00  & 260800.00 \\
  & \texttt{$p$-FetchUnion-ANN} &  24133.80 &  24134.00 & 23907.00 & 22170.20 & 22149.30 & 28990.40 \\
  \midrule
\multirow{2}{*}{$99.9$th percentile of Latency}
  & \texttt{$p$-Mean-ANN} &   484601.00  &  513036.00 &  478821.00 & 477925.00 & 482777.00 & 479132.00 \\
  & \texttt{$p$-FetchUnion-ANN} &   52943.40  & 53474.70 & 54283.40 & 56128.70 & 53082.20  & 24088.70  \\
\bottomrule
\end{tabular}
}%

\end{table}

\section{Conclusion}
\label{section:conclusion}
In this work, we formulated diversity in neighbor search with a welfarist perspective, using Nash social welfare (NSW) and $p$-mean welfare as the underlying objectives. Our NSW formulation balances diversity and relevance in a query-dependent manner, satisfies several desirable axiomatic properties, and is naturally applicable in both single-attribute and multi-attribute settings. With these properties, our formulation overcomes key limitations of the prior hard-constrained approach~\citep{anand2025graphbased}. Furthermore, the more general $p$-mean welfare interpolates  between complete relevance ($p=1$) and complete diversity ($p=-\infty$), offering practitioners a tunable parameter for real-world needs. Our formulations also admit provable and practical algorithms suited for low-latency scenarios. Experiments on real-world and semi-synthetic datasets validate their effectiveness in balancing diversity and relevance against existing baselines. 

An important direction for future work is the design of sublinear-time approximation algorithms, in both single- and multi-attribute settings, that directly optimize our welfare objectives as part of ANN algorithms, thereby further improving efficiency. Another promising avenue is to extend welfare-based diversity objectives to settings without explicit attributes.

\section*{Acknowledgement}
\addcontentsline{toc}{section}{Acknowledgement}
Siddharth Barman, Nirjhar Das, and Shivam Gupta acknowledge the support of the Walmart Center for Tech Excellence (CSR WMGT-23-0001) and an Ittiam CSR Grant (OD/OTHR-24-0032).

\bibliographystyle{alpha}
\bibliography{references}

\newpage
\appendix

\section{Proofs of Examples \ref{example-one} and \ref{example-two}}
\label{section:example-proofs}
This section provides the proofs for Examples \ref{example-one} and \ref{example-two}. Recall that these stylized examples highlight how NaNNS dynamically recovers complete diversity or complete relevance based on the data.

\ExampleOne*

\begin{proof}
    Towards a contradiction, suppose there exists $T \in \argmax_{S \subseteq P: \lvert S \rvert = k} \NSW(S)$ such that $\lvert T \cap D_{\ell^*} \rvert > 1$ for some $\ell^* \in [c]$. Note that according to the setting specified in the example, $u_\ell(T) = \lvert T \cap D_\ell \rvert + \eta$ for all $\ell \in [c]$.

    Since $c \geq k$ and $\lvert T \cap D_{\ell^*} \rvert > 1$, there exists $\ell' \in [c]$ such that $\lvert T \cap D_{\ell'} \rvert = 0$. Let $v^* \in T \cap D_{\ell^*}$ and $v' \in D_{\ell'}$ be two vectors. Consider the set $T' = (T \setminus \{v^*\} ) \cup \{v'\}$. We have
    \begin{align*}
        \frac{\NSW(T')}{\NSW(T)} &= \left(\frac{(u_{\ell'}(T') + \eta)}{(u_{\ell'}(T) + \eta)} \cdot \frac{(u_{\ell^*}(T') + \eta)}{(u_{\ell^*}(T) + \eta)} \prod_{\ell \in [c] \setminus \{\ell^*, \ell'\}} \frac{(u_\ell(T') + \eta)}{(u_\ell(T) + \eta)}\right)^\frac{1}{c} \\
        &= \left(\frac{(1 + \eta)}{\eta} \cdot \frac{(u_{\ell^*}(T) - 1 + \eta)}{(u_{\ell^*}(T) + \eta)} \prod_{\ell \in [c] \setminus \{\ell^*, \ell'\}} \frac{(u_\ell(T) + \eta)}{(u_\ell(T) + \eta)}\right)^\frac{1}{c} \\
        &= \left( \frac{(1 + \eta)}{\eta} \cdot \frac{(u_{\ell^*}(T) - 1 + \eta)}{(u_{\ell^*}(T) + \eta)} \right)^\frac{1}{c} \\
        &= \left(\frac{u_{\ell^*}(T) - 1 + \eta u_{\ell^*}(T) + \eta^2}{\eta u_{\ell^*}(T) + \eta^2}\right)^\frac{1}{c} > 1 \tag{$u_{\ell^*}(T) \geq 2$}
    \end{align*}
    Therefore, we have $\NSW(T') > \NSW(T)$, which contradicts the optimality of $T$. Hence, we must have $\lvert T \cap D_\ell \rvert \leq 1$ for all $\ell \in [c]$, which proves the claim.
\end{proof}

\ExampleTwo*

\begin{proof}
    Towards a contradiction, suppose there exists $T \in \argmax_{S \subseteq P: \lvert S \rvert = k} \NSW(S)$ such that $\lvert T \cap D_{\ell^*} \rvert < k$. Therefore, there exists $\ell' \in [c] \setminus \{\ell^*\}$ such that $\lvert T \cap D_{\ell'} \rvert \geq 1$. Let $v^* \in D_{\ell^*} \setminus T$ and let $v' \in T \cap D_{\ell'}$. Note that $u_{\ell'}(T) = 0$ since $\sigma(q, v) = 0$ for all $v \in D_\ell$ for any $\ell \in [c] \setminus \{\ell^*\}$. Moreover, we also have $u_{\ell^*}(T) = \lvert T \cap D_{\ell^*} \rvert$.

    Consider the set $T' = (T \setminus \{v'\}) \cup \{v^*\}$. We have,
    {\allowdisplaybreaks
    \begin{align*}
        \frac{\NSW(T')}{\NSW(T)} &= \left(\frac{(u_{\ell'}(T') + \eta)}{(u_{\ell'}(T) + \eta)} \cdot \frac{(u_{\ell^*}(T') + \eta)}{(u_{\ell^*}(T) + \eta)} \prod_{\ell \in [c] \setminus \{\ell^*, \ell'\}} \frac{(u_\ell(T') + \eta)}{(u_\ell(T) + \eta)}\right)^\frac{1}{c} \\
        &= \left(\frac{(u_{\ell'}(T) - \sigma(q, v') + \eta)}{(u_{\ell'}(T) + \eta)} \cdot \frac{(u_{\ell^*}(T) + \sigma(q, v^*) + \eta)}{(u_{\ell^*}(T) + \eta)} \prod_{\ell \in [c] \setminus \{\ell^*, \ell'\}} \frac{(u_\ell(T) + \eta)}{(u_\ell(T) + \eta)}\right)^\frac{1}{c} \\
        &= \left( \frac{(0 - 0 + \eta)}{0 + \eta} \cdot \frac{(\lvert T \cap D_{\ell^*} \rvert + 1 + \eta)}{(\lvert T \cap D_{\ell^*} \rvert + \eta)} \right)^\frac{1}{c} \\
        &= \left(1 + \frac{1}{ \lvert T \cap D_{\ell^*} \rvert + \eta}\right)^\frac{1}{c} > 1~.
    \end{align*}}
    Therefore, we have obtained $\NSW(T') > \NSW(T)$, which contradicts the optimality of $T$. Hence, it must be the case that $\lvert T \cap D_{\ell^*} \rvert = k$, which proves the claim.
\end{proof}

\section{Extensions for $p$-NNS}
\label{appendix:p-mean}

This section extends our NaNNS results to $p$-mean nearest neighbor search ($p$-NNS) in the single-attribute setting. We state our algorithm (\Cref{algo:p-mean-ann}) and present corresponding guarantees (\Cref{theorem:single-attribute-p-mean} and~\Cref{theorem:single-attribute-p-mean-approx-oracle}) for finding an  optimal solution for the $p$-NNS problem. 

Recall that, for any $p \in (-\infty, 1]$, the $p$-mean welfare of $c$ agents with utilities $w_1, \ldots, w_c$ is given by 
   $M_p(w_1, \ldots, w_c) \coloneqq \left(\frac{1}{c}\sum_{\ell =1}^c w_\ell^p\right)^\frac{1}{p}$.

Here, as in Section \ref{section:our-results}, the utility is defined as $u_\ell(S) = \sum_{v \in S \cap D_\ell} \ \sigma(q, v)$, for any subset of vectors $S$ and attribute $\ell \in [c]$. Also, with a fixed smoothing constant $\eta>0$, we will write $M_p(S) \coloneqq M_p\left(u_1(S) + \eta, \ldots, u_c(S) + \eta \right)$, and the $p$-NNS problem is defined as 
\begin{align}
\max_{S \subseteq P: \lvert S \rvert = k} M_p \left(S \right)
\end{align}

Throughout this section, we will write let $F_\ell(i) \coloneqq \left(\sum_{j=1}^i \sigma(q, v^\ell_{(j)}) + \eta \right)^p$, for each attribute $\ell \in [c]$; here, as before, $v^\ell_{(j)}$ denotes the $j^{\text{th}}$ most similar vector to $q$ in $\widehat{D}_\ell$, the set of $k$ most similar vectors to $q$ from $D_\ell$.

\begin{lemma}[Decreasing Marginals for $p > 0$]
\label{lemma:decreasing-marginal-of-p-mean-for-positive-p}
    Fix any $p \in (0, 1]$ and attribute $\ell \in [c]$. Then, for indices $1 \leq i' < i \leq k$, it holds that
    \begin{align*}
        F_\ell(i') - F_\ell(i'-1) \geq F_\ell(i) - F_\ell(i-1)~.
    \end{align*}
\end{lemma}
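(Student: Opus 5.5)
Write $s_j \coloneqq \sigma(q, v^\ell_{(j)})$ and $S_i \coloneqq \sum_{j=1}^i s_j + \eta$, so that $F_\ell(i) = S_i^p$, $S_i = S_{i-1} + s_i$, and $S_0 = \eta > 0$. The plan is to exploit two facts: the increments $s_i$ are non-increasing in $i$, by the ordering of the $v^\ell_{(j)}$; and $g(x) \coloneqq x^p$ is concave and non-decreasing on $(0,\infty)$ for $p \in (0,1]$. This mirrors the proof of \Cref{lemma:decreasing-marginal-single-attribute}, with concavity of $x^p$ replacing the ratio computation used for $\log$.

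\textbf{Step 1 (secant slopes decrease).} For a concave $g$ and $0 < a \leq b$, $h \geq 0$, we have $g(a+h) - g(a) \geq g(b+h) - g(b)$. To see this, note that $t \mapsto g(t+h) - g(t)$ is non-increasing when $g$ is concave. Equivalently, $g(a+h) - g(a) = \int_a^{a+h} g'(t)\,dt$ with $g'(t) = p\,t^{p-1}$ non-increasing in $t$ (since $p - 1 \leq 0$), and the same holds for $b$.

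\textbf{Step 2 (combine the two monotonicities).} Fix $i' < i$. Then
\begin{align*}
F_\ell(i') - F_\ell(i'-1) = g(S_{i'-1} + s_{i'}) - g(S_{i'-1}) \geq g(S_{i'-1} + s_{i}) - g(S_{i'-1}) \geq g(S_{i-1} + s_i) - g(S_{i-1}).
\end{align*}
The first inequality uses $s_{i'} \geq s_i$ together with monotonicity of $g$. The second applies Step 1 with $a = S_{i'-1} \leq S_{i-1} = b$, which holds because similarities are nonnegative and $i'-1 \leq i-1$, and with $h = s_i \geq 0$. The right-hand side is exactly $F_\ell(i) - F_\ell(i-1)$, which gives the lemma.

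The only point needing care is that every argument of $g$ is at least $\eta > 0$, so $g$ is differentiable and concave on the relevant region. This matters for $p < 1$, where $x^{p-1}$ blows up at $0$. Apart from that, I expect no real obstacle: the case $p = 1$ is immediate since the marginals are just $s_{i'} \geq s_i$, and the argument is a direct two-step chaining.
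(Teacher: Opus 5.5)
Your proof is correct and uses the same ingredients as the paper's: $x \mapsto x^p$ is non-decreasing and concave, and the similarities are sorted so that $s_{i'} \geq s_i$. The packaging differs slightly, in two ways:
\begin{itemize}
\item The paper only shows that consecutive marginals decrease, and then chains them. It bounds $s_{j-1}+s_j \le 2s_{j-1}$ and applies midpoint concavity to the normalized term $(1+x)^p$.
\item You compare arbitrary $i'<i$ in one step using the fact that increments of a concave function shrink as the base point moves right. The same two steps work for any non-decreasing concave $g$, which includes $\log$. With the inequalities reversed, they also work for non-increasing convex $g$, which includes $x^p$ with $p<0$.
\end{itemize}
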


\begin{proof}
    Write $G(j) \coloneqq F_\ell(j) - F_\ell(j-1)$ for all $j \in [k]$, and $f_\ell(i) = \sum_{j=1}^i \sigma(q, v^\ell_{(j)})$ for $i \in [k]$. We will establish the lemma by showing that $G(j)$ is decreasing in $j$. Towards this, note that, for indices $j \geq 2$, we have  
    \begin{align*}
        &G(j - 1) - G(j) \\
        &= F_\ell(j-1) - F_\ell(j-2) - F_\ell(j) + F_\ell(j-1)\\
        &= 2 F_\ell(j-1) - (F_\ell(j) + F_\ell(j-2)) \\
        &= 2 (f_\ell(j-1) + \eta)^p - \left((f_\ell(j) + \eta)^p + (f_\ell(j-2) + \eta)^p\right) \\
        &= 2(f_\ell(j-2) + \eta)^p \left(\left(1 + \frac{\sigma(q, v^\ell_{(j-1)})}{f_\ell(j-2) + \eta}\right)^p -  \frac{1}{2}\left(1 + \left(1 + \frac{\sigma(q, v^\ell_{(j-1)}) + \sigma(q, v^\ell_{(j)})}{f_\ell(j-2) + \eta}\right)^p\right)\right) \\
        &\geq 2(f_\ell(j-2) + \eta)^p \left(\left(1 + \frac{\sigma(q, v^\ell_{(j-1)})}{f_\ell(j-2) + \eta}\right)^p -  \frac{1}{2}\left(1 + \left(1 + \frac{2\sigma(q, v^\ell_{(j-1)})}{f_\ell(j-2) + \eta}\right)^p\right)\right) \tag{$\sigma(q, v^\ell_{(j-1)}) \geq \sigma(q, v^\ell_{(j)})$; $x \mapsto x^p$ is increasing for $p \in (0, 1]$ and $x \geq 0$} \\
        &\geq 2(f_\ell(j-2) + \eta)^p \left(\left(1 + \frac{\sigma(q, v^\ell_{(j-1)})}{f_\ell(j-2) + \eta}\right)^p -  \left(\frac{1}{2}\cdot 1 + \frac{1}{2} \cdot \left(1 + \frac{2\sigma(q, v^\ell_{(j-1)})}{f_\ell(j-2) + \eta}\right)\right)^p\right) \tag{$x \mapsto x^p$ is concave for $p \in (0, 1]$ and $x \geq 0$} \\
        &= 0.
    \end{align*}
    Therefore, $G(j) \leq G(j-1)$ for each $2 \leq j \leq k$. Equivalently, for indices $1 \leq i' < i \leq k$, it holds that $G(i') \geq G(i)$. This completes the proof of the lemma. 
\end{proof}

\begin{lemma}[Increasing Marginals for $p < 0$]
\label{lemma:increasing-marginal-of-p-mean-for-negative-p}
    Fix any parameter $p \in (-\infty, 0)$ and attribute $\ell \in [c]$. Then, for indices $1 \leq i' < i \leq k$, it holds that
    \begin{align*}
        F_\ell(i') - F_\ell(i'-1) \leq F_\ell(i) - F_\ell(i-1)~.
    \end{align*}
\end{lemma}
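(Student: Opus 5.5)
We mirror the proof of \Cref{lemma:decreasing-marginal-of-p-mean-for-positive-p}. Write $G(j) \coloneqq F_\ell(j) - F_\ell(j-1)$ and $f_\ell(i) = \sum_{j=1}^i \sigma(q, v^\ell_{(j)})$, with $f_\ell(0)=0$. It suffices to show $G(j-1) \leq G(j)$ for every $2 \le j \le k$; chaining these inequalities gives $G(i') \leq G(i)$ for all $i' < i$. Set $A \coloneqq f_\ell(j-2)+\eta > 0$, $s \coloneqq \sigma(q, v^\ell_{(j-1)})$ and $t \coloneqq \sigma(q, v^\ell_{(j)})$, where $0 \le t \le s$ by the indexing. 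Then
\begin{align*}
G(j) - G(j-1) = (A+s+t)^p + A^p - 2(A+s)^p,
\end{align*}
and the goal is to show this quantity is nonnegative.

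The first step replaces $t$ by $s$. For $p<0$ the map $x \mapsto x^p$ is \emph{decreasing} on $x>0$, so $t \le s$ gives $(A+s+t)^p \ge (A+2s)^p$. This is the direction we need, since this term enters with a plus sign. Hence
\begin{align*}
G(j) - G(j-1) \ge (A+2s)^p + A^p - 2(A+s)^p .
\end{align*}

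The second step uses convexity. For $p<0$ the map $x\mapsto x^p$ is convex on $x>0$, because its second derivative $p(p-1)x^{p-2}$ is positive. Jensen's inequality at the midpoint of $A$ and $A+2s$ gives $\tfrac12\big(A^p + (A+2s)^p\big) \ge (A+s)^p$, so the right-hand side above is nonnegative. This yields $G(j)\ge G(j-1)$ and completes the argument.

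No step is a real obstacle. The point needing care is that both facts reverse relative to the $p>0$ case: monotonicity turns from increasing to decreasing, and concavity turns into convexity. These two reversals combine to flip the final inequality, which matches the claimed increasing marginals. We should also confirm that all arguments are strictly positive, which holds because $\eta>0$, so $x^p$ is well defined for $p<0$. For $j=1$ there is nothing to prove, since $F_\ell(0)=\eta^p$ is fine.
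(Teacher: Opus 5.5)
Your proposal is correct and follows the paper's proof essentially step for step. You bound $G(j)-G(j-1)$ from below by replacing $\sigma(q,v^\ell_{(j)})$ with $\sigma(q,v^\ell_{(j-1)})$, using that $x\mapsto x^p$ is decreasing, and then apply midpoint convexity of $x\mapsto x^p$ for $p<0$. The paper runs the same two steps on $G(j-1)-G(j)$ after factoring out $2(f_\ell(j-2)+\eta)^p$.
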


\begin{proof}
    The proof is similar to that of~\Cref{lemma:decreasing-marginal-of-p-mean-for-positive-p}, except that we now seek the reverse inequality.  In particular, write $G(j) \coloneqq F_\ell(j) - F_\ell(j-1)$ for all $j \in [k]$, and $f_\ell(i) = \sum_{j=1}^i \sigma(q, v^\ell_{(j)})$ for $i \in [k]$. 
    
    To show that $G(j) \geq G(j-1)$ for all indices $2 \leq j \leq k$, note that 
         
    \begin{align*}
        &G(j - 1) - G(j) \\
        &= 2(f_\ell(j-2) + \eta)^p \left(\left(1 + \frac{\sigma(q, v^\ell_{(j-1)})}{f_\ell(j-2) + \eta}\right)^p -  \frac{1}{2}\left(1 + \left(1 + \frac{\sigma(q, v^\ell_{(j-1)}) + \sigma(q, v^\ell_{(j)})}{f_\ell(j-2) + \eta}\right)^p\right)\right) \\
        &\leq 2(f_\ell(j-2) + \eta)^p \left(\left(1 + \frac{\sigma(q, v^\ell_{(j-1)})}{f_\ell(j-2) + \eta}\right)^p -  \frac{1}{2}\left(1 + \left(1 + \frac{2\sigma(q, v^\ell_{(j-1)})}{f_\ell(j-2) + \eta}\right)^p\right)\right) \tag{$\sigma(q, v^\ell_{(j-1)}) \geq \sigma(q, v^\ell_{(j)})$; $x \mapsto x^p$ is decreasing for $p \in (-\infty, 0)$ and $x \geq 0$} \\
        &\leq 2(f_\ell(j-2) + \eta)^p \left(\left(1 + \frac{\sigma(q, v^\ell_{(j-1)})}{f_\ell(j-2) + \eta}\right)^p -  \left(\frac{1}{2}\cdot 1 + \frac{1}{2} \cdot \left(1 + \frac{2\sigma(q, v^\ell_{(j-1)})}{f_\ell(j-2) + \eta}\right)\right)^p\right) \tag{$x \mapsto x^p$ is convex for $p \in (-\infty, 0)$ and $x \geq 0$} \\
        &= 0~.
    \end{align*}
    Therefore, we have $G(j) \geq G(j-1)$ for all $2 \leq j \leq k$. That is, $G(i') \leq G(i)$   
    for indices $1 \leq i' < i \leq k$. Hence, the lemma stands proved. 
\end{proof}

\begin{algorithm}[t]
\caption{\texttt{p-Mean-ANN}: Algorithm for $p$-NNS in the single-attribute setting}
\label{algo:p-mean-ann}

\KwIn{Query $q \in \mathbb{R}^d$ and, for each attribute $\ell \in [c]$, 
the set of input vectors $D_\ell \subset \mathbb{R}^d$ and parameter $p \in (-\infty, 1] \setminus \{0\}$}
\label{line:p-ANN-algo-start}

For each $\ell \in [c]$, fetch the $k$ (exact or approximate) nearest neighbors of
$q \in \mathbb{R}^d$ from $D_\ell$.
Write $\widehat{D}_\ell \subseteq D_\ell$ to denote these sets.
\label{line:hat-D-p-mean}

For every $\ell \in [c]$ and each index $i \in [k]$, let
$v^\ell_{(i)}$ denote the $i$th most similar vector to $q$
in $\widehat{D}_\ell$.
\label{line:p-ANN-define-v^l_(i)}

Initialize subset $\Alg = \emptyset$, along with count
$k_\ell = 0$ and utility $w_\ell = 0$, for each $\ell \in [c]$.
\label{line:p-ANN-initialize-ALG}

\While{$|\Alg| < k$}{
    \If{$p \in (0,1]$}{
        $a \gets \argmax\limits_{\ell \in [c]}
        \Big((w_\ell + \eta + \sigma(q, v^\ell_{(k_\ell + 1)}))^p
        - (w_\ell + \eta)^p\Big)$.
        \Comment{Ties broken arbitrarily}
        \label{line:marginal-positive-p}
    }
    \ElseIf{$p < 0$}{
        $a \gets \argmin\limits_{\ell \in [c]}
        \Big((w_\ell + \eta + \sigma(q, v^\ell_{(k_\ell + 1)}))^p
        - (w_\ell + \eta)^p\Big)$.
        \Comment{Ties broken arbitrarily}
        \label{line:marginal-negative-p}
    }

    Update $\Alg \gets \Alg \cup \{ v^a_{(k_a + 1)} \}$ along with $w_a \gets w_a + \sigma(q, v^a_{(k_a + 1)})$ and     $k_a \gets k_a + 1$.
    \label{line:update-p-mean}
}

\Return{$\Alg$}

\end{algorithm}

\begin{lemma}
\label{lemma:size-match-p-mean-NNS}
In the single-attribute setting, let $\Alg$ be the subset of vectors returned by \Cref{algo:p-mean-ann} and $S$ be any subset of input vectors with the property that $|S \cap D_\ell| = |\Alg \cap D_\ell|$, for each $\ell \in [c]$. Then, $M_p(\Alg) \geq M_p(S)$.
\end{lemma}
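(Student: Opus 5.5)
The plan mirrors the proof of \Cref{lemma:size-match}: a per-attribute domination argument, followed by a monotonicity step that handles the sign of $p$. Fix $S$ with $|S \cap D_\ell| = |\Alg \cap D_\ell| =: k_\ell$ for every $\ell \in [c]$. The central claim is that $u_\ell(\Alg) \geq u_\ell(S)$ for each $\ell$.

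First, I would record the structure of \Cref{algo:p-mean-ann}. It only ever adds vectors of attribute $a$ in the order $v^a_{(1)}, v^a_{(2)}, \ldots$. So when it terminates, $\Alg \cap D_\ell = \{v^\ell_{(1)}, \ldots, v^\ell_{(k_\ell)}\}$ for each $\ell$. With an exact oracle these are the $k_\ell$ vectors of $D_\ell$ most similar to $q$; note that $k_\ell \leq k$, so they all lie in $\widehat{D}_\ell$. Any other $k_\ell$-subset of $D_\ell$, in particular $S \cap D_\ell$, therefore has total similarity at most $\sum_{j=1}^{k_\ell} \sigma(q, v^\ell_{(j)})$. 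This is the same sorting observation used in \Cref{lemma:size-match}, and it gives $u_\ell(\Alg) \geq u_\ell(S)$ for every $\ell$.

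Next comes the monotonicity step, split by the sign of $p$. The map $(w_1, \ldots, w_c) \mapsto M_p(w_1, \ldots, w_c)$ is coordinatewise non-decreasing on $\mathbb{R}_{>0}^c$ for every $p \in (-\infty, 1] \setminus \{0\}$. For $p \in (0,1]$, each $w_\ell \mapsto w_\ell^p$ is increasing, the average is increasing, and $x \mapsto x^{1/p}$ is increasing. For $p < 0$, each $w_\ell^p$ is decreasing, so the average $\frac{1}{c}\sum_\ell w_\ell^p$ decreases as the $w_\ell$ grow. The outer map $x \mapsto x^{1/p}$ is also decreasing on $(0,\infty)$, so the composition is again increasing. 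The smoothing $\eta > 0$ keeps every argument $u_\ell(\cdot) + \eta$ strictly positive, so negative powers are well defined. Applying this with $w_\ell = u_\ell(\Alg) + \eta \geq u_\ell(S) + \eta$ yields $M_p(\Alg) \geq M_p(S)$.

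The only point needing care is the $p<0$ case, where the inequality flips twice, and I would spell out both flips explicitly. A second point concerns the approximate-oracle version: there $\Alg \cap D_\ell$ need not be the true top-$k_\ell$ of $D_\ell$. I would therefore read the lemma, as in \Cref{lemma:size-match}, with $\widehat{D}_\ell$ being the exact $k$ nearest neighbors, or equivalently with $S$ ranging over subsets of $\bigcup_\ell \widehat{D}_\ell$. Either way, the per-attribute domination argument goes through unchanged.
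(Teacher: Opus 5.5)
Your proof is correct and is essentially the paper's argument. Both rest on the per-attribute domination $u_\ell(\Alg) \geq u_\ell(S)$, since $\Alg \cap D_\ell$ consists of the $k_\ell$ most similar vectors, combined with coordinatewise monotonicity of $M_p$. The paper frames it as a contradiction and leaves two points implicit that you make explicit: the double sign-flip showing $M_p$ is increasing when $p<0$, and the dependence on an exact oracle (or on restricting $S$ to $\bigcup_\ell \widehat{D}_\ell$).
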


\begin{proof}
Assume, towards a contradiction, that there exists a subset of input vectors $S$ that satisfies $|S \cap D_\ell| = |\Alg \cap D_\ell|$, for each $\ell \in [c]$, and still induces p-mean welfare strictly greater than that of $\Alg$. This strict inequality combined with the fact that $M_p(w_1, \ldots, w_c)$ is an increasing function of $w_i$s implies that there exists an attribute $a \in [c]$ with the property that the utility $u_a(S) > u_a(\Alg)$. That is, 
\begin{align}
\sum_{t \in S \cap D_a} \sigma(q, t) > \sum_{v \in \Alg \cap D_a} \sigma(q, v) \label{ineq:strict-util-p-mean}   
\end{align}
On the other hand, note that the construction of \Cref{algo:p-mean-ann} and the definition of $\widehat{D}_a$ ensure that the vectors in $\Alg \cap D_a$ are, in fact, the most similar to $q$ among all the vectors in $D_a$. This observation and the fact that $|S\cap D_a| = |\Alg \cap D_a|$ gives us $\sum_{v \in \Alg \cap D_a} \sigma(q, v) \geq \sum_{t \in S \cap D_a} \sigma(q, t)$. This equation, however, contradicts the strict inequality (\ref{ineq:strict-util-p-mean}). Therefore, by way of contradiction, we obtain that there does not exist a subset $S$ such that $|S \cap D_\ell| = |\Alg \cap D_\ell|$, for each $\ell \in [c]$, and $M_p(\Alg) < M_p(S)$. The lemma stands proved. 
\end{proof}

\begin{restatable}{theorem}{SingleAttributePMean}
\label{theorem:single-attribute-p-mean} 
In the single-attribute setting, given any query $q \in \mathbb{R}^d$ and an (exact) oracle \texttt{ENN} for $k$ most similar vectors from any set, Algorithm~\ref{algo:p-mean-ann} (\texttt{p-mean-ANN}) returns an optimal solution for $p$-NNS, i.e., it returns a size-$k$ subset $\Alg \subseteq P$ that satisfies 
       $\Alg \in \argmax_{S \subseteq P: \lvert S \rvert = k} M_p (S)$.
    Furthermore, the algorithm runs in time $O(kc) + \sum_{\ell=1}^c \texttt{ENN}(D_\ell, q)$, where \texttt{ENN}$(D_\ell, q)$ is the time required by the exact oracle to find $k$ most similar vectors to $q$ in $D_\ell$.
\end{restatable}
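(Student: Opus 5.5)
The plan mirrors the proof of \Cref{theorem:single-attribute}. We reduce $p$-NNS to optimizing the monotone transform $\Phi(S) \coloneqq \sum_{\ell=1}^c F_\ell(|S\cap D_\ell|)$, where $F_\ell(i) = \left(\sum_{j=1}^i \sigma(q, v^\ell_{(j)}) + \eta\right)^p$ as defined above. Two cases arise from the sign of $p$. For $p \in (0,1]$, the map $x \mapsto x^{1/p}$ is increasing, so maximizing $M_p(S)$ is equivalent to maximizing $\sum_\ell (u_\ell(S)+\eta)^p$. For $p<0$, the map $x\mapsto x^{1/p}$ is decreasing, so maximizing $M_p(S)$ is equivalent to minimizing $\sum_\ell (u_\ell(S)+\eta)^p$. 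Restricted to sets that take the top $k^*_\ell$ vectors of each $D_\ell$, this sum equals $\Phi$.

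The running-time bound is identical to before: the fetching step costs $\sum_\ell \texttt{ENN}(D_\ell,q)$, and there are $k$ iterations, each costing $O(c)$. Line~\ref{line:marginal-positive-p} (resp.\ Line~\ref{line:marginal-negative-p}) selects the attribute maximizing (resp.\ minimizing) $F_\ell(k'_\ell+1)-F_\ell(k'_\ell)$, since $w_\ell$ tracks the prefix sum.

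For optimality, take $\Opt$ to be an optimal solution minimizing $\sum_\ell |k^*_\ell - k_\ell|$. By optimality, and because $M_p$ is increasing in each utility, we may assume $\Opt\cap D_\ell$ consists of the top $k^*_\ell$ vectors of $D_\ell$; exchanging a vector for a more similar one with the same attribute never hurts. Suppose some $x,y$ satisfy $k^*_x<k_x$ and $k^*_y>k_y$.

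In the case $p>0$, we use \Cref{lemma:decreasing-marginal-of-p-mean-for-positive-p}. Consider the iteration in which the $k_x$-th vector of $x$ was added, and apply the greedy choice there. This gives the same chain as \eqref{ineq:swap-main}:
\[
F_x(k^*_x+1)-F_x(k^*_x)\ge F_x(k_x)-F_x(k_x-1)\ge F_y(k_y+1)-F_y(k_y)\ge F_y(k^*_y)-F_y(k^*_y-1).
\]
Hence swapping $v^y_{(k^*_y)}$ out and $v^x_{(k^*_x+1)}$ in does not decrease $\sum_\ell(u_\ell+\eta)^p$, so it does not decrease $M_p$. The swapped set is therefore also optimal and strictly closer to $\Alg$ in counts, a contradiction.

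In the case $p<0$, all inequalities flip. By \Cref{lemma:increasing-marginal-of-p-mean-for-negative-p} the marginals are nondecreasing; they are negative but shrinking in magnitude. The argmin greedy gives $F_x(k_x)-F_x(k_x-1)\le F_y(k'_y+1)-F_y(k'_y)\le F_y(k_y+1)-F_y(k_y)$, where $k'_y \le k_y$ is the count of $y$ at that iteration. Combining with the monotonicity of marginals on both ends yields
\[
F_x(k^*_x+1)-F_x(k^*_x)\le F_y(k^*_y)-F_y(k^*_y-1),
\]
so the swap does not increase $\sum_\ell (u_\ell+\eta)^p$ and hence does not decrease $M_p$. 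This gives the same contradiction.

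Thus $k^*_\ell=k_\ell$ for every $\ell$, and \Cref{lemma:size-match-p-mean-NNS} finishes the proof. The main care point is the $p<0$ bookkeeping: the direction of the greedy inequality, and checking that the lemma's monotonicity is applied in the right direction at both ends of the chain. The proof also needs the exact oracle to ensure that $\widehat{D}_\ell$ contains the global top-$k$ of $D_\ell$, so that the prefix vectors used by $\Opt$ coincide with the $v^\ell_{(j)}$.
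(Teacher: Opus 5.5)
Your proposal is correct and follows essentially the same route as the paper: the same case split on the sign of $p$ (maximizing versus minimizing $\sum_\ell (u_\ell(S)+\eta)^p$), the same choice of $\Opt$ minimizing $\sum_\ell |k^*_\ell - k_\ell|$, the same exchange chain via \Cref{lemma:decreasing-marginal-of-p-mean-for-positive-p} and \Cref{lemma:increasing-marginal-of-p-mean-for-negative-p}, and the same final appeal to \Cref{lemma:size-match-p-mean-NNS}. Your explicit step of replacing each $\Opt \cap D_\ell$ by the top-$k^*_\ell$ prefix keeps the counts and cannot decrease $M_p$. That makes it slightly more careful than the paper, which asserts that an optimal solution already consists of exactly these prefix vectors, a claim that can fail when similarities tie.
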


\begin{proof}
    The running time of the algorithm follows via arguments similar to the ones used in the proof of~\Cref{theorem:single-attribute}.

    For the correctness analysis, we divide the proof into two: $p \in (0, 1]$ and $p < 0$.

\noindent
    \textbf{Case 1}: $p \in (0, 1]$. Since $x \mapsto x^{p}$ is an increasing function for $x \geq 0$, the problem $\max_{S \subseteq P, \lvert S \rvert = k} M_p(S)$ is equivalent to $\max_{S \subseteq P, \lvert S \rvert = k} M_p(S)^p = \max_{S \subseteq P, \lvert S \rvert = k} \frac{1}{c} \sum_{\ell = 1}^c \left( u_\ell(S) + \eta\right)^p$. 
        
    The proof here is similar to that of ~\Cref{theorem:single-attribute}. Let $k_\ell = \lvert \Alg \cap D_\ell \rvert$ for all $\ell \in [c]$. Further, let $\Opt \in \argmax_{S \subseteq P, \lvert S \rvert = k} \frac{1}{c} \sum_{\ell = 1}^c \left( u_\ell(S) + \eta \right)^p$ and $k^*_\ell = \lvert \Opt \cap D_\ell \rvert$ for all $\ell \in [c]$, where $\Opt$ is chosen such that $\sum_{\ell = 1}^c \lvert k^*_\ell - k_\ell \rvert$ is minimized.

    We will prove that $\Opt$ satisfies $k^*_\ell = k_\ell$ for each $\ell \in [c]$. This guarantee, along with Lemma \ref{lemma:size-match-p-mean-NNS}, implies that, as desired, $\Alg$ is a p-mean welfare optimal solution. 

    Assume, towards a contradiction, that $k^*_\ell \neq k_\ell$ for some $\ell \in [c]$. Since $|\Opt| = |\Alg| = k$, there exist attributes $x, y \in [c]$ with the property that  
    \begin{align}
        k^*_x  < k_x \qquad & \text{ and } \qquad k^*_y  > k_y \label{ineq:opt-alg-sizes-p-mean-positive-p}
    \end{align}
    Recall that via~\Cref{lemma:decreasing-marginal-of-p-mean-for-positive-p}, we have for any pair of indices $1 \leq i' < i \leq k$, 
    \begin{align}
     F_\ell(i') - F_\ell(i'-1) \geq F_\ell(i) - F_\ell(i-1) \label{eqn:lemma-marginal-p-mean-positive-p}
    \end{align}  

    Next, note that for any attribute $\ell \in [c]$, if \Cref{algo:p-mean-ann}, at any point during its execution, has included $k'_\ell$ vectors of attribute $\ell$ in $\Alg$, then at that point the maintained utility $w_\ell = \sum_{j=1}^{k'_\ell}\sigma(q, v^\ell_{(j)})$. Hence, at the beginning of any iteration of the algorithm, if the $k'_\ell$ denotes the number of selected vectors of each attribute $\ell \in [c]$, then the marginals considered in Line \ref{line:marginal-positive-p} are $F_\ell \left(k'_\ell+1 \right) - F_\ell \left(k'_\ell \right)$. These observations and the selection criterion in Line \ref{line:marginal-positive-p} of the algorithm give us the following inequality for the counts $k_x=|\Alg \cap D_x|$ and $k_y=|\Alg \cap D_y|$ of the returned solution $\Alg$:
    
    \begin{align}
        F_x(k_x) - F_x(k_x - 1) \geq F_y(k_y+1) - F_y(k_y)  \label{ineq:greey-choice-p-mean-positive-p}
    \end{align}
        Specifically, equation (\ref{ineq:greey-choice-p-mean-positive-p}) follows by considering the iteration in which $k_x^{\text{th}}$ (last) vector of attribute $x$ was selected by the algorithm. Before that iteration the algorithm had selected $(k_x-1)$ vectors of attribute $x$, and let $k'_y$ denote the number of vectors with attribute $y$ that have been selected till that point. Note that $k'_y \leq k_y$. The fact that the $k_x^{\text{th}}$ vector was (greedily) selected in Line \ref{line:marginal-positive-p}, instead of including an additional vector of attribute $y$, gives $F_x(k_x) - F_x(k_x - 1) \geq F_y(k'_y+1) - F_y(k'_y) \geq F_y(k_y+1) - F_y(k_y)$; here, the last inequality follows from equation (\ref{eqn:lemma-marginal-p-mean-positive-p}). Hence, equation (\ref{ineq:greey-choice-p-mean-positive-p}) holds. 

        Moreover, 
        \begin{align}
            F_x(k^*_x + 1) - F_x(k^*_x) & \geq F_x(k_x) - F_x(k_x - 1) \tag{via eqns.~(\ref{ineq:opt-alg-sizes-p-mean-positive-p}) and (\ref{eqn:lemma-marginal-p-mean-positive-p})} \\
            & \geq F_y(k_y+1) - F_y(k_y) \tag{via eqn.~(\ref{ineq:greey-choice-p-mean-positive-p})} \\
            & \geq F_y(k^*_y) - F_y(k^*_y-1)  \label{ineq:swap-p-mean-positive-p}
        \end{align}
The last inequality follows from equations (\ref{ineq:opt-alg-sizes-p-mean-positive-p}) and (\ref{eqn:lemma-marginal-p-mean-positive-p}). 
         
    Recall that $v^\ell_{(i)}$ denotes the $i^{\text{th}}$ most similar (to $q$) vector in the set $\widehat{D}_\ell$. The definition of $\widehat{D}_\ell$ ensures that $v^\ell_{(i)}$ is, in fact, the $i^{\text{th}}$ most similar (to $q$) vector among the ones that have attribute $\ell$, i.e., $i^{\text{th}}$ most similar in all of $D_\ell$. Since $\Opt$ is an optimal solution, the $k^*_\ell = |\Opt \cap D_\ell|$ vectors of attribute $\ell$ in $\Opt$ are the most similar $k^*_\ell$ vectors from $D_\ell$. That is, $\Opt \cap D_\ell = \left\{v^\ell_{(1)}, \ldots, v^\ell_{(k^*_\ell)} \right\}$, for each $\ell \in [c]$. This observation and the definition of $F_\ell(\cdot)$ imply that the $p$-th power of $\Opt$'s $p$-mean welfare satisfies 
\begin{align}
    M_p(\Opt)^p = \frac{1}{c} \sum_{\ell=1}^c F_\ell(k^*_\ell). 
\end{align}
Now, consider a subset of vectors $S$ obtained from $\Opt$ by including vector $v^x_{(k^*_x + 1)}$ and removing $v^y_{(k^*_y)}$, i.e., $S = \left( \Opt \cup \left\{ v^x_{(k^*_x + 1)} \right\} \right) \setminus \left\{v^y_{(k^*_y)} \right\}$. Note that 
\begin{align*}
    M_p(S)^p - M_p(\Opt)^p & = \frac{1}{c}\Big( F_x(k^*_x + 1) - F_x(k^*_x) \Big)  + \frac{1}{c}\Big(  F_y(k^*_y-1) - F_y(k^*_y) \Big) \\ 
    & \geq 0 \tag{via eqn.~(\ref{ineq:swap-p-mean-positive-p})} 
\end{align*}
Hence, $M_p(S) \geq M_p(\Opt)$. Given that $\Opt$ is a $p$-mean welfare optimal solution, the last inequality must hold with equality, $M_p(S) = M_p(\Opt)$, i.e., $S$ is an optimal solution as well. This, however, contradicts the choice of $\Opt$ as an optimal solution that minimizes $\sum_{\ell=1}^c | k^*_\ell - k_\ell|$ -- note that $\sum_{\ell=1}^c \left|\widehat{k}_\ell - k_\ell \right| < \sum_{\ell=1}^c \left| k^*_\ell - k_\ell \right|$, where $\widehat{k}_\ell \coloneqq |S \cap D_\ell|$. 

Therefore, by way of contradiction, we obtain that $|\Opt \cap D_\ell| = |\Alg \cap D_\ell|$ for each $\ell \in [c]$. As mentioned previously, this guarantee, along with Lemma \ref{lemma:size-match-p-mean-NNS}, implies that $\Alg$ is a p-mean welfare optimal solution. This completes the proof of the theorem for the case $p \in (0, 1]$. \\

\noindent
\textbf{Case 2}: $p < 0$. The arguments here are similar to the ones used in the previous case. However, since the exponent parameter $p$ is negative, the key inequalities are reversed. For completeness, we present the proof below.

Note that, with the $p<0$, the map $x \mapsto x^{p}$ is a decreasing function for $x \geq 0$. Hence, the optimization problem $\max_{S \subseteq P, \lvert S \rvert = k} M_p(S)$ is equivalent to $\min_{S \subseteq P, \lvert S \rvert = k} M_p(S)^p = \min_{S \subseteq P, \lvert S \rvert = k} \frac{1}{c} \sum_{\ell = 1}^c \left( u_\ell(S) + \eta \right)^p$.  

Let $k_\ell = \lvert \Alg \cap D_\ell \rvert$ for all $\ell \in [c]$. Further, let $\Opt \in \argmin_{S \subseteq P, \lvert S \rvert = k} \frac{1}{c} \sum_{\ell = 1}^c \left( u_\ell(S) + \eta \right)^p$ and $k^*_\ell = \lvert \Opt \cap D_\ell \rvert$ for all $\ell \in [c]$, where $\Opt$ is chosen such that $\sum_{\ell = 1}^c \lvert k^*_\ell - k_\ell \rvert$ is minimized.

    We will prove that $\Opt$ satisfies $k^*_\ell = k_\ell$ for each $\ell \in [c]$. This guarantee, along with Lemma \ref{lemma:size-match-p-mean-NNS}, implies that, as desired, $\Alg$ is a p-mean welfare optimal solution. 

    Assume, towards a contradiction, that $k^*_\ell \neq k_\ell$ for some $\ell \in [c]$. Since $|\Opt| = |\Alg| = k$, there exist attributes $x, y \in [c]$ with the property that  
    \begin{align}
        k^*_x  < k_x \qquad & \text{ and } \qquad k^*_y  > k_y \label{ineq:opt-alg-sizes-p-mean-negative-p}
    \end{align}
    Recall that via~\Cref{lemma:increasing-marginal-of-p-mean-for-negative-p}, we have for any pair of indices $1 \leq i' < i \leq k$, 
    \begin{align}
     F_\ell(i') - F_\ell(i'-1) \leq F_\ell(i) - F_\ell(i-1) \label{eqn:lemma-marginal-p-mean-negative-p}
    \end{align}  

    Next, note that for any attribute $\ell \in [c]$, if \Cref{algo:p-mean-ann}, at any point during its execution, has included $k'_\ell$ vectors of attribute $\ell$ in $\Alg$, then at that point the maintained utility $w_\ell = \sum_{j=1}^{k'_\ell}\sigma(q, v^\ell_{(j)})$. Hence, at the beginning of any iteration of the algorithm, if the $k'_\ell$ denotes the number of selected vectors of each attribute $\ell \in [c]$, then the marginals considered in Line \ref{line:marginal-negative-p} are $F_\ell \left(k'_\ell+1 \right) - F_\ell \left(k'_\ell \right)$. These observations and the selection criterion in Line \ref{line:marginal-negative-p} of the algorithm give us the following inequality for the counts $k_x=|\Alg \cap D_x|$ and $k_y=|\Alg \cap D_y|$ of the returned solution $\Alg$:
    \begin{align}
        F_x(k_x) - F_x(k_x - 1) \leq F_y(k_y+1) - F_y(k_y)  \label{ineq:greey-choice-p-mean-negative-p}
    \end{align}
        Specifically, equation (\ref{ineq:greey-choice-p-mean-negative-p}) follows by considering the iteration in which $k_x^{\text{th}}$ (last) vector of attribute $x$ was selected by the algorithm. Before that iteration the algorithm had selected $(k_x-1)$ vectors of attribute $x$, and let $k'_y$ denote the number of vectors with attribute $y$ that have been selected till that point. Note that $k'_y \leq k_y$. The fact that the $k_x^{\text{th}}$ vector was (greedily) selected in Line \ref{line:marginal-negative-p}, instead of including an additional vector of attribute $y$, gives $F_x(k_x) - F_x(k_x - 1) \leq F_y(k'_y+1) - F_y(k'_y) \leq F_y(k_y+1) - F_y(k_y)$; here, the last inequality follows from equation (\ref{eqn:lemma-marginal-p-mean-negative-p}). Hence, equation (\ref{ineq:greey-choice-p-mean-negative-p}) holds. 

        Moreover, 
        \begin{align}
            F_x(k^*_x + 1) - F_x(k^*_x) & \leq F_x(k_x) - F_x(k_x - 1) \tag{via eqns.~(\ref{ineq:opt-alg-sizes-p-mean-negative-p}) and (\ref{eqn:lemma-marginal-p-mean-negative-p})} \\
            & \leq F_y(k_y+1) - F_y(k_y) \tag{via eqn.~(\ref{ineq:greey-choice-p-mean-negative-p})} \\
            & \leq F_y(k^*_y) - F_y(k^*_y-1)  \label{ineq:swap-p-mean-negative-p}
        \end{align}
The last inequality follows from equations (\ref{ineq:opt-alg-sizes-p-mean-negative-p}) and (\ref{eqn:lemma-marginal-p-mean-negative-p}). 
         
    Recall that $v^\ell_{(i)}$ denotes the $i^{\text{th}}$ most similar (to $q$) vector in the set $\widehat{D}_\ell$. The definition of $\widehat{D}_\ell$ ensures that $v^\ell_{(i)}$ is, in fact, the $i^{\text{th}}$ most similar (to $q$) vector among the ones that have attribute $\ell$, i.e., $i^{\text{th}}$ most similar in all of $D_\ell$. Since $\Opt$ is an optimal solution, the $k^*_\ell = |\Opt \cap D_\ell|$ vectors of attribute $\ell$ in $\Opt$ are the most similar $k^*_\ell$ vectors from $D_\ell$. That is, $\Opt \cap D_\ell = \left\{v^\ell_{(1)}, \ldots, v^\ell_{(k^*_\ell)} \right\}$, for each $\ell \in [c]$. This observation and the definition of $F_\ell(\cdot)$ imply that the $p$-th power of $\Opt$'s $p$-mean welfare satisfies 
\begin{align}
    M_p(\Opt)^p = \frac{1}{c} \sum_{\ell=1}^c F_\ell(k^*_\ell). 
\end{align}
Now, consider a subset of vectors $S$ obtained from $\Opt$ by including vector $v^x_{(k^*_x + 1)}$ and removing $v^y_{(k^*_y)}$, i.e., $S = \left( \Opt \cup \left\{ v^x_{(k^*_x + 1)} \right\} \right) \setminus \left\{v^y_{(k^*_y)} \right\}$. Note that 
\begin{align*}
    M_p(S)^p - M_p(\Opt)^p & = \frac{1}{c}\Big( F_x(k^*_x + 1) - F_x(k^*_x) \Big)  + \frac{1}{c}\Big(  F_y(k^*_y-1) - F_y(k^*_y) \Big) \\ 
    & \leq 0 \tag{via eqn.~(\ref{ineq:swap-p-mean-negative-p})} 
\end{align*}
Hence, $M_p(S) \geq M_p(\Opt)$. Given that $\Opt$ is a $p$-mean welfare optimal solution, the last inequality must hold with equality, $M_p(S) = M_p(\Opt)$, i.e., $S$ is an optimal solution as well. This, however, contradicts the choice of $\Opt$ as an optimal solution that minimizes $\sum_{\ell=1}^c | k^*_\ell - k_\ell|$ -- note that $\sum_{\ell=1}^c \left|\widehat{k}_\ell - k_\ell \right| < \sum_{\ell=1}^c \left| k^*_\ell - k_\ell \right|$, where $\widehat{k}_\ell \coloneqq |S \cap D_\ell|$. 

Therefore, by way of contradiction, we obtain that $|\Opt \cap D_\ell| = |\Alg \cap D_\ell|$ for each $\ell \in [c]$. As mentioned previously, this guarantee, along with Lemma \ref{lemma:size-match-p-mean-NNS}, implies that $\Alg$ is a p-mean welfare optimal solution. This completes the proof of the theorem for the case $p < 0$.

Combining the two cases, we have that the theorem holds for all $p \in (-\infty, 1] \setminus \{0\}$. 
\end{proof}

\begin{restatable}{corollary}{SingleAttributePMeanApproximateOracle}
\label{theorem:single-attribute-p-mean-approx-oracle}
    In the single-attribute setting, given any query $q \in \mathbb{R}^d$ and an $\alpha$-approximate oracle \texttt{ANN} for $k$ most similar vectors from any set, Algorithm~\ref{algo:p-mean-ann} (\texttt{p-mean-ANN}) returns an $\alpha$-approximate solution for $p$-NNS, i.e., it returns a size-$k$ subset $\Alg \subseteq P$ with  %
        $M_p(\Alg) \geq \alpha \max_{S \subseteq P: \  \lvert S \rvert = k} M_p (S)$.
    The algorithm runs in time $O(kc) + \sum\limits_{\ell=1}^c \texttt{ANN}(D_\ell, q)$, with \texttt{ANN}$(D_\ell, q)$ being the time required by the approximate oracle to find $k$ similar vectors to $q$ in $D_\ell$.
\end{restatable}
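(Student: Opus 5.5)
The plan mirrors the proof of \Cref{theorem:single-attribute-approx-oracle}, with \Cref{theorem:single-attribute-p-mean} replacing \Cref{theorem:single-attribute}. The running time bound is immediate: Line~\ref{line:hat-D-p-mean} costs $\sum_{\ell=1}^c \texttt{ANN}(D_\ell,q)$, and the while-loop runs $k$ times at $O(c)$ per iteration. The rest of the argument concerns correctness.

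For each $\ell \in [c]$, let $\widehat{D}_\ell$ be the set returned by the $\alpha$-approximate oracle, with $v^\ell_{(i)}$ its $i$th most similar element. Let $v^{*\ell}_{(i)}$ be the $i$th most similar vector to $q$ in all of $D_\ell$, so that $\sigma(q,v^\ell_{(i)}) \geq \alpha\, \sigma(q, v^{*\ell}_{(i)})$ for all $i \in [k]$.

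\textbf{Step 1.} Apply \Cref{theorem:single-attribute-p-mean} to the restricted instance with ground set $\widehat{P} = \bigcup_\ell \widehat{D}_\ell$. Each $\widehat{D}_\ell$ has $k$ elements and these are exactly the $k$ most similar vectors of attribute $\ell$ within $\widehat{P}$, so the exact-oracle hypothesis holds there. Running \Cref{algo:p-mean-ann} on the original input performs the identical greedy computation, so $M_p(\Alg) = M_p(\widehat{\Opt})$, where $\widehat{\Opt}$ is optimal over size-$k$ subsets of $\widehat{P}$.

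\textbf{Step 2.} Let $\Opt$ be an optimal $p$-NNS solution on $P$. By the exchange observation used in the proof of \Cref{lemma:size-match-p-mean-NNS}, we may take $\Opt \cap D_\ell = \{v^{*\ell}_{(1)},\dots,v^{*\ell}_{(k^*_\ell)}\}$. Since $k^*_\ell \leq k$, the set $\bigcup_\ell \{v^\ell_{(1)},\dots,v^\ell_{(k^*_\ell)}\}$ is a feasible size-$k$ subset of $\widehat{P}$. Therefore
\[
M_p(\widehat{\Opt}) \geq M_p\Bigl(\textstyle\sum_{i\le k^*_1}\sigma(q,v^1_{(i)})+\eta,\ \dots,\ \sum_{i\le k^*_c}\sigma(q,v^c_{(i)})+\eta\Bigr).
\]

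\textbf{Step 3.} We need two facts about $M_p$: it is coordinatewise nondecreasing and positively homogeneous of degree $1$. Both hold for every $p \in (-\infty,1]\setminus\{0\}$. For $p<0$, monotonicity follows because $x\mapsto x^p$ is decreasing and $y \mapsto y^{1/p}$ is also decreasing, so the composition is increasing. Using monotonicity together with the oracle guarantee, each coordinate is at least $\alpha\sum_{i\le k^*_\ell}\sigma(q,v^{*\ell}_{(i)})+\eta$. Since $\alpha\in(0,1)$, this is in turn at least $\alpha\bigl(\sum_{i\le k^*_\ell}\sigma(q,v^{*\ell}_{(i)})+\eta\bigr)$. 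Homogeneity then pulls out the factor $\alpha$, giving $M_p(\Alg)\ge \alpha M_p(\Opt)$.

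The only step needing care is Step 1, where \Cref{theorem:single-attribute-p-mean} is invoked on the restricted instance. That proof uses $\widehat{D}_\ell$ only through the facts that its elements are the top-$k$ vectors of $D_\ell$ within the ground set and that $\Opt\cap D_\ell$ consists of prefixes. Both facts hold verbatim once the ground set is $\widehat{P}$. The sign issue for $p<0$ in Step 3 is routine once monotonicity of $M_p$ is stated explicitly.
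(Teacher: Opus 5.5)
Your proposal is correct and follows essentially the same route as the paper. The paper also invokes \Cref{theorem:single-attribute-p-mean} on the restricted ground set $\bigcup_\ell \widehat{D}_\ell$ to get $M_p(\Alg)=M_p(\widehat{\Opt})$, compares against the feasible set of prefixes $\{v^\ell_{(1)},\dots,v^\ell_{(k^*_\ell)}\}$, and uses coordinatewise monotonicity of $M_p$ with $\alpha x+\eta\ge\alpha(x+\eta)$; your explicit use of degree-one homogeneity and of monotonicity for $p<0$ simply spells out the step the paper writes by placing $\alpha^p$ inside the sum.
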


\begin{proof}
    The running time of the algorithm follows via an argument similar to one used in the proof of~\Cref{theorem:single-attribute-p-mean}. Hence, we only argue correctness here. 
    
    For every $\ell \in [c]$, let the $\alpha$-approximate oracle return $\widehat{D}_\ell$. Recall that $v^\ell_{(i)}$, $i \in [k]$, denotes the $i^{\text{th}}$ most similar point to $q$ in the set $\widehat{D}_\ell$. Further, for every $\ell \in [c]$, let $D^*_{\ell}$ be the set of $k$ most similar points to $q$ within $D_\ell$ and, for each $i \in [k]$, define $v^{*\ell}_{(i)}$ to be the $i^{\text{th}}$ most similar point to $q$ in $D^*_{\ell}$. Recall that by the guarantee of the $\alpha$-approximate NNS oracle, we have $\sigma(q, v^\ell_{(i)}) \geq \alpha \  \sigma(q, v^{*\ell}_{(i)})$ for each $i \in [k]$. Let $\Opt$ be an optimal solution to the $p$-NNS problem. Note that, for each attribute $\ell in [c]$, the optimal solution $\Opt$ contains in it the $k^*_\ell$ most similar vectors with attribute $\ell$. 
    
    Finally, let $\widehat{\Opt}$ be the optimal solution to the $p$-NNS problem when the set of vectors to search over is $P = \cup_{\ell \in [c]} \widehat{D}_{\ell}$. 

    By arguments similar to the ones used in the proof of~\Cref{theorem:single-attribute-p-mean}, we have $M_p(\Alg) = M_p(\widehat{\Opt})$. Therefore
    {\allowdisplaybreaks
    \begin{align*}
        M_p(\Alg) &= M_p(\widehat{\Opt}) \\
        &\geq \left(\frac{1}{c}\sum_{\ell \in [c]} \left(\sum_{i=1}^{k^*_\ell} \sigma(q, v^\ell_{(i)} ) + \eta \right)^p\right)^{\frac{1}{p}} \tag{$\cup_{\ell \in [c]: k^*_\ell \geq 1} \{v^\ell_{(1)}, \ldots, v^\ell_{(k^*_\ell)}\}$ is a feasible solution} \\
        &\geq \left(\frac{1}{c}\sum_{\ell \in [c]} \left(\sum_{i=1}^{k^*_\ell} \alpha \sigma(q, v^{*\ell}_{(i)} ) + \eta \right)^p\right)^{\frac{1}{p}} \tag{by $\alpha$-approximate guarantee of the oracle, and $M_p$ is increasing in its argument; $k^*_\ell \leq k$} \\
        &\geq \left(\frac{1}{c}\sum_{\ell \in [c]} \alpha^p \left(\sum_{i=1}^{k^*_\ell}  \sigma(q, v^{*\ell}_{(i)} ) + \eta \right)^p\right)^{\frac{1}{p}} \tag{$\alpha \in (0, 1)$} \\
        &= \alpha\ M_p(\Opt) \tag{definition of $\Opt$}
    \end{align*}
    }
The corollary stands proved.
\end{proof}

\section{Experimental Evaluation and Analysis}
\label{appendix:experiments}
In this section, we present additional experimental results to further validate the performance of \ouralgo\ and compare it with existing methods. The details of the evaluation metrics and the description of the datasets used in our study are already presented in~\Cref{section:experiments}. Here, we report results for the single-attribute setting (\Cref{appendix:exp-single-attribute}), where we compare the approximation ratio alongside all diversity metrics for $k=10$ and $k=50$. We also include recall values for both $k=10$ and $k=50$ (\Cref{appendix:recallEntropy}). The key observation in all these plots is that the NSW objective effectively strikes a balance between relevance and diversity without having to specify any ad hoc constraints. Furthermore, we report experimental results for the multi-attribute setting on both a synthetic dataset (\texttt{Sift1m}) and a real-world dataset (\arxiv). Finally, we experimentally validate the performance-efficiency trade-offs of a faster heuristic variant of \pMeanANN\ (detailed in Appendix~\ref{subsec:FetchNashUnionNash}) that can be used in addition to any existing (standard) ANN algorithm.

\subsection{Balancing Relevance and Diversity: Single-attribute Setting}
\label{appendix:exp-single-attribute}
In this experiment, we evaluate how well \pMeanANN\ (and the special case of $p$ = $0$, \ouralgo) balances relevance and diversity in the single-attribute setting. We begin by examining the tradeoff between approximation ratio and entropy achieved by our algorithms on additional datasets beyond those used in the main paper. Moreover, we also report results for other diversity metrics such as the inverse Simpson index (\Cref{appendix:paretoSingleAttriInvSimpsonAprxRatio}) and the number of distinct attributes appearing in the $k$ neighbors (\Cref{appendix:paretoSingleAttriDistinctCountAprxRatio}) retrieved by our algorithms. These experiments corroborate the findings in the main paper, namely, \ouralgo\ and \pMeanANN\ are able to strike a balance between relevance and diversity whereas \diskann\ only optimizes for relevance (hence low diversity) and \divann\ only optimizes for diversity (hence low relevance).

\subsubsection{Performance of \ouralgo}
We report the results for different datasets in Figures~\ref{fig:AppendixResultsSingleAttriSIFTCLus}, \ref{fig:AppendixResultsSingleAttriSIFTProb}, \ref{fig:AppendixResultsSingleAttriDeepProb}, and \ref{fig:AppendixResultsSingleAttriArxiv}. On the \siftC\ dataset (\Cref{fig:AppendixResultsSingleAttriSIFTCLus}), \ouralgo\ achieves entropy close to that of the most diverse solution (\divann\ with $k' = 1$) in both $k=10$ and $k=50$ cases. Moreover, \ouralgo\ achieves significantly higher approximation ratio than \divann\ in both $k=10$ and $k=50$ cases when $k'=1$. For $k=10$ case, \ouralgo\ Pareto dominates \divann\ even with the relaxed constraint of $k' = 5$ for $k = 10$. When the number of required neighbors is increased to $k = 50$, no other method Pareto dominates \ouralgo. Similar observations hold for the \siftP\ (\Cref{fig:AppendixResultsSingleAttriSIFTProb}) and \deepP\ (\Cref{fig:AppendixResultsSingleAttriDeepProb}) datasets. In the results on the \arxiv\ dataset (\Cref{fig:AppendixResultsSingleAttriArxiv}) with $k = 10$, we observe that \divann\ already achieves a high approximation ratio. However, \ouralgo\ matches the entropy of \divann\ with $k' = 1$ while improving on the approximation ratio. For $k = 50$, \ouralgo\ nearly matches the entropy of \divann\ with $k'=1,2$ whereas it significantly improves on the approximation ratio. In summary, the experimental results clearly demonstrate the ability of \ouralgo\ to adapt to the varying nature of queries and consistently strike a balance between relevance and diversity.

\begin{figure}[t!]
\centering
\begin{tabular}{@{}c@{}c@{}c@{}c@{}}
\includegraphics[width=0.48\textwidth]{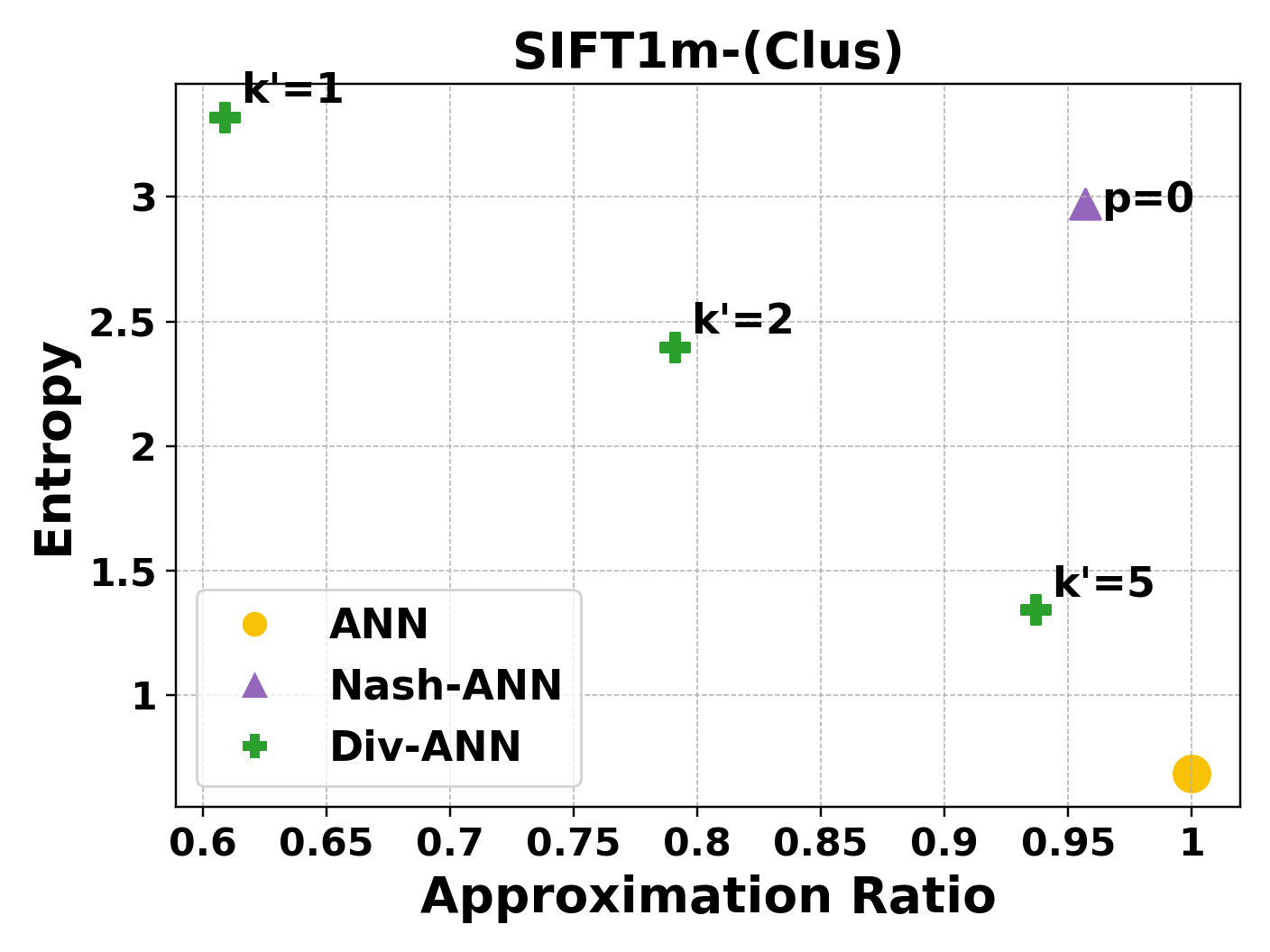} &
\includegraphics[width=0.48\textwidth]{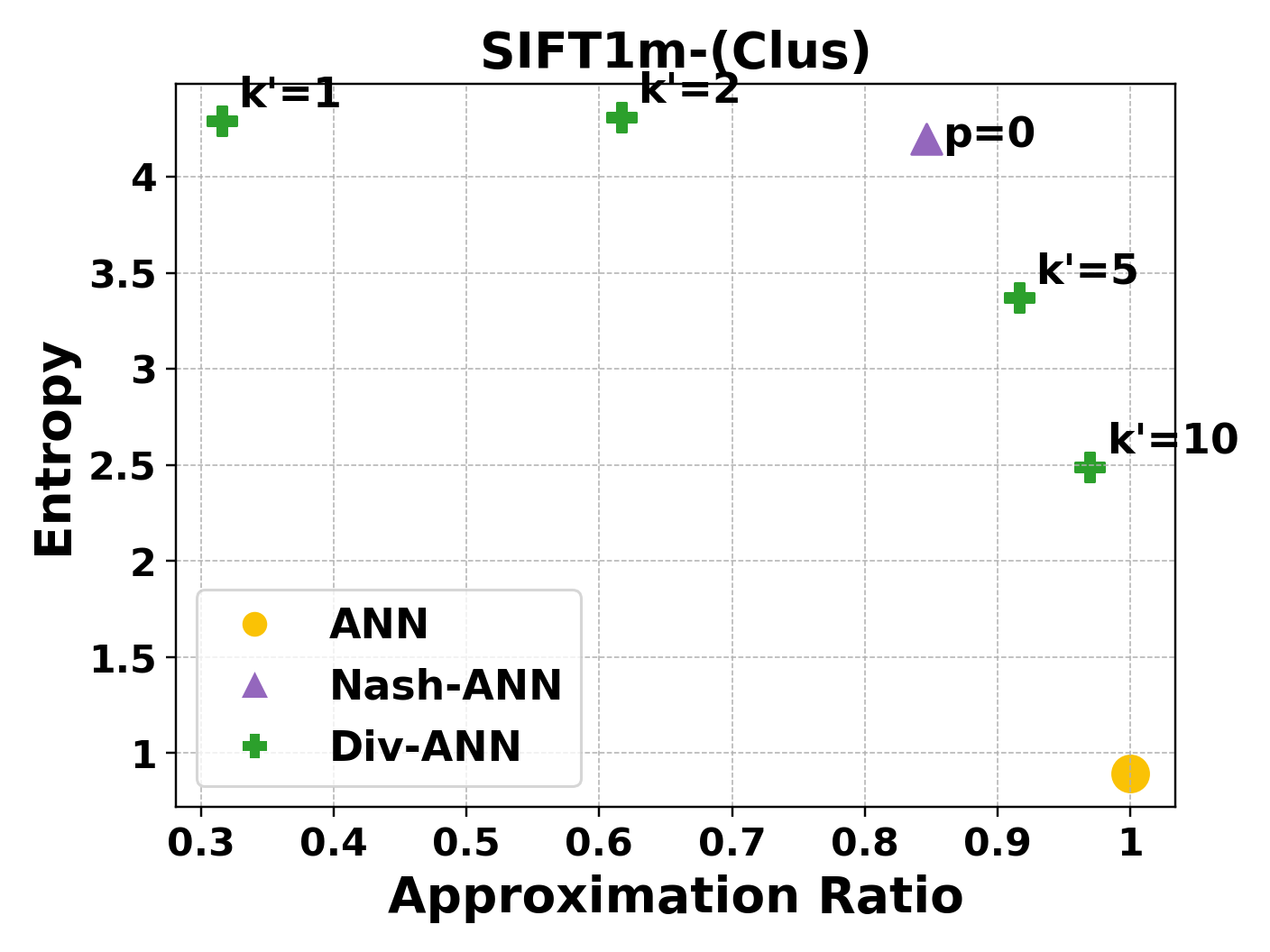} \\
\end{tabular}
\caption{The plots show approximation ratio versus entropy trade-offs for various algorithms for $k$  = $10$ \textbf{(Left)} and  $k$  = $50$ \textbf{(Right)} in  single-attribute setting  on \siftC\ dataset.}
\label{fig:AppendixResultsSingleAttriSIFTCLus}
\end{figure}

\begin{figure}[t!]
\centering
\begin{tabular}{@{}c@{}c@{}c@{}c@{}}
\includegraphics[width=0.48\textwidth]{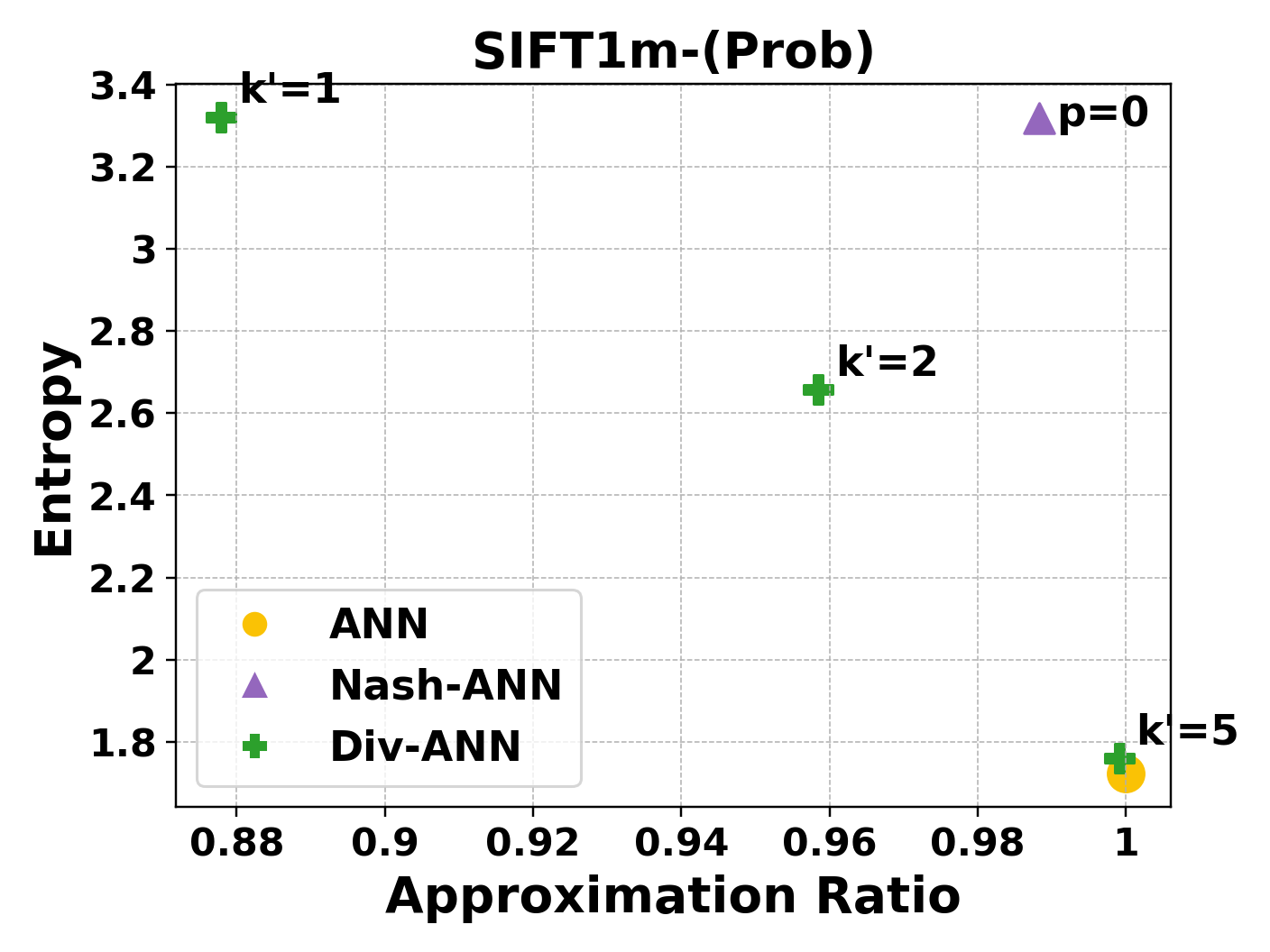} &
\includegraphics[width=0.48\textwidth]{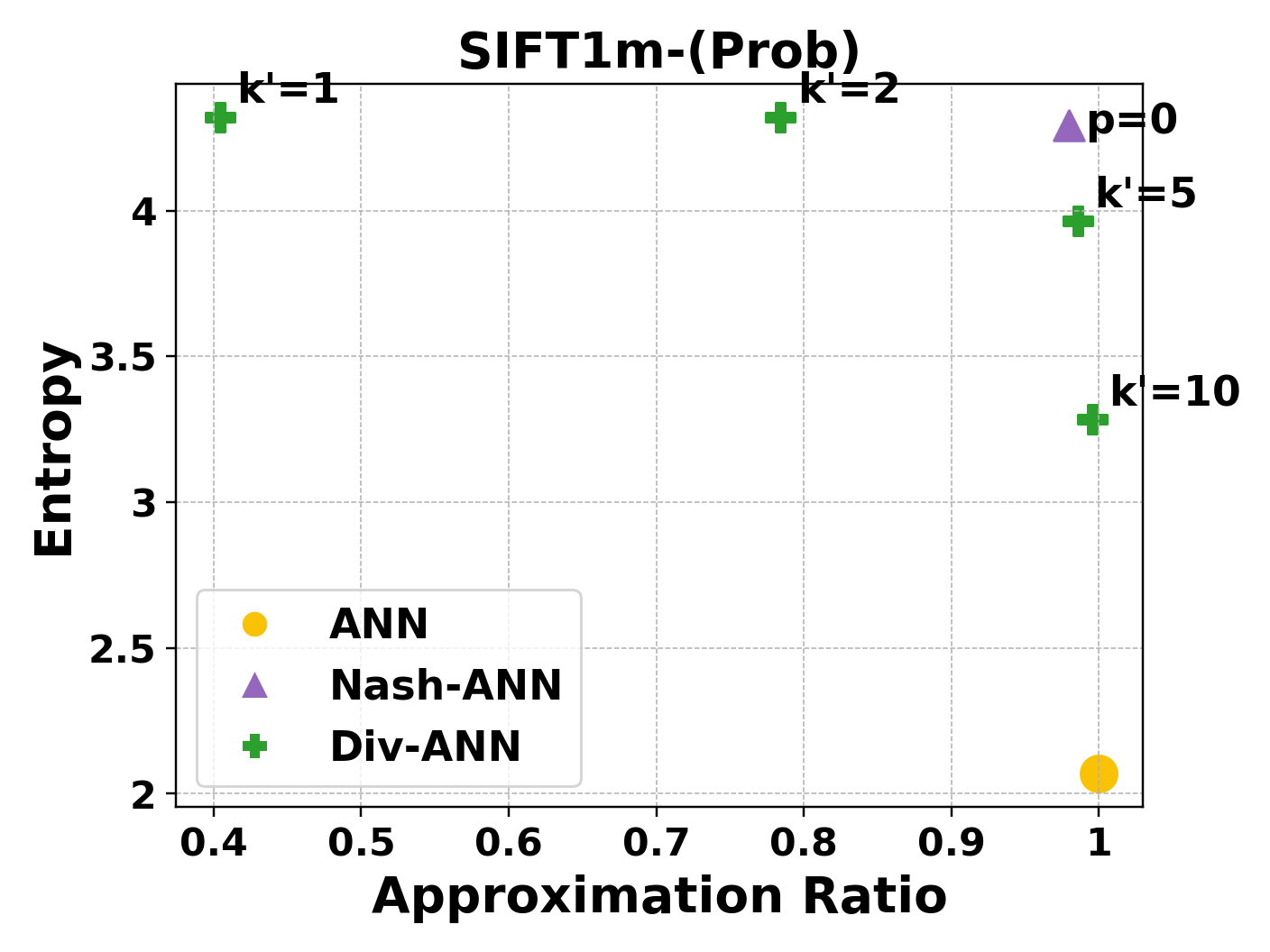} \\
\end{tabular}
\caption{The plots show approximation ratio versus entropy trade-offs for various algorithms for $k$  = $10$ \textbf{(Left)} and  $k$  = $50$ \textbf{(Right)} in  single-attribute setting  on \siftP\ dataset.}
\label{fig:AppendixResultsSingleAttriSIFTProb}
\end{figure}

\begin{figure}[t!]
\centering
\begin{tabular}{@{}c@{}c@{}c@{}c@{}}
\includegraphics[width=0.48\textwidth]{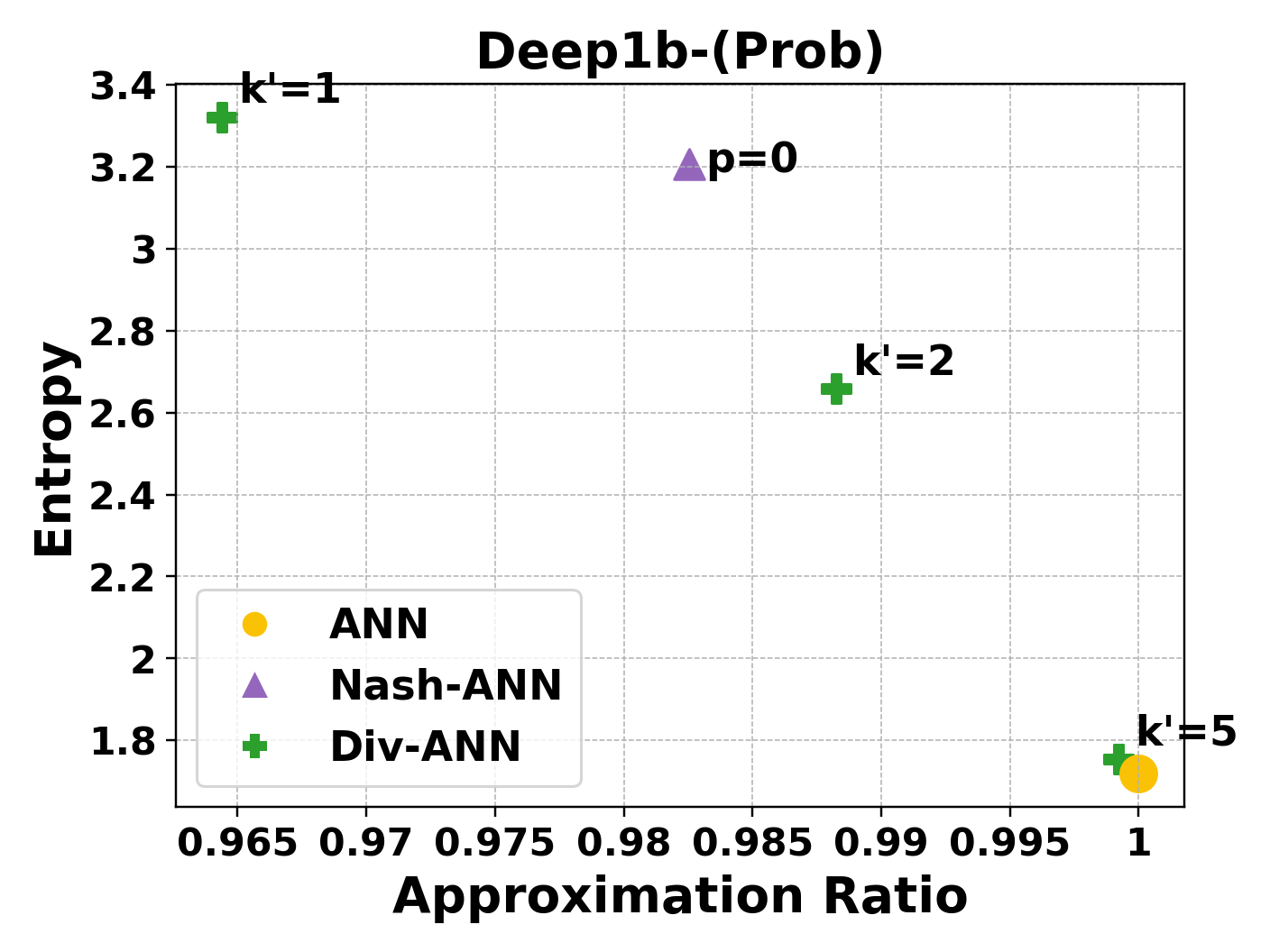} &
\includegraphics[width=0.48\textwidth]{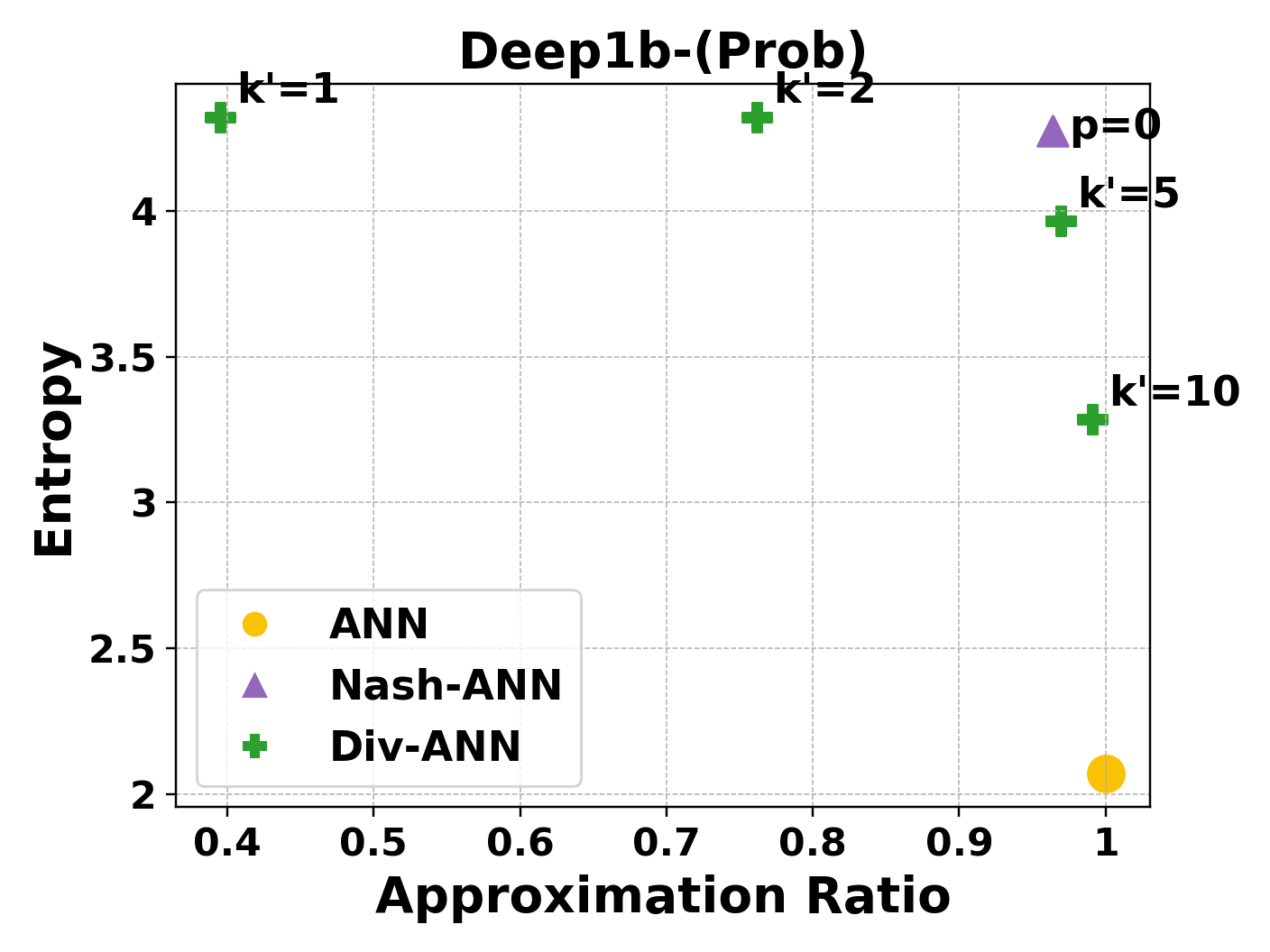} \\
\end{tabular}
\caption{The plots show approximation ratio versus entropy trade-offs for various algorithms for $k$  = $10$ \textbf{(Left)} and  $k$  = $50$ \textbf{(Right)} in  single-attribute setting  on \deepP\ dataset.}
\label{fig:AppendixResultsSingleAttriDeepProb}
\end{figure}

\begin{figure}[t!]
\centering
\begin{tabular}{@{}c@{}c@{}c@{}c@{}}
\includegraphics[width=0.48\textwidth]{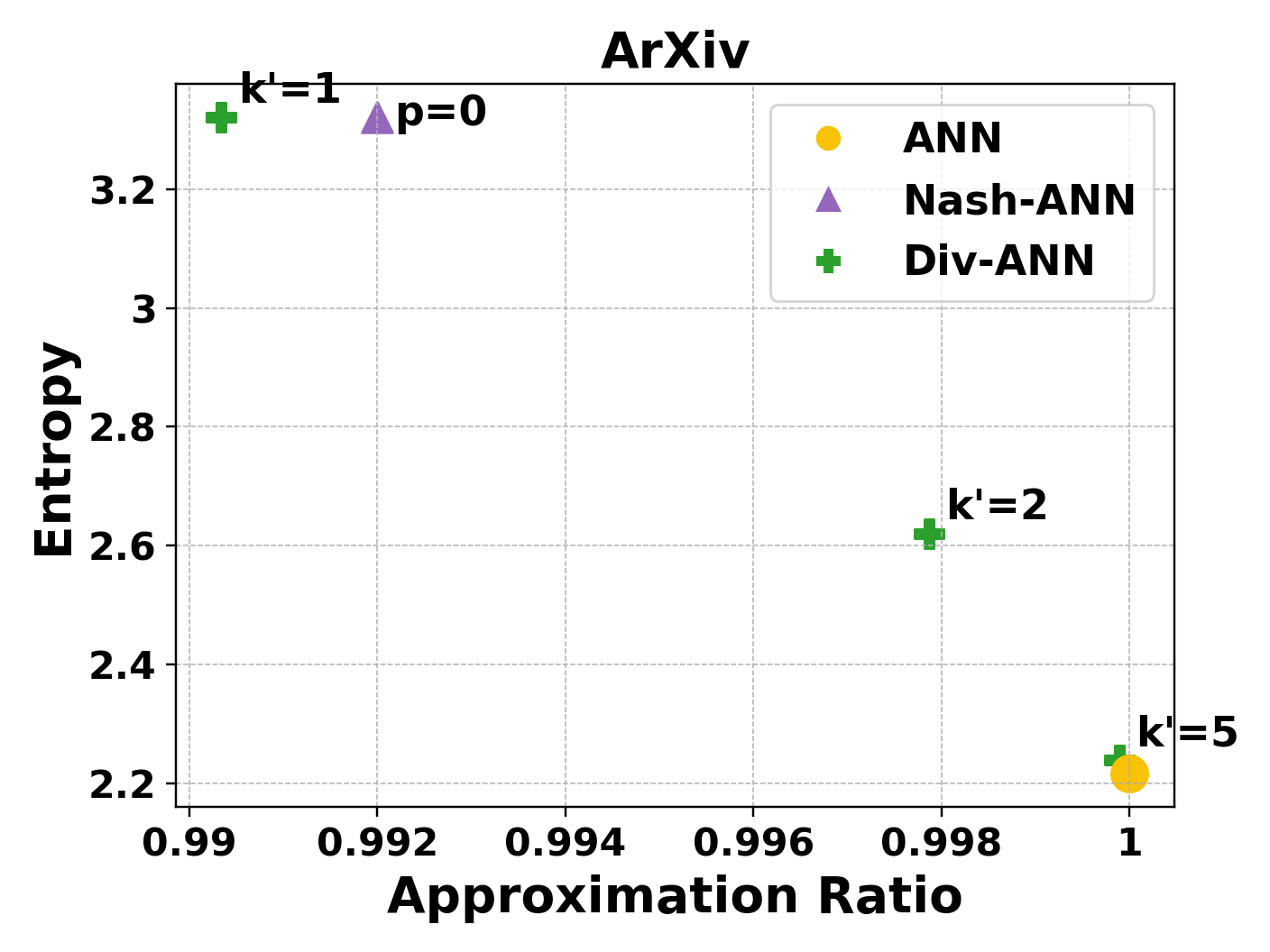} &
\includegraphics[width=0.48\textwidth]{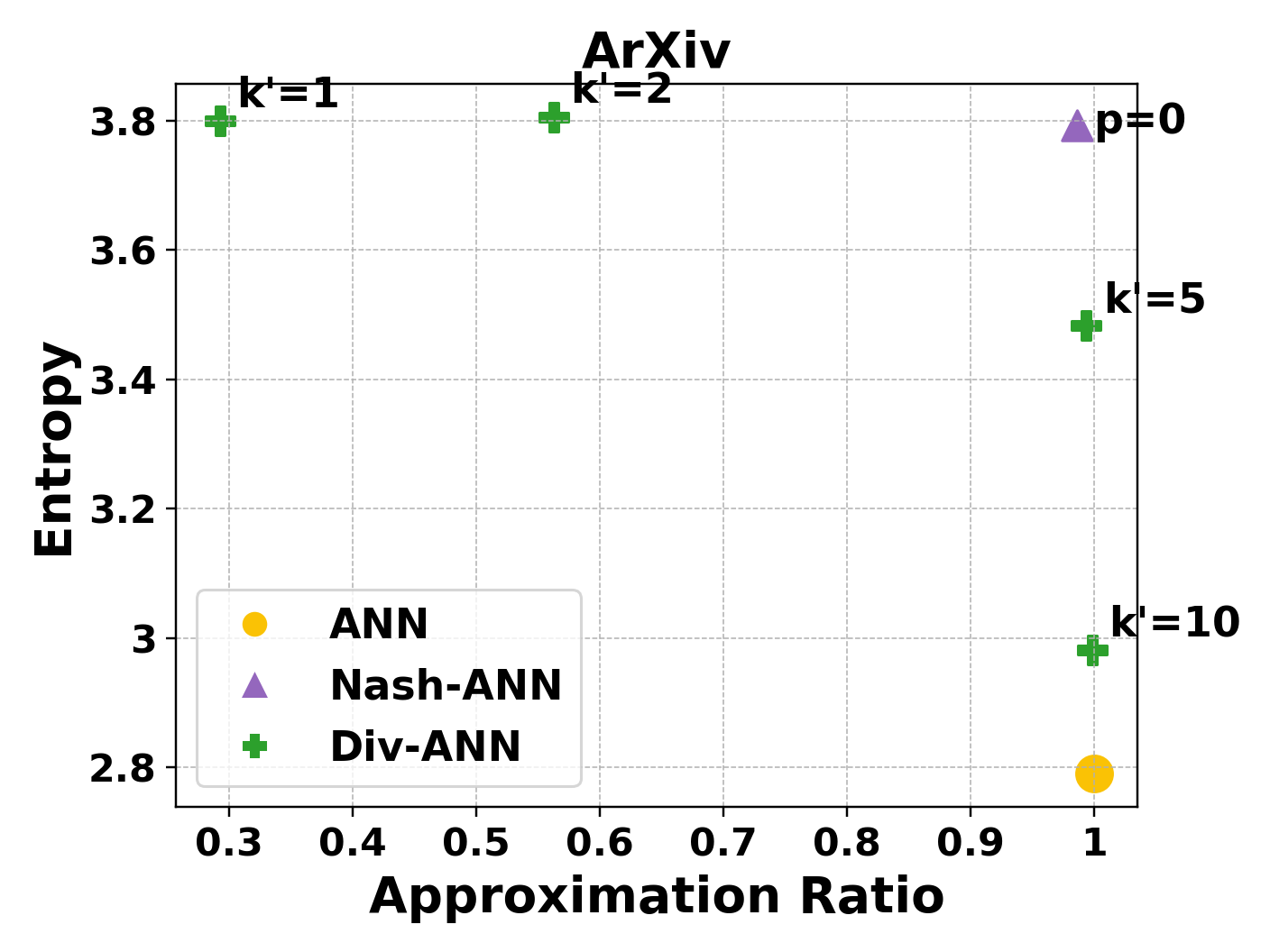} \\
\end{tabular}
\caption{The plots show approximation ratio versus entropy trade-offs for various algorithms for $k$  = $10$ \textbf{(Left)} and  $k$  = $50$ \textbf{(Right)} in  single-attribute setting on \arxiv\ dataset.}
\label{fig:AppendixResultsSingleAttriArxiv}
\end{figure}

\subsubsection{Performance of \pMeanANN}
In this set of experiments, we study the effect on trade-off between approximation ratio and entropy when the parameter $p$ in the $p$-NNS objective is varied over a range. Recall that the $p$-NNS problem with $p \to 0$ corresponds to the NaNNS problem, and with $p=1$, corresponds to the NNS problem. We experiment with values of $p \in \{-10, -1, -0.5, 0, 0.5, 1\}$ by running our algorithm \pMeanANN\ (\Cref{algo:p-mean-ann}) on the various datasets. 
The results are shown in Figures~\ref{fig:AppendixResultsSingleAttriSIFTCLusPTrend}, \ref{fig:AppendixResultsSingleAttriSIFTProbPTrend}, \ref{fig:AppendixResultsSingleAttriDeepProbPTrend}, and \ref{fig:AppendixResultsSingleAttriArxivPTrend}. We observe across all datasets for both $k=10$ and $k=50$ that, as $p$ decreases from $1$, the entropy increases but the approximation ratio decreases. This highlights the key intuition that as $p$ decreases, the behavior changes from utilitarian welfare ($p = 1$ aligns exactly with \diskann) to egalitarian welfare (more attribute-diverse). In other words, the parameter $p$ allows us to smoothly interpolate between complete relevance (the standard NNS with $p=1$) and complete diversity ($p \to -\infty$).

\begin{figure}[t!]
\centering
\begin{tabular}{@{}c@{}c@{}c@{}c@{}}
\includegraphics[width=0.48\textwidth]{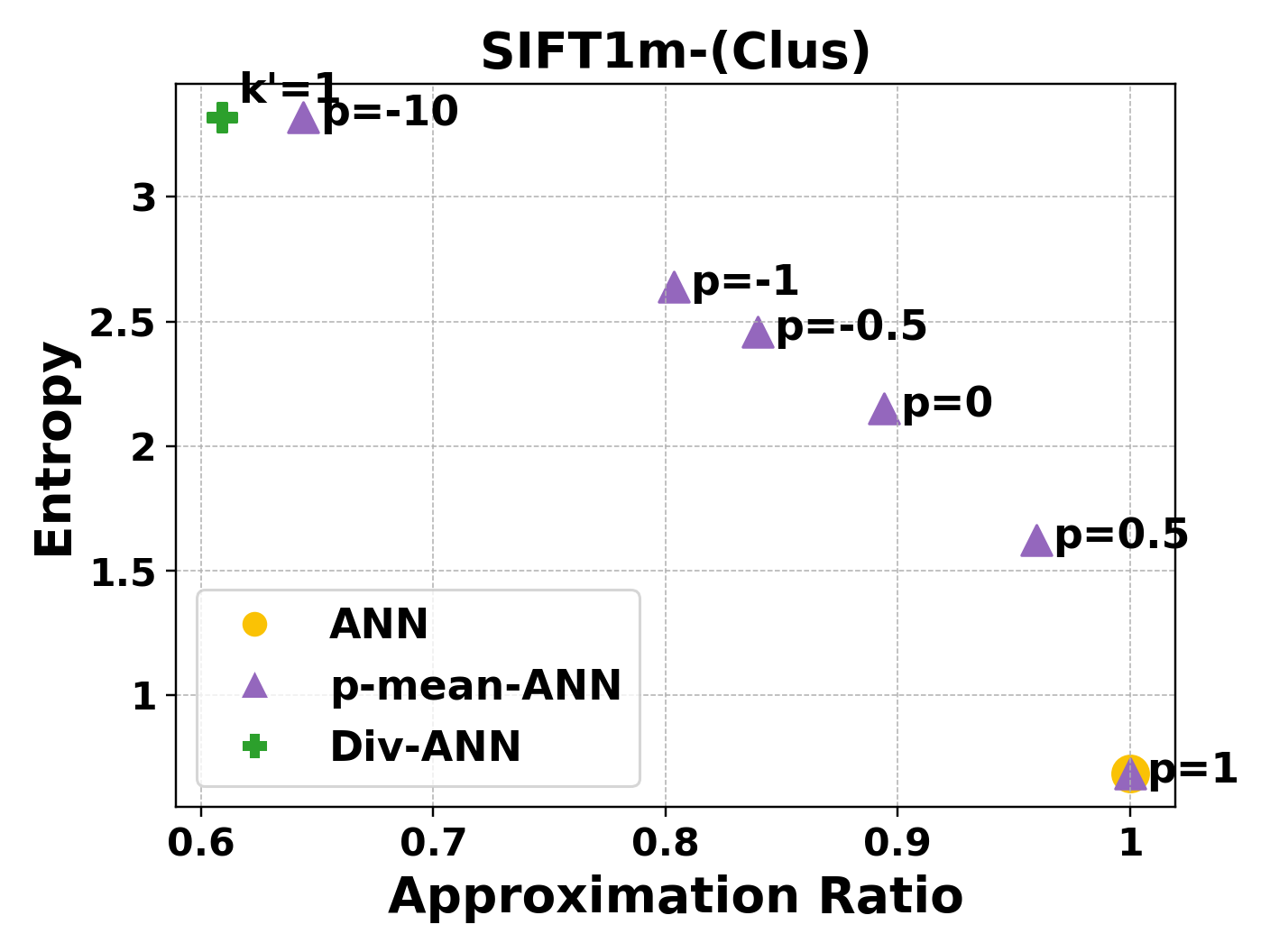} &
\includegraphics[width=0.48\textwidth]{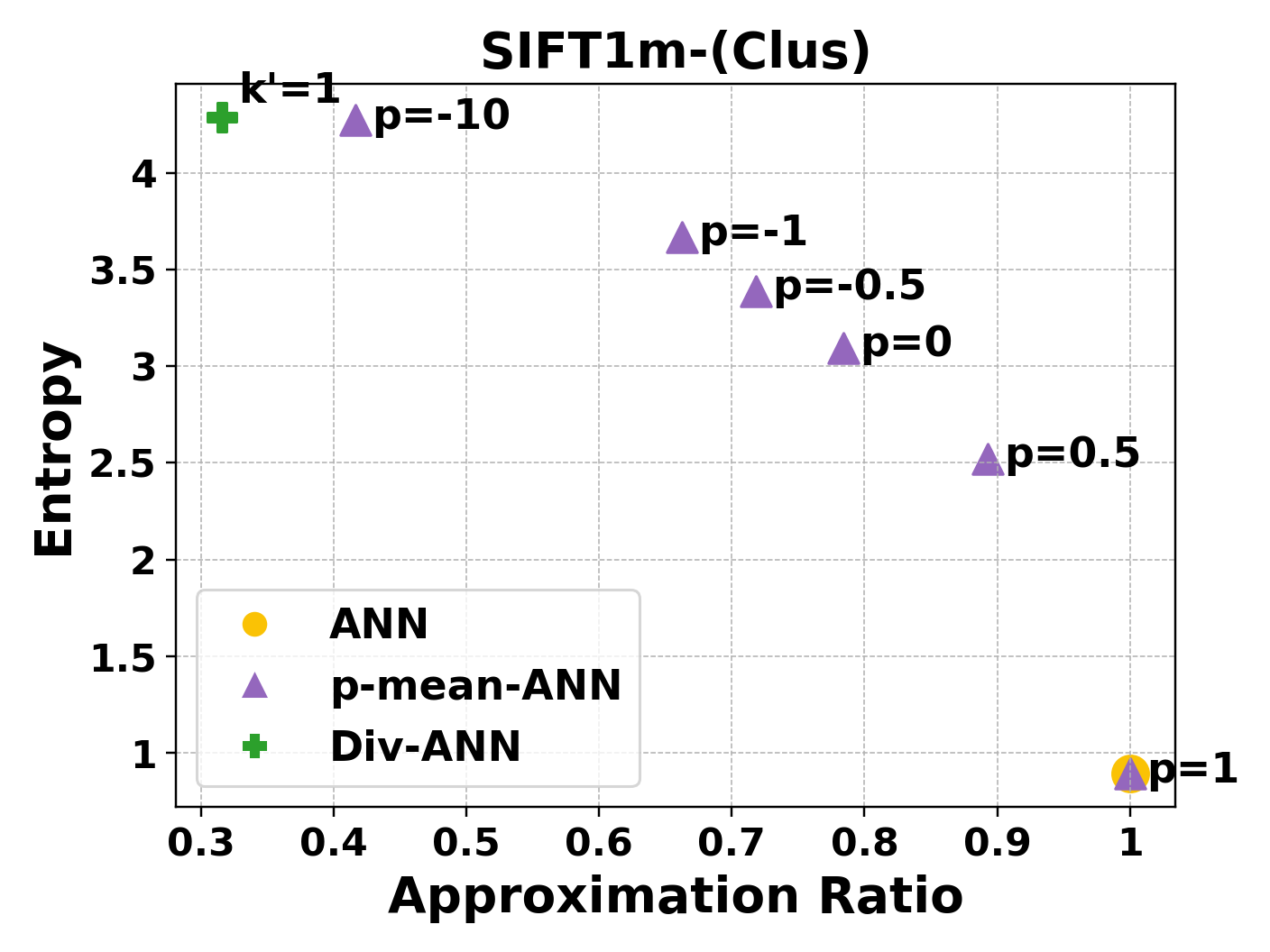} \\
\end{tabular}
\caption{The plots report the approximation ratio versus entropy trade-off of $p$-mean-ANN, as $p$ varies, for $k$ = $10$ \textbf{(Left)} and $k$ = $50$ \textbf{(Right)} on \siftC\ dataset in the single-attribute setting.}
\label{fig:AppendixResultsSingleAttriSIFTCLusPTrend}
\end{figure}

\begin{figure}[t!]
\centering
\begin{tabular}{@{}c@{}c@{}c@{}c@{}}
\includegraphics[width=0.48\textwidth]{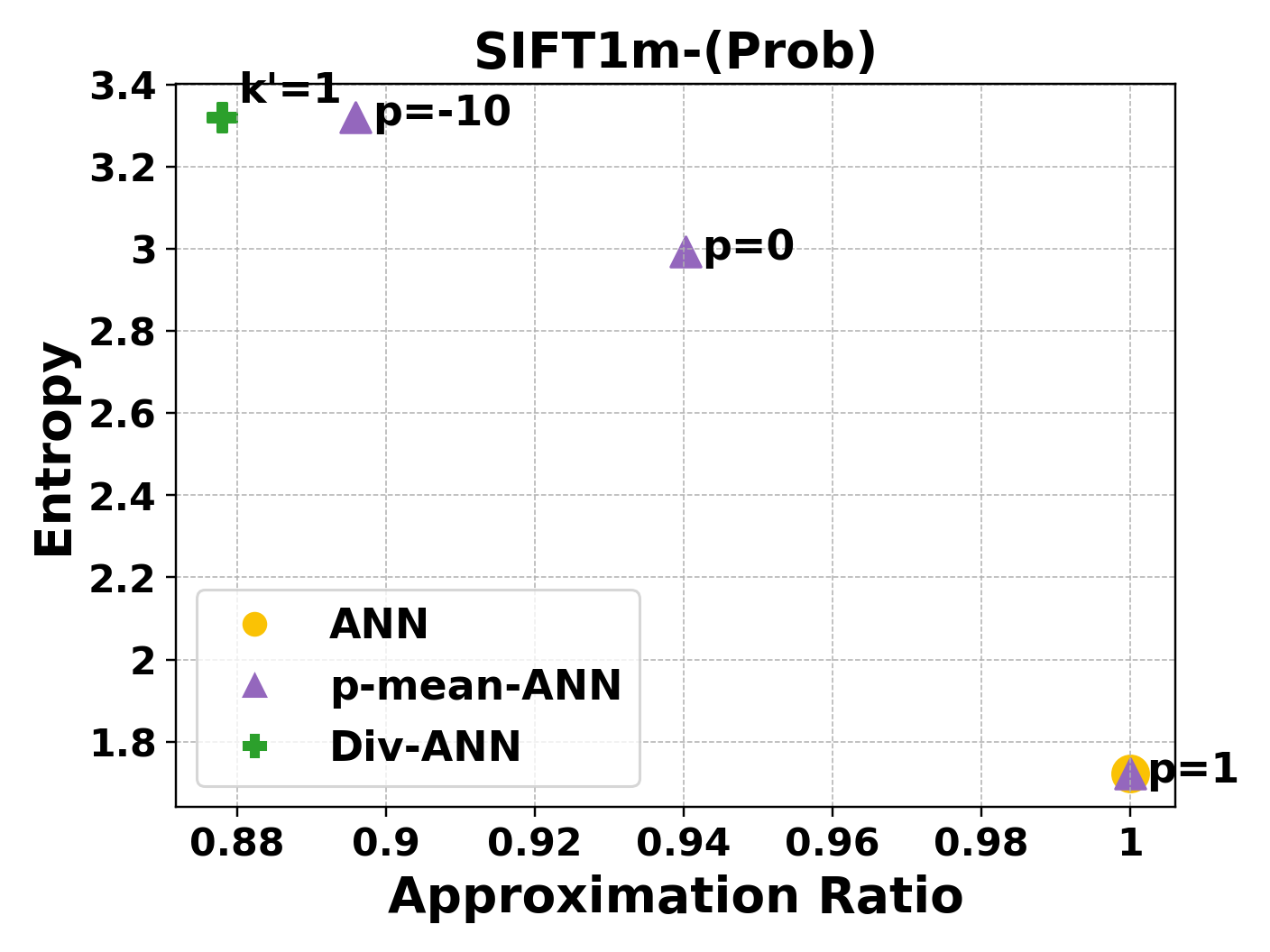} &
\includegraphics[width=0.48\textwidth]{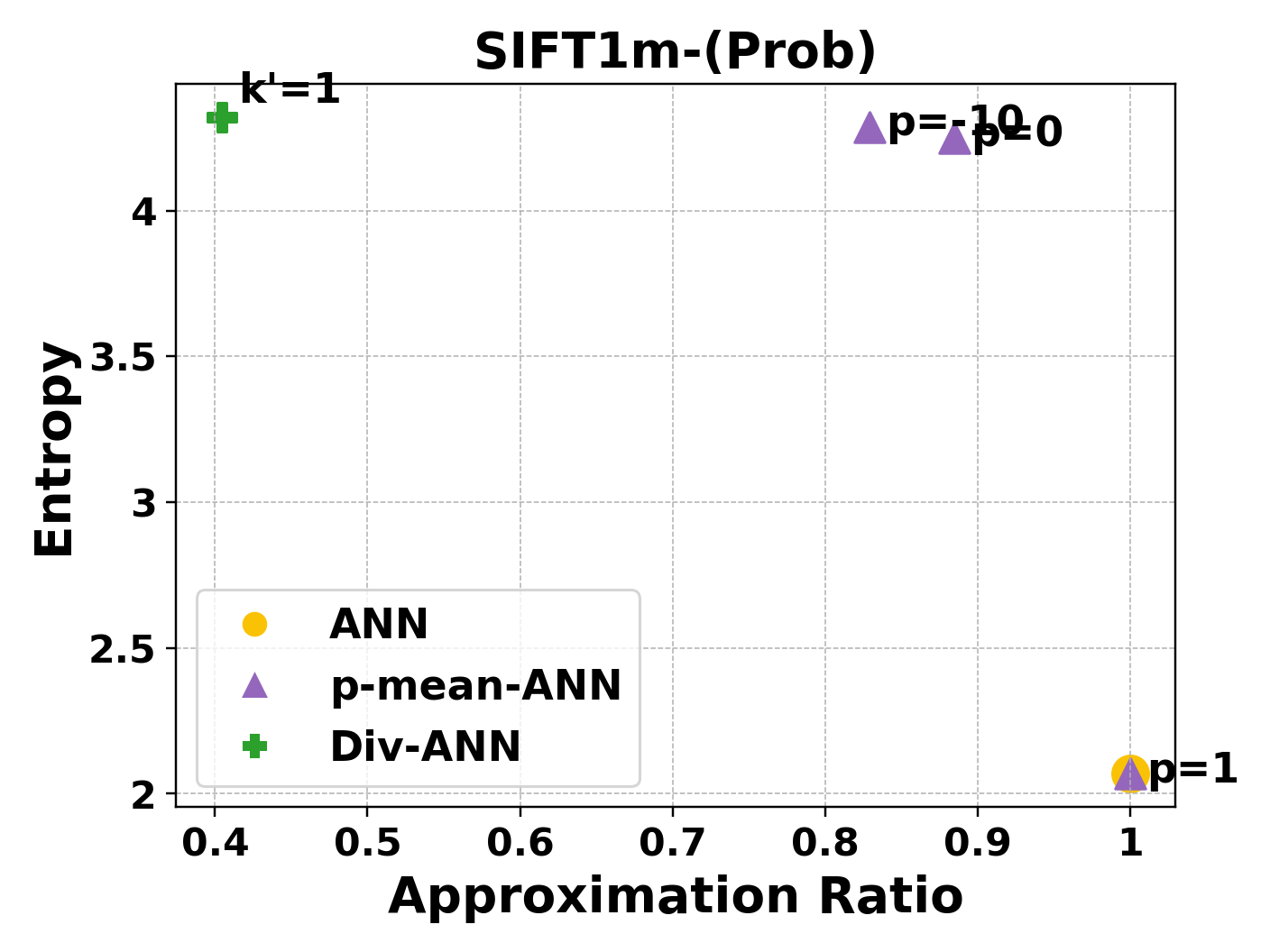} \\
\end{tabular}
\caption{The plots report the approximation ratio versus entropy trade-off of $p$-mean-ANN, as $p$ varies, for $k$ = $10$ \textbf{(Left)} and $k$ = $50$ \textbf{(Right)} on \siftP\ dataset in single-attribute setting. We omit points corresponding to $p \in \{-1, -0.5,  0.5\}$ since they were extremely close to the points $p = -10$ or $p=0$.}
\label{fig:AppendixResultsSingleAttriSIFTProbPTrend}
\end{figure}

\begin{figure}[t!]
\centering
\begin{tabular}{@{}c@{}c@{}c@{}c@{}}
\includegraphics[width=0.48\textwidth]{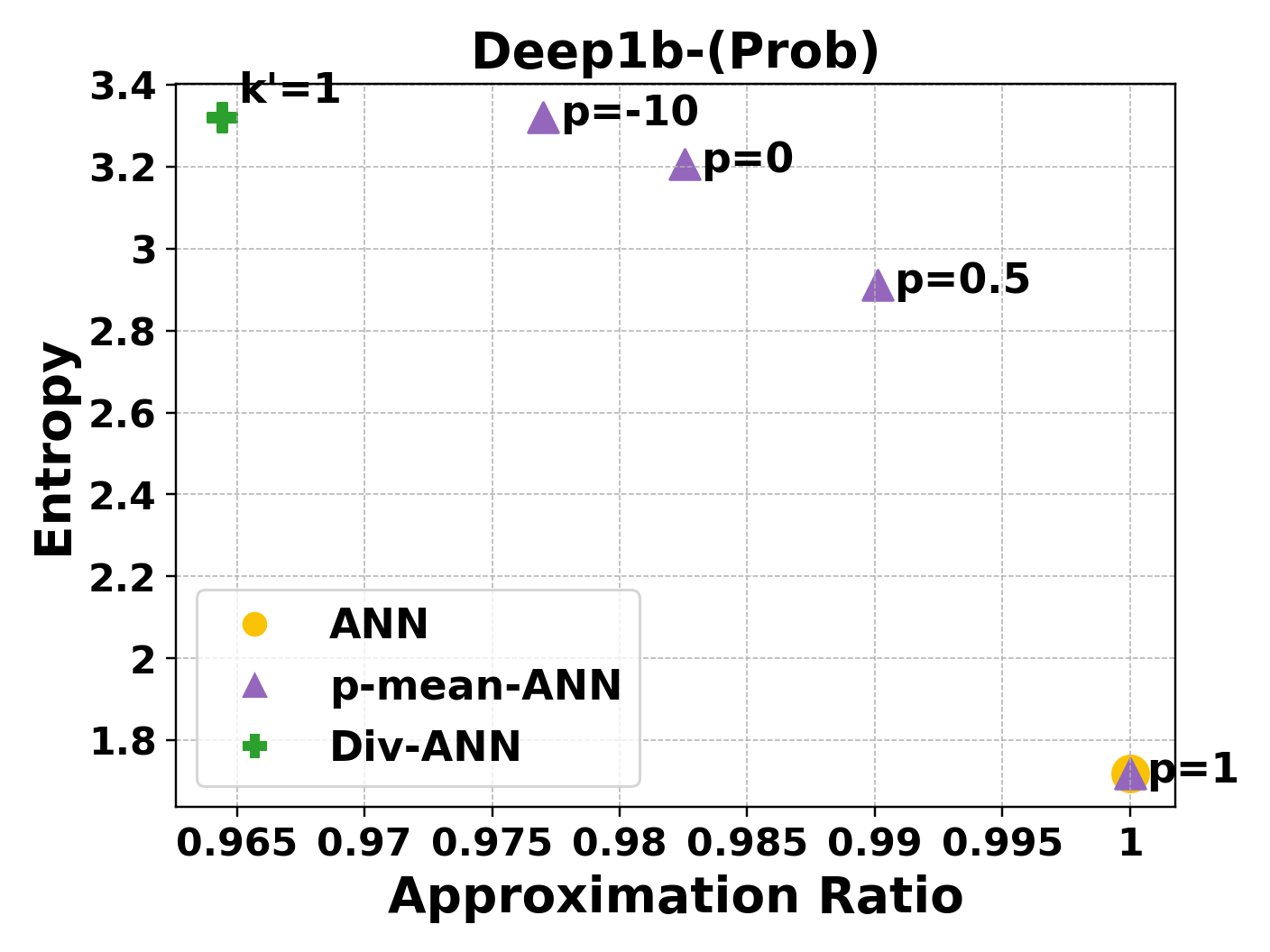} &
\includegraphics[width=0.48\textwidth]{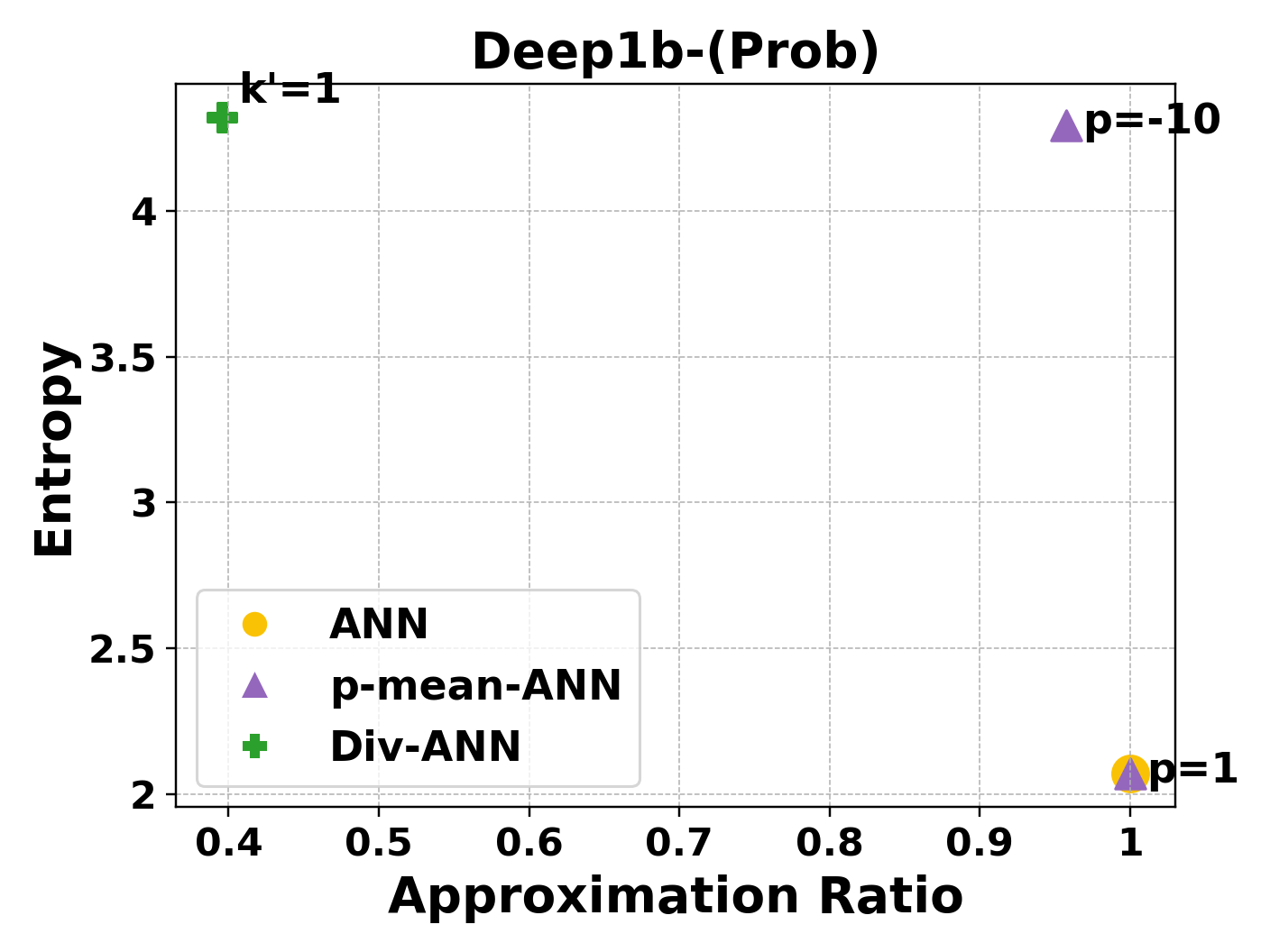} \\
\end{tabular}
\caption{The plots report the approximation ratio versus entropy trade-off of $p$-mean-ANN, as $p$ varies, for $k$ = $10$ \textbf{(Left)} and $k$ = $50$ \textbf{(Right)} on \deepP\ dataset in single-attribute setting. For $k=50$, we omit points corresponding to $p \in \{-1, -0.5, 0, 0.5\}$ since they were extremely close to the point $p = -10$. Due to the same reasons, we omit $p \in \{-1, -0.5\}$ for $k$ = $10$.}
\label{fig:AppendixResultsSingleAttriDeepProbPTrend}
\end{figure}

\begin{figure}[t!]
\centering
\begin{tabular}{@{}c@{}c@{}c@{}c@{}}
\includegraphics[width=0.48\textwidth]{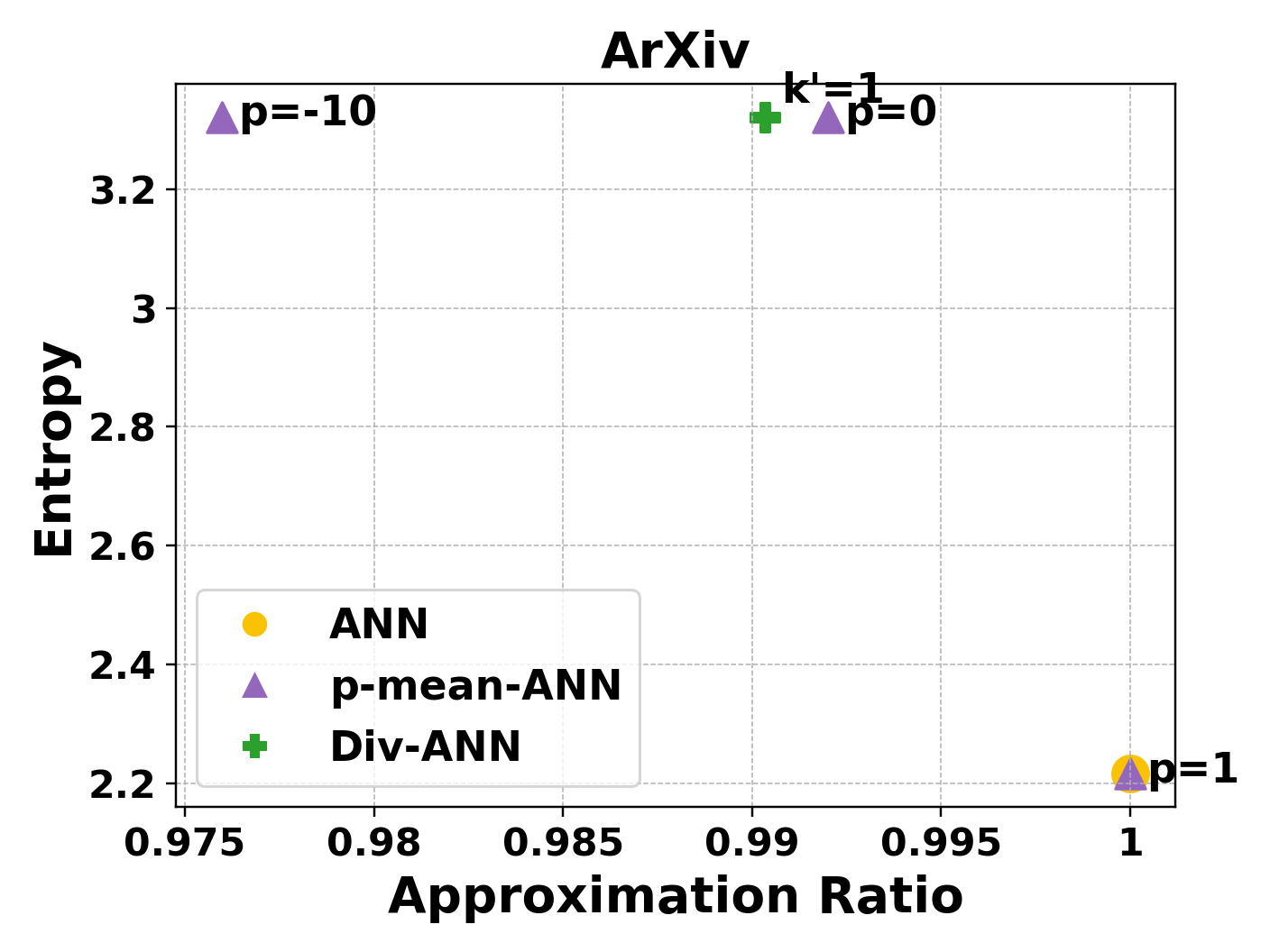} &
\includegraphics[width=0.48\textwidth]{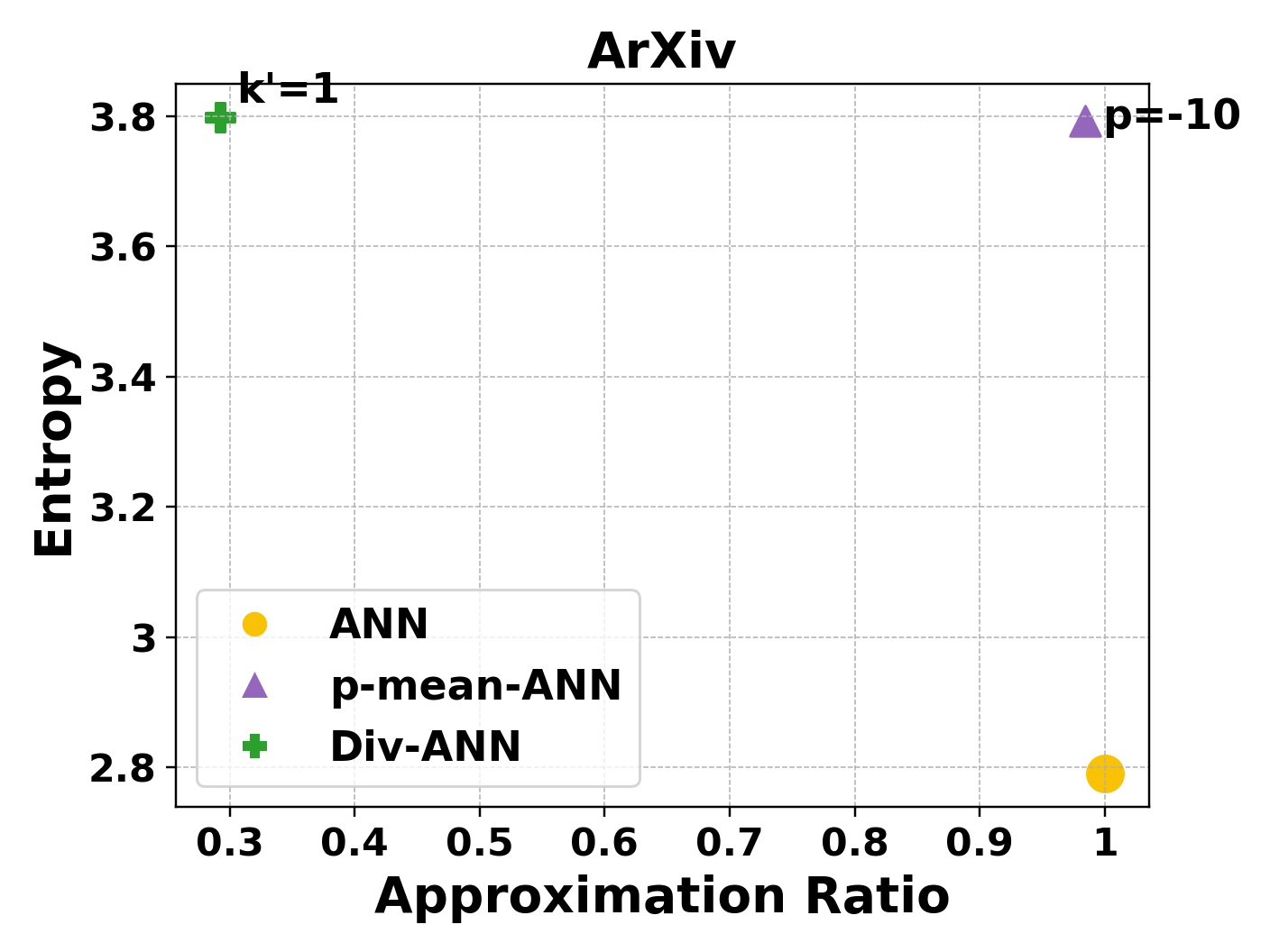} 
\end{tabular}
\caption{The plots report the approximation ratio versus entropy trade-off of $p$-mean-ANN, as $p$ varies, for $k$ = $10$ \textbf{(Left)} and $k$ = $50$ \textbf{(Right)} on \arxiv\ dataset in single-attribute setting. For $k=50$, we omit points corresponding to $p \in \{-1, -0.5, 0, 0.5\}$ since they were extremely close to the points $p = -10$. Due to similar reasons we omit $p \in \{-1, -0.5, 0.5\}$ for $k$ = $10$.}
\label{fig:AppendixResultsSingleAttriArxivPTrend}
\end{figure}

\subsubsection{Approximation Ratio Versus Inverse Simpson Index}
\label{appendix:paretoSingleAttriInvSimpsonAprxRatio}
We also report results (Figures \ref{fig:AppendixResultsSingleAttriSIFTCLusInvSimp}, \ref{fig:AppendixResultsSingleAttriSIFTProbInvSimp}, \ref{fig:AppendixResultsSingleAttriDeepProbInvSimp} and \ref{fig:AppendixResultsSingleAttriArxivInvSimp}) on approximation ratio versus inverse Simpson index for all the aforementioned datasets, comparing \ouralgo\ with \divann\ with various choices of constraint parameter $k'$. The trends are similar to those for approximation ratio versus entropy.

\begin{figure}[t!]
\centering
\begin{tabular}{@{}c@{}c@{}c@{}c@{}}
\includegraphics[width=0.48\textwidth]{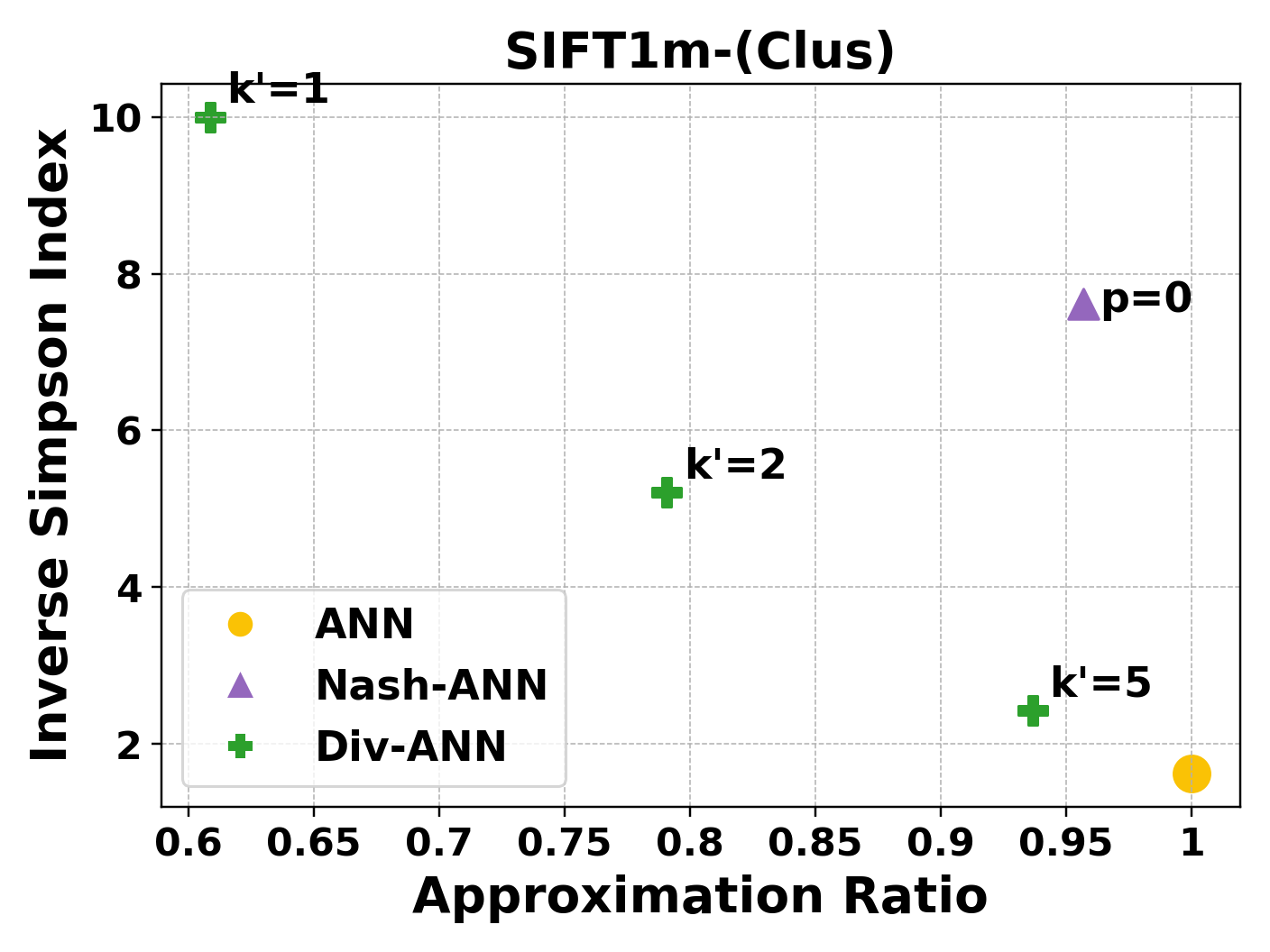} &
\includegraphics[width=0.48\textwidth]{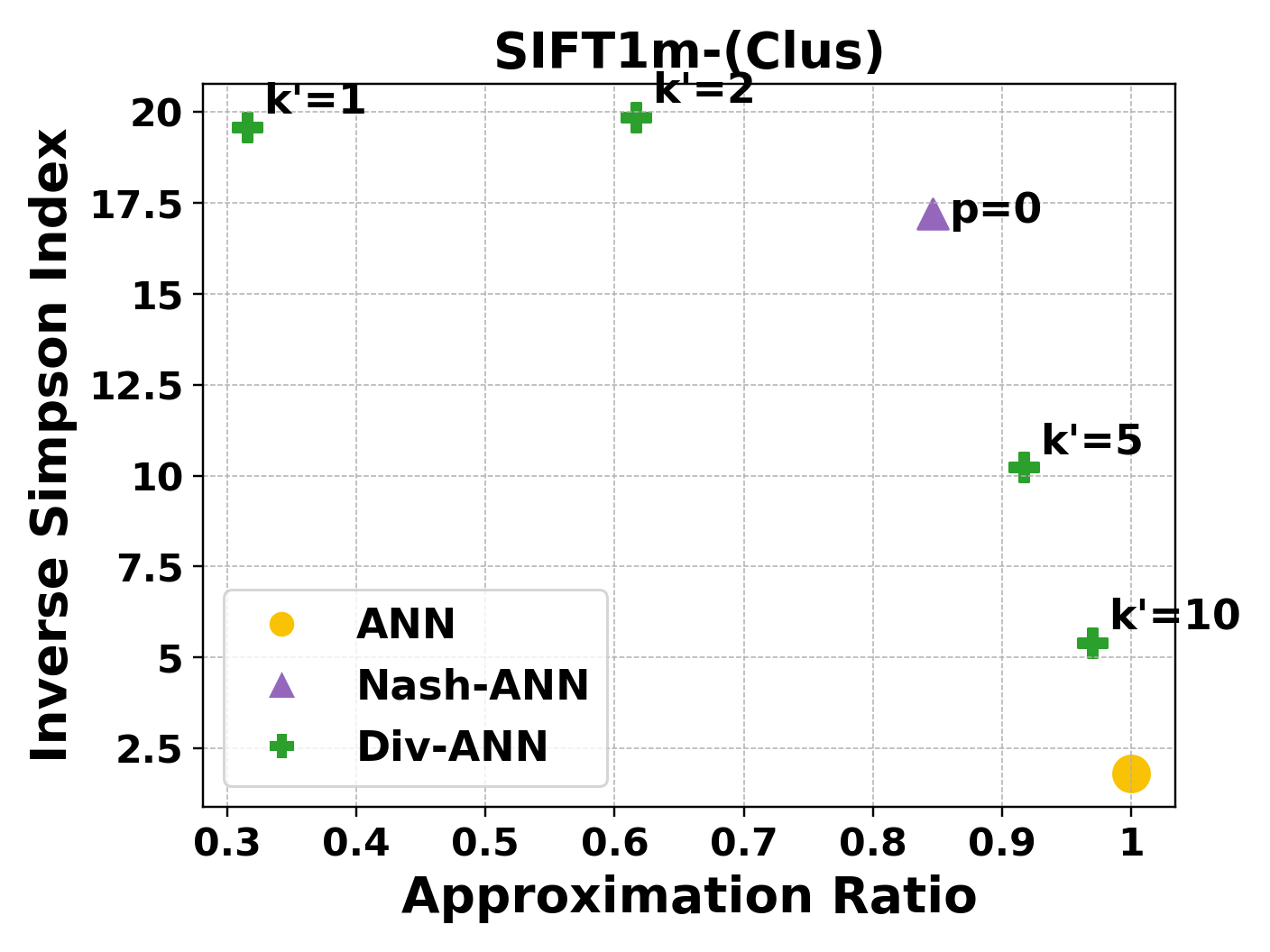} \\
\end{tabular}
\caption{The plots show approximation ratio versus inverse Simpson index trade-offs for various algorithms for $k$  = $10$ \textbf{(Left)} and  $k$  = $50$ \textbf{(Right)} in  single-attribute setting on \siftC\ dataset.}
\label{fig:AppendixResultsSingleAttriSIFTCLusInvSimp}
\end{figure}

\begin{figure}[t!]
\centering
\begin{tabular}{@{}c@{}c@{}c@{}c@{}}
\includegraphics[width=0.48\textwidth]{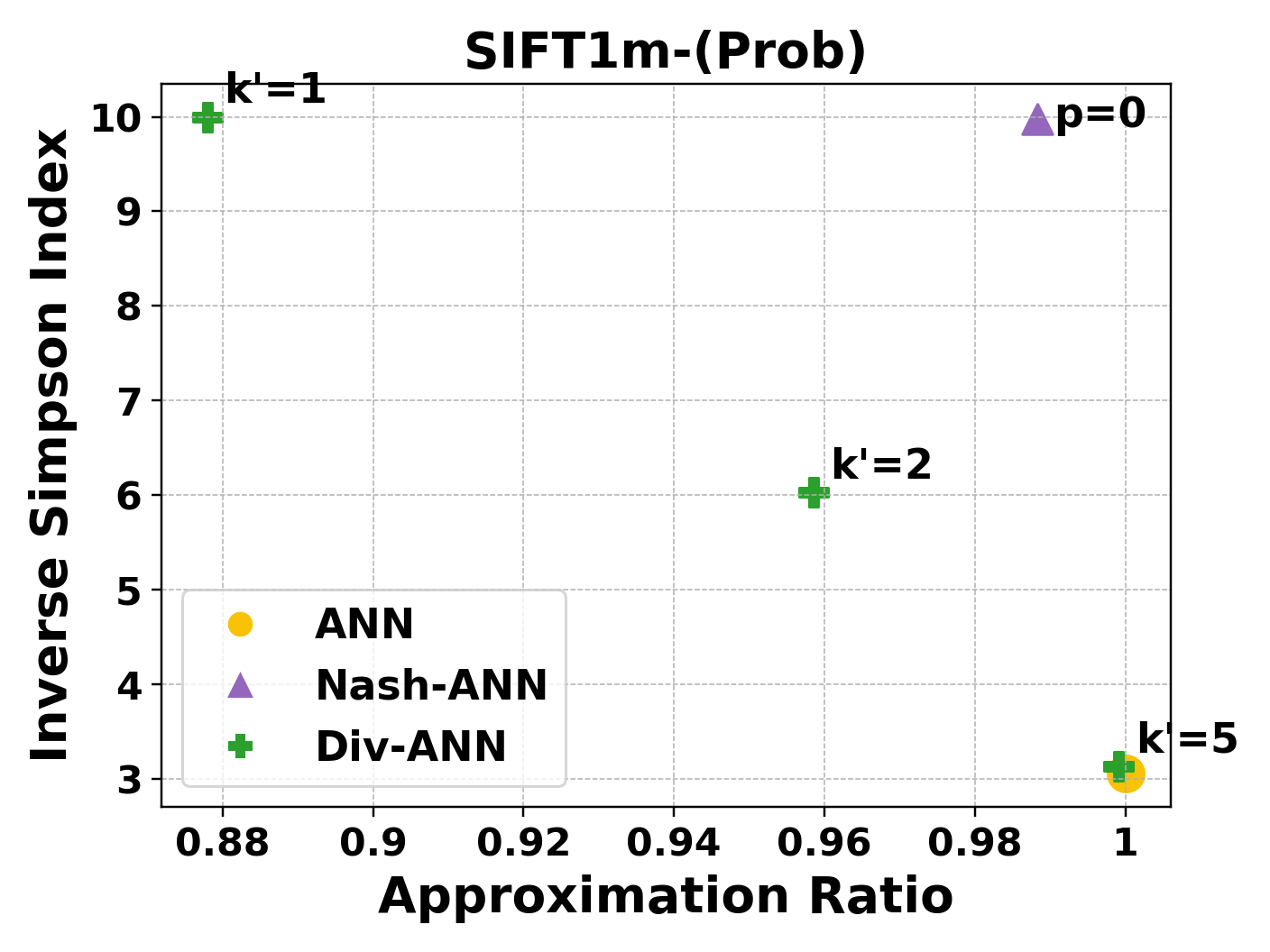} &
\includegraphics[width=0.48\textwidth]{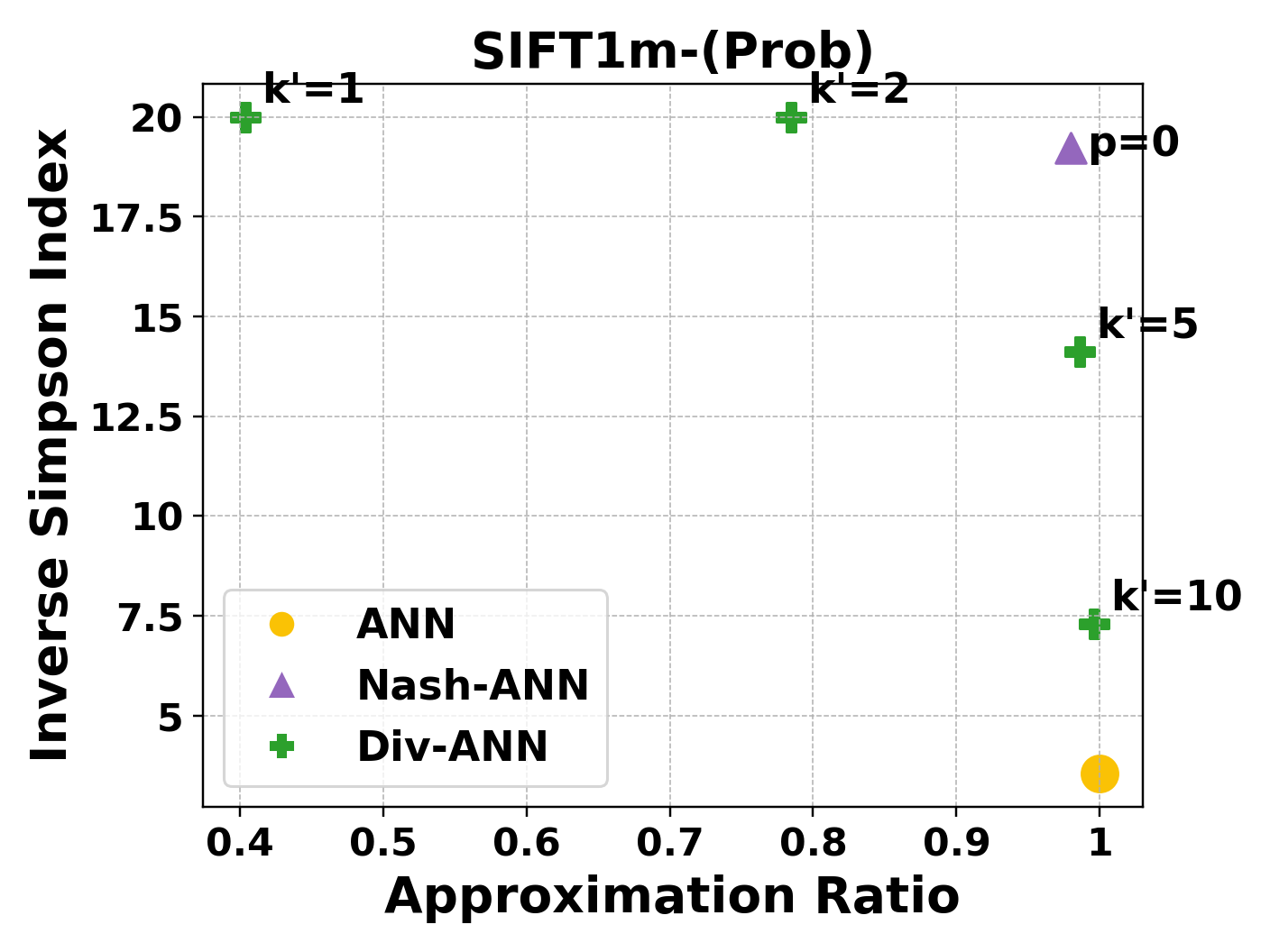} \\
\end{tabular}
\caption{The plots show approximation ratio versus inverse Simpson index trade-offs for various algorithms for $k$  = $10$ \textbf{(Left)} and  $k$  = $50$ \textbf{(Right)} in  single-attribute setting on \siftP\ dataset.}
\label{fig:AppendixResultsSingleAttriSIFTProbInvSimp}
\end{figure}

\begin{figure}[t!]
\centering
\begin{tabular}{@{}c@{}c@{}c@{}c@{}}
\includegraphics[width=0.48\textwidth]{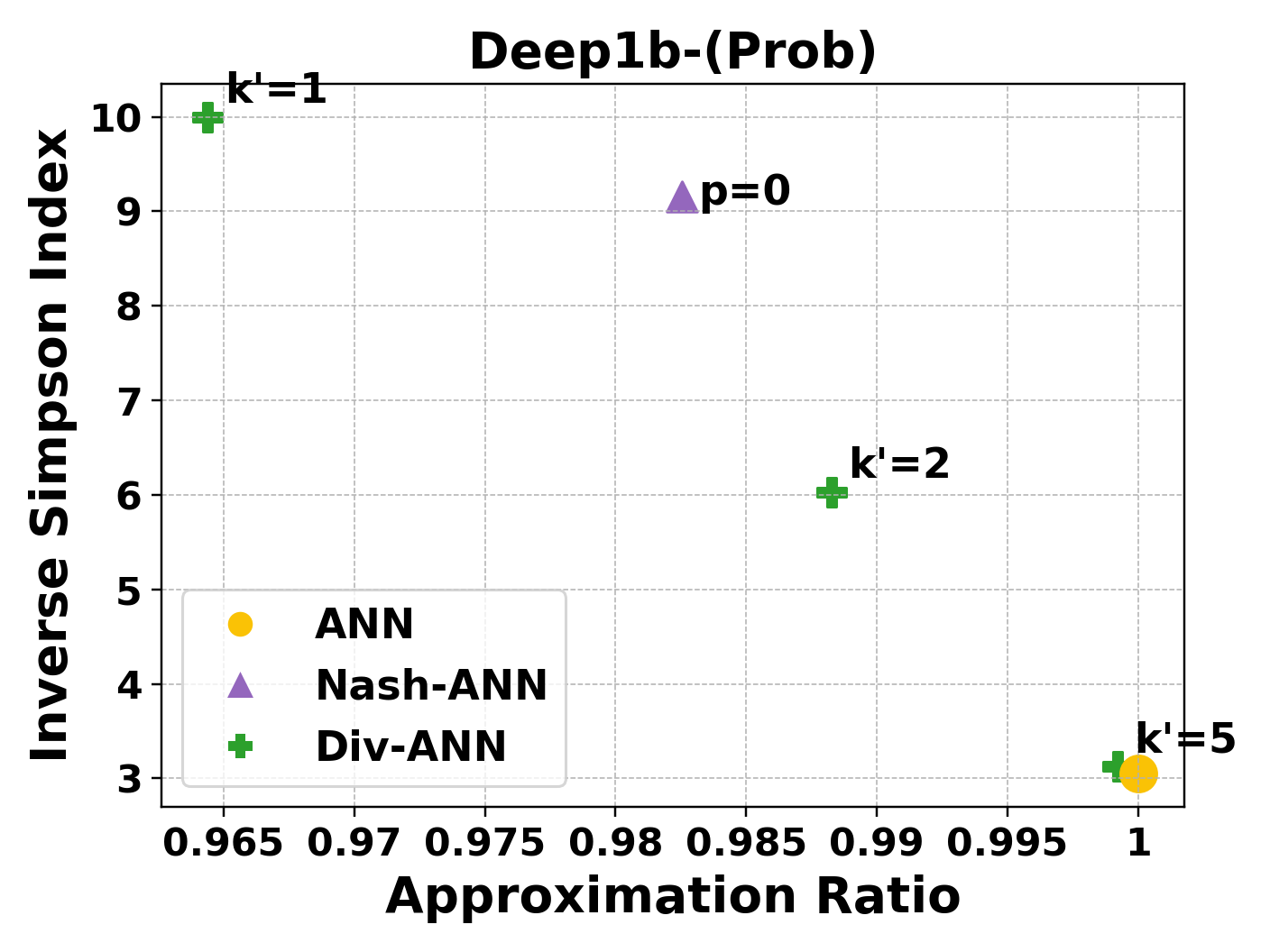} &
\includegraphics[width=0.48\textwidth]{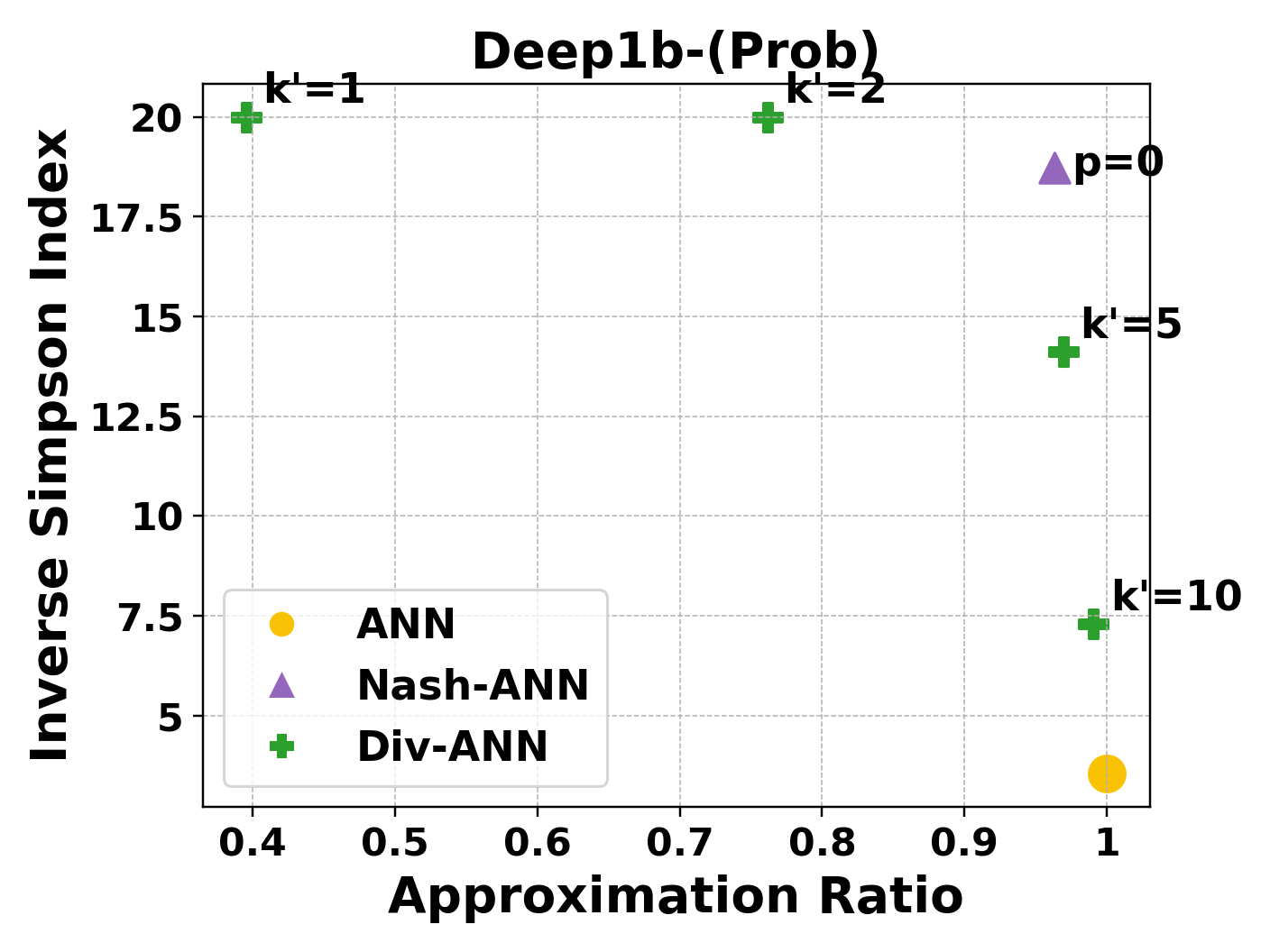} \\
\end{tabular}
\caption{The plots show approximation ratio versus inverse Simpson index trade-offs for various algorithms for $k$  = $10$ \textbf{(Left)} and  $k$  = $50$ \textbf{(Right)} in  single-attribute setting on \deepP\ dataset.}
\label{fig:AppendixResultsSingleAttriDeepProbInvSimp}
\end{figure}

\begin{figure}[t!]
\centering
\begin{tabular}{@{}c@{}c@{}c@{}c@{}}
\includegraphics[width=0.48\textwidth]{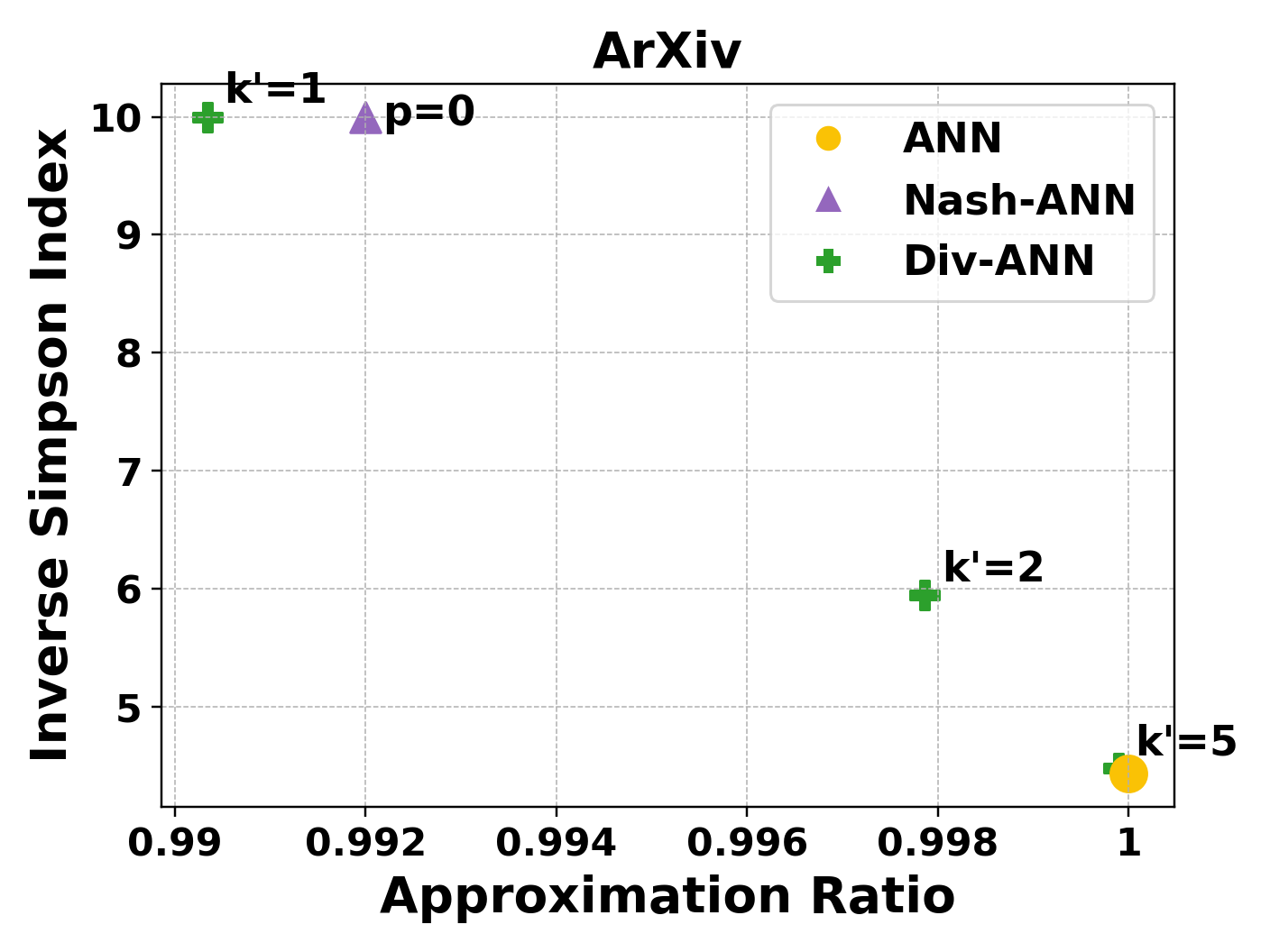} &
\includegraphics[width=0.48\textwidth]{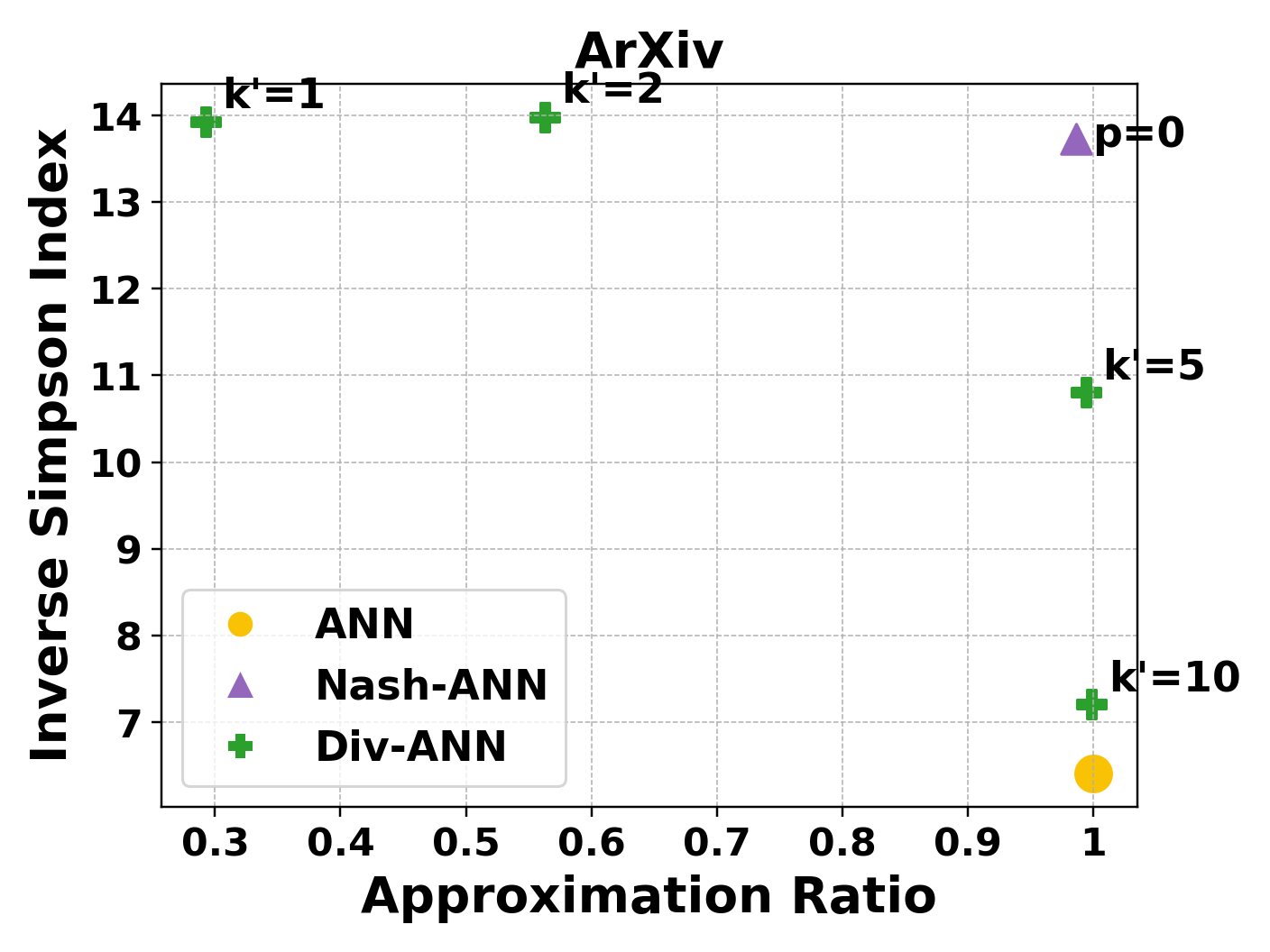} \\
\end{tabular}
\caption{The plots show approximation ratio versus inverse Simpson index trade-offs for various algorithms for $k$  = $10$ \textbf{(Left)} and  $k$  = $50$ \textbf{(Right)} in  single-attribute setting on \arxiv\ dataset.}
\label{fig:AppendixResultsSingleAttriArxivInvSimp}
\end{figure}

\subsubsection{Approximation Ratio Versus Distinct Attribute Count}
\label{appendix:paretoSingleAttriDistinctCountAprxRatio}
We also report the number of distinct attributes appearing in the set of vectors returned by different algorithms. Note that \divann\ by design always returns a set where the number of distinct attributes is at least $({k}/{k'})$. We plot approximation ratio versus number of distinct attributes and the results are shown in Figures~\ref{fig:AppendixResultsSingleAttriSIFTCLusDC}, \ref{fig:AppendixResultsSingleAttriSIFTProbDC}, \ref{fig:AppendixResultsSingleAttriDeepProbDC}, and \ref{fig:AppendixResultsSingleAttriArxivDC}. The results show that while \divann\ with $k'=1$ has high number of distinct attributes (by design), its approximation ratio is quite low. On the other hand, \ouralgo\ has almost equal or slightly lower number of distinct attributes but achieves very high approximation ratio.

\begin{figure}[t!]
\centering
\begin{tabular}{@{}c@{}c@{}c@{}c@{}}
\includegraphics[width=0.48\textwidth]{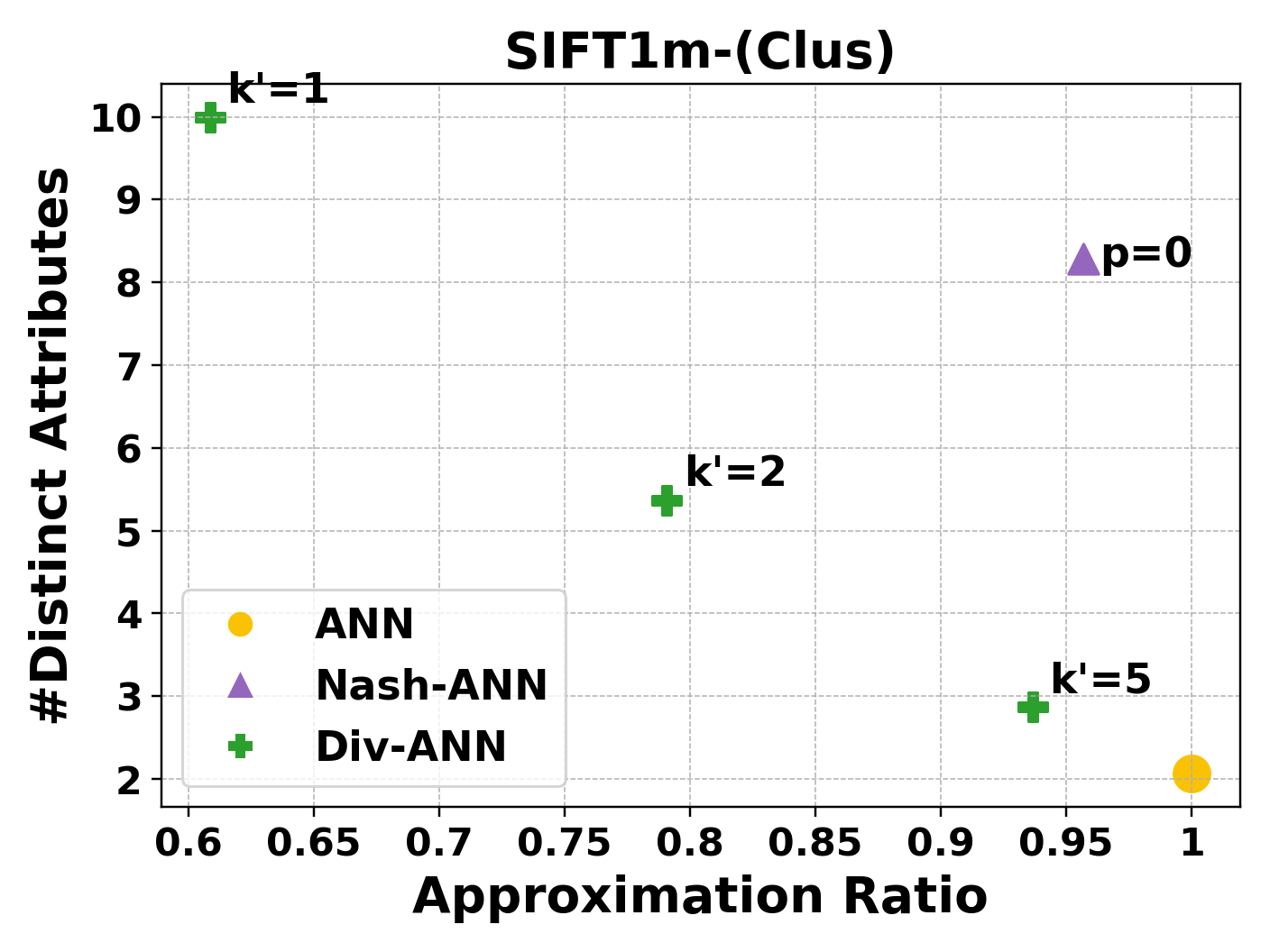} &
\includegraphics[width=0.48\textwidth]{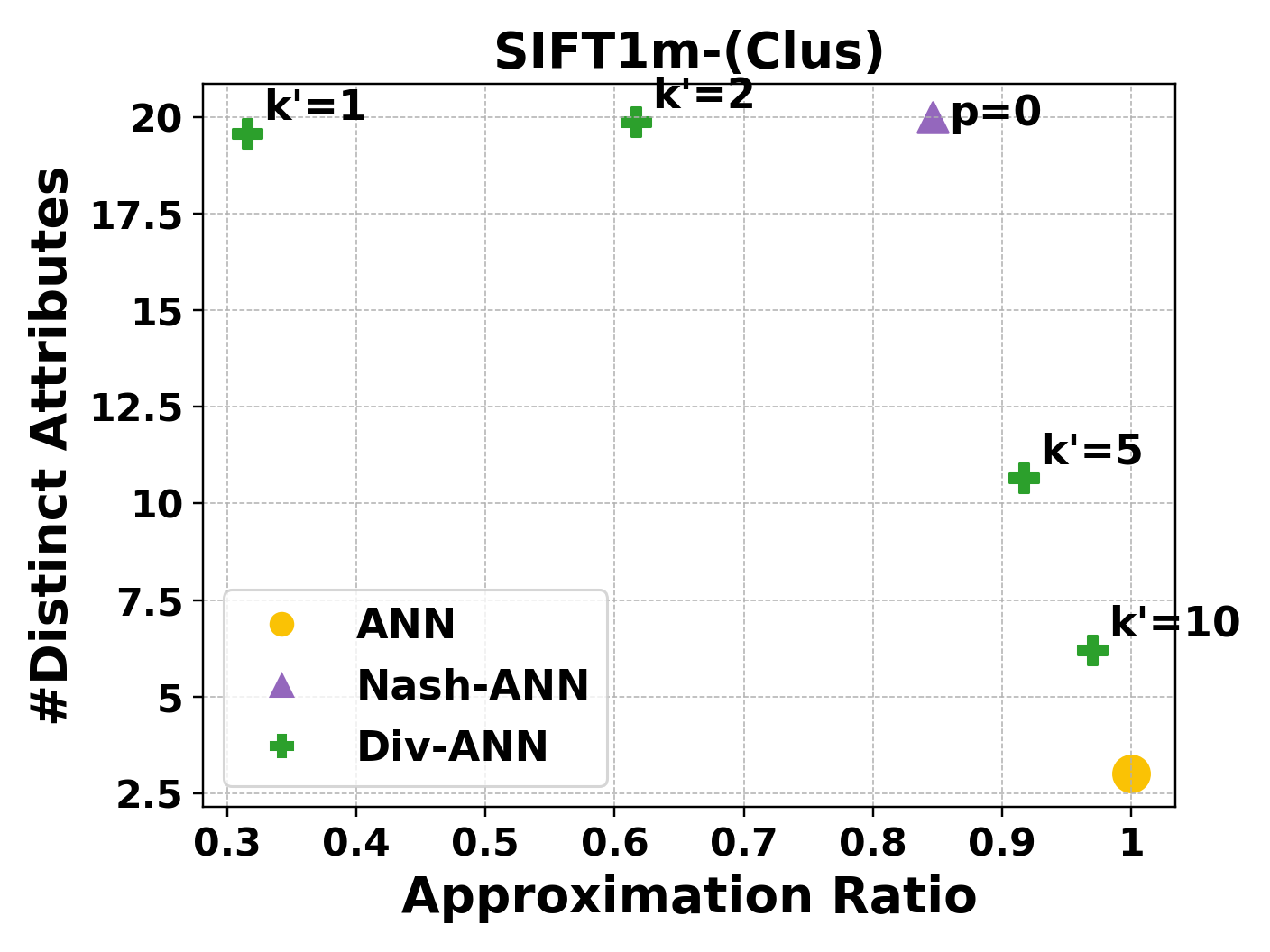} \\
\end{tabular}
\caption{The plots show approximation ratio versus distinct counts trade-offs for various algorithms for $k$  = $10$ \textbf{(Left)} and  $k$  = $50$ \textbf{(Right)} in  single-attribute setting on \siftC\ dataset.}
\label{fig:AppendixResultsSingleAttriSIFTCLusDC}
\vspace{-19pt}
\end{figure}

\begin{figure}[t!]
\centering
\begin{tabular}{@{}c@{}c@{}c@{}c@{}}
\includegraphics[width=0.48\textwidth]{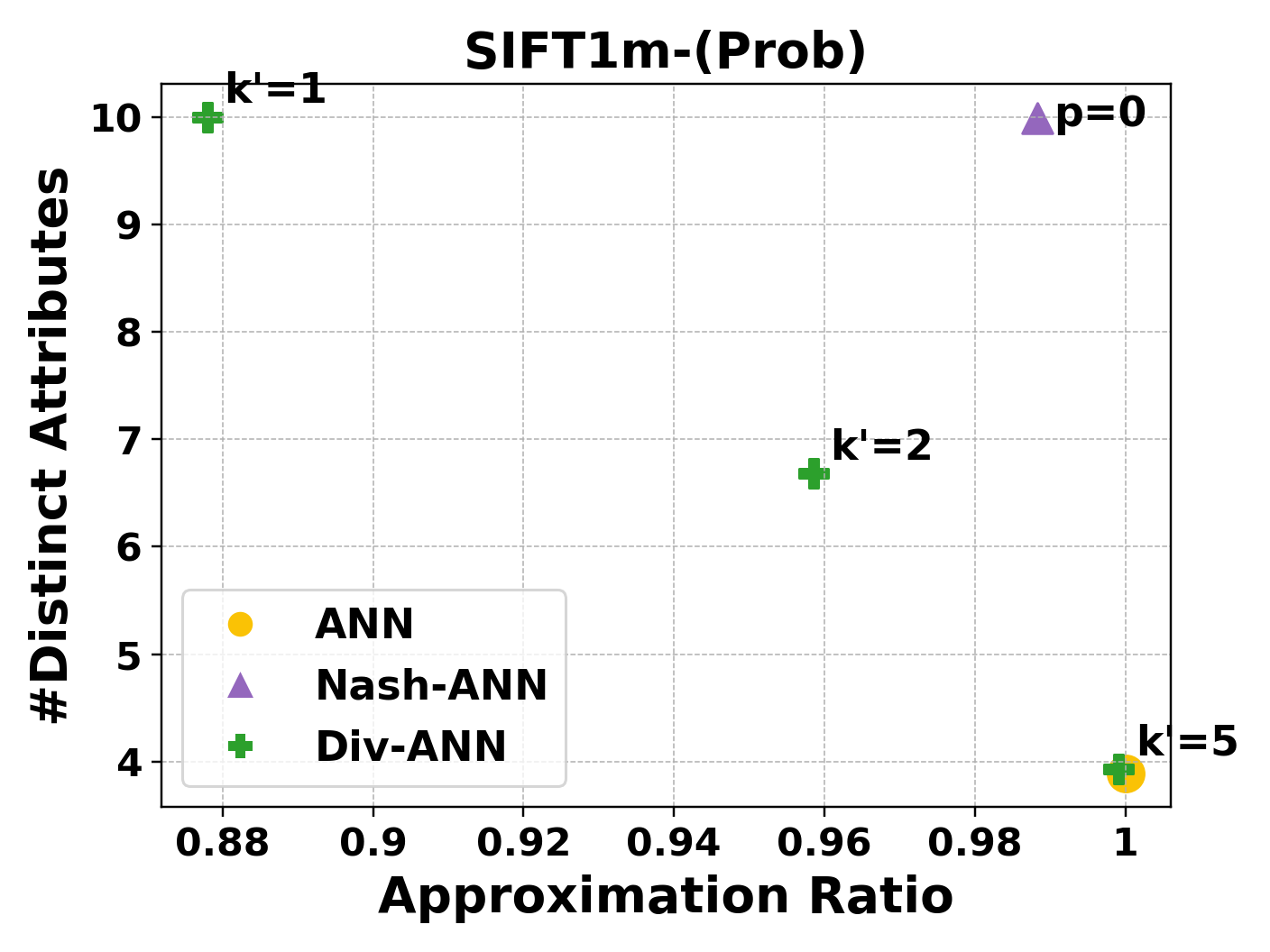} &
\includegraphics[width=0.48\textwidth]{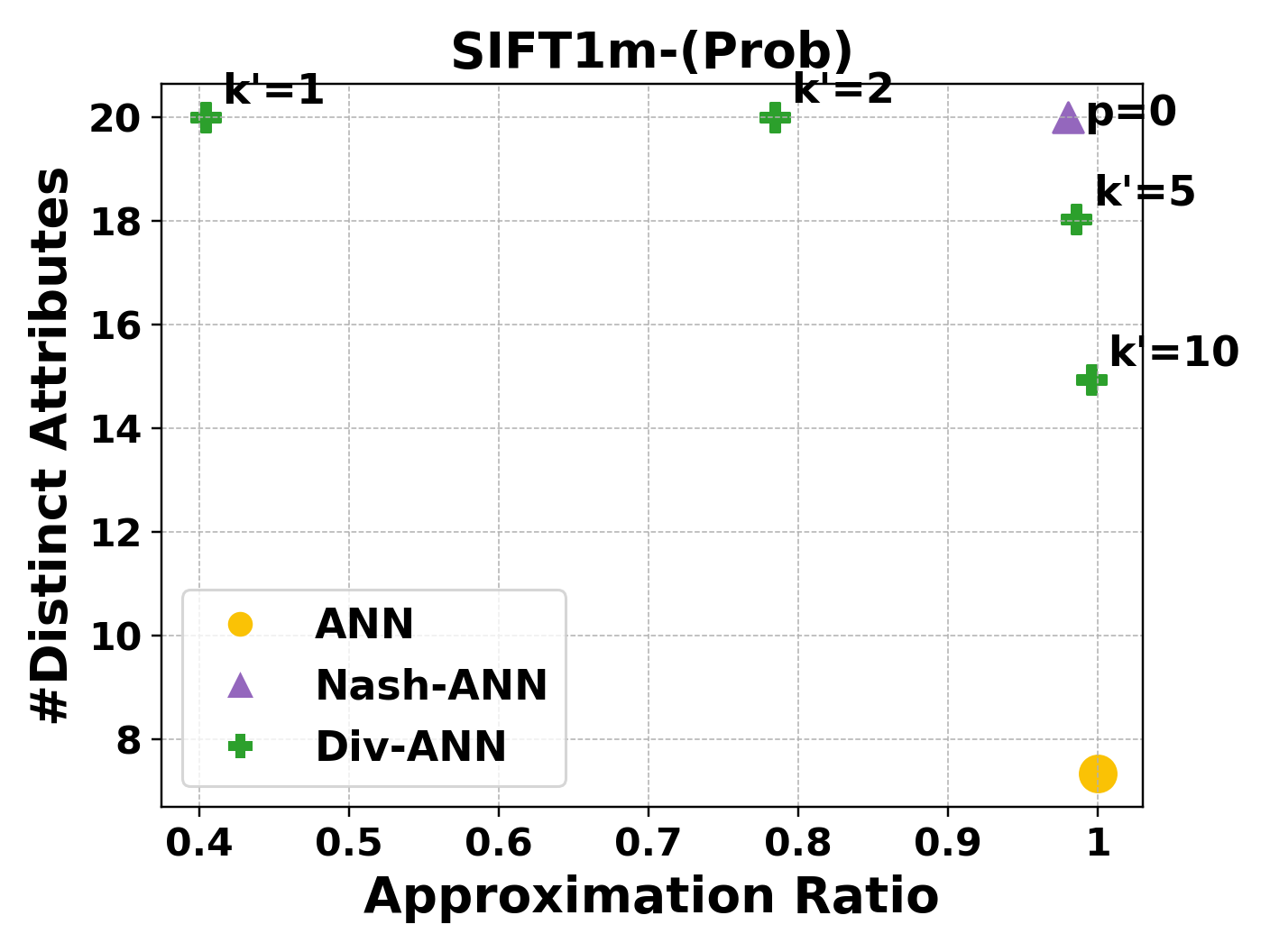} \\
\end{tabular}
\caption{The plots show approximation ratio versus distinct counts trade-offs for various algorithms for $k$  = $10$ \textbf{(Left)} and  $k$  = $50$ \textbf{(Right)} in  single-attribute setting on \siftP\ dataset.}
\label{fig:AppendixResultsSingleAttriSIFTProbDC}
\end{figure}

\begin{figure}[t!]
\centering
\begin{tabular}{@{}c@{}c@{}c@{}c@{}}
\includegraphics[width=0.48\textwidth]{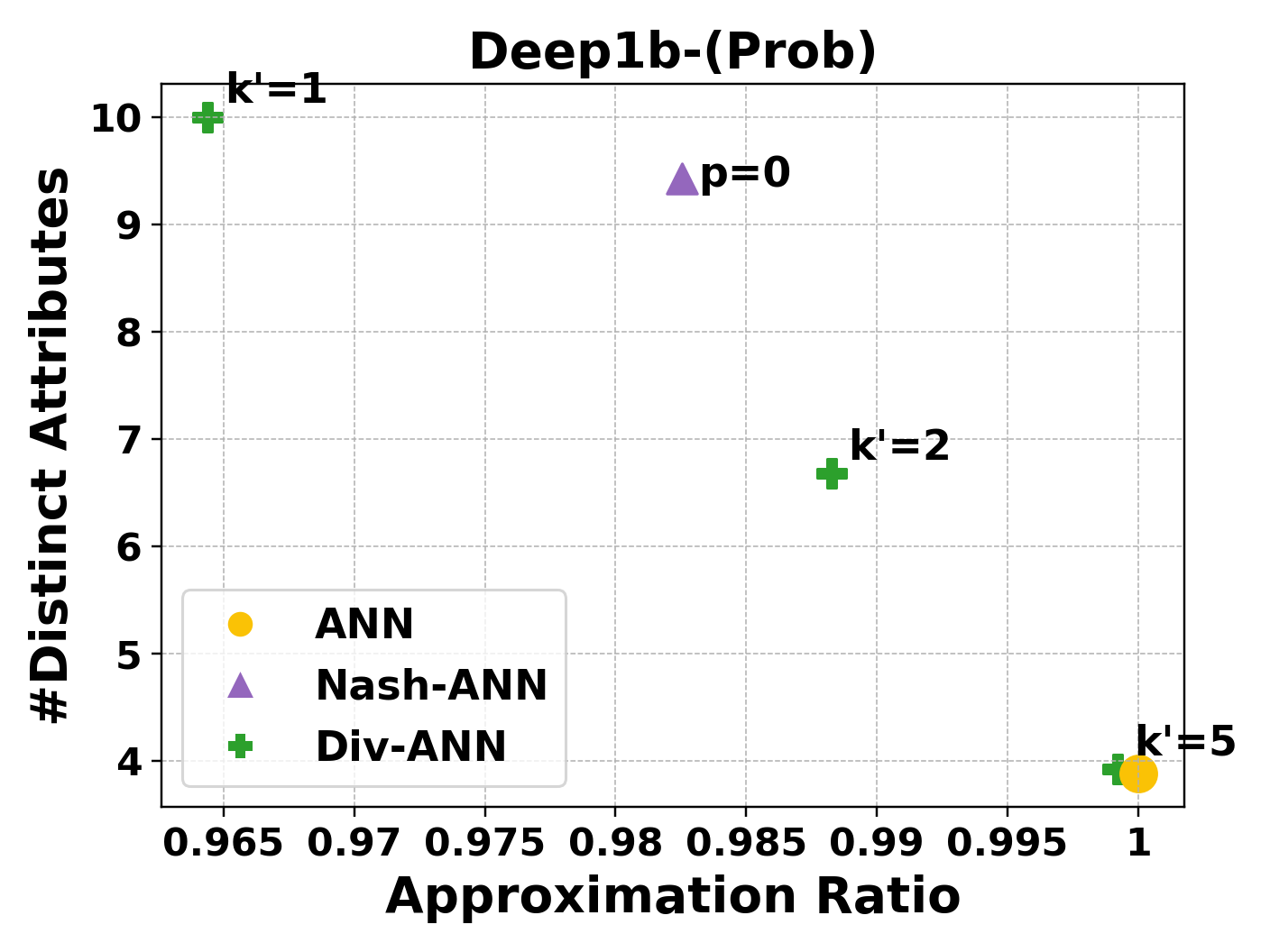} &
\includegraphics[width=0.48\textwidth]{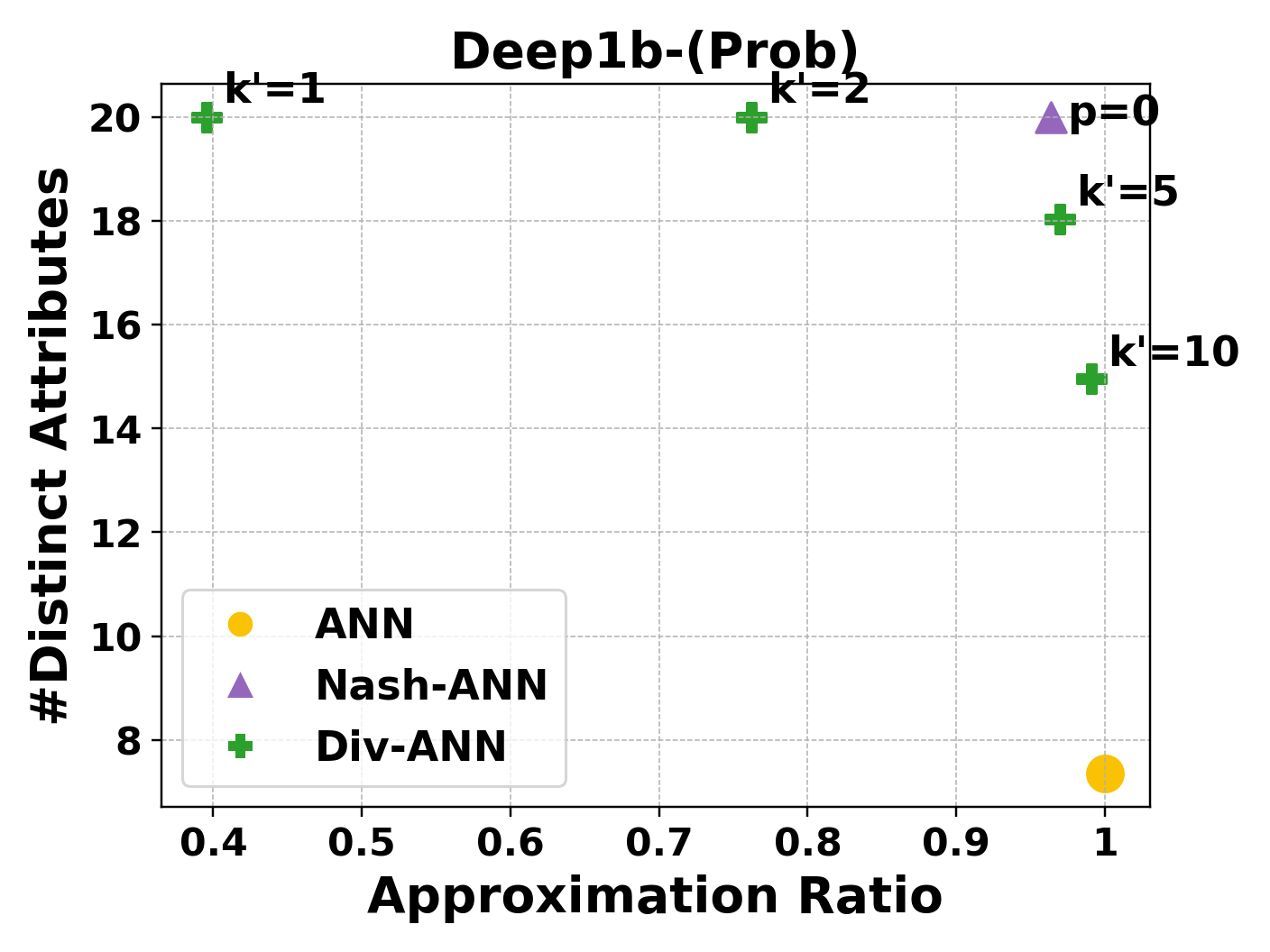} \\
\end{tabular}
\caption{The plots show approximation ratio versus distinct counts trade-offs for various algorithms for $k$  = $10$ \textbf{(Left)} and  $k$  = $50$ \textbf{(Right)} in  single-attribute setting on \deepP\ dataset.}
\label{fig:AppendixResultsSingleAttriDeepProbDC}
\end{figure}

\begin{figure}[t!]
\centering
\begin{tabular}{@{}c@{}c@{}c@{}c@{}}
\includegraphics[width=0.48\textwidth]{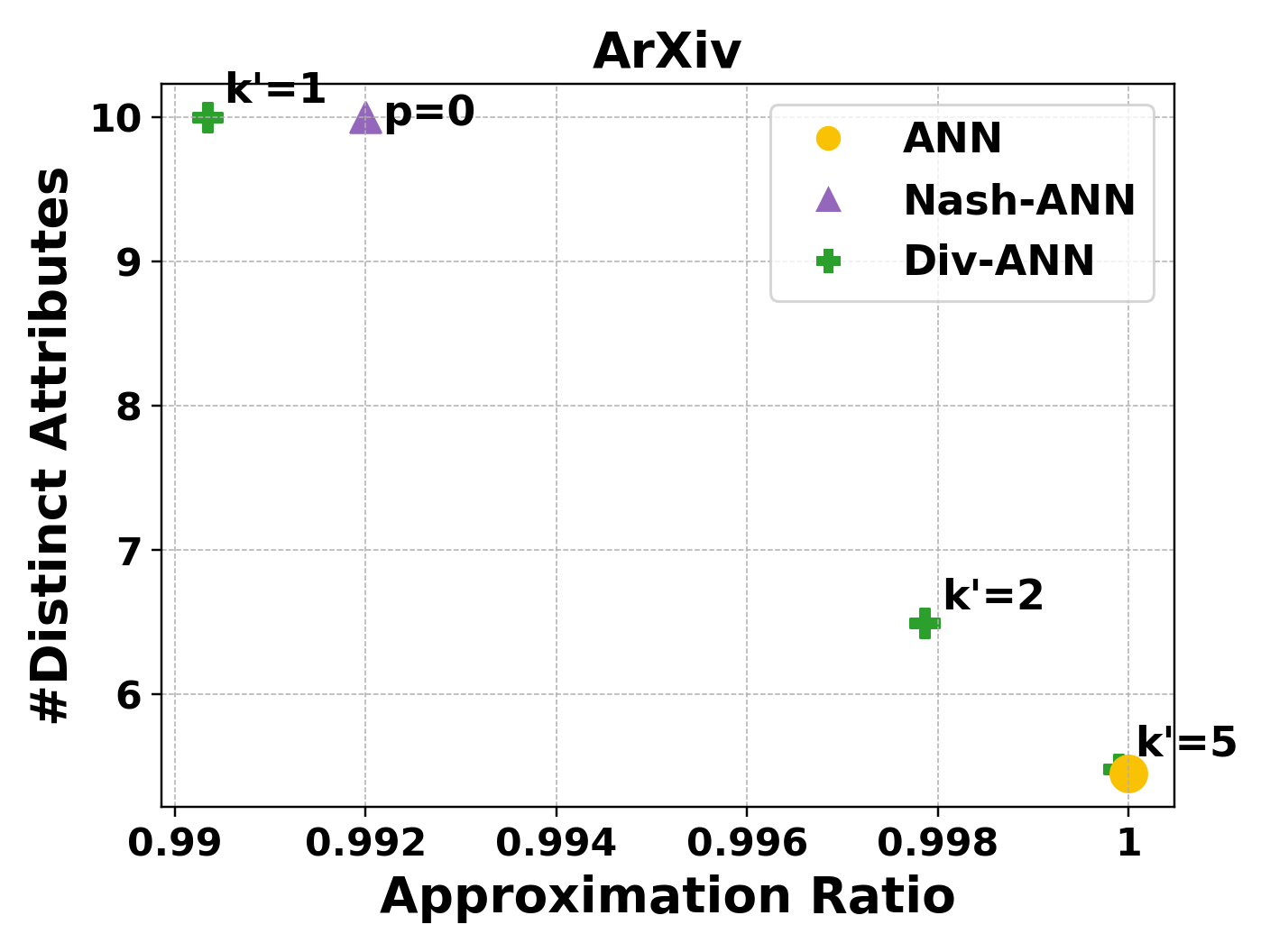} &
\includegraphics[width=0.48\textwidth]{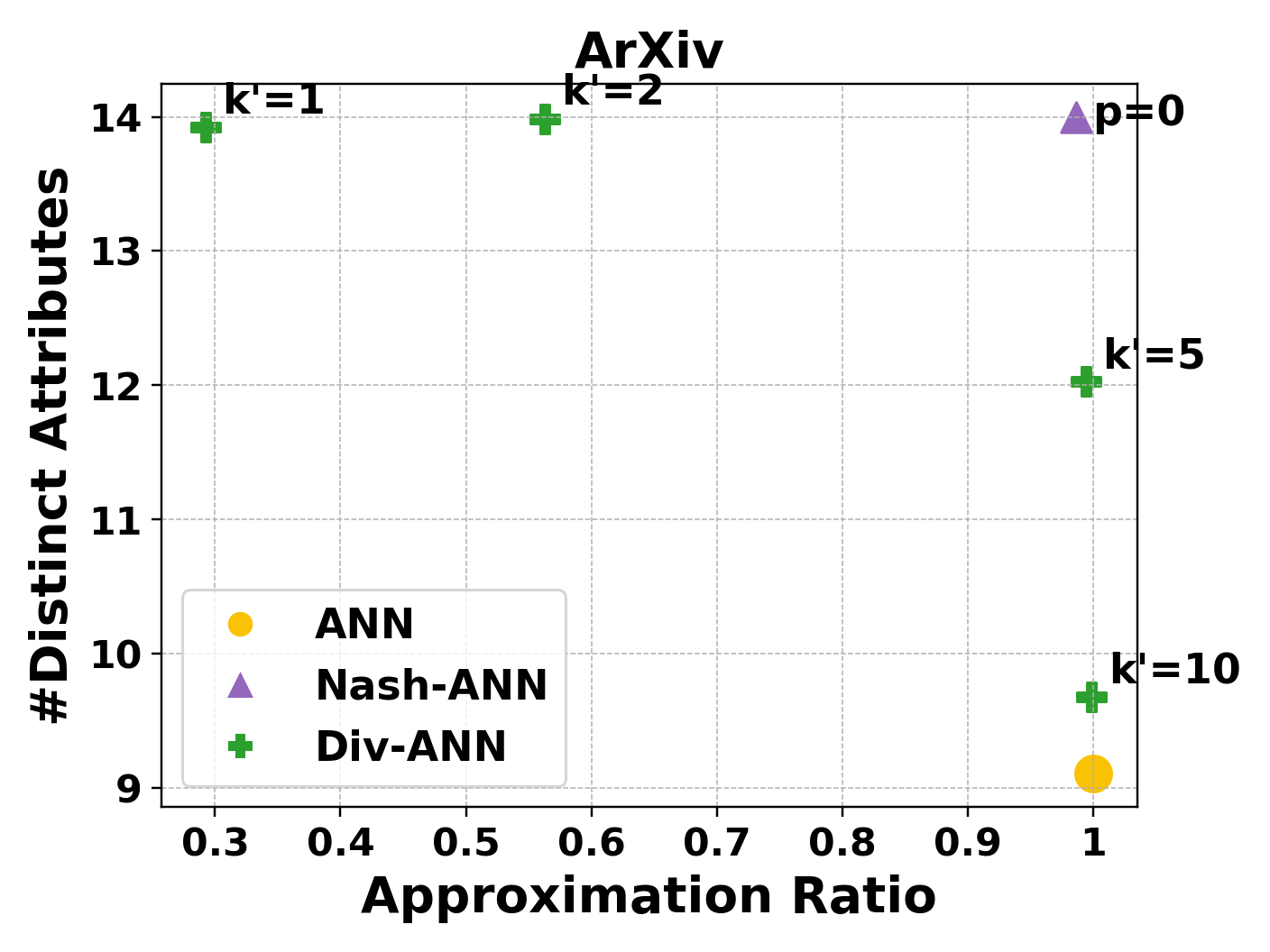} \\
\end{tabular}
\caption{The plots show approximation ratio versus distinct counts trade-offs for various algorithms for $k$  = $10$ \textbf{(Left)} and  $k$  = $50$ \textbf{(Right)} in  single-attribute setting on \arxiv\ dataset.}
\label{fig:AppendixResultsSingleAttriArxivDC}
\end{figure}

\subsubsection{Recall Versus Entropy}
\label{appendix:recallEntropy}
We also report results for another popular relevance metric in the nearest neighbor search literature, namely, recall. The results for different datasets are shown in Figures~\ref{fig:AppendixResultsSingleAttriAmazonRecall}, \ref{fig:AppendixResultsSingleAttriDeepClusRecall}, \ref{fig:AppendixResultsSingleAttriSIFTCLusRecall},  \ref{fig:AppendixResultsSingleAttriSIFTProbRecall}, \ref{fig:AppendixResultsSingleAttriDeepProbRecall}, and \ref{fig:AppendixResultsSingleAttriArxivRecall}. 

Note that as discussed earlier (\Cref{subsec:description-various-metrics--of-diversity-and-relevance}, \Cref{remark:approx-ratio-better-metric-than-recall}), recall can be a fragile metric when the goal is to balance between diversity and relevance. However, we still report recall to be consistent with prior literature and to demonstrate that \ouralgo\ does not perform poorly. In fact, it is evident from the plots that \ouralgo's recall value (relevance) surpasses that of \divann\ with $k' = 1$ (most attribute diverse solution) while achieving a similar entropy. As already noted, the approximation ratio for \ouralgo\ remains sufficiently high, indicating  that the retrieved set of neighbors lies within a reasonably good neighborhood of the true nearest neighbors of a given query.

\begin{figure}[!t]
\centering
\begin{tabular}{@{}c@{}c@{}c@{}c@{}}
\includegraphics[width=0.48\textwidth]{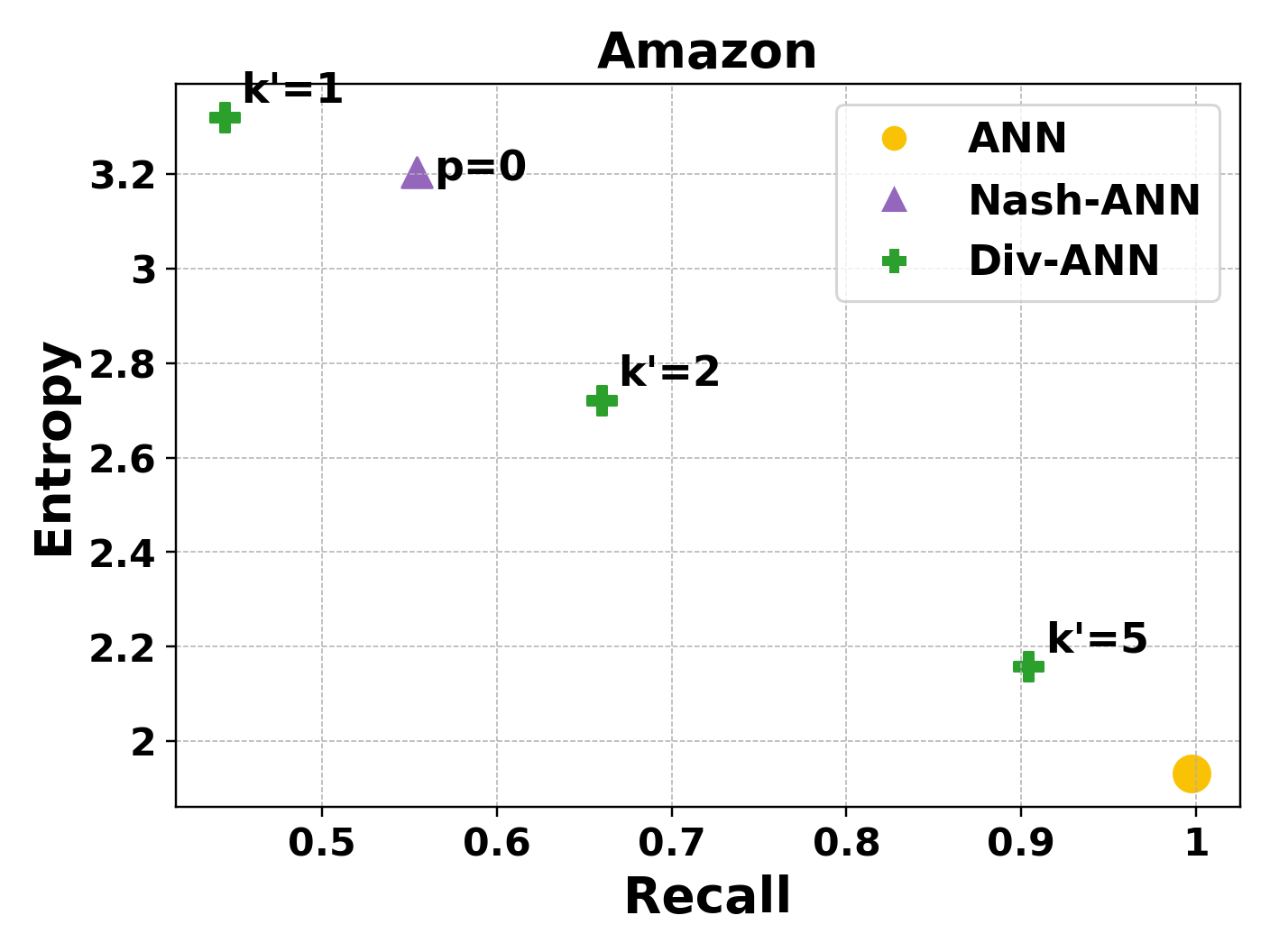} &
\includegraphics[width=0.48\textwidth]{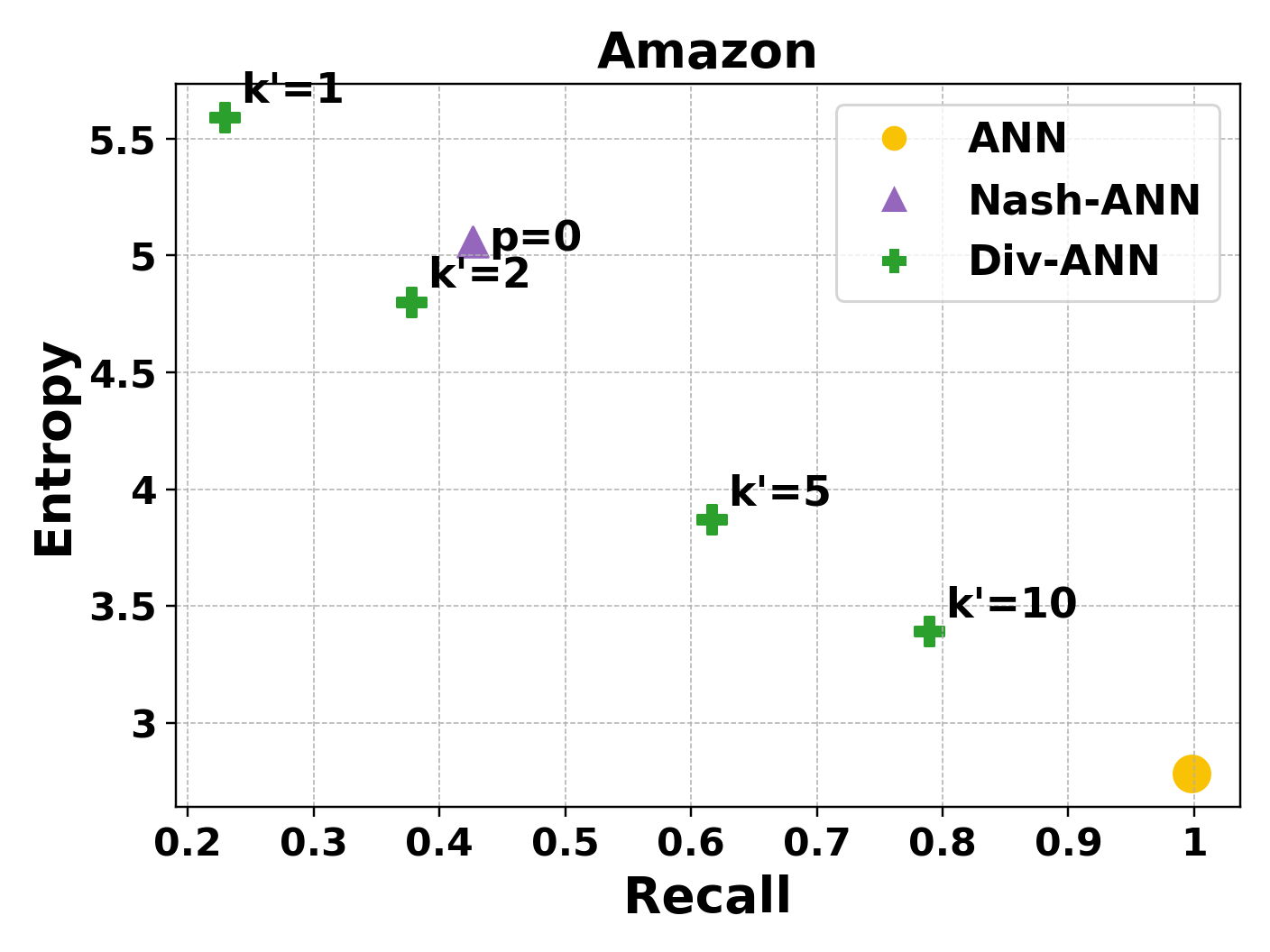} \\
\end{tabular}
\caption{The plots show recall versus entropy trade-offs for $k$  = $10$ \textbf{(Left)} and $k$  = $50$ \textbf{(Right)} in the single-attribute setting on \amazon\ dataset.}
\label{fig:AppendixResultsSingleAttriAmazonRecall}
\end{figure}

\begin{figure}[!t]
\centering
\begin{tabular}{@{}c@{}c@{}c@{}c@{}}
\includegraphics[width=0.48\textwidth]{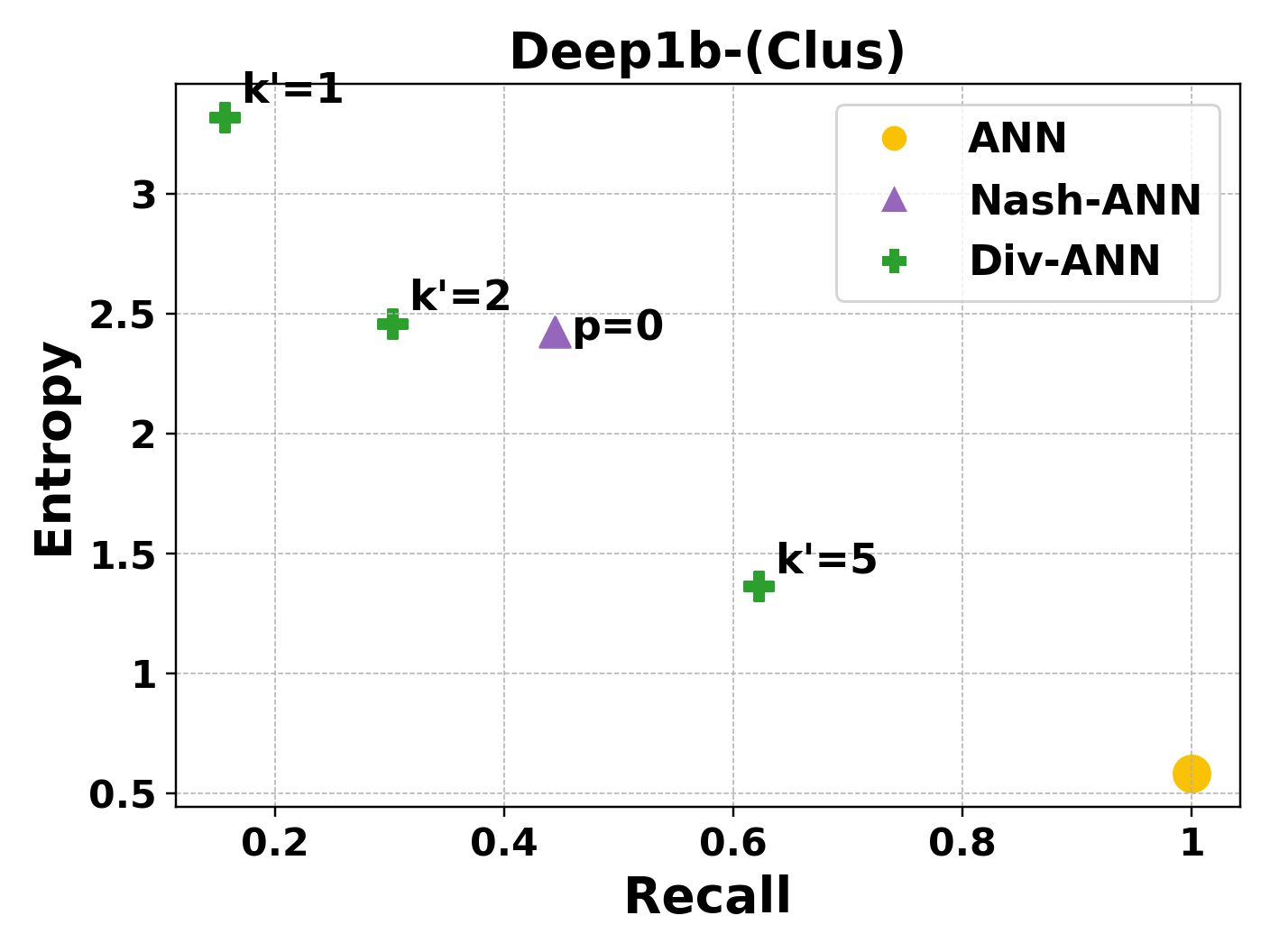} &
\includegraphics[width=0.48\textwidth]{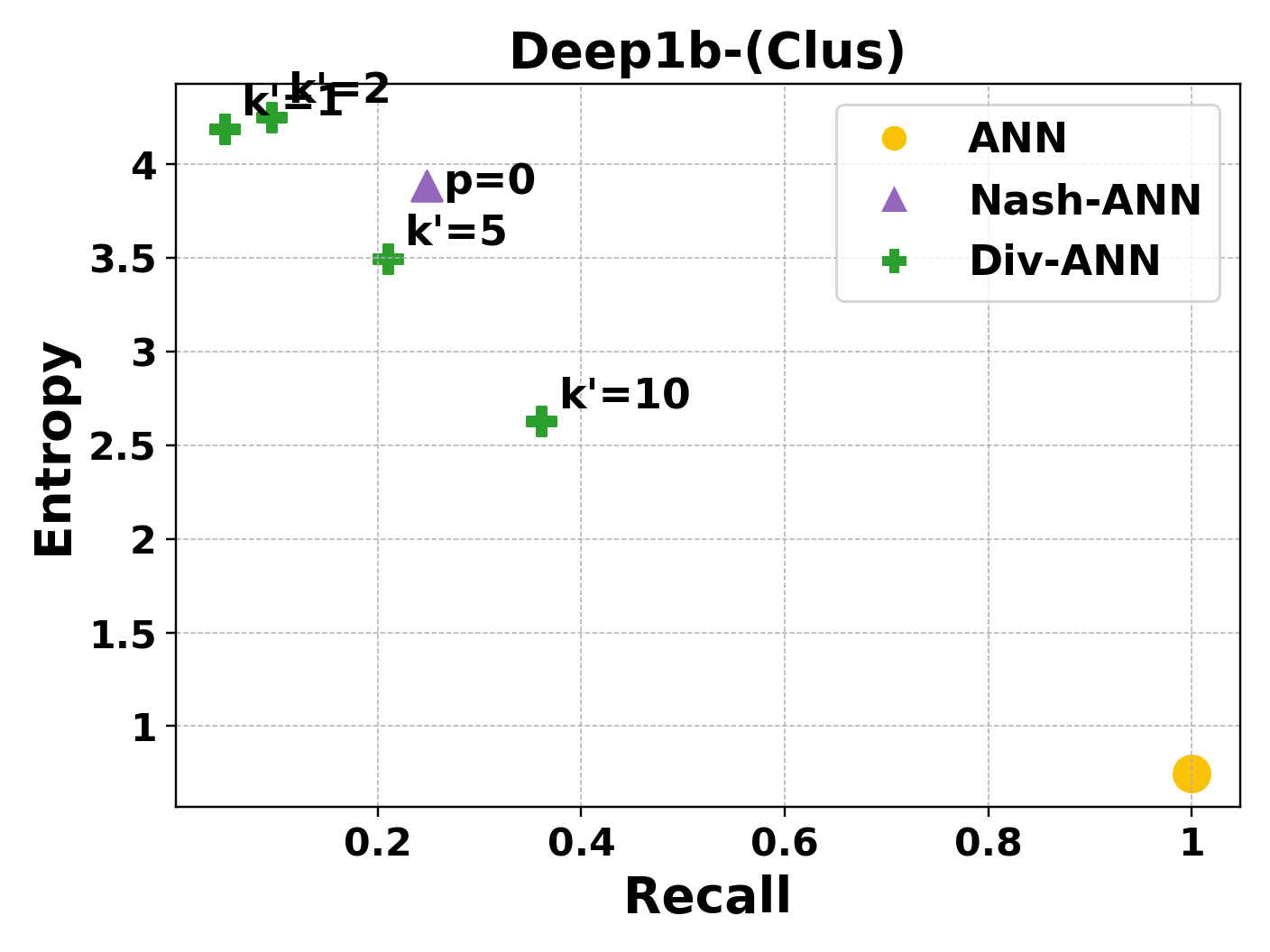} \\
\end{tabular}
\caption{The plots show recall versus entropy trade-offs for $k$  = $10$ \textbf{(Left)} and $k$  = $50$ \textbf{(Right)} in the single-attribute setting on \deepC\ dataset.}
\label{fig:AppendixResultsSingleAttriDeepClusRecall}
\end{figure}

\begin{figure}[t!]
\centering
\begin{tabular}{@{}c@{}c@{}c@{}c@{}}
\includegraphics[width=0.48\textwidth]{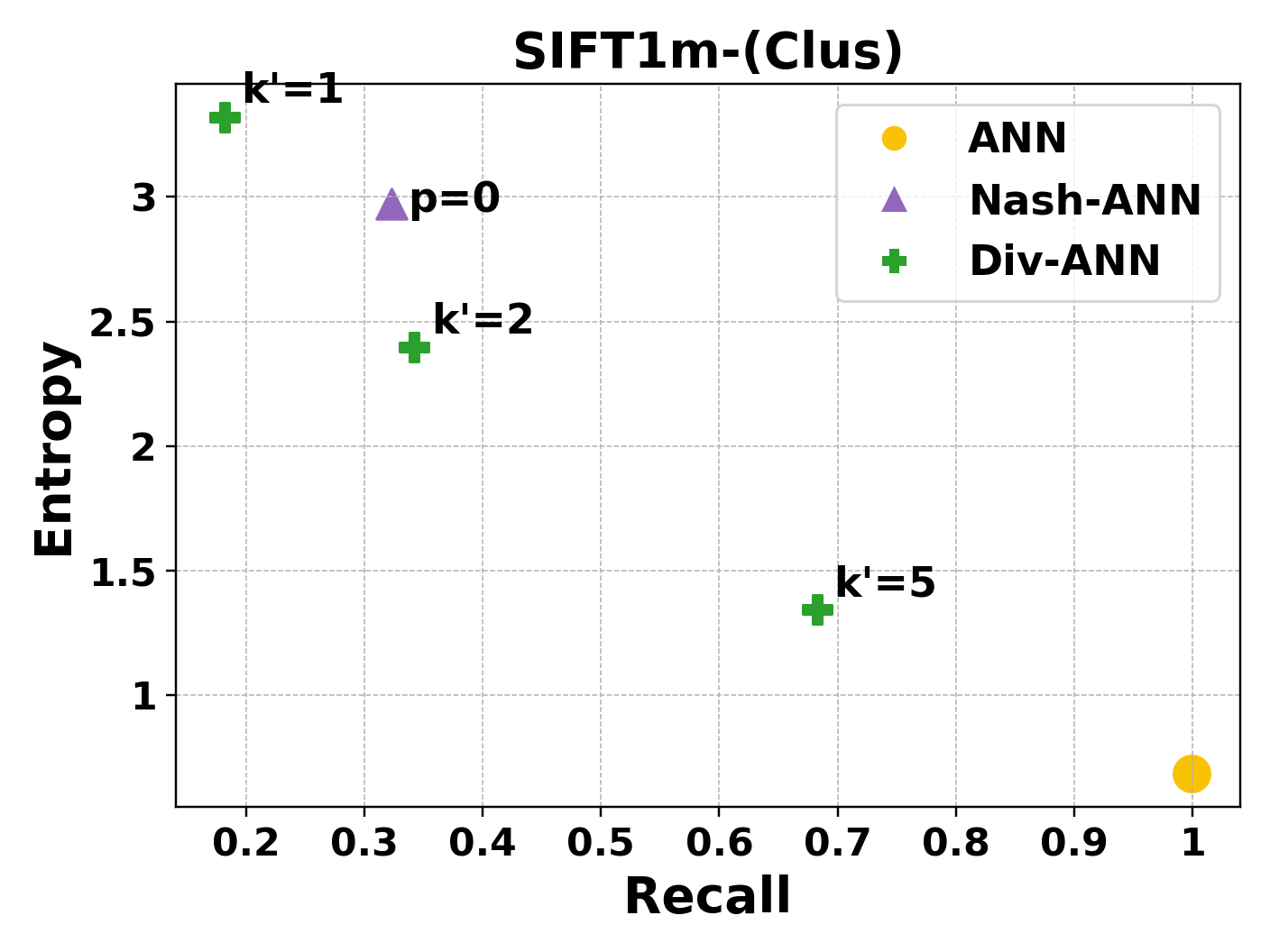} &
\includegraphics[width=0.48\textwidth]{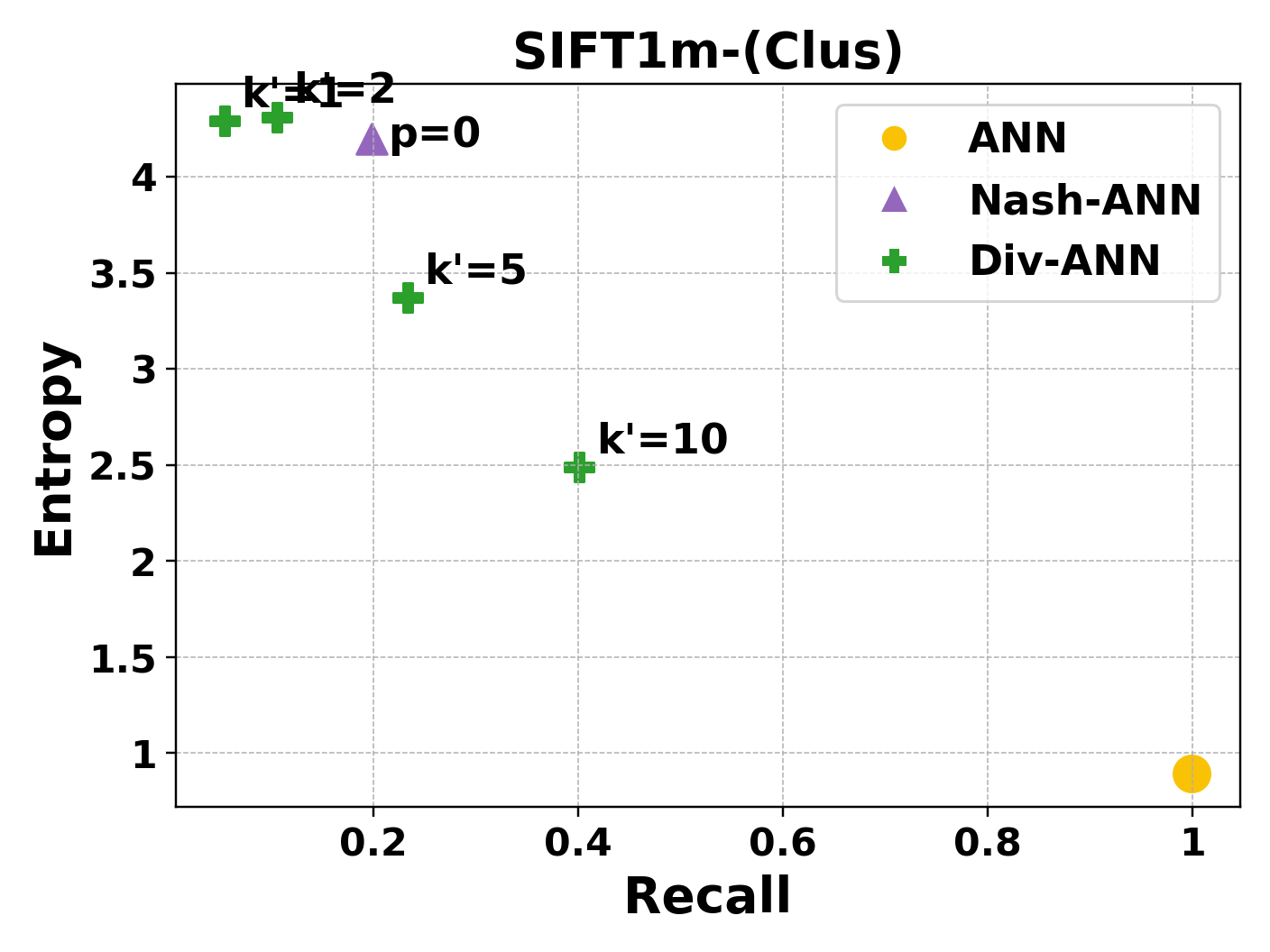} \\
\end{tabular}
\caption{The plots show recall versus entropy trade-offs for $k$  = $10$ \textbf{(Left)} and $k$  = $50$ \textbf{(Right)} in the single-attribute setting on \siftC\ dataset.}
\label{fig:AppendixResultsSingleAttriSIFTCLusRecall}
\end{figure}

\begin{figure}[t!]
\centering
\begin{tabular}{@{}c@{}c@{}c@{}c@{}}
\includegraphics[width=0.48\textwidth]{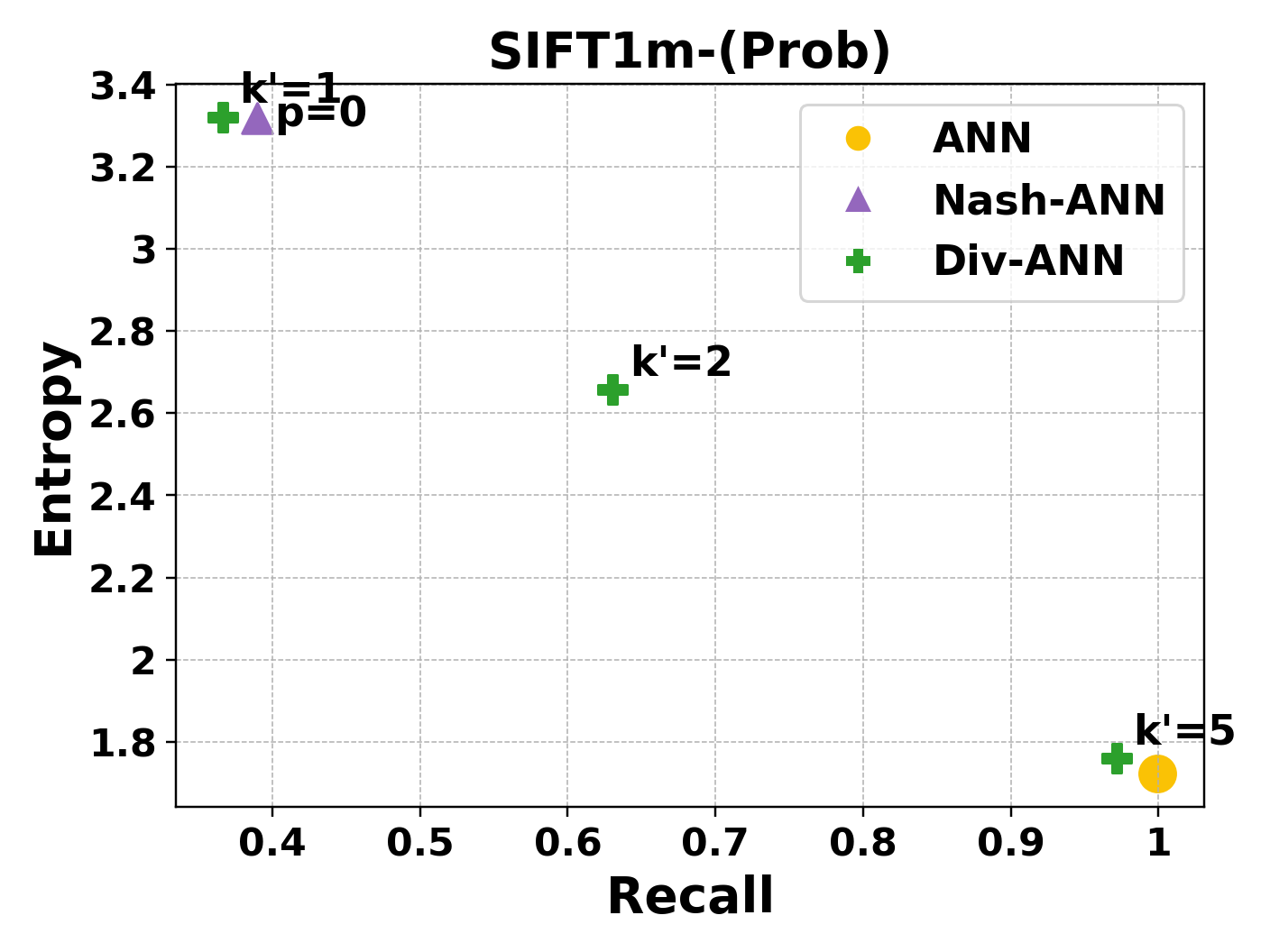} &
\includegraphics[width=0.48\textwidth]{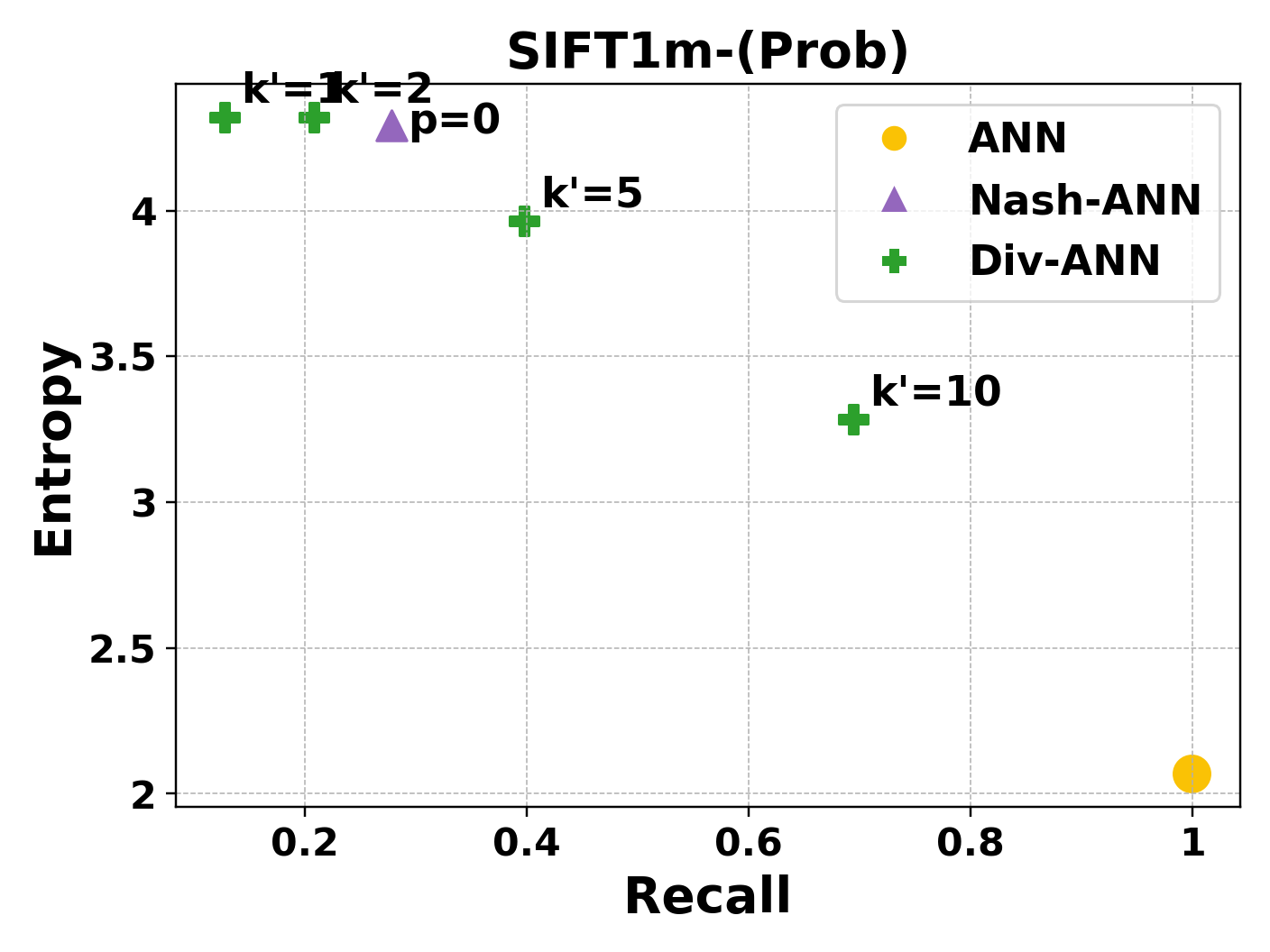} \\
\end{tabular}
\caption{The plots show recall versus entropy trade-offs for $k$  = $10$ \textbf{(Left)} and $k$  = $50$ \textbf{(Right)} in the single-attribute setting on \siftP\ dataset.}
\label{fig:AppendixResultsSingleAttriSIFTProbRecall}
\end{figure}

\begin{figure}[!t]
\centering
\begin{tabular}{@{}c@{}c@{}c@{}c@{}}
\includegraphics[width=0.48\textwidth]{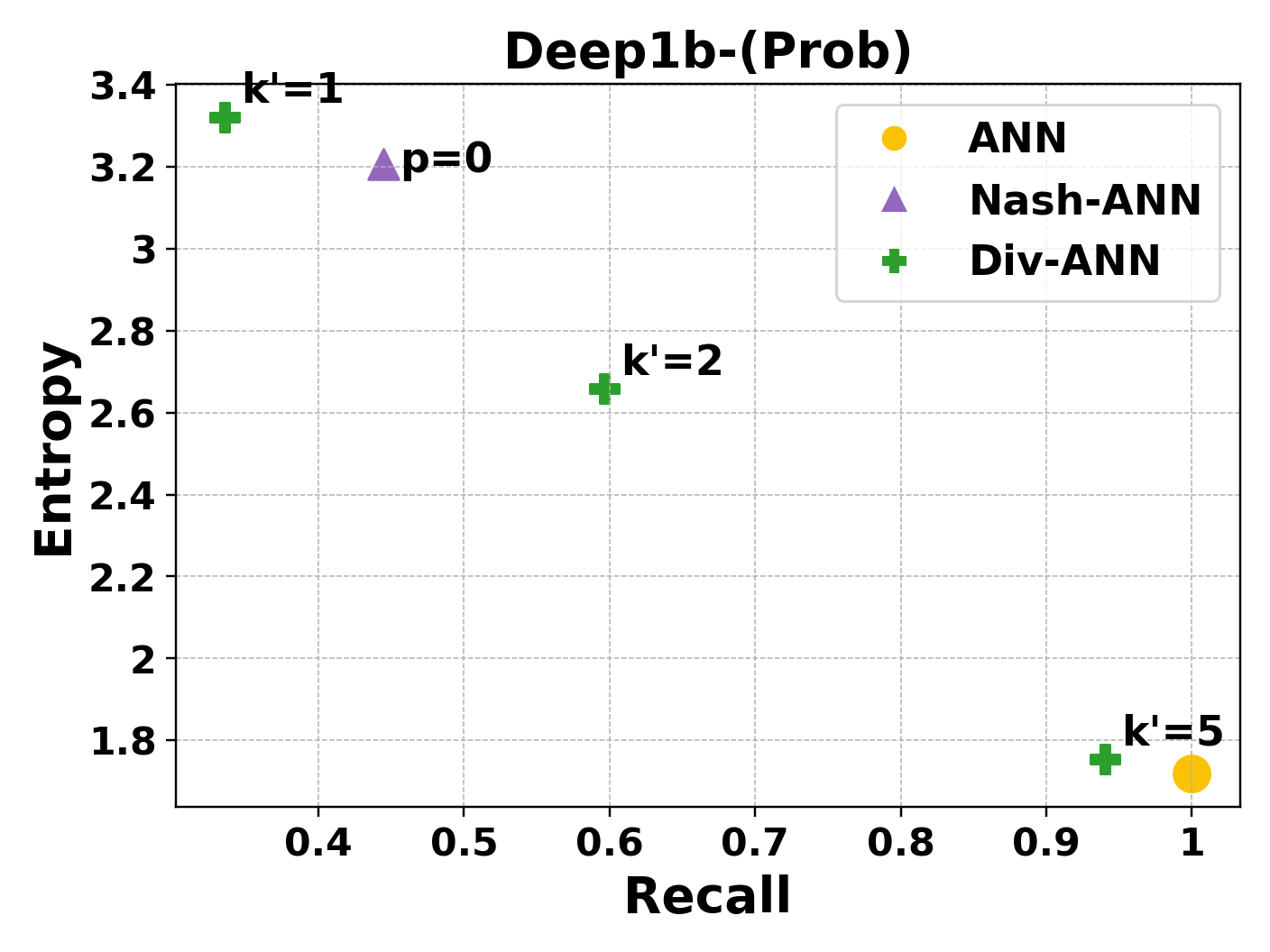} &
\includegraphics[width=0.48\textwidth]{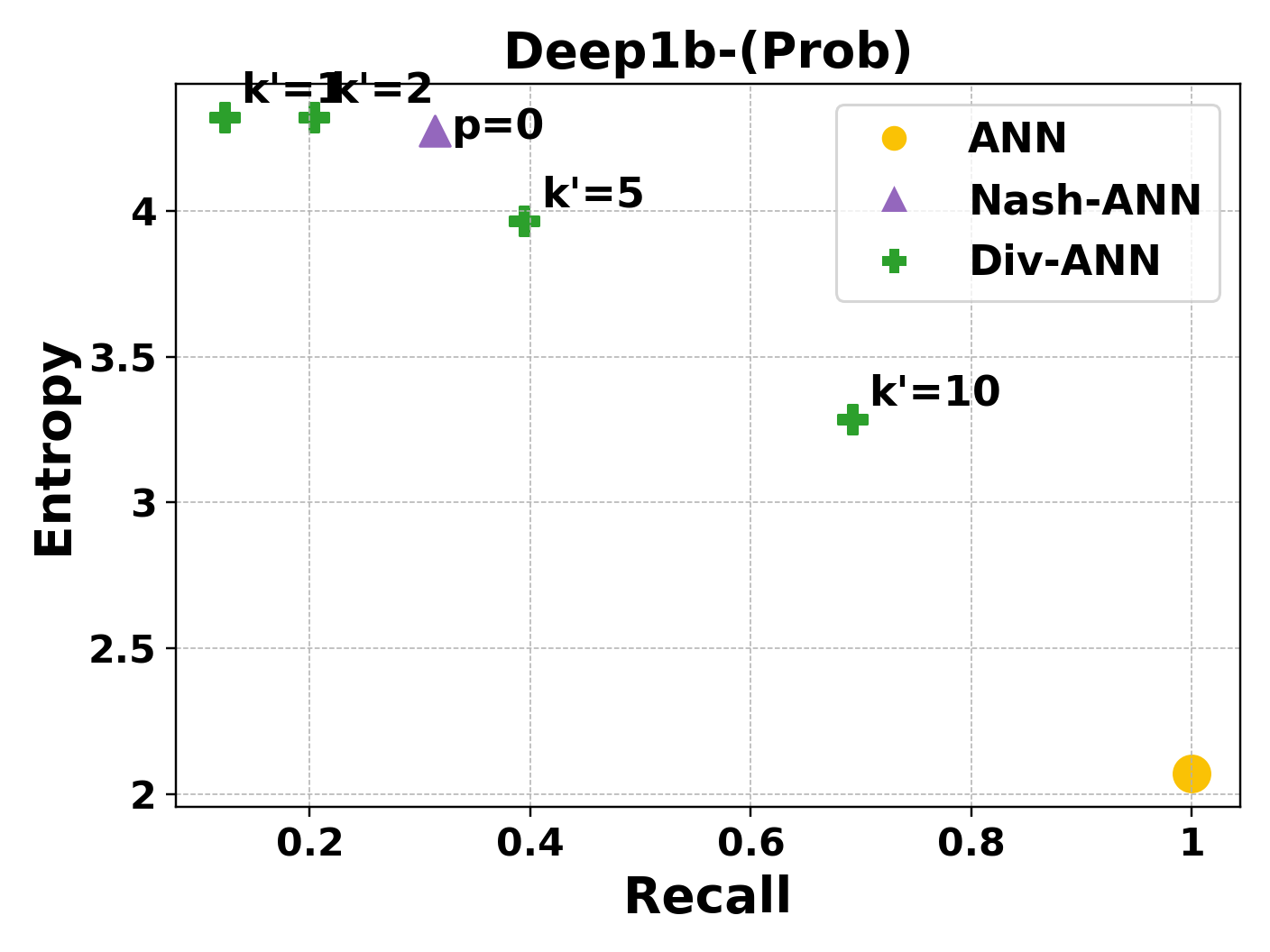} \\
\end{tabular}
\caption{The plots show recall versus entropy trade-offs for $k$  = $10$ \textbf{(Left)} and $k$  = $50$ \textbf{(Right)} in the single-attribute setting on \deepP\ dataset.}
\label{fig:AppendixResultsSingleAttriDeepProbRecall}
\end{figure}

\begin{figure}[!t]
\centering
\begin{tabular}{@{}c@{}c@{}c@{}c@{}}
\includegraphics[width=0.48\textwidth]{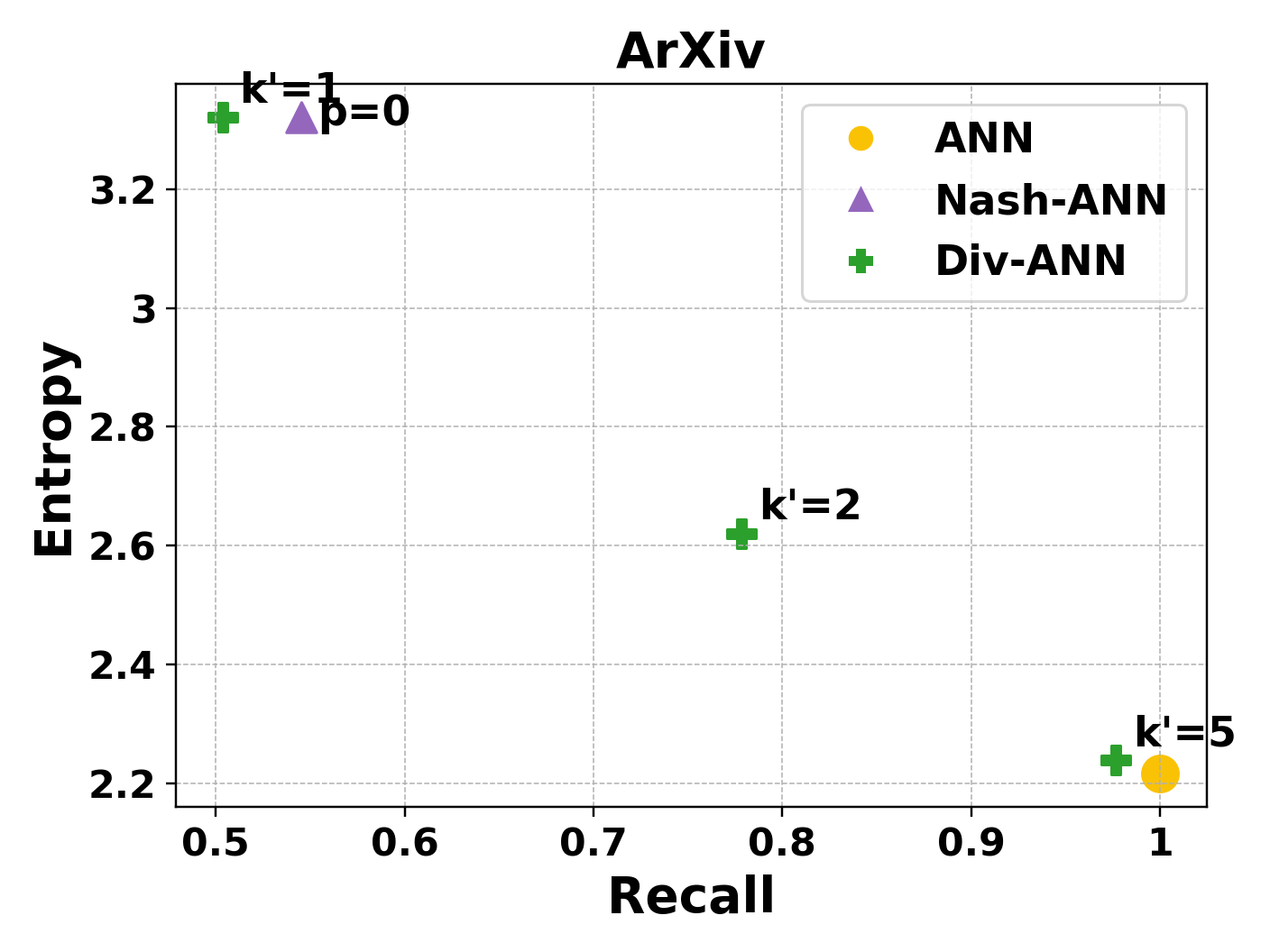} &
\includegraphics[width=0.48\textwidth]{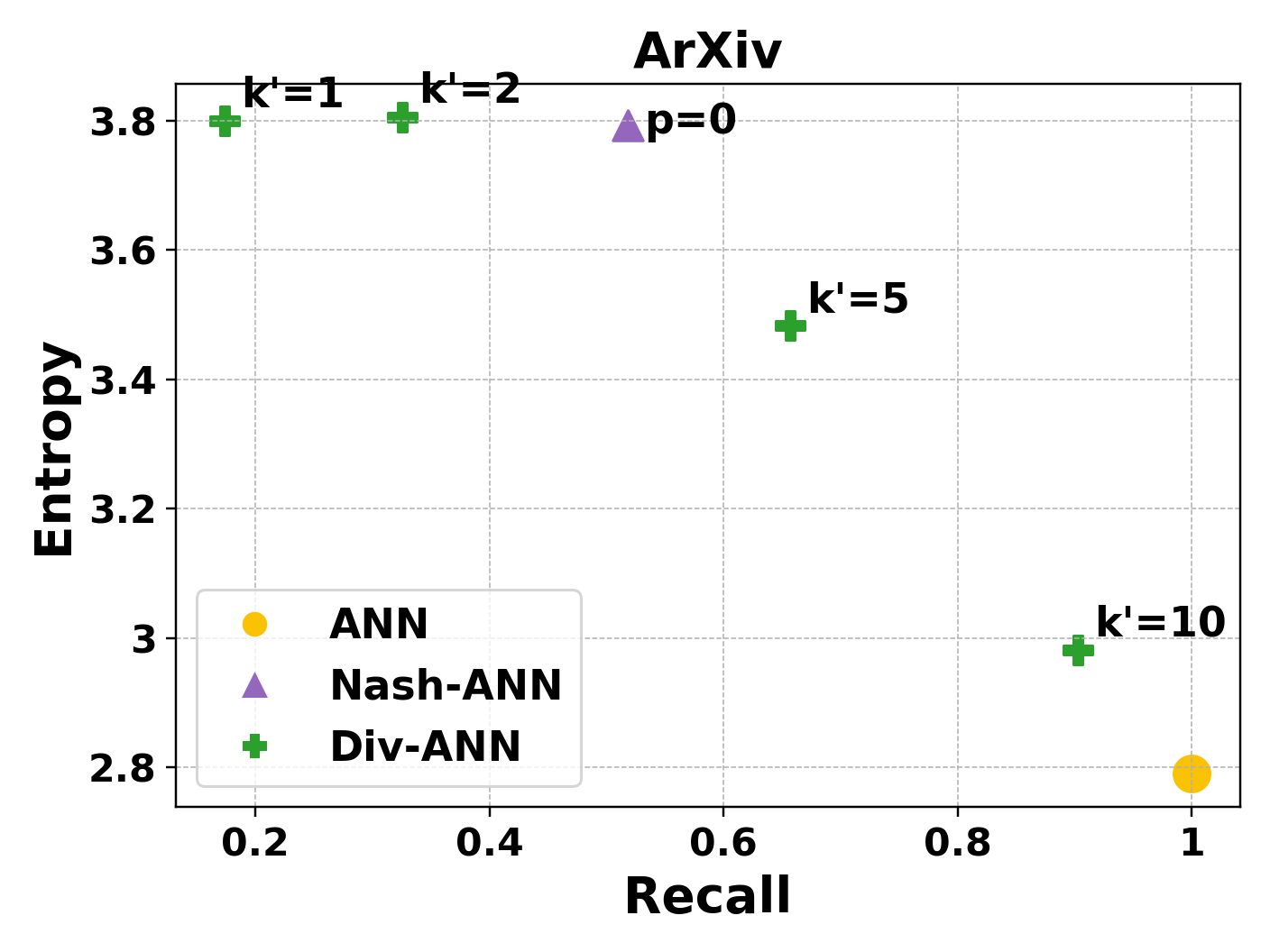} \\
\end{tabular}
\caption{The plots show recall versus entropy trade-offs for $k$  = $10$ \textbf{(Left)} and $k$  = $50$ \textbf{(Right)} in the single-attribute setting on \arxiv\ dataset.}
\label{fig:AppendixResultsSingleAttriArxivRecall}
\end{figure}

\subsection{Balancing Relevance and Diversity: Multi-attribute Setting}
\label{appendix:exp-multi-attribute}
Recall that our welfarist formulation seamlessly extends to the multi-attribute setting. In~\Cref{section:experiments}, we discussed the performance of \multinash\ and \multidivann\ on \siftC, where each input vector was associated with four attributes. In this section, we repeat the same set of experiments on one of the real-world datasets, namely \arxiv, which naturally contains two attribute classes ($m=2$; see~\Cref{subsec:description-various-metrics--of-diversity-and-relevance}, Diversity Metrics): update year ($|C_1| = 14$) and paper category ($|C_2| = 16$). Therefore, $c = | C_1 | + | C_2 | = 30$. The results for $k = 50$ are presented in Figure~\ref{fig:AppendixResultsMultiAttriArxiv}. Note that, in each plot, we restrict the entropy to one of the attribute classes ($C_1$ or $C_2$) so that the diversity within a class can be understood from these plots. The results indicate that \multinash\ achieves an approximation ratio very close to one while maintaining entropy levels comparable to \multidivann\ with $k' = 1$ or $2$ for both the attribute classes. In fact, \multinash\ Pareto dominates \multidivann\ with $k'=5$.

We also study the effect of varying $p$ in $p$-NNS problem in the multi-attribute setting. The results for performance of \texttt{Multi p-mean-ANN} (an analogue of \multinash, detailed in \Cref{subsec:algorithm-details-for-experiments}) for $p \in \{-10, -1, -0.5, 0, 0.5, 1\}$ are shown in Figures~\ref{fig:AppendixResultsMultiAttriArxivpTrend} and \ref{fig:mainPaperResultsMultiAttriB}. Interestingly, we observe that with decreasing $p$, the entropy (across $C_1$ or $C_2$) increases but the approximation ratio remains nearly the same and very close to $1$. On the other hand, \multidivann\ with $k'=1$ has very low approximation ratio. In fact, \texttt{Multi p-mean-ANN} with $p=-1$ and $-10$ Pareto dominates \multidivann\ with $k'=1$.

\begin{figure}[t!]
\centering
\begin{tabular}{@{}c@{}c@{}c@{}c@{}}
\includegraphics[width=0.48\textwidth]{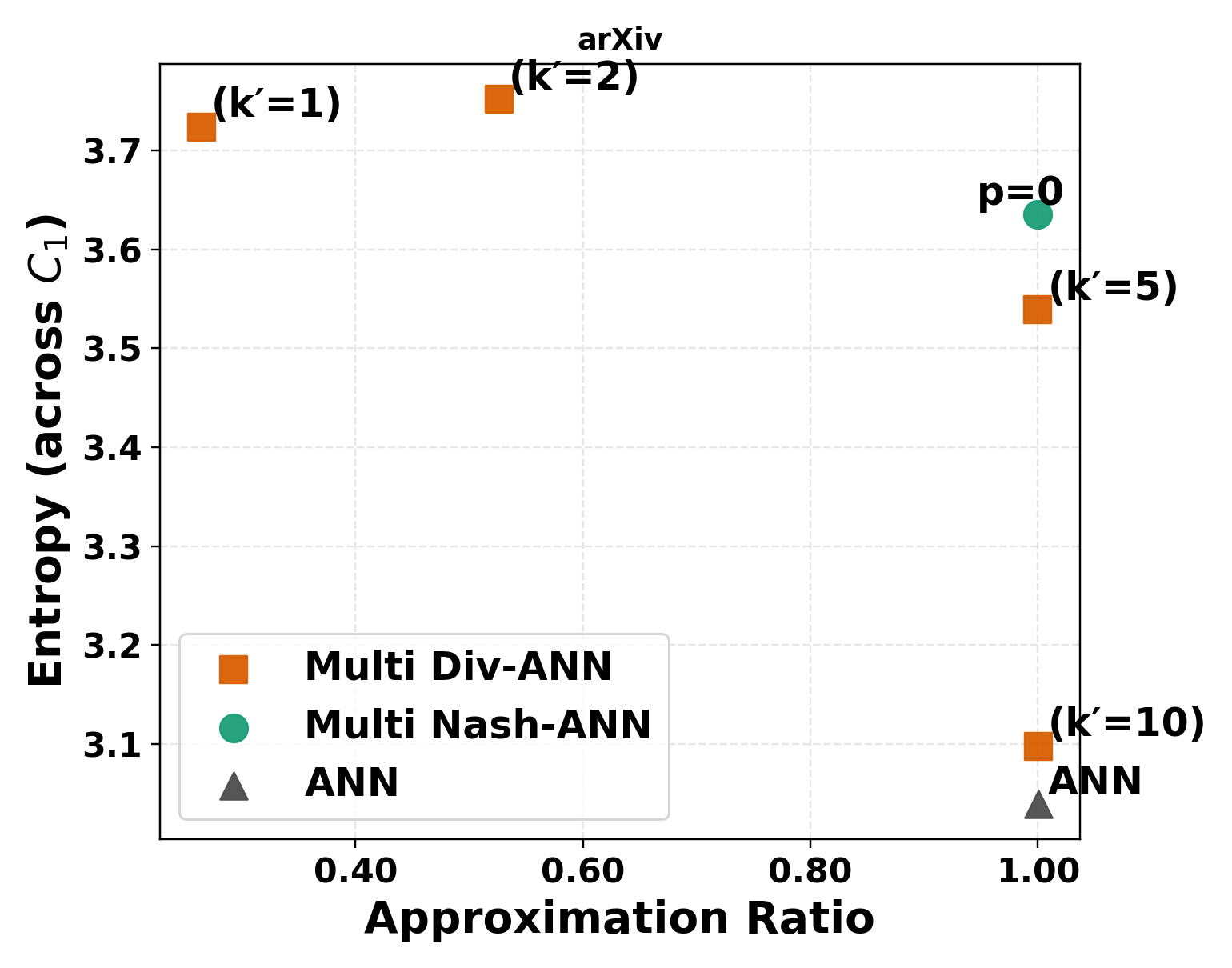} &
\includegraphics[width=0.48\textwidth]{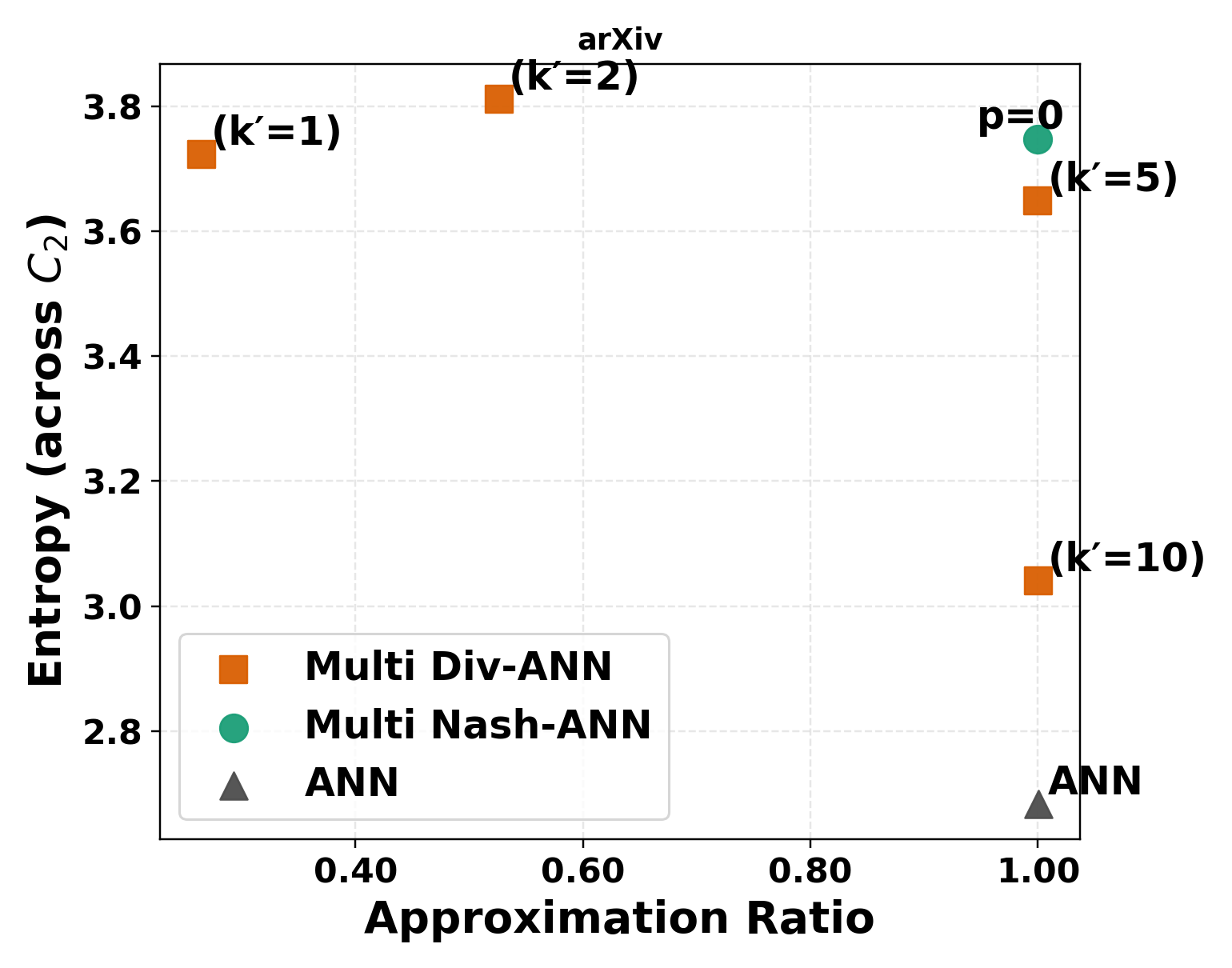} \\
\end{tabular}
\caption{The plots show approximation ratio versus entropy trade-offs for various algorithms on \arxiv\ dataset across attribute classes $C_1$ \textbf{(Left)} and $C_2$ \textbf{(Right)} in the multi-attribute setting for $k=50$.}
\label{fig:AppendixResultsMultiAttriArxiv}
\end{figure}

\begin{figure}[t!]
\centering
\begin{tabular}{@{}c@{}c@{}c@{}c@{}}
\includegraphics[width=0.48\textwidth]{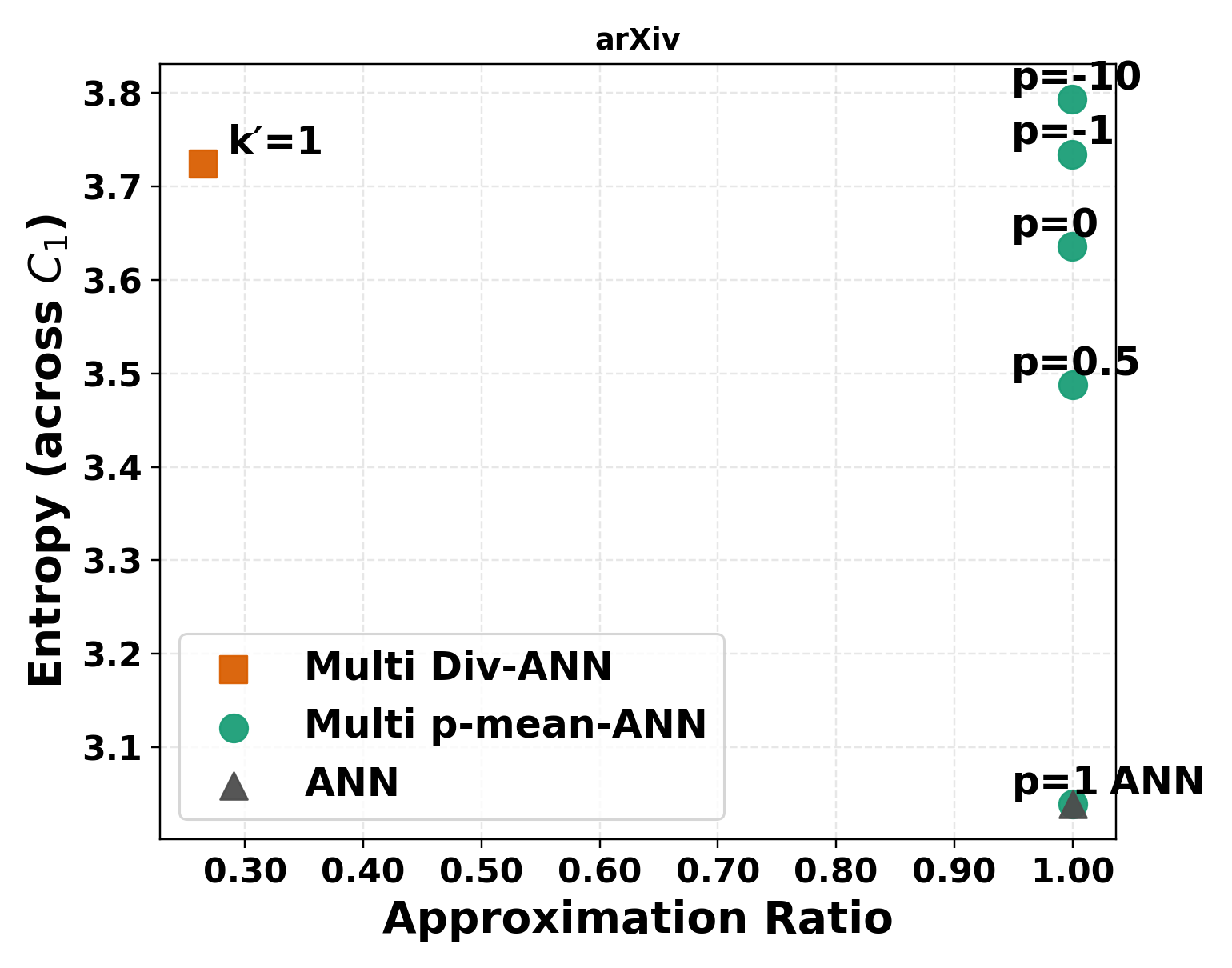} &
\includegraphics[width=0.48\textwidth]{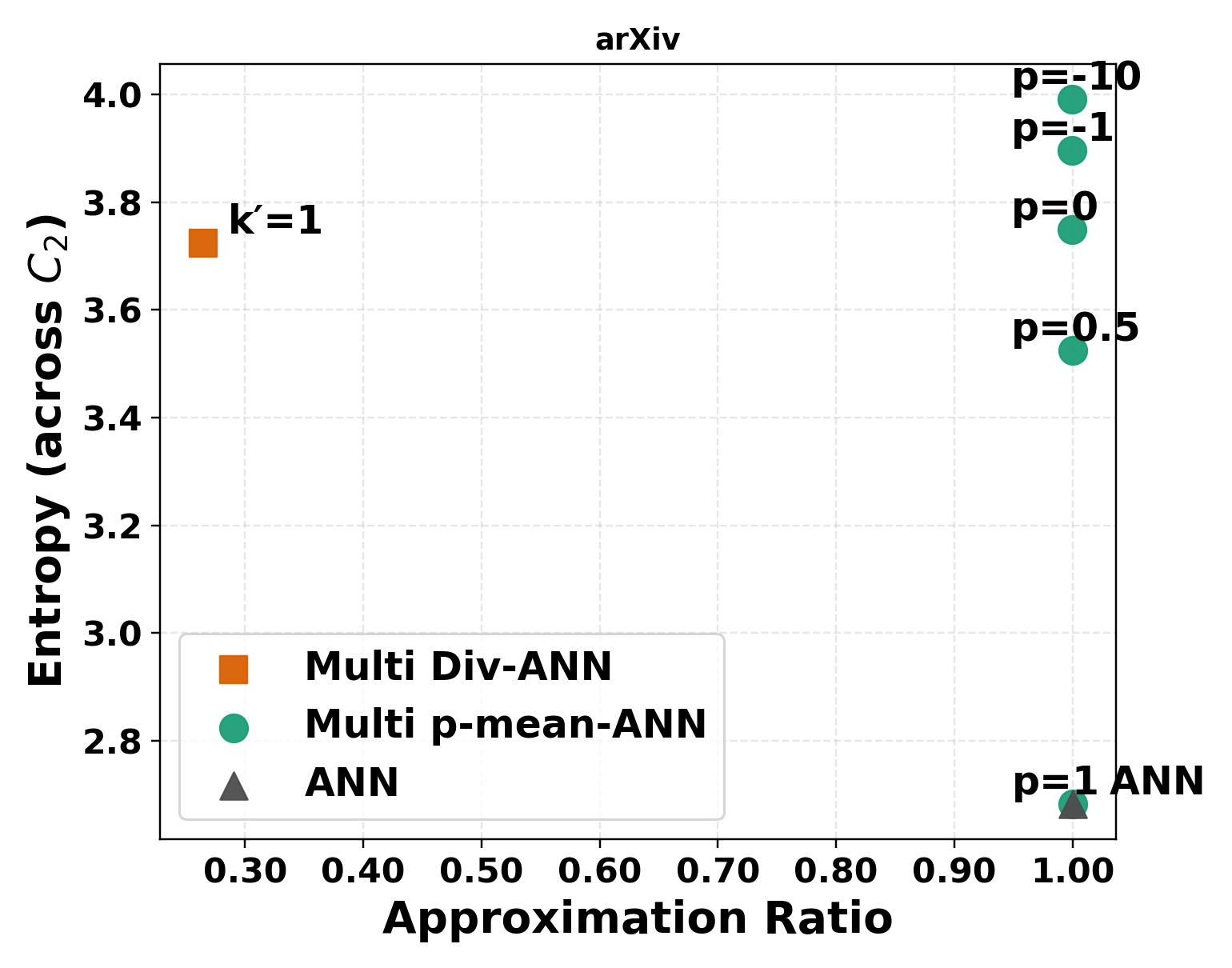} \\
\end{tabular}
\caption{The plots show approximation ratio versus entropy trade-offs for $p$-mean-ANN, as $p$ varies, for $k$ = $50$ across attribute classes $C_1$ \textbf{(Left)} and $C_2$ \textbf{(Right)} in the multi-attribute setting on \arxiv\ dataset.}
\label{fig:AppendixResultsMultiAttriArxivpTrend}
\end{figure}

\begin{figure}[t!]
\centering
\begin{tabular}{@{}c@{}c@{}c@{}c@{}}
\includegraphics[width=0.48\textwidth]{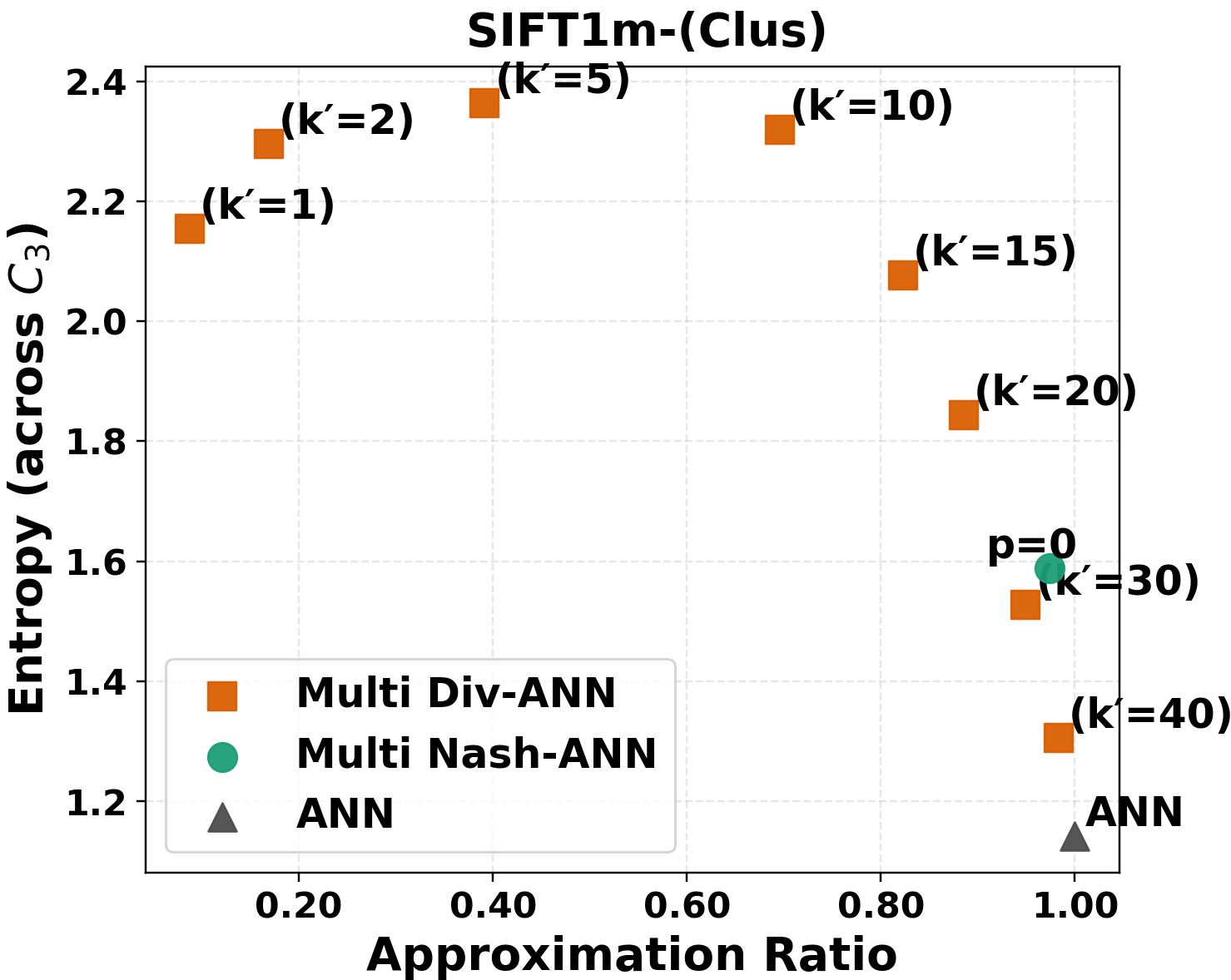} &
\includegraphics[width=0.48\textwidth]{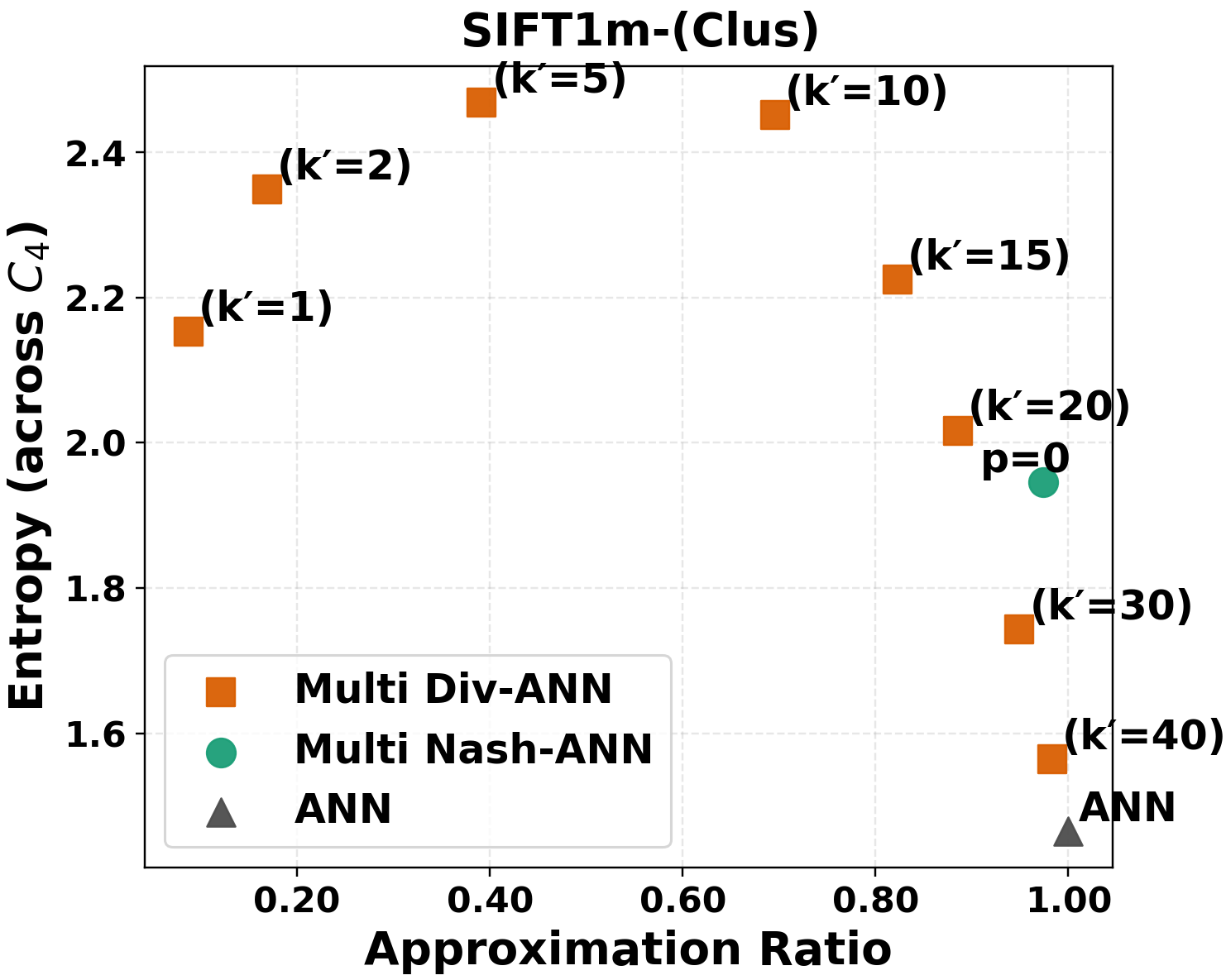} \\
\includegraphics[width=0.48\textwidth]{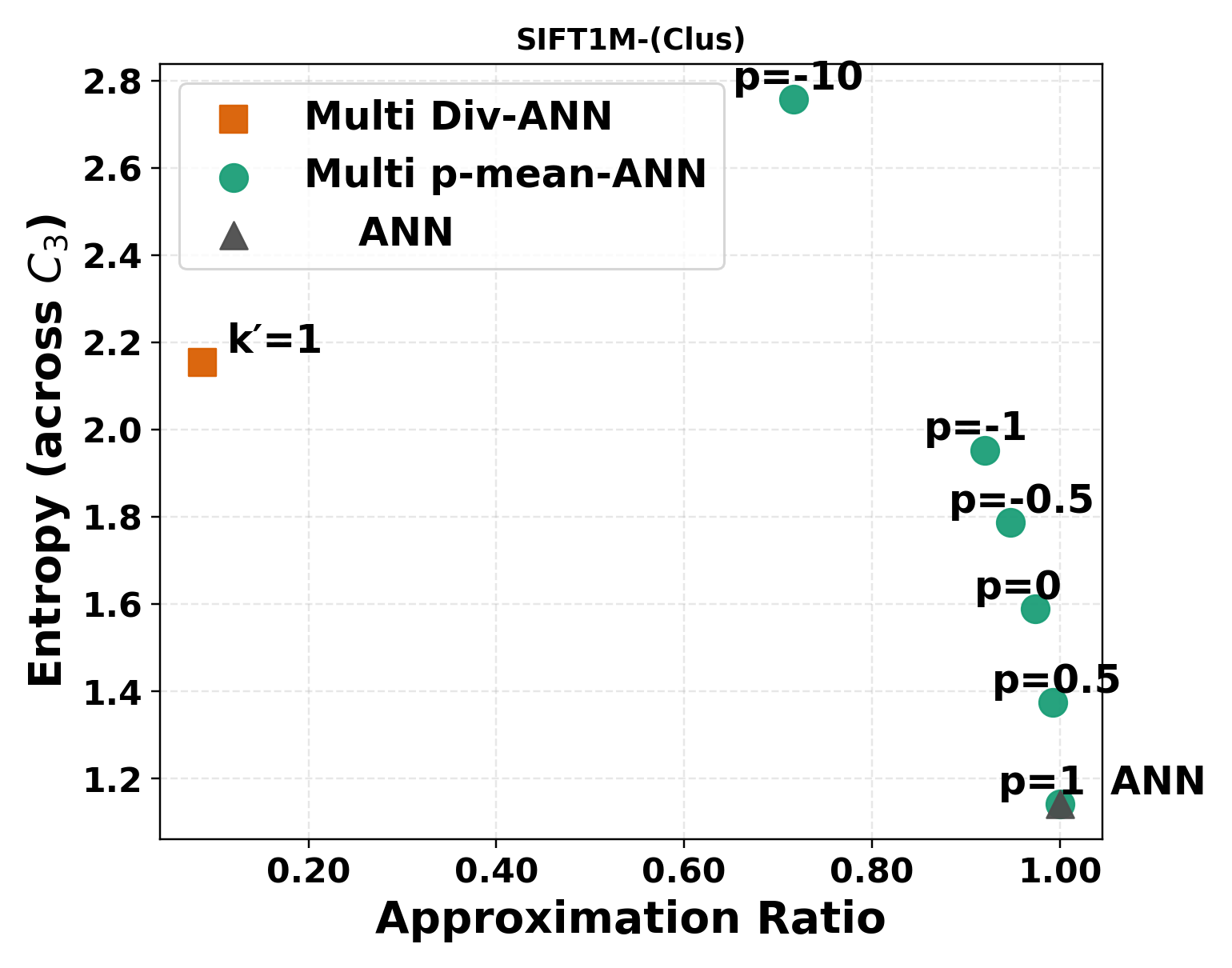} &
\includegraphics[width=0.48\textwidth]{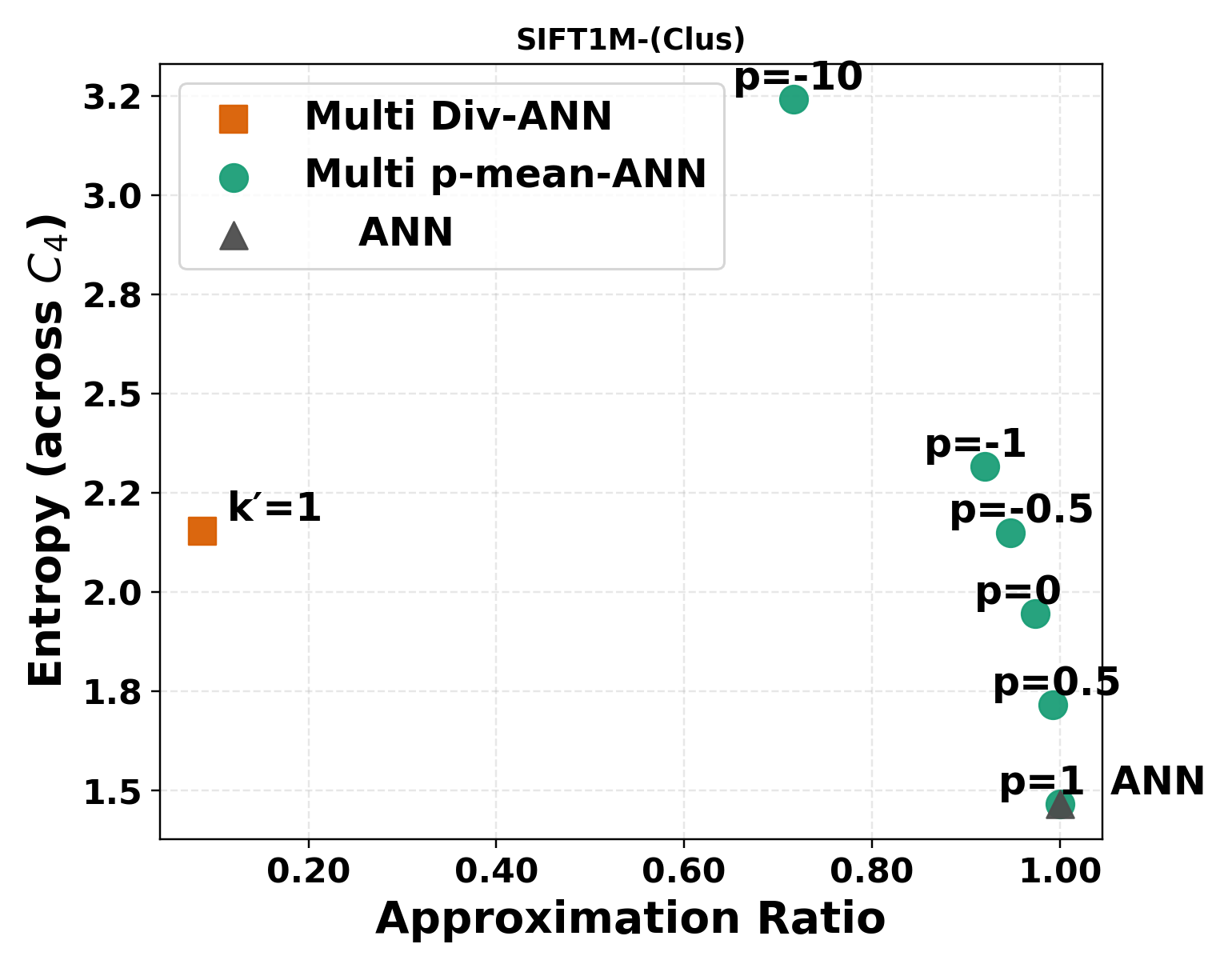} \\
\end{tabular}
\caption{\textbf{Top Row}: The plots show approximation ratio versus entropy trade-offs for \ouralgo\ against \divann\ with varying values of $k'$ for attribute class $C_3$ \textbf{(left)} and $C_4$ \textbf{(right)}. \textbf{Bottom Row}: The plots show approximation ratio versus entropy trade-offs for $p$-mean-ANN, as $p$ varies, across attribute classes $C_3$ \textbf{(left)} and $C_4$ \textbf{(right)}. Both the rows correspond to $k=50$ on \siftC\ dataset in the multi-attribute setting.}
\label{fig:mainPaperResultsMultiAttriB}
\end{figure}

\subsection{More Experiments for \texttt{$p$-FetchUnion-ANN}}
\label{appendix:FetchNashUnionNash}

In this section, we empirically study a faster heuristic algorithm for NSW and $p$-mean welfare formulations. Specifically, the heuristic---called \texttt{$p$-FetchUnion-ANN}---first fetches a sufficiently large candidate set of vectors (irrespective of their attributes) using the \diskann~algorithm. Then, it applies the Nash (or $p$-mean) selection (similar to Line~\ref{line:marginal} in~\Cref{algo:greedy-nash-ann} or Lines~\ref{line:marginal-positive-p}-\ref{line:marginal-negative-p} in~\Cref{algo:p-mean-ann}) within this set. That is, instead of starting out with $k$ neighbors for each $\ell \in [c]$ (as in Line \ref{line:hat-D} of \Cref{algo:greedy-nash-ann}), the alternative here is to work with sufficiently many neighbors from the set $\cup_{\ell=1}^c D_\ell$. 

We empirically show (in Tables~\ref{tab:amazon-k50} to~\ref{tab:sift20-uni-k50}) that this heuristic consistently achieves performance comparable to \texttt{$p$-Mean-ANN} across nearly all datasets and evaluation metrics. Since \texttt{$p$-FetchUnion-ANN} starts with a large pool of vectors (with high similarity to the query) retrieved by the~\diskann\ algorithm without diversity considerations, it achieves improved  approximation ratio over \texttt{$p$-Mean-ANN}. This trend is clearly evident in two datasets, namely \deepC\ and \siftC. However, the improvement in approximation ratio comes at the cost of reduced entropy, which can be explained by the fact that in restricting its search to an initially fetched large pool of vectors, \texttt{$p$-FetchUnion-ANN} may miss out on a more diverse solution that exists over the entire dataset. Another important aspect of \texttt{$p$-FetchUnion-ANN} is that, because it retrieves all neighbors from the union at once, the heuristic delivers substantially higher throughput (measured as queries answered per second, QPS) and therefore lower latency. The results validating these findings are reported in Tables~\ref{tab:QPSandLatency} and~\ref{tab:QPSandLatencyAmazon} for the \siftC\ and \amazon\ datasets, respectively. In particular, it serves almost \textbf{10$\times$} more queries on \siftC\ and \textbf{3$\times$} more queries on \amazon\ dataset. The latency values exhibit a similar trend with reductions of similar magnitude. In summary, these observations position the heuristic as a notably fast method for NaNNS and $p$-NNS, particularly when $c$ is large.

\begin{table}[!h]
\centering
\caption{Comparison of relevance and diversity of \pMeanANN, \texttt{$p$-FetchUnion-ANN} across different values of $p$ against \diskann\ and \divann\ ($k'=1$) for \amazon\ at $k=50$. }
\label{tab:amazon-k50}
\scalebox{0.75}{
\setlength{\tabcolsep}{6pt}
\renewcommand{\arraystretch}{1.2}
\begin{tabular}{llcccccc}
\toprule
\textbf{Metric} & \textbf{Algorithm} & $p=-10$ & $p=-1$ & $p=-0.5$ & $p=0$ & $p=0.5$ & $p=1$ \\
\midrule
\multirow{4}{*}{Approx.~Ratio}
  & \texttt{$p$-Mean-ANN} & 0.865$\pm$0.045 & 0.909$\pm$0.029 & 0.922$\pm$0.027 & 0.938$\pm$0.023 & 0.961$\pm$0.018 & 1.000$\pm$0.000 \\
  & \texttt{$p$-FetchUnion-ANN} & 0.907$\pm$0.033 & 0.912$\pm$0.030 & 0.921$\pm$0.027 & 0.935$\pm$0.024 & 0.958$\pm$0.019 & 1.000$\pm$0.000 \\
  & \diskann\  & \multicolumn{6}{c}{1.000$\pm$0.000} \\
  & \divann\  ($k'\!=\!1$) & \multicolumn{6}{c}{0.813$\pm$0.053} \\
\midrule
\multirow{4}{*}{Entropy}
  & \texttt{$p$-Mean-ANN} & 5.644$\pm$0.000 & 5.382$\pm$0.135 & 5.252$\pm$0.153 & 5.058$\pm$0.178 & 4.687$\pm$0.227 & 2.782$\pm$0.684 \\
  & \texttt{$p$-FetchUnion-ANN} & 5.364$\pm$0.156 & 5.333$\pm$0.149 & 5.261$\pm$0.150 & 5.099$\pm$0.171 & 4.736$\pm$0.221 & 2.782$\pm$0.684 \\
  & \diskann\  & \multicolumn{6}{c}{2.782$\pm$0.684} \\
  & \divann\  ($k'\!=\!1$) & \multicolumn{6}{c}{5.594$\pm$0.049} \\
\bottomrule
\end{tabular}
}%
\end{table}

\begin{table}[t]
\centering
\scalebox{0.75}{
\setlength{\tabcolsep}{6pt}
\renewcommand{\arraystretch}{1.2}
\begin{tabular}{llcccccc}
\toprule
\textbf{Metric} & \textbf{Algorithm} & $p=-10$ & $p=-1$ & $p=-0.5$ & $p=0$ & $p=0.5$ & $p=1$ \\
\midrule
\multirow{4}{*}{Approx.~Ratio}
  & \texttt{$p$-Mean-ANN} & 0.985$\pm$0.010 & 0.985$\pm$0.010 & 0.985$\pm$0.010 & 0.986$\pm$0.009 & 0.989$\pm$0.008 & 1.000$\pm$0.001 \\
  & \texttt{$p$-FetchUnion-ANN} & 0.989$\pm$0.007 & 0.989$\pm$0.007 & 0.989$\pm$0.007 & 0.990$\pm$0.006 & 0.991$\pm$0.006 & 1.000$\pm$0.001 \\
  & \diskann\  & \multicolumn{6}{c}{1.000$\pm$0.001} \\
  & \divann\  ($k'\!=\!1$) & \multicolumn{6}{c}{0.293$\pm$0.007} \\
\midrule
\multirow{4}{*}{Entropy}
  & \texttt{$p$-Mean-ANN} & 3.793$\pm$0.002 & 3.793$\pm$0.002 & 3.793$\pm$0.002 & 3.793$\pm$0.002 & 3.793$\pm$0.002 & 2.790$\pm$0.510 \\
  & \texttt{$p$-FetchUnion-ANN} & 3.704$\pm$0.167 & 3.704$\pm$0.166 & 3.704$\pm$0.166 & 3.704$\pm$0.166 & 3.704$\pm$0.166 & 2.790$\pm$0.510 \\
  & \diskann\  & \multicolumn{6}{c}{2.790$\pm$0.510} \\
  & \divann\  ($k'\!=\!1$) & \multicolumn{6}{c}{3.799$\pm$0.029} \\
\bottomrule
\end{tabular}
}%
\caption{Comparison of relevance and diversity of \pMeanANN, \texttt{$p$-FetchUnion-ANN} across different values of $p$ against \diskann\ and \divann\ ($k'=1$) for \arxiv\ at $k=50$. }
\label{tab:arxiv-k50}
\end{table}

\begin{table}[t]
\centering
\scalebox{0.75}{
\setlength{\tabcolsep}{6pt}
\renewcommand{\arraystretch}{1.2}
\begin{tabular}{llcccccc}
\toprule
\textbf{Metric} & \textbf{Algorithm} & $p=-10$ & $p=-1$ & $p=-0.5$ & $p=0$ & $p=0.5$ & $p=1$ \\
\midrule
\multirow{4}{*}{Approx.~Ratio}
  & \texttt{$p$-Mean-ANN} & 0.784$\pm$0.071 & 0.815$\pm$0.065 & 0.831$\pm$0.063 & 0.858$\pm$0.060 & 0.904$\pm$0.049 & 1.000$\pm$0.000 \\
  & \texttt{$p$-FetchUnion-ANN} & 0.958$\pm$0.033 & 0.961$\pm$0.030 & 0.962$\pm$0.029 & 0.963$\pm$0.028 & 0.968$\pm$0.024 & 1.000$\pm$0.000 \\
  & \diskann\  & \multicolumn{6}{c}{1.000$\pm$0.000} \\
  & \divann\  ($k'\!=\!1$) & \multicolumn{6}{c}{0.286$\pm$0.041} \\
\midrule
\multirow{4}{*}{Entropy}
  & \texttt{$p$-Mean-ANN} & 4.293$\pm$0.000 & 4.200$\pm$0.052 & 4.105$\pm$0.091 & 3.887$\pm$0.155 & 3.349$\pm$0.267 & 0.746$\pm$0.717 \\
  & \texttt{$p$-FetchUnion-ANN} & 2.101$\pm$1.214 & 2.101$\pm$1.214 & 2.099$\pm$1.212 & 2.095$\pm$1.207 & 2.068$\pm$1.179 & 0.746$\pm$0.717 \\
  & \diskann\  & \multicolumn{6}{c}{0.746$\pm$0.717} \\
  & \divann\  ($k'\!=\!1$) & \multicolumn{6}{c}{4.191$\pm$0.234} \\
\bottomrule
\end{tabular}
}%
\caption{Comparison of relevance and diversity of \pMeanANN, \texttt{$p$-FetchUnion-ANN} across different values of $p$ against \diskann\ and \divann\ ($k'=1$) for \deepC\ at $k=50$. }
\label{tab:deep20-k50}
\end{table}

\begin{table}[t]
\centering
\scalebox{0.75}{
\setlength{\tabcolsep}{6pt}
\renewcommand{\arraystretch}{1.2}
\begin{tabular}{llcccccc}
\toprule
\textbf{Metric} & \textbf{Algorithm} & $p=-10$ & $p=-1$ & $p=-0.5$ & $p=0$ & $p=0.5$ & $p=1$ \\
\midrule
\multirow{4}{*}{Approx.~Ratio}
  & \texttt{$p$-Mean-ANN} & 0.958$\pm$0.019 & 0.960$\pm$0.017 & 0.961$\pm$0.016 & 0.963$\pm$0.014 & 0.969$\pm$0.010 & 1.000$\pm$0.000 \\
  & \texttt{$p$-FetchUnion-ANN} & 0.958$\pm$0.019 & 0.960$\pm$0.017 & 0.961$\pm$0.016 & 0.963$\pm$0.014 & 0.969$\pm$0.010 & 1.000$\pm$0.000 \\
  & \diskann\  & \multicolumn{6}{c}{1.000$\pm$0.000} \\
  & \divann\  ($k'\!=\!1$) & \multicolumn{6}{c}{0.395$\pm$0.010} \\
\midrule
\multirow{4}{*}{Entropy}
  & \texttt{$p$-Mean-ANN} & 4.293$\pm$0.000 & 4.292$\pm$0.005 & 4.288$\pm$0.010 & 4.275$\pm$0.020 & 4.217$\pm$0.068 & 2.070$\pm$0.208 \\
  & \texttt{$p$-FetchUnion-ANN} & 4.293$\pm$0.001 & 4.292$\pm$0.005 & 4.288$\pm$0.010 & 4.275$\pm$0.020 & 4.217$\pm$0.068 & 2.070$\pm$0.207 \\
  & \diskann\  & \multicolumn{6}{c}{2.070$\pm$0.207} \\
  & \divann\  ($k'\!=\!1$) & \multicolumn{6}{c}{4.322$\pm$0.002} \\
\bottomrule
\end{tabular}
}%
\caption{Comparison of relevance and diversity of \pMeanANN, \texttt{$p$-FetchUnion-ANN} across different values of $p$ against \diskann\ and \divann\ ($k'=1$) for \deepP\ at $k=50$. }
\label{tab:deep-uni-k50}
\end{table}

\begin{table}[t]
\centering
\scalebox{0.75}{
\setlength{\tabcolsep}{6pt}
\renewcommand{\arraystretch}{1.2}
\begin{tabular}{llcccccc}
\toprule
\textbf{Metric} & \textbf{Algorithm} & $p=-10$ & $p=-1$ & $p=-0.5$ & $p=0$ & $p=0.5$ & $p=1$ \\
\midrule
\multirow{4}{*}{Approx.~Ratio}
  & \texttt{$p$-Mean-ANN} & 0.975$\pm$0.010 & 0.977$\pm$0.008 & 0.979$\pm$0.008 & 0.980$\pm$0.008 & 0.982$\pm$0.006 & 1.000$\pm$0.000 \\
  & \texttt{$p$-FetchUnion-ANN} & 0.975$\pm$0.010 & 0.977$\pm$0.008 & 0.979$\pm$0.008 & 0.980$\pm$0.008 & 0.982$\pm$0.006 & 1.000$\pm$0.000 \\
  & \diskann\  & \multicolumn{6}{c}{1.000$\pm$0.000} \\
  & \divann\  ($k'\!=\!1$) & \multicolumn{6}{c}{0.404$\pm$0.004} \\
\midrule
\multirow{4}{*}{Entropy}
  & \texttt{$p$-Mean-ANN} & 4.292$\pm$0.006 & 4.292$\pm$0.003 & 4.293$\pm$0.002 & 4.293$\pm$0.002 & 4.269$\pm$0.020 & 2.068$\pm$0.205 \\
  & \texttt{$p$-FetchUnion-ANN} & 4.292$\pm$0.006 & 4.292$\pm$0.003 & 4.293$\pm$0.002 & 4.293$\pm$0.003 & 4.269$\pm$0.020 & 2.068$\pm$0.205 \\
  & \diskann\  & \multicolumn{6}{c}{2.068$\pm$0.205} \\
  & \divann\  ($k'\!=\!1$) & \multicolumn{6}{c}{4.322$\pm$0.005} \\
\bottomrule
\end{tabular}
}%
\caption{Comparison of relevance and diversity of \pMeanANN, \texttt{$p$-FetchUnion-ANN} across different values of $p$ against \diskann\ and \divann\ ($k'=1$) for \siftP\ at $k=50$. }
\label{tab:sift20-uni-k50}
\end{table}

\begin{table}[!h]
\centering
\scalebox{0.75}{
\setlength{\tabcolsep}{6pt}
\renewcommand{\arraystretch}{1.2}
\begin{tabular}{llcccccc}
\toprule
\textbf{Metric} & \textbf{Algorithm} & $p=-10$ & $p=-1$ & $p=-0.5$ & $p=0$ & $p=0.5$ & $p=1$ \\
\midrule
\multirow{2}{*}{Query per Second}
  & \texttt{$p$-Mean-ANN} & 198.08  &  195.97  & 199.08 & 179.03 & 171.22 & 189.31 \\
  & \texttt{$p$-FetchUnion-ANN} &  620.27 & 610.62 & 551.02  & 608.76 & 572.57  & 591.76   \\
\midrule
\multirow{2}{*}{Latency ($\mu s$)}
  & \texttt{$p$-Mean-ANN} &  161385.00 & 163121.00 &  160503.00 & 178555.00  & 186780.00  & 168856.00 \\
  & \texttt{$p$-FetchUnion-ANN} &  51539.90 & 52362.30 & 58028.60 & 52521.60 & 55843.70 & 54030.80 \\
  \midrule
\multirow{2}{*}{$99.9$th percentile of Latency}
  & \texttt{$p$-Mean-ANN} & 433434.00   & 407151.00 &  418147.00 & 421725.00 &  475474.00 & 404477.00 \\
  & \texttt{$p$-FetchUnion-ANN} &  146632.00   & 144989.00  & 145620.00  & 145657.00  & 143627.00  & 146464.00 \\
\bottomrule
\end{tabular}
}%
\caption{Comparison of Queries per second and Latency of \pMeanANN, \texttt{$p$-FetchUnion-ANN} across different values of $p$ against \diskann\ and \divann\ ($k'=1$) for \amazon\ dataset for $k=50$. }
\label{tab:QPSandLatencyAmazon}
\end{table}

\end{document}